\newtheorem{theorem}{Theorem}
\newtheorem{lemma}{Lemma}
\newtheorem{definition}{Definition}
\newcommand{\be}{\begin{equation}}
\newcommand{\ee}{\end{equation}}
\newcommand{\ba}{\begin{align}}
\newcommand{\ea}{\end{align}}
\newcommand{\tr}{\textup{tr}}
\newcommand{\la}{\langle}
\newcommand{\ra}{\rangle}
\newcommand{\proj}[1]{\ensuremath{| #1\rangle\!\langle #1 |} }
\newcommand{\suppl}{Supplementary Material}
\newcommand{\doc}{\text{manuscript}} 
\newcommand{\nn}{{\mathbbm{N}}}
\newcommand{\rr}{{\mathbbm{R}}}
	\newcommand{\id}{{\mathbbm{1}}}
\newcommand{\me}{\mathrm{e}}
\newcommand{\mi}{\mathrm{i}}
\newcommand{\bo}{{\Theta}}
\newcommand{\soo}{o}
\newcommand{\helmholtz}{non-equilibrium}
\definecolor{darkgreen}{RGB}{50,190,50}
\definecolor{darkblue}{RGB}{0,0,190}
\definecolor{darkred}{RGB}{238,0,0}
\renewcommand{\d}{\dagger}
\newcommand{\ket}[1]{\ensuremath{\left\lvert#1\right\rangle}}
\newcommand{\ketbra}[2]{\ensuremath{\left| #1\right\rangle\!\!\left\langle #2\right|} }
\renewcommand{\Re}{\mathrm{Re}}
\renewcommand{\Im}{\mathrm{Im}}
\newcommand{\hot}{\textup{Hot}}
\newcommand{\cold}{\textup{Cold}}
\newcommand{\batt}{\textup{W}}
\newcommand{\battery}{\textup{W}}
\newcommand{\CW}{\textup{ColdW}}
\newcommand{\CMW}{\textup{ColdMW}}
\newcommand{\total}{\textup{ColdHotMW}}
\newcommand{\mach}{\textup{M}}
\newcommand{\sttotal}{\textup{ColdHotMW}}
\newcommand{\stcold}{\textup{Cold}}
\newcommand{\sthot}{\textup{Hot}}
\newcommand{\rhoin}{\rho_{\rm in}}
\newcommand{\rhoout}{\rho_{\rm out}}
\newcommand{\onecold}{\textup{c}}
\newcommand{\kappabar}{\bar{\kappa}}
\renewcommand{\emph}{\textit}
\newcommand\mpwS[1]{{\let\helpcmd\sout\parhelp#1\par\relax\relax} }
\long\def\parhelp#1\par#2\relax{%
	\helpcmd{#1}\ifx\relax#2\else\par\parhelp#2\relax\fi%
}
\newcommand{\myacknowledgments}{\begin{center}{\bf Acknowledgments}\end{center}\par}
\newcounter{lastnote}
\newtheorem{remark}{Remark}
\newcommand{\htwo}{\textup{h}_2}
\begin{document} 
	\title{The maximum efficiency of nano heat engines depends on more than temperature}

\author{Mischa P. Woods}
\affiliation{Institute for Theoretical Physics, ETH Zurich, Switzerland}
\affiliation{QuTech, Delft University of Technology, Lorentzweg 1, 2611 CJ Delft, Netherlands}
\author{Nelly Huei Ying Ng}
\affiliation{QuTech, Delft University of Technology, Lorentzweg 1, 2611 CJ Delft, Netherlands}
\affiliation{Dahlem Center for Complex Quantum Systems, Freie Universit{\"a}t Berlin, 14195 Berlin, Germany}
\author{Stephanie Wehner}
\affiliation{QuTech, Delft University of Technology, Lorentzweg 1, 2611 CJ Delft, Netherlands}

\begin{abstract}
		Sadi Carnot's theorem regarding the maximum efficiency of heat engines is considered to be 
		of fundamental importance in thermodynamics. This theorem famously states that the maximum efficiency depends only on the temperature of the heat baths used by the engine, but not on the specific structure of baths.
		Here, we show that when the heat baths are finite in size, and when the engine operates in the quantum nanoregime, a revision to this statement is required. We show that one may still achieve the Carnot efficiency, when certain conditions on the bath structure are satisfied; however if that is not the case, then the maximum achievable efficiency can reduce to a value which is strictly less than Carnot. We derive the maximum efficiency for the case when one of the baths is composed of qubits. Furthermore, we show that the maximum efficiency is determined by either the standard second law of thermodynamics, analogously to the macroscopic case, or by the non increase of the max relative entropy, which is a quantity previously associated with the single shot regime in many quantum protocols. This relative entropic quantity emerges as a consequence of additional constraints, called generalized free energies, that govern thermodynamical transitions in the nanoregime. Our findings imply that in order to maximize efficiency, further considerations in choosing bath Hamiltonians should be made, when explicitly constructing quantum heat engines in the future.
		This understanding of thermodynamics has implications for nanoscale engineering aiming to construct small thermal machines.
\end{abstract}
\maketitle 

\section{Introduction}
	Nicolas L{\'e}onard Sadi Carnot is often described as the ``father of thermodynamics''. 
	In his only publication in 1824 \cite{carnot}, Carnot gave the first successful theory in analysing
	the maximum efficiency of heat engines. 
	It was later used by Rudolf Clausius and Lord Kelvin to formalize the second law of 
	thermodynamics and define the concept of entropy~\cite{clausius_entropy,Kelvin2ndlaw}.
	In particular, Carnot studied heat engines where \textit{working fluids} undergo heating and cooling between two heat sources at different temperatures. In 1824, he concluded that the maximum efficiency attainable did not depend upon the 
	exact nature of the working fluids~\cite{carnot}:
		\begin{quote}
			\textit{The motive power of heat is independent of the agents employed to realize it; 
			its quantity is fixed solely by the temperatures of the bodies between which is effected,
			finally, the transfer of caloric.}
		\end{quote}
	For his ``motive power of heat'', we would today say ``the efficiency of a reversible heat engine'', 
	and ``transfer of caloric'' we would replace with ``reversible transfer of heat''. 
	Carnot knew intuitively that his engine would have maximum efficiency, 
	but was unable to state what that efficiency should be. 
He also defined a hypothetical heat engine (now known as the \emph{Carnot engine}) 
	which would achieve the maximum efficiency. 
	Later, this efficiency --- now known as the Carnot efficiency --- was shown to be
	\be 
		\eta_C=1-\frac{\beta_\hot}{\beta_\cold},
	\ee 
	where $\beta_\cold$, $\beta_\hot$ are the inverse temperatures of the cold and hot heat baths\footnote{Throughout this manuscript we set the Boltzmann's and Planck's constants, $k_{\rm B}$ and $\hbar$ to unity.}; as they are now more commonly referred to.

	Unlike the large scale heat engines that inspired thermodynamics, we are now able to build
	nanoscale quantum machines consisting of a mere handful of particles, and this has
	prompted many efforts to understand quantum thermodynamics~\cite{Ng2018,brandao2013resource,HO-limitations,2ndlaw,aaberg2013truly,dahlsten2011inadequacy,uzdin2016quantum,definingworkEisert,gemmer2015single,LJR2015description,scully2003extracting,QCTTST2015,QTgeneralprocesses,epsdetwork,tajima2014refined,gemmer:book,nanoscaleHE,universalityQCE2015,extworkcorr, alhambra2016fluctuatingstates, alhambra2016fluctuatingwork}. 
	In particular, given such nanoscale devices, one of the main issues addressed in quantum thermodynamics is the single-shot analysis of thermodynamical state transitions or work extraction. This approach addresses the scenario of moving away from the thermodynamic limit of infinitely many identical particles: one is increasingly interested in how a single copy of some small system (such as one to two atoms), in one instance of its evolution, may exhibit thermodynamical behaviour (due to interactions with its environment). These approaches are complementary to previous approaches of studying the ensemble or time averaged behaviour of the system. Many of the results in single-shot analysis have shown that the workings of thermodynamics become more intricate in such regimes
		~\cite{HO-limitations,2ndlaw,aaberg2013truly,dahlsten2011inadequacy}. While earlier efforts in quantum thermodynamics apply methods in statistical physics to average over time~\cite{spohn1978irreversible}, or particular models of open systems dynamics \cite{Mitchison2018,Batalho2018,SBLP2011smallest,BLPS2012virtual,MarMit,ent_enchance_cooling,Alicki1979,GK2015,Kosloff13,KL14}, single-shot thermodynamics adopts tools from quantum information theory to contribute to answering similar physical problems in a different light. 
		
		Following this approach of single-shot quantum thermodynamics, we show in this \doc~that unlike at the macroscopic scale --- where Carnot's fundamental results undoubtedly hold --- 
	there are new fundamental limitations to the maximal efficiency at the nanoscale. 
	Most significantly, we show that this maximum efficiency\,\footnote{We emphasize that by ``maximum efficiency'', it is understood that, for fixed hot and cold bath temperatures, we are maximising the efficiency over all possible heat engine cycles, in other words any machine that may interact with the different baths and undergo a cyclic process.} \emph{depends} on the heat baths. 
	In other words, we find that the Carnot efficiency can be achieved, but only when certain conditions on the bath  Hamiltonian are satisfied. 
	Otherwise, a reduced efficiency is obtained, highlighting the significant difference in the performance of heat engines in the single-shot regime. 
	
	This \doc~is organized as follows: in Section \ref{sec:setup main}, we first introduce the setup, and clearly detail all assumptions made about the heat engine model of our study. Next in Section \ref{sec:work nano regime main} we introduce different notions of work in the nano regime, which will be important for understanding our results. Next, in Section \ref{sec:results main} we detail our findings. We start by showing that although a positive amount of ``perfect work'' cannot be extracted, some amount of ``near perfect'' work is possible, and we derive the efficiency that can be achieved while extracting this type of work. Finally we conclude with a summary of our results and open questions in Section \ref{sec:conclusion}.

\section{Setup}\label{sec:setup main}
\subsection{The heat engine model}

A heat engine (see Fig.~\ref{fig:heatEngine}) is a procedure for extracting work from a temperature difference between two systems. It comprises of four basic elements: the two thermal baths at distinct inverse temperatures $\beta_\hot$ and $\beta_\cold$, a machine, and a system to which work is extracted, often referred to as a battery.
The machine interacts with these baths in such a way that utilizes the temperature difference between the two baths to perform work extraction. 
The battery is a particularly useful way of quantifying extracted work in this model as it allows for the transfer/storage of energy into the battery ancillary system, while the machine returns to its original state.
Different battery models such as the work qubit or qubits~\cite{HO-limitations,2ndlaw, uzdin2016quantum, bylicka2016thermodynamic},  
 the weight \cite{skrzypczyk2014work,BLPS2012virtual} and the purity battery \cite{del2011thermodynamic,qlandauer} have been recently studied and used to quantify work. Although the concept resembles the notion of a work reservoir such as in \cite{deffner2013information}, in these recent works, the problem of extracting work is cast in a strongly operational perspective: one is not only interested in increasing the average energy of the battery, but is also interested in the final state such a battery takes, so that it may be used in the future to enable other processes \cite{campaioli2017enhancing}.
 
 \begin{figure}[t]
 	\begin{center}		\includegraphics[scale=0.33]{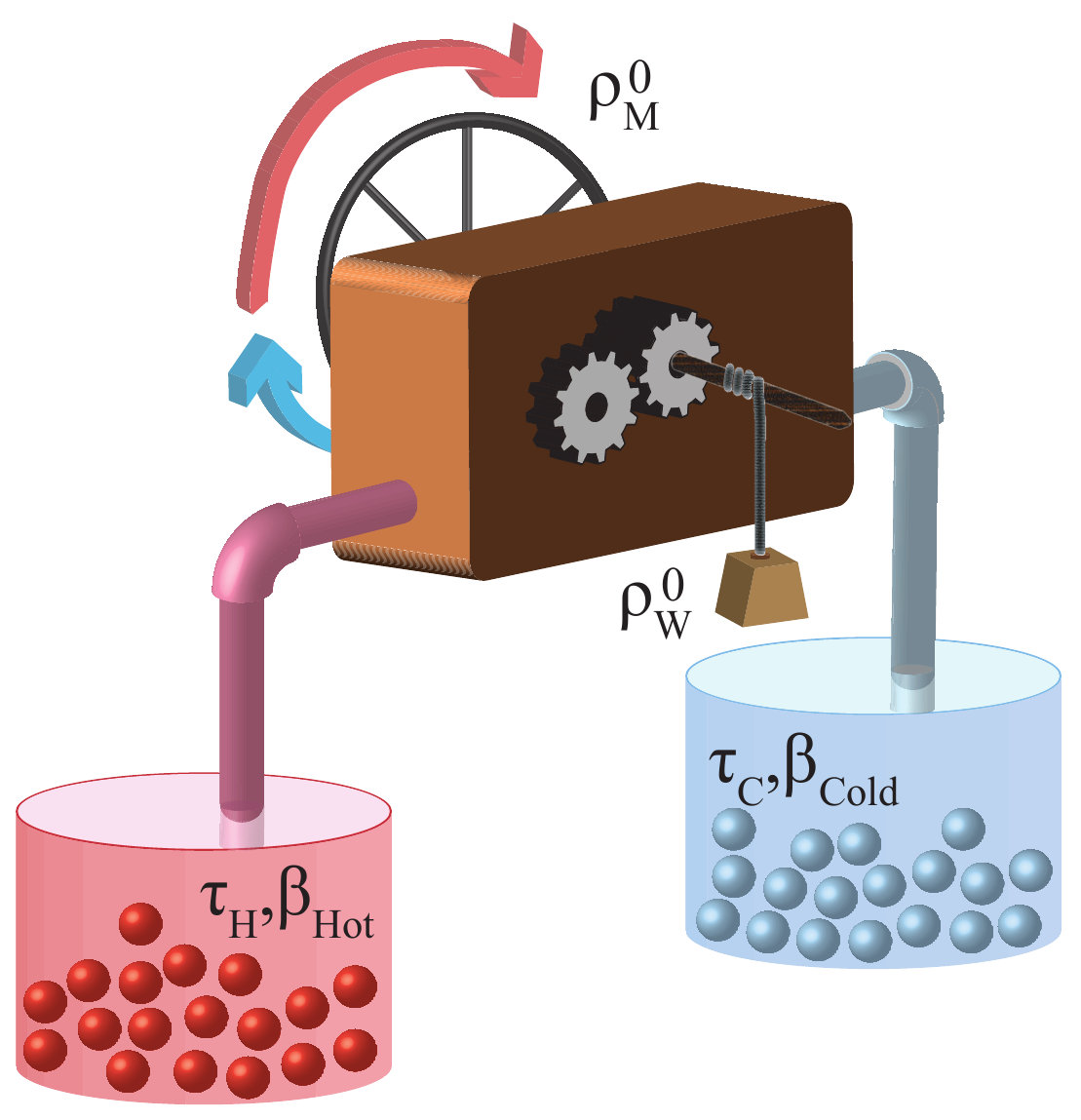}
 		\caption{Illustration of a heat engine that contains four main components: two baths at different temperatures, a machine and a system we call a battery.
 		}\label{fig:heatEngine}
 	\end{center}
 \end{figure}
In this section, we describe a general heat engine setup, where all involved systems and changes in energy are accounted for explicitly. Let us begin with the total Hamiltonian 
\begin{equation}\label{eq:totalH_main}
\hat H_\textup{tot} = \hat H_\cold +\hat H_\hot + \hat H_\mach +\hat H_\batt,
\end{equation}
where the indices Hot, Cold, M, W represent a hot thermal bath (Hot), a cold thermal bath (Cold), a machine (M), and a battery (W) respectively. 
We adopt a resource theoretic approach, which allows all energy-preserving unitaries $U(t)$ on the global system, i.e. all unitaries which obey 
\be\label{eq:global u main text}
[U(t),\hat H_\textup{tot}]=0. 
\ee 

Note that this global, time-independent Hamiltonian, is non-interacting; which is an indispensable feature of the resource theory framework~\cite{HO-limitations, brandao2013resource}. This approach allows us to mark a clear separation of the subsystems, and thus justifies the use of Gibbs states as thermal reservoirs. This is because for any given Hamiltonian, the Gibbs states uniquely satisfy complete passitivity \cite{2ndlaw,pusz1978passive}, which forbids the extraction of average energy via unitary operations on the state. Nevertheless, interactions of arbitrary strength between all the different systems are still allowed, under the only condition of Eq. ~\eqref{eq:global u main text}. 
Indeed, $U(t)$ can be of the form
\be 
U(t)=\me^{\mi t (\hat H_\textup{tot}+\hat I_{\cold\hot\mach\batt})},
\ee
with the norm of $\hat I_{\cold\hot\mach\batt}$ being arbitrarily large, so long as $ [\hat I_{\cold\hot\mach\batt},\hat H_\textup{tot}]=0 $ in order to preserve energy. In order to implement the unitary over the heat engine, one may use an auxiliary system. This is an aspect of the model which is common to all resource theory approaches to quantum thermodynamics. The optimal way to achieve this remains an open question. See \cite{woods2016autonomous} for partial results and Section V in \cite{Ng2018} for a more in-depth discussion.

The initial state of our heat engine will be of the form
\begin{align}\label{eq:rho0_main}
\rho^{0}_\total=\tau_\cold^{0}\otimes \tau_\hot^{0}\otimes \rho_\mach^{0}\otimes \rho_\battery^{0}.
\end{align}
The state $\tau_\hot^{0}$ ($\tau_\cold^{0}$) is the initial thermal state at inverse temperature $\beta_\hot$ ($\beta_\cold$), corresponding to the hot (cold) bath Hamiltonians $\hat H_\hot, \hat H_\cold$, with $\beta_\cold> \beta_\hot$. More generally, given any Hamiltonian $\hat H$ and inverse temperature $\beta$, the thermal state is defined as $\tau=\frac{1}{\tr (e^{-\beta\hat H})}	e^{-\beta\hat H}$. 
The initial machine $(\rho_\mach^0, \hat H_\mach)$ can be chosen arbitrarily, as long as its final state is preserved, and therefore the machine acts like a catalyst. Lastly, the initial battery state in our setup $\rho_\battery^0$ is any energy eigenstate of the battery --- see Section \ref{sec:work nano regime main} for a further description of the battery model used in this manuscript.

Heat engines in practice tend to have their hot and cold thermal baths of different {relative} sizes. Power stations near the ocean are good examples of this. 
Here we allow our baths to play a similar role. The hot bath may be arbitrarily large, and acts like a reservoir, while the cold bath is of some fixed finite size. A reversal of the hot and cold bath sizes would be possible within our framework but unnecessary to reach our conclusions.

One cycle of the heat engine process produces a final \emph{reduced} state~\footnote{For any bipartite state $\rho_{\rm AB}$, we use the notation of reduced states $\rho_{\rm A}:=\tr_\textup{B}(\rho_{\rm AB})$, $\rho_{\rm B}:=\tr_\textup{A}(\rho_{\rm BA})$.} 
\begin{align}\label{eq:rho1 corrs}
	\rho_\CMW^1 = \tr_\hot \left[U(t)\rho^{0}_\total U(t)^\dagger\right],  
	\end{align}
where the machine should be preserved, i.e. $\rho_\mach^{1}=\rho_\mach^{0}$, and $\rho_\cold^1,$ $\rho_\battery^{1}$ are the final local states of the cold bath and battery. Note that we allow for arbitrary correlations to exist in the final state $\rho_\CMW^1$, as long as the reduced state of the machine is preserved. Our central quantity of interest is the maximum efficiency of work extraction; and whether it can be as large as the Carnot efficiency. One may argue that allowing for arbitrary final correlations (quantum or classical) is not physically well motivated; since the correlations might degrade the functionality of the machine and potentially reduce the efficiency of the heat engine in subsequent cycles. Fortunately, this debate can be avoided, since we also show that if the heat engine setup is such that Carnot efficiency is achievable, then the final state of the heat engine must be of product form:
\begin{align}\label{eq:rho1_main}
\rho_\CMW^1  =\rho_\cold^{1}\otimes  \rho_\mach^{1}\otimes\rho^1_\battery.\qquad\qquad
\end{align}
The fact that correlations in the final state never allow, in all cases, for a larger amount of work nor an increased efficiency is proven in Section \ref{Correlated final state of battery} 
 of the~\suppl. However, one can intuitively see why it is so, from the fact that the \helmholtz~
 free energy~\cite{wilming,Esposito_2010,parrondo2015thermodynamics} is super-additive. More precisely, for any system of state $\rho$ in thermal contact with a bath at inverse temperature $\beta$, consider the \helmholtz~
 free energy\,\footnote{For systems in thermodynamics equilibrium, this free energy is also known as the Helmholtz free energy. For simplicity, we will also refer to this quantity as \emph{standard} free energy in the text.}
	\begin{equation}\label{eq:HFE}
	F(\rho):= \tr(\hat H\rho) - \beta^{-1} S(\rho),
	\end{equation} 
	where $S(\rho)=-\tr(\rho\ln\rho),$ is the von Neumann entropy. 
	For our setup, super-additivity is then the statement that for any bipartite state $\rho_{AB}$, $ F(\rho_{AB}) \geq F(\rho_A)+F(\rho_B) $, where equality is achieved if and only if $ A $ and $ B $ are uncorrelated. This implies that creating a final state with correlations is at least as hard as creating one without correlations, and thus allowing the final components of the heat engine to become correlated, cannot help one to achieve the Carnot efficiency; which may be achieved only in the limit when $F(\cdot)$ is invariant under one cycle of the heat engine (i.e. the limit in which the heat engine is macroscopically reversible)\footnote{As a side remark for readers familiar with \cite{Muller2017}, the reason why correlating catalysts do not boost efficiency in our set-up, is because the dimension of the catalyst diverges in the limit approaching the Carnot efficiency. Furthermore, the techniques developed in this manuscript might potentially pave way to solving an open problem in \cite{Muller2017}, as pointed out by the authors.}. 
As such, since a heat engine process needs to satisfy the non-increase of the free energy, it turns out that without loss of generality we can assume $ \rho_\CMW^1 $ to be of the form given in Eq.~\eqref{eq:rho1_main} when investigating the achievability of Carnot efficiency in our setup. We therefore base the rest of our analysis in the main text on Eq.~\eqref{eq:rho1_main}. 

Since $(\tau_\hot^0, \hat H_\hot)$ and $(\rho_\mach^0, \hat H_\mach)$ can be arbitrarily chosen and since Eq. \eqref{eq:rho1_main} is assumed, 
	the setup now corresponds to the set of \emph{catalytic thermal operations} \cite{2ndlaw} one can perform on the joint state $\CW$. This implies that the cold bath is used as a resource state.
By catalytic thermal operations that act on the cold bath, using the hot bath as a thermal reservoir, and the machine as a catalyst, one can possibly extract work and store it in the battery.
In the next section, we see how work is defined and categorized according to initial and final states of the battery $ \rho_\batt^0 $ and $ \rho_\batt^1 $.
As for now, to summarize, the following assumptions are made in our heat engine setup:
\begin{enumerate}
	\item The initial global state is a product state between all the systems, as shown in Eq.~\eqref{eq:rho0_main}. 
	\item
	Cyclicity of the machine, i.e. system $ M $ undergoes a cyclic process:  $\rho_\mach^0=\rho_\mach^{1}.$ 
	\item
	The heat engine as a whole is isolated from and does not interact with the world, i.e. $[U(t),\hat H_\textup{tot}]=0$. 
	 This assumption ensures that all possible resources in a work extraction process has been accounted for.
	\item The Hilbert space associated with $\rho_\sttotal^0$ is finite dimensional but can be arbitrarily large. 
\end{enumerate}

\subsection{Work in the nanoregime}\label{sec:work nano regime main}

	The definition of work when dealing with nanoscopic quantum systems has seen much attention lately
	\cite{HO-limitations, 2ndlaw, aaberg2013truly, dahlsten2011inadequacy, definingworkEisert, gemmer2015single}. 
	Performing work is always understood as changing the energy of a system, which in this \doc~is called \emph{battery}.
	In the macroregime, one often pictures raising a weight on a string. 
	In the nanoregime, this corresponds to changing the energy of a quantum system by pumping it to an excited state (see Fig.~\ref{fig:battery}).
	In particular, a minimalistic battery model can be demonstrated as a two-level system~\cite{2ndlaw}. 
	Performing work corresponds to bringing the state from its ground state to the excited state, 
	where the energy gap is fine-tuned to the amount of work $W_\textup{ext}$ to be done. 
	
	While an arbitrary energy spacing is difficult to realize in a two-level system, 
	it can be done by picking two levels with the desired spacing from a \emph{quasi-continuum} battery: 
	this battery comprises of a large but finite number of discrete levels which form a quasi-continuum. 
	Such a battery closely resembles the classical notion of a ``weight attached to a string'' as considered in~\cite{skrzypczyk2014work}. 
	The battery can be charged by bringing it from a particular state (e.g. the ground state) to any of the higher energy levels. 
	In this paper, we adopt the use of such a quasi-continuum battery model.
	This battery $ W $ has a Hamiltonian (written in its diagonal form)
	\begin{align}
	\hat H_\battery:=\sum_{i=1}^{n_\battery} E^\battery_i|E_i\ra \la E_i|_\battery,
	\end{align}
	where $ \lbrace E_i^\batt \rbrace_{i=1}^{n_\batt} $ is a set which can be arbitrarily large, but of fixed cardinality; while its elements $E_i^\battery\in\rr$ may or may not be uniformly bounded.
	
	\begin{figure}[h!]
		\begin{center}
			\includegraphics[scale=0.4]{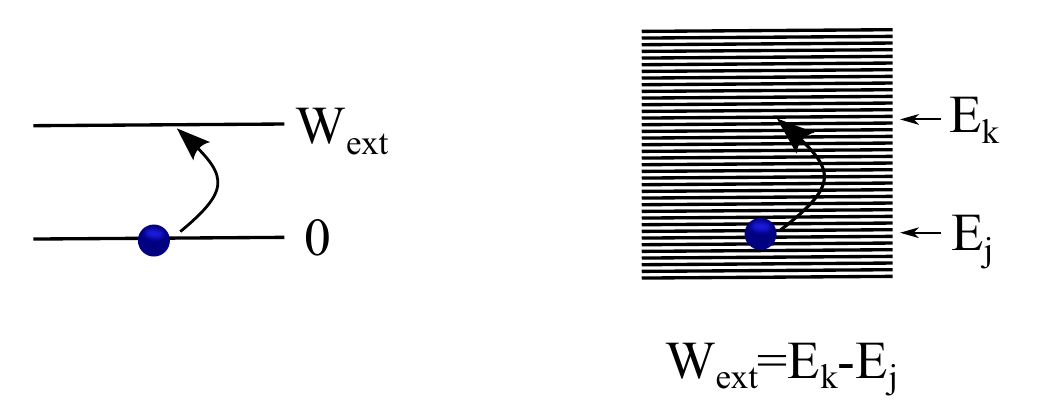}
			\caption{
				\textbf{The battery} models a work-storage component of a heat engine. In our setup, an adaptive quasi-continuum of energy levels is used.
			}\label{fig:battery}
		\end{center}
	\end{figure}

	One aspect of extracting work $W$ is to bring the battery's initial state $\rho_\textup{W}^0$ to some final state $\rho_\textup{W}^1$ such that 
	$W = \tr(\rho_\textup{W}^1 \hat{H}_\textup{W}) - \tr(\rho_\textup{W}^0 \hat{H}_\textup{W}) > 0$. 
	However, a change in energy alone, does not yet correspond to performing work. 
	It is implicit in our macroscopic understanding of work that the energy transfer takes place in an ordered form. 
	When lifting a weight, we \emph{know} its final position and can exploit this precise knowledge to transfer all the work onto a third system without --- in principle --- losing any energy in the process. 
	
	In the quantum regime, such knowledge corresponds to $\rho_\textup{W}^1$ being a pure state. 
	When $\rho_\textup{W}^1$ is diagonal in the energy eigenbasis of $\hat{H}_\textup{W}$, then $\rho_\textup{W}^1$ is an energy eigenstate.
	We can thus understand work as an energy transfer about which we have perfect information, while heat, in contrast, is an energy transfer about which we hold essentially no information, other than average energy increase. 
	Clearly, there is also an intermediary regime in which we transfer energy, while having some, but imperfect information.
	
	To illustrate this, consider the quasi-continuum battery described above, and starting out from an arbitrary initial energy eigenstate $ \rho_\battery^{0}=\proj{E_j}_\battery
	$. Ideally, we want to extract work and store it in the battery, by inducing a transition to another energy eigenstate $\proj{E_k}_\batt$, where $E^\batt_k>E^\batt_j$. 
	Let $ \varepsilon $ denote the failure probability of our doing so, and
	\be\label{W general_main}
	W_\textup{ext} = E^\batt_k-E^\batt_j>0
	\ee
	the work extracted. In the case where $W_{\rm ext}$ is a value such that the transition $\rho_\CW^0\rightarrow\rho_\CW^1$ is possible via catalytic thermal operations, it corresponds to extracting work.
	 We first define the case in which we always succeed:
	\begin{definition}\label{def: main text: perferct work}
		The work extracted $W_\textup{ext}$, is called \textup{perfect work} when $ \rho_\batt^1 =  \proj{E_k}_\batt$, in other words, $ \varepsilon = 0 $.
	\end{definition}
In general however, there is a non-zero failure probability of work extraction. This causes us to lose some information about the battery state:
	 rather than the final state of the battery being $\proj{E_k}_\batt$, it could be \textit{any} state which is a distance $ \varepsilon $ away, namely
	\be \label{eq:trace dist ep bat main text}
	d(\rho_\batt^1,\proj{E_k}_\batt)=\varepsilon,
	\ee   
	where we use $ d(\rho,\sigma) $ to denote the trace distance of two states $ \rho $ and $ \sigma $ (see Eq. \eqref{eq:trace dist def} for a definition). 
	The usage of trace distance here is motivated by its strong operational meaning: the smaller the trace distance between two states, the harder it is to distinguish them using any quantum measurements. We allow for any value of
	\be \label{eq:uniformity condition on varepsilon}
	\varepsilon\in [0,l], \textup{ for any fixed $l<1$},
	\ee	
 and the smaller $\varepsilon$ is, the closer we are to the situation of perfect work\,\footnote{In Eq. \eqref{eq:uniformity condition on varepsilon}, the constraint of a fixed $l <1$, is simply to rule out the physically irrelevant limit $\varepsilon\rightarrow 1$,  where the final state has no overlap at all with $\proj{E_\batt^k}$. We could choose $l=0.999$ for example.}. Note that any energy incoherent state $\rho_W^1$ such that $d(\rho_W^1,\proj{E_k}) = \varepsilon$ can be written in the form
 	\begin{equation}\label{eq:rhow1revised}
 	\rho_W^1 = (1-\varepsilon)\proj{E_k} + \varepsilon \rho_{\rm junk},
 	\end{equation} 
 	where $ \rho_{\rm junk} $ is any density matrix of the battery that does not have support on $ \proj{E_k} $. 
	The parameter $W_{\rm ext}>0$ is defined as an energy difference in Eq. \eqref{W general_main}.
	Our goal is to maximize the achievable value of $W_\textup{ext}$, while allowing for all final battery states of the form of Eq.~\eqref{eq:rhow1revised}, such that there exists a catalytic thermal operation enabling the transition $\rho_\CW^0\rightarrow\rho_\CW^1$.

	In this \doc, we propose a characterization of the quality of extracted energy by the entropy difference 
	\be\label{eq:Delta S def main text}
	 \Delta S := S(\rho_\batt^1)-S(\rho_\batt^0),
	\ee 
	where $S(\rho)=-\tr(\rho\ln\rho),$ is the von Neumann entropy. 
	For perfect work, $\Delta S = 0$. Another type of example one sometimes comes across, is when $ \rho_\batt^1 $ becomes a thermal state \cite{BLPS2012virtual, ent_enchance_cooling}. In such cases, although the average energy of the battery may still increase, the corresponding entropy increase is maximal~\cite{jaynes1957information}.  
	Moreover, when $\Delta S > 0$, what is relevant is not its value as an absolute, but relative to the energy $W_\textup{ext}$ that is extracted. 
	We are thus interested in the quantity $\Delta S/W_\textup{ext}$,
	in particular the limit $\Delta S/W_\textup{ext} \rightarrow 0$ \footnote{Note that $\Delta S$ and $W_\textup{ext}$ have different units. If one prefers to work with a unitless measure, one can instead work with $c\cdot \Delta S/W_\textup{ext}$ for any constant $c$ with units of inverse temperature w.l.o.g., since the limit  $c\cdot \Delta S/W_\textup{ext}\rightarrow 0$ holds iff $\Delta S/W_\textup{ext}\rightarrow 0$ holds.}, which motivates the following definition:
	
		\begin{definition}(Near perfect work)\label{def:near perfect work main text}
		We say that a partially ordered set of heat engine protocols characterized by the extracted amount of work and and corresponding failure parameters $ \mathcal{S}^\textup{H.E.}=\lbrace (W_{\rm ext},\varepsilon)\rbrace $, leads to \emph{near perfect work extraction} if
		\begin{itemize}
			\item[1)] For all values of $\varepsilon$, $0< \varepsilon\leq l,~$ for some fixed $l<1$ and
			\item[2)] 
			For all $1>p>0$, there exists a non-trivial subset of protocols $ \mathcal{S}^\textup{H.E.}_p \subset \mathcal{S}^\textup{H.E.} $ such that when $(W_{\rm ext},\varepsilon)\in\mathcal{S}_p$\,, then \be \frac{\Delta S}{W_\textup{ext}}<p.\ee
\end{itemize}
	\end{definition} 

We shall see in Section \ref{sec:results main} that the maximum efficiency of a heat engine cycle can only be achieved in the limit where $ W_\textup{ext}\rightarrow 0 $. 
In such cases, condition 2) in Def.~\ref{def:near perfect work main text} also corresponds to the limit $\Delta S/W_\textup{ext} \rightarrow 0$. In Section \ref{section:The setting} 
 of the~\suppl, we show that the restriction of near perfect work can be re-cast in terms of an equivalent condition involving the probability of failure $\varepsilon$, namely: the two conditions of Definition \ref{def:near perfect work main text} above  
are satisfied iff 
\be\label{eq:alternative equiv to n.p.w. def main text}
\displaystyle\lim_{\varepsilon\rightarrow 0^+} \frac{\Delta S}{W_\textup{ext}} =0.
\ee

Perfect and near perfect work are the two types of energy investigated in this paper, due to their strong operational significance in capturing both the essence of energy increase and knowledge about the final battery state. Other types of energy increase, which we refer to as \textit{imperfect work}, are studied in a separate paper, due to the large qualitative differences in the extracted energy~\cite{paper2}. For example, it is worth noting that for both perfect and near perfect work, one may recover Carnot's results about the efficiency of heat engines by invoking the non-increase of free energy 
(see Section \ref{sec:Comparison with standard entropy results}); this is not the case for imperfect work, where one can surpass the Carnot efficiency, since $ \Delta S $ is non-negligible compared to $ W_\textup{ext} $; thus heat contributions are not separated from the extracted energy. From another perspective, this means that imperfect work is a more debatable way of quantifying energy as work, nevertheless, this is addressed in \cite{paper2} due to the extensive literature on quantum thermodynamics that uses mean energy increase as a quantifier of work. 

\subsection{Definition of efficiency and maximum efficiency}\label{Definition of efficiency and maximum efficiency}

The efficiency of a heat engine is defined as 
\be\label{eq:efficiency main text}
\eta:=\frac{W_{\rm ext}}{\Delta H},
\ee
where $\Delta H$ is the amount of heat drawn from the hot bath, namely $\Delta H= \tr(\hat H_\hot \rho_\sthot^0)-\tr(\hat H_\hot \rho_\sthot^1)$. Since the machine's final and initial states are the same after one cycle, and the initial state of the cold bath is fixed; due to total mean energy conservation, $\Delta H$ can be expressed solely as a function of $\rho_\cold^1$ and $W_\textup{ext}$. From Eq. \eqref{eq:efficiency main text}, we can define the maximum achievable efficiency in the nanoregime $\eta^\textup{nano}$ as a function of the final state of the cold bath $\rho_\cold^1$. More precisely,
\begin{align}
&\eta^\textup{nano}
 (\rho_\cold^1)\label{eq:max nano as function fo cold bath_main}
:=\sup_{W_\textup{ext}>0} \eta(\rho_\cold^1) \quad \textup{ subject to }&\\
&F_\alpha(\rho^0_\battery\otimes\tau_\cold^0
)\geq F_\alpha(\rho^1_\battery\otimes\rho_\stcold^1
)\quad \forall \alpha\geq 0,&\label{eq:second laws line}
\end{align}
where $F_\alpha$ are the \textit{generalized free energies}
(see Eq.~\eqref{eq:generalfreeenergyMT} in Section \ref{sub:proof_overview} for definition). Note that by fulfilling Eq.~\eqref{eq:second laws line}, we are already maximizing over all possible heat engine cycles for a given value of $ W_\textup{ext} $. Eq.~\eqref{eq:max nano as function fo cold bath_main} further maximizes efficiency over achievable values of $ W_{\rm ext} $. Recall that $ \beta_\cold,\beta_\hot $ are fixed, and $ \rho_{\rm W}^0 $ is an energy eigenstate. In the resource theoretic framework for thermodynamics, the generalized free energies provide conditions for the possibility of catalytic thermal operations between the initial and final states to take place. In other words, Eq.~\eqref{eq:second laws line} is an application of the so-called \textit{generalized second laws} of quantum thermodynamics, to our heat engine setup; as shown in detail in Section \ref{sub:proof_overview}. They can be seen as a generalization of the second law of thermodynamics (see Section \ref{sec:Comparison with standard entropy results}), in single-shot quantum thermodynamics. 
From Eq.~\eqref{eq:generalfreeenergyMT}, we see that the constraint of Eq. \eqref{eq:second laws line} is equivalent to $D_\alpha(\rho^0_\battery\otimes\tau_\cold^0\|\tau_\CW^h
)\geq D_\alpha(\rho^1_\battery\otimes\rho_\stcold^1\|\tau_\CW^h
)$ for all $\alpha\geq 0$, where $\tau_\CW^h$ is the Gibbs state of the cold bath and battery at inverse temperature $\beta_h$, and $D_\alpha$ are called \textit{$ \alpha $-R\'enyi divergences}. 
The family of $ \alpha $-R\'enyi divergences (which the max relative entropy discussed later is a member of) have been a powerful tool in single-shot quantum information theory, when it comes to tasks such as randomness extraction \cite{renner2004smooth,konig2009operational,tomamichel2011leftover}, source coding \cite{renner2004smooth,wang2009simple}, or hypothesis testing \cite{csiszar1995generalized,shayevitz2011renyi} for finite number of trials.

 With Eqns.~\eqref{eq:max nano as function fo cold bath_main} and \eqref{eq:second laws line} at hand, one can now define the maximum efficiency across all final states in the cold bath Hilbert space  $\mathcal{S}(\mathcal{H}_\cold)$, when demanding near perfect work. Specifically,
\be \label{eq:max eff general main text}
\eta_\textup{max}=\sup_{\rho_\cold^1\in\mathcal{S}(\mathcal{H}_\cold)\\ }\eta^\textup{nano}(\rho_\cold^1),
\ee
where the supremum is also over all partially ordered heat engine protocols corresponding to near perfect work.
In our analysis, a particular notion of efficiency emerges as the quantity of interest, which we refer to as the \textit{quasi-static} limit.  This corresponds to the maximum efficiency when the final state of the cold bath is thermal and its temperature only increases by an infinitesimal amount, namely
\be \label{eq:max eff general stat main text}
\eta_\textup{max}^\textup{stat}=\lim_{g\rightarrow 0^+}\eta^\textup{nano}(\tau(g))
\ee
where $\tau(g)$ is the Gibbs state on $\mathcal{S}(\mathcal{H}_\cold)$ at inverse temperature $\beta_f=\beta_\cold-g$. The reason why this limit emerges as a relevant scenario is as follows: first of all, we show that $ \eta^{\rm nano} (\rho_\cold^1) $ depends on two quantities, namely the average energy change in the cold bath (denoted as $ \Delta C $ throughout this \doc), and the extractable work $ W_{\rm ext} $. Given any fixed amount of $ \Delta C >0$, maximizing the efficiency over $\rho_\cold^1 $ corresponds to further maximizing $ W_{\rm ext} $ according to the $ F_1 $ constraint in Eq.~\eqref{eq:second laws line}. Moreover, $ F_1 $ is precisely the non-equilibrium free energy in Eq.~\eqref{eq:HFE}. We show that this maximum $ W_{\rm ext} $ occurs precisely when $ \rho_\cold^1 = \tau(g) $ for the particular value of $ g $ that corresponds to the fixed $ \Delta C $. Furthermore, we also prove that the efficiency is monotonically decreasing with this parameter $ g $, which means that the maximum efficiency occurs at the quasi-static limit described in Eq.~\eqref{eq:max eff general main text}. Indeed, if one evaluates the efficiency taking the quasi-static limit when assuming that only the condition on $ F_1 $ needs to be satisfied, one obtains the Carnot efficiency $ \eta_C $ in the limit $ g\rightarrow 0^+ $.
Since this is a \textit{necessary} condition for the possibility of a heat engine process, we will thus frequently work in the quasi-static limit in the rest of this \doc.

\section{Main results}\label{sec:results main}


\subsection{No perfect work}\label{No perfect work}

		Before establishing our main result, we first show that in the nanoscopic regime, 
		no heat engine can output perfect work (Def. \ref{def: main text: perferct work}). 
		That is, the efficiency of any such heat engine,
		\be 
		\sup_{ \rho_\cold^1\in\mathcal{S}(\mathcal{H}_\cold)}	\eta^\textup{nano}(\rho_\cold^1)= 0.
		\ee 
		In other words, there exists no global energy preserving unitary $U(t)$ obeying Eq. \eqref{eq:global u main text} for which $W_\textup{ext} > 0$ can be achieved. 
		The proof of this statement can be found in Section \ref{subsec:impossible} 
		 of the \suppl. In fact, it is interesting to note that the impossibility of drawing perfect work is a direct consequence of needing to satisfy one instance of the generalized second laws, in particular $ F_\alpha(\rho) $ in Eq. \eqref{eq:second laws line} when $ \alpha=0 $.
		
		While this might appear puzzling at first glance, it has a very nice analog in information theory; namely zero-error data compression. The scenario is as follows: suppose one desires to send a message across a particular channel. Depending on the redundancy of your data, you might not need to send the full file over: you can send a compressed version of the data, that only sends a fraction of symbols. But if your data is distributed over symbols with respect to a probability distribution of full rank, then theoretically you cannot perform compression with precisely zero error --- compression may be possible, however, if very small errors are allowed.
		
		

\subsection{Obtainable efficiency}

		Clearly, however, it is unreasonable to say that no heat engine could work at all in the quantum nanoregime, prompting the question how this might be possible.
		We show that for \emph{any} $\varepsilon > 0$, there exists a heat engine such that $W_\textup{ext} > 0$ can be achieved.
		Therefore, a heat engine is possible if we ask only for near perfect work. 
		Interestingly, even in the macroscopic regime, we can envision a heat engine that only extracts work with probability $1-\varepsilon$, but over many cycles of the engine we do not notice this feature when looking at the average work gained in each run.

		To study the efficiency in the nanoscale regime we make crucial use of the second laws of quantum thermodynamics~\cite{2ndlaw}.
		It is apparent from these laws that we might only discover further limitations to the efficiency than we see at the macroscopic scale. 
		Indeed they do arise, as we find that the efficiency no longer depends on just the temperatures of the heat baths. 
		Instead, the explicit structure of the cold bath Hamiltonian $\hat{H}_{\cold}$ becomes important (a similar argument can be made for the hot bath) --- even when choosing the optimal machine.

		We conduct the full analysis of the efficiency according to the second laws of thermodynamics considering a cold bath comprised of $n$ non-interacting two-level systems (qubits) each with its own energy gap  $\bar E_k$,
		\begin{equation}
			\hat H_\cold=\sum_{k=1}^n \id^{\otimes (k-1)}\otimes \bar E_k|\bar E_k\ra\la \bar E_k|\otimes\id^{\otimes (n-k)},\label{eq:omega for eqch qubit}
		\end{equation} 
		where $n$ can be arbitrarily large, but finite. 
		Let us denote the spectral gap of the cold bath --- the energy gap between its ground state and first excited state --- by $E_\textup{min}$. 
		We can then define the quantity
		\be\label{eq:eq23inmaintext}
			\Omega=\frac{E_\textup{min}(\beta_\cold-\beta_\hot)}{1+e^{-\beta_\cold E_\textup{min}}},
		\ee 
		
		Whenever $\Omega \leq 1$, consider all qubits on sites $k$ for which
		\be\label{eq:bar E and others condition}
		\frac{\bar E_k(\beta_\cold-\beta_\hot)}{1+e^{-\beta_\cold \bar E_k}}\leq 1,
		\ee
		holds. Since $\Omega \leq 1$ in this case, there will be a non-trivial subset $ \mathcal{C} $ of the cold bath qubits (at least one qubit) where Eq. \eqref{eq:bar E and others condition} holds. We show that the quasi-static efficiency $\eta_\textup{max}^\textup{stat}$ (for which the cold bath is taken over $ \mathcal{C} $) is indeed the familiar Carnot efficiency, which can be expressed as
		\be \label{eq:carnotOmegaleq1}
			\eta_\textup{max}=\eta_\textup{max}^\textup{stat}=\left(1+\frac{\beta_\hot}{\beta_\cold-\beta_\hot}\right)^{-1}.
		\ee
	Note that this is true for any $ n $ number of qubits, in particular also when $ n=1 $, which remarkably tells us that even when the cold bath consists of only a single qubit, Carnot efficiency can still be achieved when $ \Omega\leq 1 $ is satisfied. Intuitively, although the constrained optimization of Eq. \eqref{eq:max nano as function fo cold bath_main} looks complicated in general, nevertheless only the constraint imposed by the $\alpha=1$ second law in Eq. \eqref{eq:second laws line} is the most stringent one, and the optimal transition is when $F(\rho_\CW^0) = F(\rho_\CW^1)$. The other constraints in Eq. \eqref{eq:second laws line} are trivially satisfied with an inequality. Therefore, the second laws give effectively the same constraint as the usual second law.

		However, when $\Omega>1$, we find a new nanoscale limitation. In this situation, the efficiency for near perfect work is only
		\be \label{eq:subCarnotOmega>1}
			\eta_\textup{max}^\textup{stat}=\left(1+\frac{\beta_\hot}{\beta_\cold-\beta_\hot} \Omega \right)^{-1}
		\ee
for a quasi-static heat engine. Furthermore, note that for the case of a single qubit, all energy-incoherent states are thermal states with a particular temperature, and therefore the quasi-static limit is the only possible parametrization for the limit $ \rho_\cold^1\rightarrow\tau_\cold^0 $. This means that for a single-qubit cold bath, if $ \Omega >1 $ holds, then Eq.~\eqref{eq:subCarnotOmega>1} gives the maximum achievable efficiency, which will be strictly less than $ \eta_C $.

		Eq. \eqref{eq:subCarnotOmega>1} marks a limitation at the nano/quantum scale. This limitation occurs because in the constrained optimization of Eq. \eqref{eq:second laws line}, unlike when $\Omega\leq 1$, the $\alpha=\infty$ second law poses the strongest constraint (even stronger than the constraint of the \helmholtz~free energy), and therefore becomes solely relevant in dictating the state transition. In particular, the $\alpha=1$ second law can only be satisfied with a strict inequality when $\Omega>1$.
		The $ \alpha $-R\'enyi divergence $D_\infty$, corresponding to the $\alpha=\infty$ second law in Eq. \eqref{eq:second laws line}, is a well-known quantity in single-shot information theory called the \textit{max relative entropy}, denoted by $D_\textup{max}$~\cite{datta2009min,brandao2011one}. 
			 As such, in this regime, we find that instead of the \helmholtz~free energy, it is the max relative entropy that determines the efficiency of the heat engine cycle.
			The emergence of the max relative entropy as the deciding factor in obtainable efficiency is a signature of single-shot effects coming into play. 
			In the quantum thermodynamics literature, it has also been shown that the max relative entropy dictates the minimum amount of input work required to create a state via catalytic thermal operations \cite{HO-limitations}. 
		
The restriction of near perfect work per cycle can now be further justified by examining how well the heat engine performs when the machine runs over many cycles: we find that if $\Omega\leq 1$, the heat engine can be run quasi-statically with an efficiency arbitrarily close to the Carnot efficiency while extracting \textit{any} finite amount of work with an arbitrarily small entropy increase in the battery. This follows from repeatedly applying our single-shot results in the $\Omega\leq 1$ regime, as shown in Section \ref{Running the heat engine for many cycles quasi-statically} 
 of the~\suppl.	
		\begin{figure}
		\begin{center}
			\includegraphics[scale=0.33]{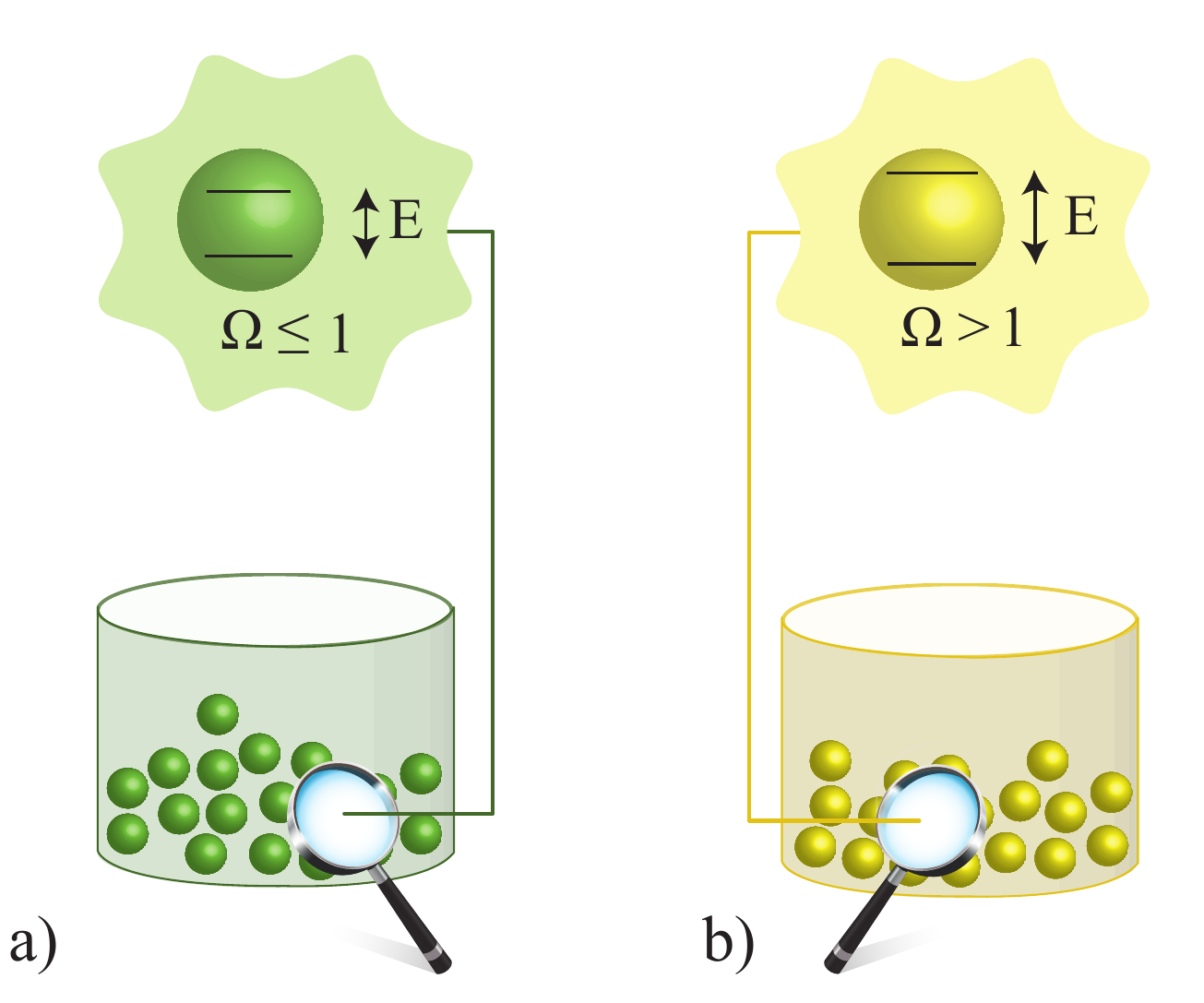}
			\caption{
				{\bf For fixed inverse temperatures $\beta_\cold$, $\beta_\hot$, the efficiency of a nanoscale heat engine depends on the structure of the cold bath.} 
				At the nano/quantum scale, Carnot's statement about the universality of heat engines does not hold. 
				We find that the maximum efficiency of a heat engine, does not \textit{only} depend on the inverse temperatures $\beta_\cold, \beta_\hot$ of the heat baths.
				In (a) the energy gaps are small enough to allow the heat engine to achieve Carnot efficiency, i.e., $\Omega\leq 1$. 
				In (b) the efficiency of the heat engine is reduced below the Carnot efficiency because the energy gap of the qubits are above the critical value $\Omega>1$.}
\label{fig:dependent agents}
		\end{center}
		\end{figure}

\subsection{Comparison to standard free energy results}\label{sec:Comparison with standard entropy results}

		For any system in thermal contact with a bath at inverse temperature $\beta$, consider the \helmholtz~free energy defined in Eq. \eqref{eq:HFE}. In the macroregime, the usual second law states that the \helmholtz~free energy never increases,
		\begin{equation}\label{eq:standardSecond}
			F(\rho_0) \geq F(\rho_1),
		\end{equation}
when the system goes from a state $\rho_0$ to a state $\rho_1$. Note that this quantity, $ F(\rho)$ 
	 is defined for arbitrary non-equilibrium states \cite{Esposito2011,parrondo2015thermodynamics}, where $ \beta $ is simply the inverse temperature of its surrounding bath. In the setting where one averages over infinitely many cycles, this quantity has been shown to correspond to the amount of work extractable from a generic, non-equilibrium state \cite{parrondo2015thermodynamics}. It also implies that $ F(\rho) $ dictates the possibility of asymptotically transforming $ n $ copies of $ \rho_0 $ into $ \rho_1 $, in the limit where $ n\rightarrow\infty $. 
	
In the single-shot quantum regime, however, Eq.~\eqref{eq:standardSecond} is but one of many conditions necessary for a state transformation. The generalized second laws \cite{2ndlaw} are a core result of quantum thermodynamics, that were derived based only on the foundations of quantum theory. Using these second laws, not only that many physical assumptions in classical thermodynamics can be avoided, but also one sees a more refined structure of the second law expressed in the context of nanoscale quantum systems. The limitations we observe on the efficiency are a consequence of having such generalized second laws.

From another perspective, the fact that more laws appear in this regime can intuitively be understood as being analogous to the fact that when performing a probabilistic experiment only a handful of times, not just the average, but other moments of a distribution become relevant. 
Indeed, all second laws converge to the standard second law in the limit of infinitely many particles~\cite{2ndlaw}, illustrating why we are traditionally accustomed to only this second law. 
The standard second law also emerges in some regimes of inexact catalysis~\cite{2ndlaw}, however, this corresponds to a degradation of the machine in each cycle, which would need one to use work to repair. Another example of a regime where the second law emerges, is when the catalysts are allowed to become correlated among themselves or with the rest of the machine \cite{lostaglio2014extracting,Muller2017}; however, in the limiting case in which only the second law is relevant, the dimension of the catalysts diverge. Furthermore, correlations between catalysts and the other states may require some additional work to destroy, hence making it less clear how to faithfully characterise the net extracted work. 

		It is illustrative to analyze our problem when we apply just the standard second law in Eq.~\eqref{eq:standardSecond} to derive bounds on efficiency, which is indeed a matter of textbook thermodynamics~\cite{reif1965fundamentals}. 
		However, we here apply the law precisely to the heat engine model as given in Section \ref{sec:setup main}, in which all energy flows are accounted for and (near) perfect work is performed. 
		One might wonder whether the limitations we observe are just due to either a limited model, or our demand for near perfect instead of average work, and might thus also arise even when invoking only the standard second law. 
		That is, are these newfound limitations really a consequence of the need to obey a wider family of second laws, or would the standard free energy predict the same things when energy is quantized, and quantum correlations are possible?

		We show independently of whether we consider perfect or near perfect work, 
		that according to the standard free energy in Eq.~\eqref{eq:standardSecond}, the maximum achievable efficiency is given by 
the Carnot efficiency. 
		Furthermore, we recover Carnot's statement that the Carnot efficiency can be achieved for \textit{any} cold bath (i.e. for a cold bath with any finite dimensional pure point spectrum). 
		We also see that for this case, the Carnot efficiency can always be achieved for quasi-static heat engines. 
		These results can be proven without invoking any additional assumptions than those laid out in Section \ref{sec:setup main}. In contrast, usual proofs of the second law require assumptions such as reversibility or that the system is in thermodynamic equilibrium at all times. Therefore, with our setup we recover exactly what Carnot predicted, namely that the maximum efficiency of heat engines only depends on the temperatures of the hot and cold bath. This rules out that our inability to achieve what Carnot predicted according to the macroscopic laws of thermodynamics is not merely the consequence of an overly stringent heat engine model, or definition of work.
	
		Finally, it is important to note that there have been several recent works \cite{uzdin2016quantum,SBLP2011smallest, skrzypczyk2014work, BLPS2012virtual, ent_enchance_cooling, tajima2014refined,Zeno_heatEngine,PhysRevX_Raam} on analyzing the efficiencies of small quantum heat engines, and had achieved Carnot efficiency. In \cite{skrzypczyk2014work}, a protocol was even constructed to achieve the Carnot efficiency for any system Hamiltonian and any arbitrary quantum state --- albeit considering operations which only preserve total energy on average. However, more common to all these approaches is that they consider an average notion of work, without directly accounting for a contribution from disordered energy (heat). Instead, one aims keeps the entropy of the battery low~\cite{skrzypczyk2014work}, or bound the higher moments of the energy distribution~\cite{BLPS2012virtual}. However, these only limit contributions from heat, but do not fully prevent them. Our notion of (near) perfect work now makes this aspect of macroscopic work explicit in the nanoregime, which has not been studied in the previous work. It is important to note that our work does not contradict previous results such as that of \cite{skrzypczyk2014work}. For example, the analysis of \cite{skrzypczyk2014work} shows that in each step of their protocol to achieve Carnot efficiency, the amount of energy change scales the same as the amount of entropy change, which does not correspond to perfect or near perfect work. 
		Needless to say, imperfect work with some contribution of heat can also be useful in certain scenarios. Yet, it does not quite constitute work if we cannot explicitly single out a contribution from heat. One could construct a machine which extracts some amount of energy, with some non-negligible amount of information. It is proven in this case that Carnot efficiency can even be exceeded~\cite{paper2}. This should not come as a surprise, because we are no longer asking for work --- energy transfer about which we have (near) perfect information.

		\section{Proof Overview}\label{sub:proof_overview}

To quantify the amount of extractable work, we apply the generalized second laws derived in \cite{2ndlaw}. The initial cold bath $\rho_\cold^0$ is thermal, and therefore diagonal in the energy eigenbasis, while the initial battery state $\rho_\batt^0$ is also a pure energy eigenstate (see Fig.~\ref{fig:battery}). Since the unitary $U(t)$ is energy conserving, it will never increase coherences between global energy eigenstates \cite{2ndlaw}. We can therefore conclude that $\rho_\CW^1$ is also diagonal in the energy eigenbasis. We can thus invoke the \emph{necessary and sufficient conditions} for a transformation to be possible~\cite{2ndlaw}. 
		Specifically, $\rho_\cold^0\otimes\rho_\batt^0\rightarrow\rho_\CW^1$ iff $\forall \alpha\geq 0$, 
		\begin{equation}\label{eq:secondlaws}
		F_\alpha(\rho_\cold^0\otimes\rho^0_\batt,\tau_\CW^h)\geq F_\alpha(\rho_\CW^1,\tau_\CW^h),
		\end{equation}
		where $\tau_\CW^h$ is the thermal state of the joint system (cold bath and battery) at temperature $T_\hot$. 
		The generalized free energy $F_\alpha$ is defined as
		\begin{equation}\label{eq:generalfreeenergyMT}
		F_\alpha(\rho,\tau):=\frac{1}{\beta_\hot} \left[D_\alpha(\rho\|\tau)-\ln Z_\hot\right],
		\end{equation}
		where $D_\alpha(\rho\|\tau)$ are known as $\alpha$-R{\'e}nyi divergences. For states $\rho,\tau$ which are diagonal in the same eigenbasis, the R{\'e}nyi divergences can be simplified to
		\begin{equation}\label{eq:renyi alfa divs Main text}
		D_\alpha(\rho\|\tau)=\frac{1}{\alpha-1}\ln \sum_i p_i^\alpha q_i^{1-\alpha},
		\end{equation}
		where $p_i,$ $q_i$ are the eigenvalues of $\rho$ and $\tau$ respectively. The case $\alpha= 1$ is defined by continuity in $\alpha$. Taking the limit $\alpha\rightarrow 1$ for Eq.~\eqref{eq:generalfreeenergyMT}, one recovers the \helmholtz~free energy, $F(\rho)=\langle\hat{H}\rangle_\rho -\beta_\hot^{-1} S(\rho)$. 
		Using the second laws~\cite{2ndlaw} is a powerful tool, since when searching for the optimum efficiency, we do not have to optimize explicitly
		over the possible machines $(\rho_\mach,\hat{H}_{\mach})$, the form of the hot bath $\hat{H}_\hot$, or the energy conserving unitary $U(t)$. 
		Whenever Eq.~\eqref{eq:secondlaws} is satisfied, then we are guaranteed a suitable choice exists and hence we can focus solely on the possible final states $\rho_\CW^1$. 
		
		Since we know that $\rho_\CW^1$ is diagonal in the energy eigenbasis, the correlations between cold bath and battery can only be classical (w.r.t. energy eigenbasis). However, even such correlations cannot improve the efficiency: we show in the~\suppl~that we may take the output state
		to have the form $\rho_\CW^1=\rho_\cold^1\otimes\rho_\batt^1$ in order to achieve the maximum efficiency.
		According to Fig.~\ref{fig:battery}, consider $\rho_\textup{W}^{0} = \proj{E_\batt^j}$ and the final state
		\begin{equation}\label{eq:rhobatt1}
		\rho_\textup{W}^{1} = (1-\varepsilon) \proj{E_\batt^k} + \varepsilon \proj{E_\batt^j}.
		\end{equation} 
		Although this is a particularly simple case of $ \rho_\batt^1 $, we can show that it is actually sufficient for our analysis, i.e. allowing a more general battery state does not change the maximum efficiency. We do this by analyzing the generalized free energy $ F_\alpha $ in the limit of $ \alpha\rightarrow\infty $, and show that any other final battery state achieves at most the same maximum efficiency given by Eq.~\eqref{eq:rhobatt1} (see \suppl~Section \ref{A more general final battery state}). 
		From the second laws Eq.~\eqref{eq:secondlaws}, we may derive the maximum amount of extractable work, which is the largest value of $W_\textup{ext} = E_\batt^k-E_\batt^j$ such that the state transition $\rho_\cold^0\otimes\rho_\batt^0\rightarrow\rho_\cold^1\otimes\rho_\batt^1$ is possible. The form of $W_\textup{ext}$ (derived in the~\suppl~Section \ref{subsub:highcertainty}) is
		\begin{align}
		W_{\rm ext} =& \inf_{\alpha\geq 0} ~W_\alpha,\label{eq:Maindefinitions1 no broduct bath}\\
		W_\alpha =& \frac{1}{\beta_\hot (\alpha-1)} [\ln (A-\varepsilon^\alpha)-\alpha\ln (1-\varepsilon)],\label{eq:Maindefinitions2 no broduct bath}\\
		A=& \frac{\sum_i p_i^\alpha q_i^{1-\alpha}}{\sum_i p_i'^\alpha q_i^{1-\alpha}},
		\end{align}
		where $p_i= \frac{e^{-\beta_\cold E_i}}{Z_{\beta_\cold}}$, $q_i= \frac{e^{-\beta_\hot E_i}}{Z_{\beta_\hot}}$ are probabilities of the thermal state of the cold bath at temperatures $\beta_\cold, \beta_\hot$ respectively, 
		and $p_i'$ are the probability amplitudes of state $\rho_\cold^1$ when written in the energy eigenbasis of $\hat H_\cold$. 
		The quantity $W_\textup{ext}$ is dependent on the initial and final cold bath $\rho_\cold^0, \rho_\cold^1$, the hot bath inverse temperature $\beta_\hot$, and the allowed failure probability $\varepsilon$. The difficulty of evaluating $W_\textup{ext}$ comes from the infimum over $\alpha$, which is completely dependent on $\beta_\hot, \beta_\cold, \hat H_\cold$ and other parameters.
		
		The efficiency $\eta$ defined in Eq. \eqref{eq:efficiency main text}, however, is not determined by the maximum extractable work, but rather by an optimal tradeoff between $W_{\rm ext}$ and the energy drawn from the hot bath, $\Delta \hot$. 
		Since $\hat H_\total = \hat H_\cold+\hat H_\hot+\hat H_\mach+\hat H_\batt$ is void of interaction terms, and since total energy is preserved, 
		we can also write the change of energy in the hot bath, in terms of the energy change in the remaining systems, 
		\begin{equation}
		\Delta \hot = \Delta \cold +\Delta W.
		\end{equation}
		where
		$ \Delta \cold:= \tr \left[ H_\cold\rho_\cold^1 \right]-\tr \left[ H_\cold\rho_\cold^0 \right], $
		and 
	$ 	\Delta W:= \tr(\hat H_\batt\rho_\batt^1)-\tr(\hat H_\batt\rho_\batt^0) $
		are \emph{changes} in average energy of the cold bath and battery. 
		We thus see that the efficiency can be described solely in terms of the battery and the cold bath.
		
		\noindent
		\textbf{Macroscopic second law}
		We first analyze the efficiency invoking only the standard second law,
		 namely assuming that the free energy ($\alpha= 1$) fully dictates if a certain state transition is possible. The question is then: given an initial cold bath Hamiltonian $\hat H_\cold$, what is the maximum attainable efficiency considering all possible final states $\rho_\cold^1$? In both cases of perfect and near perfect work, we find that the efficiency is only maximized whenever $\rho_\cold^1$ is (A) a thermal state, and (B), when it is a thermal state, can only achieve the Carnot efficiency when the inverse temperature $\beta_f$ is arbitrarily close to $\beta_\cold$. We refer to this situation as a \emph{quasi-static} heat engine. Moreover, we find that the Carnot efficiency can be achieved by any given $\hat H_\cold$. These results rigorously prove Carnot's findings when only the usual free energy is relevant.
		
		\noindent
		\textbf{Nanoscale second laws} 
		Here, when considering perfect work under the constraints of all generalized free energies coming into play, we are immediately faced with an obstacle: the constraint at $\alpha=0$ implies that $W_\textup{ext}>0$ is not possible, whenever $\rho_\cold^0$ is of full rank. This is due to the discontinuity of $D_0$ in Eq. \eqref{eq:renyi alfa divs Main text} w.r.t. the quantum state, and is similar to effects observed in information theory in lossy vs. lossless compression: no compression is possible if no error however small is allowed.
		However, when considering the limit $\Delta S/W_{\rm ext} \rightarrow 0$, i.e. near perfect work, the $D_0$ constraint is satisfied automatically.
		
		The results for the macroscopic second law form an upper bound for both the maximum extractable work and efficiency for nanoscale second laws, since the constraint of generalized free energy at $\alpha= 1$ is one of the many constraints described by all $\alpha\geq 0$. We can thus show that the results from the standard second law have the following implication in the nanoregime: if we can achieve the Carnot efficiency, we can only do so when $ \rho_\cold^1\rightarrow\tau_\cold^0 $. We analyze the quasi-static regime. Furthermore, we specialize to the case where the cold bath consists of multiple identical two-level systems, each of which are described by a Hamiltonian with energy gap $E$.
		
		\begin{figure}
			
			\includegraphics[width=0.7\linewidth]{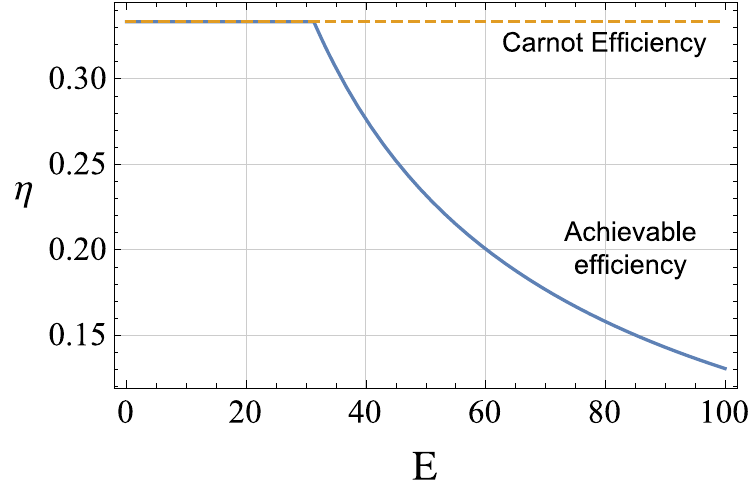}
			\label{fig:eff_plots_E}
			
			\includegraphics[width=0.7\linewidth]{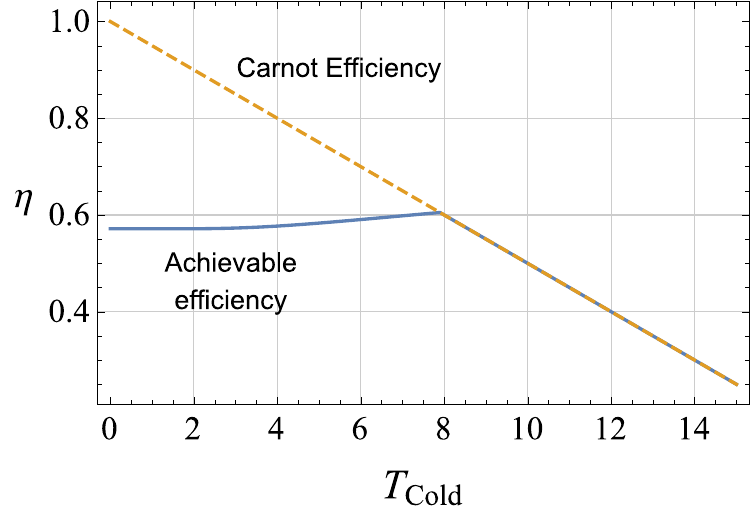}
			\label{fig:eff_plots_T_Cold}
			
			\includegraphics[width=0.7\linewidth]{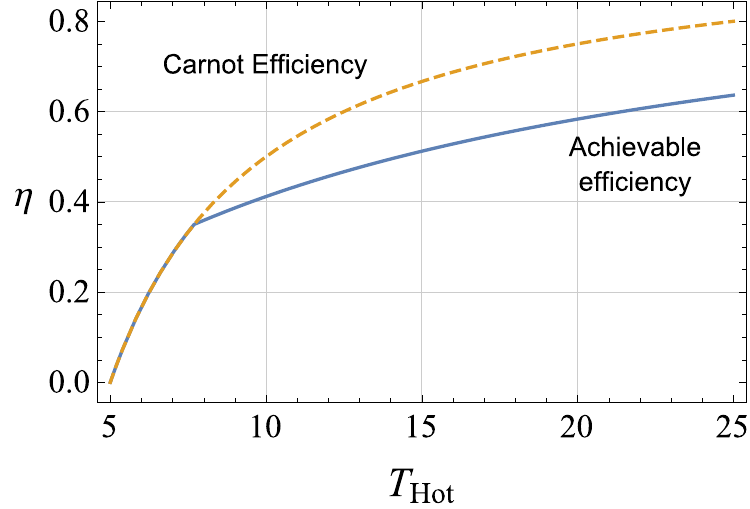}
			\label{fig:eff_plots_T_Hot}
			\caption{
				{\bf Comparison of the quasi-static nanoscale efficiency versus Carnot efficiency.} 
				\textbf{Top}: efficiency vs. the energy gap $E_\textup{min}$ of $\hat{H}_{\cold}$ (recall that $ k_{\rm B} =1$).
				According to Eqs.~\eqref{eq:carnotOmegaleq1},  \eqref{eq:subCarnotOmega>1}, for any $T_\cold<T_\hot$ one can achieve Carnot efficiency when $ E_\textup{min} $ is sufficiently small. Otherwise, we find a reduced efficiency. This has been plotted for $T_\hot=15$ and $T_\cold=10$. 
				\textbf{Middle}: efficiency vs. $T_\cold$, where $E_\textup{min}=15$. For every $\hat{H}_\cold$, there exists a temperature regime ($T_\cold$ vs. $T_\hot$)
				such that Carnot efficiency cannot be achieved. 
				This happens as $T_\cold$ gets further from the temperature of the hot bath $T_\hot=20$. 
				\textbf{Bottom}: efficiency vs. $T_\hot$, where $E_\textup{min}=15$. Similarly, we see the inability of attaining Carnot efficiency, as $ T_\hot $ increases relative to $T_\cold = 5$.}\label{fig:comparison}
		\end{figure}
	
		Firstly, we identify characteristics that the failure parameter of work extraction $\varepsilon$ should have, such that near perfect work is extracted in the limit $\beta_f\rightarrow\beta_\cold$ (i.e. when (A) and (B) are satisfied). We then show two technical results:
		\begin{enumerate}[leftmargin=*]
			\item The choice of $\varepsilon$ (as a function of $\beta_f$) simplifies the minimization problem in Eq.~\eqref{eq:Maindefinitions1 no broduct bath}, by reducing the range of the variable $\alpha$ appearing in the optimization of $W_{\rm ext}$. Under the consideration of near perfect work, $\varepsilon$ can be chosen such that the optimization of $\alpha$ is over $\alpha\geq\kappa$ for some $\kappa\in(0,1]$, instead of $\alpha\geq 0$. The larger $\kappa$ is for a chosen $\varepsilon$, the slower $\Delta S/W_\textup{ext}$ converges to zero. 
			\item We analyze the following cases separately: 
			\begin{itemize}[leftmargin=*]
				\item For $\Omega \leq 1$, $\varepsilon$ can always be chosen so that Eq.~\eqref{eq:Maindefinitions1 no broduct bath} is obtained in the limit $\alpha\rightarrow 1$. Evaluating the efficiency in the limit $\alpha\rightarrow 1$ corresponds to the Carnot efficiency. 
				\item For $\Omega > 1$, we show that for the best choice of $\varepsilon$, the infimum in Eq.~\eqref{eq:Maindefinitions1 no broduct bath} for $W_\textup{ext}$ is obtained at $\alpha\rightarrow\infty$. Furthermore, $\Omega>1$ means that up to leading order terms (with regards to the quasi-static limit parameters), $W_1 > W_\infty$ for $W_\alpha$ defined in Eq.~\eqref{eq:Maindefinitions2 no broduct bath}. But we know that the quantity $W_1$ gives us Carnot efficiency. Therefore, the  maximum  efficiency is strictly less than the Carnot value.
			\end{itemize}
		\end{enumerate}

\section{Discussions and Conclusion}\label{sec:conclusion}
		Our work establishes a fundamental result for the operation of nanoscale heat engines. 
		We find all cold baths can be used in heat engines; and --- remarkably --- that even when one of the heat baths is as small as a single qubit, as long as certain conditions of the bath Hamiltonian are met, Carnot efficiency can still be achieved in the quasi-static limit. However, Carnot efficiency is not necessarily always achieved for all baths; instead, achievability depends on the Hamiltonian structures of the baths. In the case where the cold bath consists of multiple qubits, we find that for all values of inverse temperatures $\beta_{\cold}$ and $\beta_{\hot}$ of the cold and hot baths, there exists an energy gap $E_\textup{min}(\beta_\cold,\beta_\hot)$ of the qubits forming the cold bath above, in which the optimal quasi-static efficiency is reduced below the Carnot efficiency. 
		Viewed from another direction, for a fixed energy gap $E_\textup{min}(\beta_\cold,\beta_\hot)$, whether the Carnot efficiency can be achieved depends on the relation between $T_\hot$ and $T_\cold$ as illustrated in Fig.~\ref{fig:comparison}.
		Loosely speaking, the Carnot efficiency can be achieved whenever the two temperatures are unequal but not too far apart. 
		One might wonder why this restriction has not been observed before in the classical scenario. There, the energy spectrum of the bath is continuous or forms a quasi-continuum, and hence we always have access to the regime where Carnot efficiency is achievable. 
		
		Nanoscale heat engines are starting to be constructed experimentally~\cite{scully2003extracting,nanoscaleHE}. Not all of these nanoscale heat engines will be able to obtain the Carnot efficiency due to sub-optimality of the heat engines. Our results may influence future nanoscale machine designs, since engineers may wish to use thermal baths that have small energy gaps such that $\Omega$ is not too large, depending on the temperature difference between the two baths involved. 

		Our result is a consequence of the fact that the second law takes on a more complicated form in the nanoregime. 
		It has been known for some time now that in addition to the standard second law, many other second laws become relevant and lead to additional restrictions. However, the implications of these restrictions on heat engines that operate in the quantum nanoregime have never been addressed in full until this paper.
		From a statistical perspective, small numbers require more refined descriptions than provided by averages, and as a result thermodynamics becomes more complicated when considering systems comprised of few particles. 
		Similar effects can also be observed in information theory, where averaged quantities as given by the Shannon entropy need to be supplemented with refined quantities when we consider finitely many channel uses. 

		In the macroscopic regime, for completeness, we ruled out the possibility that the observed limitations on efficiency is a consequence of our demand for near perfect work, or the fact that we are using systems with discrete (sufficiently large spaced) spectra. 
		This verification was achieved by showing that the Carnot efficiency can indeed always be attained (regardless to the size of an energy gap if present) when extracting near perfect work, when we are in such large systems that only the standard second law is relevant. 
		One might wonder whether heat engines that do not operate by extracting an infinitesimal amount of work, or employing quantum coherences would allow us to achieve the Carnot efficiency independent of the structure of the cold bath. 
		As we show in the~\suppl, both do not help.

It would furthermore be satisfying to derive the explicit form of a hot bath, and machine attaining Carnot --- or our reduced Carnot --- efficiency.
One might wonder whether a non-trivial machine $(\rho_\mach,\hat{H}_\mach)$ is needed at all in this case. 
To illustrate the dependence on the bath, it was sufficient to consider a bath comprising solely of qubits. The tools proposed in the~\suppl~can also be used to study other forms of bath structures, yet it is a non-trivial question to derive efficiencies for such cold baths. 

Most interestingly, there is the extremely challenging question of deriving a statement that is analogous to the Carnot efficiency, but which makes explicit the trade-off between information and energy for all possible starting situations. 
In a heat engine, we obtain energy from two thermal baths about which we have minimal information. It is clear that the Carnot efficiency is thus a special case of a more general statement in which we start with two systems
of a certain energy about which we may have some information, and we want to extract work by combining them. Indeed, the form that such a general statement should take is by itself a beautiful conceptual challenge, since what we understand as efficiency may not only be a matter of work obtained vs. energy wasted.
Instead, we may want to take a loss of information about the initial states into account when formulating such a fully general efficiency.

\myacknowledgments
We thank J. Oppenheim, M. Horodecki, R. Renner and A. Winter for useful comments and discussions. This research was funded by Singapore's MOE Tier 3A Grant MOE2012-T3-1-009, and STW, Netherlands. M.P.W. acknowledges funding from the Swiss National Science Foundation (Ambizione Fellowship PZ00P2\_179914). N.H.Y.N. acknowledges support from the Alexander von Humboldt foundation.




\printbibliography

\newtheorem{corollary}{Corollary}





\hypersetup{
	bookmarksnumbered,
	pdfstartview={FitH},
	citecolor={darkgreen},
	linkcolor={darkred},
	urlcolor={darkblue},
	pdfpagemode={UseOutlines}}
\definecolor{darkgreen}{RGB}{50,190,50}
\definecolor{darkblue}{RGB}{0,0,190}
\definecolor{darkred}{RGB}{238,0,0}

\renewcommand{\d}{\dagger}
\newcommand{\Tr}{\mathrm{Tr}}
\renewcommand{\Re}{\mathrm{Re}}
\renewcommand{\Im}{\mathrm{Im}}

\newcommand{\corr}{\textup{corr}}
\newcommand{\ncorr}{\textup{no corr}}
\newcommand{\qm}{\textup{qm}}

\newcommand{\hatrhocone}{\hat\rho_\textup{Cold}^\textup{a}}
\newcommand{\hatrhoctwo}{\hat\rho_\textup{Cold}^\textup{b}}

%


\onecolumn\newpage
\appendix
\begin{center}
	\textbf{{\large The maximum efficiency of nano heat engines depends on more than temperature: Supplementary Material}}
\end{center}

\tableofcontents

\vspace{0.5cm}
In this~\suppl , we detail our findings. Sections \ref{section:The setting}-\ref{section:efficiency_def} are aimed at giving the reader an overview of the important concepts regarding heat engines, and to introduce the quantities of interest. 
Firstly, in Section \ref{section:The setting} we describe the setup of our heat engine, the systems involved, and how work is extracted and stored. 
By using this general setup, we then proceed in Section \ref{sec:macro and nano themodynamics} to introduce conditions for thermodynamical state transitions in a cycle of a heat engine. 
In Section \ref{section:efficiency_def}, we introduce the formal definition of efficiency, and specify how can this quantity be maximized over a set of free parameters (involving the bath Hamiltonian structure). 

After providing these guidelines, we start in Section \ref{section:standardthermo} to apply the macroscopic law of thermodynamics. We have performed the analysis with the generalization of allowing for an arbitrarily small probability of failure. The results in this section might be familiar and known to the reader, however from a technical perspective, their establishment is helpful for proving our main results (in Section \ref{section:Nano/quantum scale heat engine at maximum efficiency}) about nanoscale systems. 
In Section \ref{section:Nano/quantum scale heat engine at maximum efficiency}, we apply the recently discovered generalizations of the second law for small quantum systems. 
The results in Section \ref{section:standardthermo} and Section \ref{section:Nano/quantum scale heat engine at maximum efficiency} are summarized at the beginning of each section, for the reader to have a concise overview of the distinction between thermodynamics of macroscopic and nanoscopic systems. 
Finally, in Section \ref{Extensions to the setup}, we show that even when considering a more general setup, these results obtained in Section \ref{section:Nano/quantum scale heat engine at maximum efficiency} remain unchanged.
\section{The general setting for a heat engine}\label{section:The setting}
A heat engine is a procedure for extracting work from a temperature difference. It comprises of four basic elements: two thermal baths at distinct temperatures $T_\hot$ and $T_\cold$ respectively, a machine, and a battery. 
The machine interacts with these baths in such a way that utilizes the temperature difference between the two baths to perform work extraction. 
The extracted work can then be transferred and stored in the battery, while the machine returns to its original state. 

In this section, we describe a fully general setup, where all involved systems and changes in energy are accounted for explicitly. Let us begin with the total Hamiltonian 
\begin{equation}\label{eq:totalH}
\hat H_t = \hat H_\cold +\hat H_\hot + \hat H_\mach +\hat H_\batt,
\end{equation}
where the indices Hot, Cold, M, W represent a hot thermal bath (Hot), a cold thermal bath (Cold), a machine (M), and a battery (W) respectively. 
Let us also consider an initial state
\begin{align}\label{eq:rho0}
	\rho^{0}_\total=\tau_\cold^{0}\otimes \tau_\hot^{0}\otimes \rho_\mach^{0}\otimes \rho_\battery^{0}.
\end{align}
The state $\tau_\hot^{0}$ ($\tau_\cold^{0}$) is the initial thermal state at temperature $T_\hot$ ($T_\cold$), corresponding to the hot (cold) bath Hamiltonians $\hat H_\hot, \hat H_\cold$. 
More generally, we have the following definition 
\begin{definition}(Thermal state)
	Given any Hamiltonian $\hat H$ and temperature $T$, the \textup{thermal state} is defined as $\tau=\frac{1}{\tr (e^{-\hat H/k_B T})}	e^{-\hat H/k_B T}$. For notational convenience, we shall often use inverse temperatures, defined as $\beta_h:=1/k_B T_\hot$ and $\beta_c:=1/k_B T_\cold$ where $k_B$ is the Boltzmann constant. 
\end{definition}
We will assume throughout that $T_\cold<T_\hot$. This is what is meant by ``hot''; namely that it is at a higher temperature than the ``cold'' bath. 
The initial machine $(\rho_\mach^0, \hat H_\mach)$ can be chosen arbitrarily, as long as its final state is preserved (and therefore the machine acts like a catalyst). 

The aim is to achieve a final \emph{reduced} state $\rho^1_\total$, such that
\begin{align}\label{eq:rho1}
\rho_\CMW^1 = \tr_\hot (\rho^{1}_\total)=\rho_\cold^{1}\otimes  \rho_\mach^{1}\otimes\rho^1_\battery,\qquad\qquad
\end{align}
where $\rho_\mach^{1}=\rho_\mach^{0}$, i.e. the machine is preserved, and $\rho_\cold^1,$ $\rho_\battery^{1}$ are the final states of the cold bath and battery. In Section \ref{Extensions to the setup}, we will consider the case in which there are correlations between the final state of the cold bath, hot bath, battery and or machine. We will find that the correlations do not change our results. For any bipartite state $\rho_{\rm AB}$, we use the notation of reduced states $\rho_{\rm A}:=\tr_\textup{B}(\rho_{\rm AB})$, $\rho_{\rm B}:=\tr_\textup{A}(\rho_{\rm BA})$.

Finally, we describe the battery such that the state transformation from $\rho^{0}_\total$ to $\rho^{1}_\total$ stores work in the battery. This is done as follows: consider the battery which has a Hamiltonian (written in its diagonal form)
\begin{align}\label{eq:W general 0}
\hat H_\battery:=\sum_{i=1}^{n_\battery} E^\battery_i|E_i\ra \la E_i|_\battery,\qquad\qquad\qquad
\end{align}
where $\{E_i^\battery\in\rr\}_{i=1}^{n_\battery}$ are arbitrary while $n_\battery\in\nn^+$ is an arbitrarily large fixed integer.
For some parameter $\varepsilon\in [0,1)$, we consider the initial and final states of the battery to be
\begin{align}
\rho_\battery^{0}&=|E_j\ra\la E_j|_\battery\label{eq:battery initial state}\\
\rho_\battery^{1}&=(1-\varepsilon)|E_k\ra\la E_k|_\battery+\varepsilon|E_j\ra\la E_j|_\battery\label{eq:battery final state}
\end{align}
respectively. 
The parameter $W_{\rm ext}$ is defined as the energy difference 
\be\label{W general}
W_{\rm ext}:=E_k^\battery-E_j^\battery.\qquad\qquad\qquad\qquad\quad
\ee
where we define $E_k^\battery>E_j^\battery$ such that $W_{\rm ext}>0$. 
We refer to the parameter $\varepsilon$ as the probability of failure of work extraction. Note that $\varepsilon$ in Eq. \eqref{eq:battery final state} is also the trace distance 
\begin{equation}\label{eq:trace dist def}
d(\rho,\sigma) = \frac{1}{2} \|\rho-\sigma\|_1={\frac  {1}{2}}{\tr}\left[{\sqrt  {(\rho -\sigma )^{\dagger }(\rho -\sigma )}}\right]
\end{equation}
between $\rho^1_\battery$ and $\ketbra{E_k}{E_k}_\battery$. In Section \ref{Extensions to the setup}, we will generalize this definition to include \textit{all} final states of the battery $\rho_\battery^1$, which are a trace distance $\varepsilon$ from the ideal final battery state $\ketbra{E_k}{E_k}_\battery$. We show that our findings regarding the achievability of C.E. remains unchanged.

Throughout our analysis, we deal with two distinct scenarios of work extraction as defined below. 
\begin{definition}(Perfect work)\label{def:perfect work}
An amount of work extracted $W_\textup{ext}$ is referred to as \textup{perfect work} when $\varepsilon=0$.
\end{definition}
The next definition of work involves a condition regarding the \emph{von Neumann entropy} of the final battery state. Let $\Delta S$ be the von Neumann entropy of the final battery state. When the initial state $\rho_\batt^0$ is pure, we have
\be\label{eq:von neu}
\Delta S:=-\tr(\rho_\battery^1\ln\rho_\battery^1).
\ee
When the final battery state is given by Eq. \eqref{eq:battery final state}, its probability distribution has its support on a two-dimensional subspace of the battery system, this definition also coincides with the binary entropy of $\varepsilon$,
\begin{equation}\label{def:binaryentropy}
\htwo (\varepsilon)=-\varepsilon\ln \varepsilon -(1-\varepsilon)\ln (1-\varepsilon)=\Delta S.\qquad
\end{equation}
We will see that no heat-engine can actually achieve the value of Carnot efficiency exactly, but moreover, that under certain conditions; some achieve it as a limiting process.\footnote{This subtlety is not unique to nano scale sized systems, but moreover is also true at the macroscopic level. In the standard formulation of Carnot's famous results about heat engine efficiency, the Carnot efficiency can only be achieved in the so-called ``quasi-static limit''.} For this reason, it is convenient to introduce the notion of a partially ordered set of heat engines. Roughly speaking, we will say later that a heat engine can achieve the Carnot efficiency when a heat engine in the closure of the set can achieve said efficiency. Let $\mathcal{S}^\textup{H.E.}=\{(W_\textup{ext},\varepsilon)\}$ denote a partially ordered set\footnote{The partial order is given by the condition $W_\textup{ext}\leq W_\textup{ext}'$ for two set elements $(W_\textup{ext},\varepsilon), (W_\textup{ext}',\varepsilon')\in \mathcal{S}^\textup{H.E.}$.} of heat engines with extracted work $W_\textup{ext}$ and corresponding failure parameter $\varepsilon$, introduced above.
\begin{definition}(Near perfect work)\label{def:near perfect work}
We say that a partially ordered set of heat engines $\mathcal{S}^\textup{H.E.}$ (introduced above) can achieve \textup{near perfect work} when
\begin{itemize}
	\item[1)] $0< \varepsilon\leq l,~$ for some fixed $l<1$ and
	\item[2)] For all $1>p>0$ there exists a non trivial subset $\mathcal{S}^\textup{H.E.}_p\subset \mathcal{S}^\textup{H.E.}$ such that when $(W_\textup{ext},\varepsilon)\in \mathcal{S}^\textup{H.E.}_p$,
	\be  0<\frac{\Delta S}{W_\textup{ext}}<p,\ee
	where recall $\Delta S=\Delta S(\varepsilon)$ \textup{[}see Eq. \eqref{def:binaryentropy}\textup{]}.
\end{itemize}
\end{definition}
Note that when this definition is used in lemmas and theorems, the precise type of ``heat engine'' will be specified, e.g. heat engines satisfying the macroscopic laws of thermodynamics (which are defined later).
In the main text, we have provided a detailed discussion regarding the physical meaning of perfect work and near perfect work, and the necessity for considering these quantities. In particular, it is discussed how it can only be achieved as a limiting process. Why we are interested in such limits will become apparent when we discuss the macroscopic case, even before we derive the efficiency in the nano regime.
As we will see later in the proof to Lemma \ref{lem:cannot do better than carnot with porb failuer}, 1) and 2) in Def. \ref{def:near perfect work} are both satisfied if and only if 
\be\label{eq:alternative equiv to n.p.w. def}
\displaystyle\lim_{\varepsilon\rightarrow 0^+} \frac{\Delta S}{W_\textup{ext}} =0.
\ee
Since the initial state $\rho_\sttotal^0$ is diagonal in the energy eigenbasis, and since catalytic thermal operations do not create coherences between energy eigenstates, therefore $\rho_\CMW^1$ has to be diagonal in the energy eigenbasis. Furthermore, (as already stated above) in Section \ref{Extensions to the setup}, we extend the setup to include correlation in the final state between the battery, cold bath and machine and more general final battery states.

Note that in our model we allow the battery to have arbitrarily many (but finite) eigenvalues. One can compare this to the two-dimensional battery used in \cite{2ndlaw}, referred to as the wit. Having a minimal dimension, the wit is a conceptually very useful tool to visualize work extraction. However, it has the disadvantage that the energy spacing, i.e. the amount of work to be extracted, has to be known a priori to the work being extracted in order to tune the energy gap of the wit. The more general battery, which we describe in Eq.~\eqref{eq:W general 0}, requires a higher system dimension, but has the advantage that it can form a quasi-continuum and thus effectively any amount of work (i.e. any $W_{\rm ext} >0$) can be stored in it without prior knowledge of the work extraction process. We will see that our results are independent of $n_\battery\geq 2$.
\begin{figure}
\begin{center}
\label{fig1}
\includegraphics[scale=0.33]{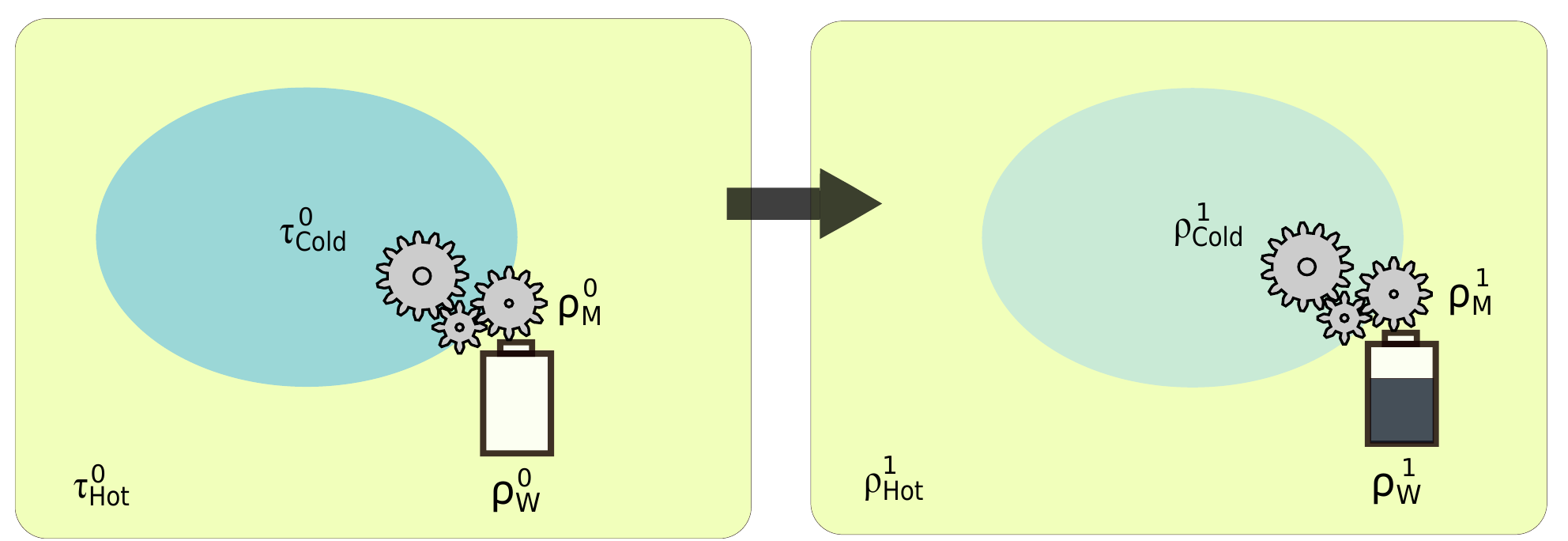}
\caption{The setting of a working heat engine.}
\end{center}
\end{figure}

To summarize, so far we have made the following minimal assumptions:
\begin{itemize}
\item [\textbf{ (A.1)}]
Product state: There are no initial nor final correlations between the cold bath, machine and battery. Initial correlations we assume do not exist, since each of the initial systems are brought independently into the process. This is an advantage of our setup, since if one assumed initial coherence, one would then have to use unknown resources to generate them in the first place. We later also show that correlations between the final cold bath and battery do not provide improvements in maximum extractable work or efficiency.
\item [\textbf{ (A.2)}]
Perfect cyclicity: The machine undergoes a cyclic process, i.e. $\rho_\mach^0=\rho_\mach^{1}.$ 
\item[\textbf{ (A.3)}]
Isolated quantum system: The heat engine as a whole, is isolated from and does not interact with the world. This assumption ensures that all possible resources in a work extraction process has been accounted for.
\item[\textbf{ (A.4)}] Finite dimension: The Hilbert space associated with $\rho_\sttotal^0$ is finite dimensional but can be arbitrarily large. Moreover, the Hamiltonians $\hat H_\cold,$ $\hat H_\hot,$ $\hat H_\mach$ and $\hat H_\battery$ all have bounded pure point spectra, meaning that these Hamiltonians have eigenvalues which are bounded.
\end{itemize}

After defining the set of allowed operations, and describing the desired state transformation process, one can then ask: what conditions should be fulfilled such that there exists a hot bath $(\tau_\hot^0,\hat H_\hot)$, and a machine $(\rho_\mach^0,\hat H_\mach)$ such that $\rho_\CW^0\rightarrow\rho_\CW^1$ is possible? Throughout this document we use ``$\rightarrow$'' to denote a state transition via catalytic thermal operations. 

In Section \ref{section:standardthermo}, by assuming the macroscopic law of thermodynamics governs the heat engine, we derive the efficiency of a heat engine, and verify the long known Carnot efficiency as the optimal efficiency. We do this for both cases where $\varepsilon=0$ and when $\varepsilon$ is arbitrarily small. In Section \ref{section:Nano/quantum scale heat engine at maximum efficiency}, we analyze the same problem under recently derived second laws, which hold for small quantum systems. We show that these new second laws lead to fundamental differences to the efficiency of a heat engine.

Throughout our analysis, a particular notion that describes thermodynamical transitions will be important towards achieving maximum efficiency. We therefore define this technical term, which will be used throughout the manuscript. 
\begin{definition}(Quasi-static)\label{def:quasi static}
A heat engine is \textup{quasi-static} if the final state of the cold bath is a thermal state and its inverse temperature $\beta_f$ only differs infinitesimally from the initial cold bath temperature, i.e. $\beta_f=\beta_c-g$, where $0<g\ll 1$.
\end{definition}

Since throughout this analysis we frequently deal with arbitrarily small paramaters $\varepsilon, g$, we also introduce beforehand the notation of order function $\bo(x)$, $\soo(x)$,
which denotes the growth of a function. 
\begin{definition}(Big $\bo$, small $\soo$ notation
~\cite{NTandAlgebra})\label{def:bigosmallo}
Consider two real-valued functions $P(x), Q(x)$. We say that \\
1. 
$P(x)=\bo(Q(x))$ in the limit $x\rightarrow a$ iff there exists $c_1,c_2 > 0$ and $\delta >0$ such that for all $|x-a|\leq\delta$, $c_1 \leq \left|\frac{P(x)}{Q(x)}\right| \leq c_2$. \\
2. $P(x)=\soo(Q(x))$ in the limit $x\rightarrow a$ iff there exists $c_3\geq 0$ such that  $~\displaystyle\lim_{x\rightarrow a} \left|\frac{P(x)}{Q(x)}\right|=c_3$. \\
\end{definition}
\begin{remark}
In Def.\ref{def:bigosmallo}, if the limit of $x$ is unspecified, by default we take $a=0$. In \cite{NTandAlgebra}, these order terms were only defined for $x\rightarrow\infty$. However, choosing a general limit $x\rightarrow a$ can be done by simply defining the variable $x'=1/(x-a)$, and $x\rightarrow a^+$ is the same as taking $x'\rightarrow\infty$.
\end{remark}
We also list a few properties of these functions here for $x\rightarrow 0$, which will help us throughout the proof:\\
a) For any $c\neq 0$, $\bo(c\cdot P(x))=\bo(P(x))$.\\
b) For any functions $P_1(x)$ and $P_2(x)$, $\bo(P_1(x))+\bo(P_2(x))= \bo\left(\max\left\lbrace |P_1(x)|,|P_2(x)|\right\rbrace\right)$.\\
c) For any functions $P_1(x)$ and $P_2(x)$, $\bo(P_1(x))\cdot\bo(P_2(x))= \bo (P_1(x)P_2(x))$.\\
d) For any functions $P_1(x)$ and $P_2(x)$, $\bo(P_1(x))/\bo(P_2(x))= \bo (P_1(x)/P_2(x))$.\\

Definition \ref{def:quasi static} has two direct implications for a quasi-static heat engine:
\begin{itemize}
\item[(i)] The temperature of the final state of the cold bath $T_f$, only increases w.r.t. its initial temperature by an infinitesimal amount, i.e. $T_f= T_\cold+T_\cold^2 \;g+\bo(g^2)$.

\item[(ii)] The amount of work extracted is infinitesimal: as we shall see later, the extractable perfect and near perfect work $W_\mathrm{ext}>0$ (see Defs. \eqref{def:perfect work}, \eqref{def:near perfect work}) is of order $\bo(g)$. This follows from using Eq. \eqref{eq:Wext} for the case where $\rho_\stcold^1$ is a thermal state with inverse temperature $\beta_f=\beta_c-g$, and calculating the Taylor expansion of $W_\mathrm{ext}$ about $g=0$.
\end{itemize}

\section{The conditions for thermodynamical state transitions}\label{sec:macro and nano themodynamics}
In this section, we state the laws which govern the transitions from initial, $\rho^{0}_\total$ to final, $\rho^{1}_\total$ states for one cycle of our heat engine. By applying these laws, the amount of extractable work $W_\textup{	ext}$ can be quantified and expressed as a function of the cold bath. We distinguish between two cases, the standard macroscopic regime, and the quantum regime.

\subsection{Second law for macroscopic systems}\label{sub:macro}
The cold bath, machine and battery form a \textit{closed} but not isolated thermodynamic system. This means only heat exchange (and not mass exchange) occurs between these systems and the hot bath. Therefore, a transition from $\rho^0_\CMW$ to $\rho^1_{\CMW}$ will be possible if and only if the \helmholtz~free energy, $F$ does not increase
\be \label{eq:Hemholts def}
F(\rho^{0}_\CMW)\geq F(\rho^{1}_\CMW),
\ee
where
\begin{equation}
F(\rho):=\la \hat H \ra_\rho-\frac{1}{\beta}S(\rho),
\end{equation}
and $S(\rho):= - \tr(\rho\ln \rho)$ and $\langle \hat H\rangle_\rho :=\tr (\hat{H}\rho)$ being the entropy and the mean energy of state $\rho$ respectively. Throughout the manuscript, whenever the state is a thermal state at temperature $\beta$, we use the shorthand notation $\langle\hat H_\cold\rangle_{\beta}$ and $S(\beta)$.

The \helmholtz~free energy bears a close relation to the \emph{relative entropy}, 
\begin{equation}\label{eq:defrelativeent}
D (\rho\|\sigma) = \tr (\rho\ln\rho) - \tr(\rho\ln\sigma).
\end{equation}
Whenever $\rho$ and $\sigma$ are diagonal in the same basis, the relative entropy can be written as
\begin{equation}\label{eq:defrelativeentdiag}
D(\rho\|\sigma) = \sum_i p_i\ln\frac{p_i}{q_i},
\end{equation}
where $p_i,q_i$ are the eigenvalues of $\rho$ and $\sigma$ respectively. Now, for any Hamiltonian $\hat H$, consider $\tau_\beta=\me^{-\beta \hat H}/Z_\beta$, which is the thermal state at some inverse temperature $\beta$, with partition function $Z_\beta = \tr[\me^{-\beta \hat H}]$, and denote its eigenvalues as $q_i$. Then for any diagonal state $\rho$ with eigenvalues $p_i$, and denoting $\lbrace E_i \rbrace_i$ as the eigenvalues of $\hat H$,
\begin{align}
D (\rho\|\tau_\beta) = \sum_i p_i\ln\frac{p_i}{q_i} = -S(\rho) + \sum_i p_i (\beta E_i + \ln Z_\beta) = \beta F(\rho) + \ln Z_\beta.
\end{align}
This implies that 
\begin{equation}\label{eq:helmholtzF}
F(\rho)=\frac{1}{\beta} [D(\rho\|\tau_\beta)- \ln Z_\beta].
\end{equation}
In Section \ref{section:standardthermo} we will solve Eq.~\eqref{eq:Hemholts def} in order to evaluate the maximum efficiency.

\subsection{Second laws for nanoscopic systems}\label{sub:nano}
In the microscopic quantum regime, where only a few quantum particles are involved, it has been shown that macroscopic thermodynamics is not a complete description of thermodynamical transitions. More precisely, not only the \helmholtz~free energy, but a whole other family of generalized free energies have to decrease during a state transition \cite{2ndlaw}. This places further constraints on whether a particular transition is allowed.
In particular, these laws also give necessary and sufficient conditions, when a system with initial state $\rho_\CW^0$ can be transformed to final state $\rho_\CW^1$ (both diagonal in the energy eigenbasis), with the help of any catalyst/machine which is returned to its initial state after the process.

Formally, these laws correspond to the following case:
A transition from the initial state $\rho^{0}_\CMW$ to the final state $\rho^{1}_\CMW$, is possible iff there exists an energy-preserving unitary $U(t)$ on the global system,  (i.e. a unitary that obeys $[U(t),\hat H_\total]=0$), where states $\rho^{0}_\CMW$, $\rho^{1}_\CMW$ are of the form Section described in \ref{section:The setting} (i.e. the state of the machine $\rho_\mach^0$ is preserved). 
If $(\tau_\hot^0, \hat H_\hot)$ and $(\rho_\mach^0, \hat H_\mach)$ can be arbitrarily chosen, these correspond to the set of \emph{catalytic thermal operations} \cite{2ndlaw} one can perform on the joint state $\CW$. This implies that the cold bath is used as a resource state.

We can apply these second laws to our scenario by associating the catalyst with $\rho^0_\mach$, and considering the state transition $\rho^0_\battery\otimes\tau_\cold^0 \rightarrow \rho^1_\battery\otimes\rho_\stcold^1$ as described in Section \ref{section:The setting}. Note that the initial state $\rho^0_\battery\otimes\tau_\cold^0 $ is block-diagonal in the energy eigenbasis (for the battery by our choice, and for the cold bath because it is a thermal state). By catalytic thermal operations, the final state is also block-diagonal in the energy eigenbasis. Furthermore, according to the second laws in \cite{2ndlaw}, the transition from $\rho^0_\battery\otimes\tau_\cold^0 \rightarrow \rho^1_\battery\otimes\rho_\stcold^1$ is then possible iff
\begin{align}\label{2nd law eq}
	F_\alpha(\tau_\cold^0\otimes\rho^0_\battery,\tau_\CW^h)\geq F_\alpha(\rho_\stcold^1\otimes\rho^1_\battery,\tau_\CW^h)\quad \forall \alpha\geq 0,
\end{align}
where $\tau_\CW^h$ is the thermal state of the system at temperature $T_\hot$ of the surrounding bath. The quantity $F_\alpha(\rho,\sigma)$ for $\alpha\geq 0$ corresponds to a family of free energies defined in \cite{2ndlaw}, which can be written in the form
\begin{align}\label{eq:generalfreeenergy}
	F_\alpha(\rho,\tau)=\frac{1}{\beta_h} \left[D_\alpha(\rho\|\tau)-\ln Z_h\right],
\end{align}
 where $D_\alpha(\rho\|\tau)$ are known as $\alpha$-R{\'e}nyi divergences. Sometimes we will use the short hand $F_\infty:=\lim_{\alpha\rightarrow \infty}F_\alpha$. On occasion, we will refer to a particular transition as being possible/impossible according to the $F_\alpha$ free energy constraint. By this, we mean that for that particular value of $\alpha$ and transition, Eq. \eqref{2nd law eq} is satisfied/not satisfied. The $\alpha$-R{\'e}nyi divergences can be defined for arbitrary quantum states, giving us necessary (but insufficient) second laws for state transitions \cite{2ndlaw,LJR2015description}. However, since we are analyzing states which are diagonal in the same eigenbasis, the R{\'e}nyi divergences can be simplified to
\begin{align}\label{w no ep}
	D_\alpha(\rho\|\tau)=\frac{1}{\alpha-1}\ln \sum_i p_i^\alpha q_i^{1-\alpha},
\end{align}
where $p_i,$ $q_i$ are the eigenvalues of $\rho$ and the state $\tau$. The cases $\alpha=0$ and $\alpha\rightarrow1$ are defined by continuity, namely
\begin{align}\label{eq:reyi in limits}
	D_0(\rho\|\tau)&=\lim_{\alpha\rightarrow 0^+}D_\alpha(\rho\|\tau)=-\ln \sum_{i:p_i\neq 0}q_i,\\ 
	D_1(\rho\|\tau)&=\lim_{\alpha\rightarrow 1}D_\alpha(\rho\|\tau)=\sum_{i}p_i\ln \frac{p_i}{q_i},\label{eq:reyi in limits 2}
\end{align}
and we also define $D_\infty$ as
\begin{align}\label{eq:D_inft def}
	D_\infty(\rho\|\tau)&=\lim_{\alpha\rightarrow \infty^+}D_\alpha(\rho\|\tau)=\ln \max_{i}\frac{p_i}{q_i}.
\end{align}
The quantity $D_1(\rho\|\tau)$ coincides with $D(\rho\|\tau)$, as we have defined in Eq.~\eqref{eq:defrelativeent}, and evaluated in Eq.~\eqref{eq:defrelativeentdiag} for diagonal states. We will often use this convention. 
Furthermore, since we are considering initial states which are block-diagonal in the energy eigenbasis, these generalized second laws are both necessary and \textit{sufficient} conditions for state transformations. 
Therefore, in Section \ref{subsub:highcertainty} we will solve Eq. \eqref{2nd law eq} explicitly to find an expression for $W_\textup{ext}$ with the ultimate goal of evaluating the maximum efficiency in this regime.

The reader should note that for both Section \ref{sub:macro} and \ref{sub:nano}, the conditions for state transformation place upper bounds on the quantity $W_\textup{ext}$. In particular, this allows us to express the maximum values $W_\textup{ext}$ can take (such that the joint state transformation of cold bath and battery is possible) in terms of quantities related to the cold bath, and the error probability $\varepsilon$. It is also worth comparing the conditions for state transformation in Section \ref{sub:macro} and \ref{sub:nano}, which are stated in Eqs.~\eqref{eq:Hemholts def} and \eqref{2nd law eq}. In particular, Eq.~\eqref{eq:Hemholts def} is but a particular instance of Eq.~\eqref{2nd law eq}, and therefore the nanoscopic second laws always place a stronger upper bound on $W_\textup{ext}$ compared to the macroscopic second law.

\section{Efficiency, maximum efficiency and how to evaluate it}\label{section:efficiency_def}
The central quantity of interest in this article is the efficiency of heat engines. Since we have already introduced the notion of a heat engine in Section \ref{section:The setting}, and the rules which govern the possibility of thermodynamical transitions of one cycle of a heat engine in Section \ref{sec:macro and nano themodynamics}, it is timely to define the efficiency. After defining this quantity, we demonstrate how to go about calculating its maximum value under different conditions, such as for perfect work, near perfect work, in both the macroscopic and nanoscopic regimes. This will prepare the scene for Sections \ref{section:standardthermo} and \ref{section:Nano/quantum scale heat engine at maximum efficiency}, where we evaluate the maximum efficiency more explicitly.
\subsection{Definition of efficiency and maximum efficiency}
As stated in the main text, the efficiency of a particular heat engine (recall that a heat engine is defined by its initial and final states $\rho_\total^0, \rho_\total^1$ as described in Section \ref{section:The setting}) is defined as
\be\label{eq:efficiency}
\eta:=\frac{W_{\rm ext}}{\Delta H},
\ee
where $W_{\rm ext}$ is the amount of work extracted which is defined in Eq. \eqref{W general}, and $\Delta H$ is the amount of mean energy drawn from the hot bath, namely $\Delta H:= \tr(\hat H_\hot \rho_\sthot^0)-\tr(\hat H_\hot \rho_\sthot^1)$, where $\rho_\sthot^1$ is the reduced state of the hot bath. 

Now, consider the set of conditions on state transformations given by Eq.~\eqref{2nd law eq} for nanoscale systems.
As discussed in Section \ref{sec:macro and nano themodynamics}, these conditions place a restriction on the range of values $W_\textup{ext}$ can take. Therefore, for any fixed $\rho_\cold^1$, we define $\eta^\textup{nano}(\rho_\cold^1)$ as the maximum achievable efficiency as a function of the final state of the cold bath. More precisely,
\begin{align}
&\eta^\textup{nano}(\rho_\cold^1)\label{eq:max nano as function fo cold bath}\\
&=\sup_{W_\textup{ext}} \eta(\rho_\cold^1) \quad \textup{ subject to }\quad F_\alpha(\rho^0_\battery\otimes\tau_\cold^0,\tau_\CW^h)\geq F_\alpha(\rho^1_\battery\otimes\rho_\stcold^1,\tau_\CW^h)\quad \forall \alpha\geq 0.\label{eq:1max nano as function fo cold bath}
\end{align}
In Eq.~\eqref{eq:max nano as function fo cold bath}, we have written the quantity in Eq.~\eqref{eq:efficiency} as $\eta=\eta(\rho_\cold^1)$ to remind ourselves of its explicit final cold bath state dependency. 
Therefore, the maximum efficiency will correspond to maximizing over the final state of the cold bath:
\be \label{eq:max eff general}
\eta_\textup{max}=\sup_{\rho_\cold^1\in\mathcal{S}}\eta^\textup{nano}(\rho_\cold^1),
\ee 
 where $\mathcal{S}$ is the space of all quantum states in $\mathcal{H}_\cold$. 
By analyzing this quantity in Section \ref{section:Nano/quantum scale heat engine at maximum efficiency}, we show that perfect work cannot be extracted. Therefore, when we calculate the maximization in Eq. \eqref{eq:max eff general} we will consider near perfect work (see Def. \ref{def:near perfect work}).
 
In the macro regime, we have to satisfy a less stringent requirement, namely the macroscopic second law of thermodynamics. And hence we have that for fixed $\rho_\cold^1$, $\eta^\textup{mac}(\rho_\cold^1)$ is the maximum efficiency as a function of $\rho_\cold^1$ 
\begin{align} \label{eq:max eff macro function of rho_C^1}
\eta^\textup{mac}(\rho_\cold^1)=\sup_{W_\textup{ext}} \eta(\rho_\cold^1) \quad &\textup{ subject to } \quad F(\rho^0_\CMW)\geq F(\rho^1_\CMW)\\
\quad &\textup{ and } \quad
\tr(\hat H_t \rho_\sttotal^0)=\tr(\hat H_t \rho_\sttotal^1),
\end{align}
where $\hat H_t$ is defined in Eq.~\eqref{eq:totalH}.
Similarly to the nanoscale setting, the maximum efficiency is
\be \label{eq:max eff macro}
\eta_\textup{max}=\sup_{\rho_\cold^1\in\mathcal{S}}\eta^\textup{mac}(\rho_\cold^1).
\ee
We can also define the maximum quasi-static efficiencies for the macro and nano scale. The maximum efficiency of a quasi-static heat engine (see Def. \ref{def:quasi static}), is
\begin{align}
\eta_\textup{max}^\textup{stat}=\lim_{g\rightarrow 0^+}\eta^\textup{nano}(\tau(g)),\label{def:quasi static eff nano}\\
\eta_\textup{max}^\textup{stat}=\lim_{g\rightarrow 0^+}\eta^\textup{mac}(\tau(g)),\label{def:quasi static eff mac}
\end{align}
for the \emph{nanoscopic} and \emph{macroscopic} cases respectively. $\tau(g)\in\mathcal{H}_\cold$ is the thermal state with Hamiltonian $\hat H_\cold$ at temperature $\beta_f=\beta_c-g$ and $\eta^\textup{nano},\eta^\textup{mac}$ are defined in Eqs. \eqref{eq:max nano as function fo cold bath} and \eqref{eq:max eff macro function of rho_C^1} respectively.
Since we can extract perfect and near perfect work in the macroscopic setting, we will derive the efficiency for both cases in Section \ref{section:standardthermo}.

\subsection{Finding a simplified expression for the efficiency}\label{sub:finding_simplified_expression}
We can find a more useful expression for $\Delta H$ appearing in Eq.~\eqref{eq:efficiency}. This can be obtained by observing that since only energy preserving operations are allowed, we have
\begin{equation}\label{eq:conservetotalenergy}
\tr(\hat H_t \rho_\sttotal^0)=\tr(\hat H_t \rho_\sttotal^1),
\end{equation}
where $\hat H_t =\hat H_\hot+\hat H_\cold+\hat H_\mach+\hat H_\battery$. Since the Hamiltonian does not contain interaction terms between these systems, the mean energy depends only on the \emph{reduced states} of each system. Mathematically, it means that Eq.~\eqref{eq:conservetotalenergy} can be written as 
\begin{align}
\tr(\hat H_\hot \rho_\sthot^0)+\tr(\hat H_\cold \rho_\stcold^0)+\tr(\hat H_\mach \rho_\mach^0)+\tr(\hat H_\battery \rho_\battery^0)
=\\ \tr(\hat H_\hot \rho_\sthot^1)+\tr(\hat H_\cold \rho_\stcold^1)+\tr(\hat H_\mach \rho_\mach^1)+\tr(\hat H_\battery \rho_\battery^1).
\end{align}
Also, note that since $\rho_\mach^0=\rho_\mach^1$, therefore $\tr(\hat H_\mach \rho_\mach^0)=\tr(\hat H_\mach \rho_\mach^1)$. This implies that we have 
\be \label{eq:Delta H in terms of C and W}
\Delta H = \Delta C + \Delta W,
\ee
where
\begin{align} \label{eq:delta C def}
\Delta C&:= \tr \left[ \hat H_\cold\rho_\stcold^1 \right]-\tr \left[\hat H_\cold\tau_\cold^c \right],
\end{align}
and 
\begin{align}\label{eq:Delta W def as average}
\Delta W&:= \tr(\hat H_\battery\rho_\battery^1)-\tr(\hat H_\battery\rho_\battery^0).
\end{align}
are the change in average energy of the cold bath and battery. We can thus write Eq. \eqref{eq:efficiency} as
\be 
\eta=\frac{W_\textup{ext}}{\Delta W+\Delta C}.
\ee
Furthermore, from Eqs. \eqref{eq:battery initial state}, \eqref{eq:battery final state}, \eqref{W general}  and \eqref{eq:Delta W def as average}, we have $\Delta W=(1-\varepsilon)W_\textup{ext}$, and hence we can write the inverse efficiency as
\be \label{eq:eff explicit function of cold bath}
\eta^{-1}(\rho_\cold^1)=1-\varepsilon + \frac{\Delta C(\rho_\cold^1)}{W_\textup{ext}(\rho_\cold^1)},
\ee
where we have made explicit the $\rho_\cold^1$ dependency. We already know from the setting that $\rho_\cold^0$ is thermal. If $\rho_\cold^1$ is also a thermal state at some temperature $\beta$  according to the cold bath Hamiltonian $\hat H_\cold$, we will sometimes use the shorthand notation $\eta(\beta)$ for $\eta(\rho^1_\cold)$ and $\Delta W(\beta)$, $\Delta  C(\beta)$ for $\Delta W(\rho^1_\cold)$, $\Delta  C(\rho^1_\cold)$ respectively.

In Section \ref{section:standardthermo}, we will derive an expression for $W_\textup{ext}$ and solve Eqs. \eqref{eq:max eff macro function of rho_C^1}, \eqref{eq:max eff macro}. In Section \ref{section:Nano/quantum scale heat engine at maximum efficiency}, we will derive a new expression for $W_\textup{ext}$ in the nanoscopic regime, and solve Eqs.  \eqref{eq:max nano as function fo cold bath}, \eqref{eq:max eff general}.

\section{Efficiency of a heat engine according to macroscopic thermodynamics}\label{section:standardthermo}

In this section, we study the efficiency of the setup detailed in Section \ref{section:The setting} under the constraints of macroscopic thermodynamics, as described in Section \ref{sub:macro}. This implies that the \helmholtz~free energy solely dictates whether $\rho_\CW^0\rightarrow\rho_\CW^1$ is possible. We find that in both cases of extracting perfect and near perfect work,
\begin{itemize}
\item [(1)] The maximum achievable efficiency is the Carnot efficiency.
\item [(2)] The Carnot efficiency can be achieved for any  cold bath Hamiltonian.
\item [(3)] For any $\Delta C$, the maximum efficiency achievable for the particular value of $\Delta C$, is achieved iff the final state of the cold bath is thermal (according to a different temperature $T_f$).
\item [(4)] When the final state of the cold bath is thermal, the Carnot efficiency is achieved iff we take the limit corresponding to a quasi-static heat engine (Eq. \eqref{def:quasi static eff mac}). Roughly speaking, this means that there is only infinitesimal change in the final temperature of the cold bath, compared to its original state.
\end{itemize}

This section can be summarized as follows: in Section \ref{subsec:maxextwork_macro}, we first apply the macroscopic law of thermodynamics, namely the fact that \helmholtz~free energy is non-increasing, to our heat engine setup. By making use of energy conservation, we can derive the amount of maximum extractable work as shown in Eq.~\eqref{eq:Helmholts as D_1}. Next, in Section \ref{sec:Maximum efficiency} we show that when considering the extraction of perfect work, we show the points (1)-(4) as stated above. In Section \ref{sub:macroeffNPW}, we show that points (1)-(4) hold also when considering near perfect work.

The main results can be found in Theorem \ref{theorem:classical CE} and Lemma \ref{lem:CEmaximum_generalbatt}.
One may think points (1)-(4) are obvious since it has long been known that the optimal achievable efficiency of a heat engine operating between two thermal baths is the Carnot efficiency, and that this efficiency can be achieved quasi-statically. The motivations for proving these results here are two-fold. Firstly, this is a rigorous and mathematical proof of optimality, while usually one encounters arguments such as reversibility, or that the heat engine must remain in thermal equilibrium at all times during the working of the heat engine. Secondly, we will find later on at the nano/quantum scale that the Carnot efficiency can be achieved but observation (2) does not hold anymore. For these reasons, it is worthwhile proving that one can actually achieve points (1)-(4) in this setting for any cold bath Hamiltonian according to macroscopic thermodynamics. From a practical point of view, many of the technical results proved here will be needed in the proofs of Section \ref{section:Nano/quantum scale heat engine at maximum efficiency}, where we derive results involving a more refined set of generalized free energies, which describes thermodynamic transitions for nanoscale quantum systems.\\

\subsection{Maximum extractable work according to macroscopic law of thermodynamics}\label{subsec:maxextwork_macro}
Our first task is to find an expression for $W_\textup{ext}$ in the macro regime. We do so by solving Eq. \eqref{eq:Hemholts def} for $W_\textup{ext}$ such that
\be\label{eq:Helmholts free energy ineq}
\langle \hat H_\CMW\rangle_{\rho^1_\CMW}-\frac{1}{\beta_h} S(\rho^1_\CMW) \leq \langle \hat H_\CMW\rangle_{\rho^0_\CMW}- \frac{1}{\beta_h}S(\rho^0_\CMW).
\ee 
The entropy is an additive quantity under tensor product, meaning that $S(\rho_1\otimes\rho_2)=S(\rho_1)+S(\rho_2)$ for any states $\rho_1,\rho_2$. Furthermore, since the joint Hamiltonian does not contain interaction terms, therefore the mean energy also depends only on the reduced states. In summary, both $S$ and $\langle \hat H\rangle$ are additive under a tensor product structure of $\rho^0_\CMW$ and $\rho^1_\CMW$ as described in Eqs.~\eqref{eq:rho0} and \eqref{eq:rho1}. This means one can rewrite Eq.~\eqref{eq:Helmholts free energy ineq} by expanding its terms,
\begin{align}\label{eq:expansionofeq5}
&\langle \hat H_\cold\rangle_{\rho^1_\cold}+\langle \hat H_\mach\rangle_{\rho^1_\mach}+\langle \hat H_\battery\rangle_{\rho^1_\battery}-\frac{1}{\beta_h} \left[S(\rho^1_\cold)+S(\rho^1_\mach)+S(\rho^1_\battery)\right] \leq  \\
&\qquad\langle \hat H_\cold\rangle_{\rho^0_\cold}+\langle \hat H_\mach\rangle_{\rho^0_\mach}+\langle \hat H_\battery\rangle_{\rho^0_\battery}-\frac{1}{\beta_h} \left[S(\rho^0_\cold)+S(\rho^0_\mach)+S(\rho^0_\battery)\right], \nonumber
\end{align}
Furthermore, note that $\rho_\mach^0=\rho_\mach^1$, and therefore $S(\rho_\mach^0), \langle \hat H_\mach\rangle_{\rho_\mach^0}$ are common terms on both sides of Eq. \eqref{eq:expansionofeq5} which can be cancelled out. Furthermore, by our construction of the battery in Eqs.~\eqref{eq:W general 0}-\eqref{W general}, we have that $S(\rho_\battery^0)=0$, $S(\rho_\battery^1)=\Delta S=\htwo(\varepsilon)$ and $\langle \hat H_\battery\rangle_{\rho^0_\battery}=E_j^\battery$ and  $\langle \hat H_\battery\rangle_{\rho^1_\battery} = (1-\varepsilon)E^{W}_k+\varepsilon E^{W}_j$. Thus, Eq.~\eqref{eq:expansionofeq5} can be simplified to
\begin{equation}\label{eq:Wext01}
(1-\varepsilon) W_{\rm ext}+ \langle \hat H_\cold\rangle_{\rho^1_\cold} - \frac{1}{\beta_h}S(\rho^1_\cold)  \leq  \langle \hat H_\cold\rangle_{\rho^0_\cold}-\frac{1}{\beta_h}S(\rho^0_\cold)+\frac{1}{\beta_h}\htwo(\varepsilon) ,
\end{equation}
where $W_{\rm ext}$ has been defined in Eq.~\eqref{W general}. In other words, $(1-\varepsilon) W_\textup{ext}\leq F(\rho_\cold^0)-F(\rho_\cold^1)+\frac{1}{\beta_h}\htwo(\varepsilon)$.

We can also express $W_\text{ext}$ with the \emph{relative entropy} instead, by using Eq.~\eqref{eq:helmholtzF}. 
We can apply this identity to Eq.~\eqref{eq:Wext01} whenever the initial and final states are diagonal in the energy eigenbasis. Note that the initial $\rho_\stcold^0$ is a thermal state (of some temperature), and therefore diagonal in the energy eigenbasis. Since we start with a state $\tau_\cold^0\otimes\rho_\battery^0$ which is diagonal w.r.t. the Hamiltonian, and since catalytic thermal operations can never increase coherences between energy eigenstates (or in the macro setting, since we only demand mean energy conservation), we know that the final state $\rho_\stcold^1\otimes\rho_\battery^1$ is also diagonal in the energy eigenbasis. 
Therefore, Eq.~\eqref{eq:Wext01} can be rewritten w.r.t. the relative entropies as follows
\begin{equation}\label{eq:Helmholts as D_1}
(1-\varepsilon)W_{\rm ext}\leq F(\rho_\cold^0)-F(\rho_\cold^1)+\frac{1}{\beta_h}\htwo(\varepsilon)= \frac{1}{\beta_h} \left[D (\rho_\stcold^0\|\tau_\cold^h)-D (\rho_\stcold^1\|\tau_\cold^h)+\htwo(\varepsilon)\right].
\end{equation}


\subsection{Maximum efficiency for perfect work is Carnot efficiency}\label{sec:Maximum efficiency}
In this section, we want to find the maximum efficiency according to Eqs.~\eqref{eq:efficiency}, \eqref{eq:max eff macro function of rho_C^1} and \eqref{eq:max eff macro}, for the case of $\varepsilon=0$ which implies $\htwo(\varepsilon)=0$. 
We do this by the following steps:
\begin{enumerate}
\item \textit{Evaluate $W_\textup{ext}$.} According to Eq.~\eqref{eq:Helmholts as D_1}, we know that 
\begin{equation}\label{eq:Wext}
W_{\rm ext}= F(\rho_\cold^0)-F(\rho_\cold^1)=\frac{1}{\beta_h} \left[D (\rho_\stcold^0\|\tau_\cold^h)-D (\rho_\stcold^1\|\tau_\cold^h)\right],
\end{equation}
where recall that we have defined $\tau_\cold^h$ previously as the thermal state of system $\cold$ with temperature $T_\hot$. Note that here equality can be achieved because in macroscopic thermodynamics, satisfying the free energy constraint is a necessary and sufficient condition for the possibility of a state transformation. 
Note that since by construction the initial and final states of the battery are pure energy eigenstates, namely $\varepsilon=0$ and therefore
\begin{equation}\label{eq:deltaWeqWext}
W_{\rm ext}=\Delta W.
\end{equation} 
\item \textit{Write inverse maximum efficiency as optimization problem.} By substituting the simplified expression for efficiency derived in Eq.~\eqref{eq:eff explicit function of cold bath} into Eq.~\eqref{eq:max eff macro}, we have
\begin{equation}\label{eq:def_effinv}
\eta_{\rm max}^{-1} = \inf_{\rho_\stcold^1} (\eta^{\rm mac})^{-1} = 1+\inf_{\rho_\stcold^1} \frac{\Delta C}{W_\textup{ext}}.
\end{equation}
\item \textit{Maximize $W_\textup{ext}$ given a fixed value of $\Delta C$.} This is done in Lemma \ref{lem:minimizingD1}, where we show that given a fixed $\Delta C$, the final cold bath state that maximizes $W_\textup{ext}$ is uniquely a thermal state, corresponding to a certain inverse temperature $\beta'$.
\item \textit{Show that 3) implies that efficiency is maximized by a thermal state of the cold bath.} This is proven in Lemma \ref{lemma: optimum eff is for themal state}. Therefore, this implies one only needs to optimize Eq.~\eqref{eq:def_effinv} over one variable, i.e. $\beta_f$, the final temperature of the cold bath.
\item \textit{Show that the efficiency is strictly increasing with $\beta_f$.} This is done first by proving several identities, which are summarized in Corollary \ref{cor:specialid}. Using these identities, we prove in Lemma \ref{lemma:optimaleff is for infinitesimal temp change} that the first derivative of efficiency w.r.t. $\beta_f$ is always positive over the range where $W_\textup{ext}>0$. This leads us to conclude, in Theorem \ref{theorem:classical CE}, that maximum efficiency is achieved in the limit $\beta_f\rightarrow\beta_c$, and evaluating the efficiency at this limit gives us the Carnot efficiency. 
\end{enumerate}

Firstly, let us develop a technical Lemma \ref{lem:minimizingD1}, which concerns the unique solution towards maximizing $W_\textup{ext}$ for a fixed $\Delta C$. By applying Lemma \ref{lem:minimizingD1}, we show in Lemma \ref{lemma: optimum eff is for themal state} that the maximal efficiency is achieved when $\rho_\stcold^1$ is a thermal state. The reader can easily find similar proofs in \cite{cover2012elements}.
\begin{lemma}\label{lem:minimizingD1}
Given any Hamiltonian $\hat H_\cold$, a corresponding thermal state $\tau_\cold^h$ of some temperature $\beta_h$, and a fixed initial state $\rho_\stcold^0$, consider the maximization over final states $\rho_\stcold^1$,
\begin{equation}\label{eq:maxDeltaW}
\max_{\rho_\stcold^1} W_\textup{ext} \, = \frac{1}{\beta_h} \left[D (\rho_\stcold^0\|\tau_\cold^h)- \min_{\rho_\stcold^1} D (\rho_\stcold^1\|\tau_\cold^h)\right].
\end{equation}
over all states $\rho_\stcold^1$ which are diagonal in the energy eigenbasis, subject to the constraint that $\Delta C$ is a constant. Then the solution for $\rho_\stcold^1$ is unique, and $\rho$ is a thermal state according to the Hamiltonian $\hat H_\cold$ at a certain temperature $\beta '$.
\end{lemma}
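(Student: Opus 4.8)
The plan is to reduce the constrained maximization to a maximum-entropy problem and then close it with the Gibbs inequality. Since the initial state $\rho_\stcold^0$ is fixed, both $F(\rho_\stcold^0)$ and the initial mean energy $\tr[\hat H_\cold\tau_\cold^c]$ are constants of the problem, and for a fixed failure probability $\varepsilon$ the term $\frac{1}{\beta_h}\htwo(\varepsilon)$ appearing in Eq.~\eqref{eq:Helmholts as D_1} does not depend on $\rho_\stcold^1$. Maximizing $W_\textup{ext}$ over $\rho_\stcold^1$ is therefore equivalent to minimizing the Helmholtz free energy $F(\rho_\stcold^1)=\langle\hat H_\cold\rangle_{\rho_\stcold^1}-\frac{1}{\beta_h}S(\rho_\stcold^1)$. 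The constraint that $\Delta C$ (Eq.~\eqref{eq:delta C def}) is held constant fixes the mean energy $\langle\hat H_\cold\rangle_{\rho_\stcold^1}=:E$, so the energy term of $F$ is frozen and the task collapses to \emph{maximizing} the entropy $S(\rho_\stcold^1)$ over diagonal states subject to $\langle\hat H_\cold\rangle_{\rho_\stcold^1}=E$ and $\tr\rho_\stcold^1=1$.

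Next I would solve this maximum-entropy problem. Let $\tau_{\beta'}=\me^{-\beta'\hat H_\cold}/Z_{\beta'}$ be the thermal state whose mean energy matches the constraint, $\langle\hat H_\cold\rangle_{\tau_{\beta'}}=E$. For any diagonal $\rho$ with the same mean energy, expanding the nonnegative relative entropy via $\ln\tau_{\beta'}=-\beta'\hat H_\cold-\ln Z_{\beta'}$ gives
\begin{equation}
0\leq D(\rho\|\tau_{\beta'})=-S(\rho)+\beta' E+\ln Z_{\beta'}=-S(\rho)+S(\tau_{\beta'}),
\end{equation}
where the final equality uses $D(\tau_{\beta'}\|\tau_{\beta'})=0$ together with $\langle\hat H_\cold\rangle_{\tau_{\beta'}}=E$. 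This yields $S(\rho)\leq S(\tau_{\beta'})$ for every feasible $\rho$, so $\tau_{\beta'}$ maximizes the entropy and hence minimizes $F$ and maximizes $W_\textup{ext}$. Uniqueness follows immediately from the strict positivity of the relative entropy: $D(\rho\|\tau_{\beta'})=0$ if and only if $\rho=\tau_{\beta'}$ (equivalently, from the strict concavity of the entropy on the probability simplex in the energy eigenbasis). Thus the optimal $\rho_\stcold^1$ is unique and equals the thermal state at inverse temperature $\beta'$, as claimed.

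I expect the only genuine technical point to be establishing that such a $\beta'$ exists, i.e.\ that the prescribed mean energy $E$ is actually realized by some thermal state. This rests on the fact that the map $\beta'\mapsto\langle\hat H_\cold\rangle_{\tau_{\beta'}}$ is continuous and strictly decreasing, its derivative being $-\mathrm{Var}_{\tau_{\beta'}}(\hat H_\cold)<0$, so that it is a bijection onto the open interval between the smallest and largest eigenvalues $\lambda_{\min},\lambda_{\max}$ of $\hat H_\cold$. Hence a unique matching $\beta'$ exists for any admissible interior value of $\Delta C$; the boundary cases, in which the optimizer would be supported solely on the extremal energy levels, are non-generic and play no role in the quasi-static regime where $\rho_\stcold^1$ remains close to the original thermal state. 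Everything else is a routine consequence of the Gibbs inequality, and the argument parallels the classical derivations one finds in \cite{cover2012elements}.
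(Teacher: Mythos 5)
Your proof is correct, but it closes the key step by a genuinely different mechanism than the paper. Both arguments make the same initial reduction: fixing $\Delta C$ freezes the mean energy $\langle\hat H_\cold\rangle_{\rho_\stcold^1}$, so maximizing $W_\textup{ext}$ collapses to minimizing $F(\rho_\stcold^1)$, equivalently minimizing $D(\rho_\stcold^1\|\tau_\cold^h)$ or maximizing $S(\rho_\stcold^1)$ at fixed energy. The paper then solves the constrained problem by Lagrange multipliers, finding that the stationary point is a Boltzmann distribution $p_i=e^{-\beta' E_i}/Z_{\beta'}$; strictly speaking, stationarity alone only identifies a candidate, and global optimality and uniqueness are left resting on the (unstated) strict convexity of the objective. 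You instead compare an arbitrary feasible state directly against the energy-matched thermal state $\tau_{\beta'}$ via the Gibbs inequality, $0\leq D(\rho\|\tau_{\beta'})=S(\tau_{\beta'})-S(\rho)$ for states with $\langle\hat H_\cold\rangle_\rho=\langle\hat H_\cold\rangle_{\tau_{\beta'}}$, which gives global optimality in one line and uniqueness from the equality condition $D(\rho\|\tau_{\beta'})=0\Leftrightarrow\rho=\tau_{\beta'}$ --- exactly the strictness that the lemma's uniqueness claim (and its downstream use in Lemma \ref{lemma: optimum eff is for themal state}) actually needs. You also make explicit a point the paper dispatches in a single sentence (``one can solve for the Lagrange multipliers''): the existence of a matching $\beta'$, via strict monotonicity of $\beta'\mapsto\langle\hat H_\cold\rangle_{\tau_{\beta'}}$ with derivative $-\textup{var}(\hat H_\cold)_{\beta'}<0$ (consistent with the paper's Lemma \ref{lem:identities}), with the honest caveats that $\beta'$ must range over all of $\mathbb{R}$ (negative temperatures included) to cover energies above the infinite-temperature average, and that the extremal energies are attained only in the limits $\beta'\to\pm\infty$ --- boundary cases the paper's proof silently shares and which, as you note, are irrelevant in the quasi-static regime where the lemma is applied. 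Net assessment: the paper's route is the textbook stationarity computation; yours buys airtight uniqueness and existence at essentially no extra length.
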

\begin{proof}
Firstly, from Eq.~\eqref{eq:delta C def} we see that the constraint $\Delta C$ being a constant, is the same as $\tr \left[ \hat H_\cold\rho_\stcold^1 \right]$ being a constant. This is because they differ only by a constant term.
On the other hand, from Eq.~\eqref{eq:Delta W def as average} and \eqref{eq:deltaWeqWext}, we can see that $\max_{\rho_\stcold^1}  W_\textup{ext}$ is equal to the R.H.S. of Eq. \eqref{eq:maxDeltaW}. 
Since $\rho_\stcold^1$ and $\tau$ are both diagonal in the energy eigenbasis ($\rho_\stcold^1$ by the statement in the lemma, and $\tau$ by it being a thermal state), one can evaluate the relative entropy by using Eq.~\eqref{eq:defrelativeent}. Denote the eigenvalues of our variable $\rho_\stcold^1$ to be $\lbrace p_i \rbrace_i$, and the eigenvalues of the thermal state $\tau$ to be $\lbrace q_i \rbrace_i$. We can then write the optimization problem as
\begin{align*}
\min_{\lbrace p_i \rbrace}	&\sum_i p_i(\ln p_i-\ln q_i);\quad \text{ subject to } \sum_i p_iE_i=c\quad {\rm constant, and} ~\sum_i p_i= 1. \\
&{\rm where} \quad q_i=\frac{e^{-\beta E_i}}{Z_\beta};\quad Z_\beta=\sum_ie^{-\beta E_i}.
\end{align*}

We can now employ techniques of Lagrange multipliers to solve this optimization. The constrained Lagrange equation is
\begin{align}
	L(\{p_i\},\lambda)&=\sum_i p_i(\ln p_i-\ln q_i)+\lambda \left(\sum_i E_i p_i-c\right)+\mu \left(\sum_i p_i -1\right),\\
	\frac{dL}{dp_i}&=(\ln p_i-\ln q_i+1+\lambda E_i +\mu)=0,\\
	\frac{dL}{d\lambda}&=\sum_i E_i p_i-c=0.\\
	\frac{dL}{d\mu}&=\sum_i p_i-1=0.
\end{align}
We find that the normalized solution is 
\begin{align}\label{sol lagrange}
p_i=\frac{e^{-\beta' E_i}}{Z_{\beta'}},\quad Z_{\beta'}=e^{(1+\mu)}Z_\beta,	
\end{align}
and $p_i$ are probabilities corresponding to the Boltzmann distribution, according to inverse temperature $\beta'=\beta+\lambda$. Depending on the mean energy constraint $c$ and normalization condition, one can solve for the Lagrange multipliers $\lambda$ and $\mu$. With this we conclude that the state $\rho$ which maximizes  $D(\rho_\stcold^1\|\tau)$ is a thermal state, where its temperature is such that the constraint on mean energy is satisfied. 
\end{proof}

\begin{lemma}\label{lemma: optimum eff is for themal state}
Consider the work extraction process described by the state transformation $\rho_\CMW^0\rightarrow\rho_\CMW^1$, where $\rho_\stcold^0$, $\rho_\battery^0$ and $\rho_\battery^1$ have been described in Section \ref{section:The setting}. Denote $\mathcal{H}_\cold$ as the Hilbert space of the cold bath. Then the maximal efficiency in Eq.~\eqref{eq:def_effinv} is obtained for a final state of the cold bath $\rho_\stcold^1$, which is thermal:
\begin{equation}
\eta_{\rm max}^{-1} =1+\inf_{\rho_\stcold^1\in \mathcal{S}_\tau}\frac{\Delta C}{ W_\textup{ext}},
\end{equation}
where $\mathcal{S}_\tau$ the set of all thermal states (for $\hat{H}_\cold$ with any temperature $T>0$) in $\mathcal{H}_\cold$. Furthermore, all non-thermal 
 states do not achieve the maximum efficiency, i.e.
\begin{equation}
\eta_{\rm max}^{-1}<1+\frac{\Delta C}{ W_\textup{ext}}\Big{|}_{\rho_\stcold^1}\quad\text{ for any } \rho_\stcold^1\in \mathcal{S}
\setminus\mathcal{S}_\tau.
\end{equation}
where $\mathcal{S}$ is the space of all quantum states in $\mathcal{H}_\cold$
\end{lemma}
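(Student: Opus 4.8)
The plan is to exploit the structure already isolated in Eq.~\eqref{eq:def_effinv}: the efficiency depends on $\rho_\stcold^1$ only through the pair $(\Delta C, W_\textup{ext})$, and crucially $\Delta C$ depends on $\rho_\stcold^1$ solely through its mean energy $\tr[\hat H_\cold\rho_\stcold^1]$ (Eq.~\eqref{eq:delta C def}). I would therefore \emph{slice} the optimization by the value of $\Delta C$. First restrict attention to states with $W_\textup{ext}>0$, since only these represent genuine work extraction; for them Eq.~\eqref{eq:eff explicit function of cold bath} with $\varepsilon=0$ gives $\eta^{-1}=1+\Delta C/W_\textup{ext}$. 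The slicing reads
\begin{equation}
\inf_{\rho_\stcold^1\in\mathcal{S}}\frac{\Delta C}{W_\textup{ext}}
=\inf_{c}\ \inf_{\{\rho_\stcold^1:\,\Delta C=c\}}\frac{c}{W_\textup{ext}(\rho_\stcold^1)},
\end{equation}
so the inner problem ranges over all states sharing a fixed value $\Delta C=c$, which is exactly the setting of Lemma~\ref{lem:minimizingD1}.

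The key reduction is that, for $c>0$, minimizing $c/W_\textup{ext}$ over the slice is the same as maximizing $W_\textup{ext}$ over the slice, which by Lemma~\ref{lem:minimizingD1} is attained \emph{uniquely} by a thermal state $\tau_{\beta'}$. To license this reduction I must first show that every admissible slice (one supporting $W_\textup{ext}>0$) indeed has $c>0$. I would argue as follows: given any $\rho_\stcold^1$ with $\Delta C=c$ and $W_\textup{ext}>0$, Lemma~\ref{lem:minimizingD1} produces a thermal state $\tau_{\beta'}$ with the same $\Delta C=c$ and $W_\textup{ext}(\tau_{\beta'})\geq W_\textup{ext}(\rho_\stcold^1)>0$. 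But using $W_\textup{ext}=F(\rho_\cold^0)-F(\rho_\cold^1)$ from Eq.~\eqref{eq:Wext} (with reference temperature $\beta_h$), a thermal state has $W_\textup{ext}>0$ precisely when $F(\tau_{\beta'})<F(\tau_{\beta_c})$, i.e.\ when $\beta'$ lies strictly between $\beta_h$ and $\beta_c$; since $\langle\hat H_\cold\rangle_\beta$ is strictly monotonic in $\beta$, such a $\beta'<\beta_c$ has mean energy strictly above that of $\tau_\cold^c$, forcing $c>0$.

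With $c>0$ secured, the proof assembles cleanly. On each slice the inner infimum is realized by the unique thermal maximizer $\tau_{\beta'}$ of $W_\textup{ext}$, so each slice contributes exactly its thermal representative. Because admissible thermal states are in bijection with the admissible values of $c$ (via their mean energy), the outer infimum over $\mathcal{S}$ collapses to an infimum over $\mathcal{S}_\tau$, which is the first displayed claim. The strict bound for non-thermal states is then a byproduct of the uniqueness in Lemma~\ref{lem:minimizingD1}: any $\rho_\stcold^1\in\mathcal{S}\setminus\mathcal{S}_\tau$ with $\Delta C=c$ satisfies $W_\textup{ext}(\rho_\stcold^1)<W_\textup{ext}(\tau_{\beta'})$, hence $\Delta C/W_\textup{ext}|_{\rho_\stcold^1}>c/W_\textup{ext}(\tau_{\beta'})\geq\eta_{\rm max}^{-1}-1$, giving $\eta_{\rm max}^{-1}<1+\Delta C/W_\textup{ext}|_{\rho_\stcold^1}$.

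The main obstacle I anticipate is precisely the sign bookkeeping behind the equivalence ``minimize the ratio $=$ maximize $W_\textup{ext}$''. Lemma~\ref{lem:minimizingD1} only maximizes $W_\textup{ext}$ at fixed $\Delta C$, so the entire argument hinges on first excluding slices with $\Delta C\leq 0$; were such a slice to support $W_\textup{ext}>0$, the ratio could be driven negative and the reduction would break down. Ruling these out cleanly rests on two monotonicity facts along the thermal family between $\beta_h$ and $\beta_c$ — that $F(\tau_\beta)$ is minimized at $\beta_h$ and that $\langle\hat H_\cold\rangle_\beta$ is monotone in $\beta$ — together with the observation (Jaynes) that $\tau_\cold^c$ already maximizes entropy at its energy, so no equal-energy state yields positive $W_\textup{ext}$. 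Once this boundary analysis is in place, the slicing plus Lemma~\ref{lem:minimizingD1} close the argument routinely.
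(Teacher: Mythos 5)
Your proposal is correct and follows essentially the same route as the paper's own proof: both slice the optimization by the value of $\Delta C$ (the paper's variable $A>0$ in Eq.~\eqref{eq:2maxproblems}), solve the inner slice problem via Lemma~\ref{lem:minimizingD1} (maximizing $W_\textup{ext}$ at fixed mean energy is identical to minimizing $D_1(\rho_\stcold^1\|\tau_\cold^h)$, since $W_\textup{ext}=\beta_h^{-1}[D_1(\tau_\cold^c\|\tau_\cold^h)-D_1(\rho_\stcold^1\|\tau_\cold^h)]$), and obtain the strict bound for non-thermal states from the uniqueness of the thermal maximizer. One small remark: your claim that a thermal final state has $W_\textup{ext}>0$ \emph{precisely} when $\beta_h<\beta'<\beta_c$ is slightly too strong (since $F(\tau_\beta)$ is minimized at $\beta=\beta_h$, positivity also holds for a range of $\beta'<\beta_h$), but the implication you actually use --- $W_\textup{ext}>0$ forces $\beta'<\beta_c$, hence $\Delta C>0$ by monotonicity of $\langle\hat H_\cold\rangle_\beta$ --- is valid, and your explicit exclusion of slices with $\Delta C\leq 0$ makes rigorous a step the paper only justifies later, in the proof of Theorem~\ref{theorem:classical CE}.
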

\begin{proof}
First of all, note that without loss of generality we can always consider only diagonal states, as explained in the paragraph before Eq.~\eqref{eq:Helmholts as D_1} that catalytic thermal operations do not increase coherences between energy eigenstates.
We begin by substituting Eqs.~\eqref{eq:delta C def} and \eqref{eq:Wext} into Eq.~\eqref{eq:def_effinv}, and finding
\begin{align}
\eta_{\rm max}^{-1} &= 1+\inf_{\rho_\stcold^1} \frac{\Delta C}{ W_\textup{ext}}\\
&= 1+ \inf_{\rho_\stcold^1} \frac{\beta_h\Delta C}{D_1(\tau_\cold^c\|\tau_\cold^h)-D_1 (\rho_\stcold^1\|\tau_\cold^h)}\\
&= 1+\beta_h\left[\sup_{\rho_\stcold^1} \frac{D_1(\tau_\cold^c\|\tau_\cold^h)-D_1 (\rho_\stcold^1\|\tau_\cold^h)}{\tr (\hat H_\cold\rho_\stcold^1)-\tr(\hat H_\cold\tau_\cold^c)}\right]^{-1}.\label{eq:etainv}
\end{align}
In the last line of Eq.~\eqref{eq:etainv}, we see that only two terms depend on the maximization variable $\rho_\stcold^1$. This means we can perform the maximization in two steps:
\begin{equation}\label{eq:2maxproblems}
\sup_{\rho_\stcold^1} \frac{D_1(\tau_\cold^c\|\tau_\cold^h)-D_1 (\rho_\stcold^1\|\tau_\cold^h)}{\tr (\hat H_\cold\rho_\stcold^1)-\tr(\hat H_\cold\tau_\cold^c)} = \sup_{A > 0} \frac{D_1(\tau_\cold^c\|\tau_\cold^h)- B(A)}{A}
\end{equation}
where $B(A)$ is the optimal value of a separate minimization problem:
\begin{equation}\label{eq:submaxprob}
B(A) = \displaystyle\inf_{\substack{\rho_\stcold^1 \in\mathcal{S} \\ \tr (H_\cold\rho_\stcold^1)-\tr(\hat H_\cold\tau_\cold^c)=A}}D_1 (\rho_\stcold^1\|\tau_\cold^h)
\end{equation}
From Lemma \ref{lem:minimizingD1}, we know that the solution of the sub-minimization problem in Eq.~\eqref{eq:submaxprob} has a unique form, namely $\rho_\stcold^1=\tau_\cold^f$ is a thermal state of some temperature $\beta_f$.  Therefore,  Eq.~\eqref{eq:2maxproblems} can be simplified to 
\begin{equation}\label{eq:max_over_beta}
\sup_{\rho_\stcold^1} \frac{D_1(\tau_\cold^c\|\tau_\cold^h)-D_1 (\rho_\stcold^1\|\tau_\cold^h)}{\tr (\hat H_\cold\rho_\stcold^1)-\tr(\hat H_\cold\tau_\cold^c)} = \sup_{\beta_f}
 \frac{D_1(\tau_\cold^c\|\tau_\cold^h)-D_1 (\tau_\cold^f\|\tau_\cold^h)}{\tr (\hat H_\cold\tau_\cold^f)-\tr(\hat H_\cold\tau_\cold^c)}.
 \end{equation}
Whats more, for every constant $A$, the function
\be 
f(x)= \left(1+\beta_h\left[\frac{D_1(\tau_\cold^c\|\tau_\cold^h)-x}{A}\right]^{-1}\right)^{-1}
\ee
 is bijective in $x\in\rr$ and thus due to the uniqueness of the sub-minimization problem in Eq.~\eqref{eq:submaxprob}, we conclude that for all non-thermal states $\rho_\stcold^1$, the corresponding efficiency will be strictly less than that of Eq.~\eqref{eq:etainv}.
Thus from Eq. \eqref{eq:max_over_beta} and \eqref{eq:etainv} we conclude the lemma.
\end{proof}

After establishing Lemma \ref{lemma: optimum eff is for themal state}, we can continue to solve the optimization problem in Eq.~\eqref{eq:def_effinv} by only looking at final states which are thermal (according to some final temperature $\beta_f$ which we optimize over). In the next Lemma \ref{lem:identities} and Corollary \ref{cor:specialid}, we derive some useful and interesting identities. These identities concern quantities such as the derivatives of mean energy and entropy of the thermal state (with respect to inverse temperature), and relates them to the variance of energy. 
The reader can find similar proofs in any standard thermodynamic textbook (For example in Sections 6.5, 6. of \cite{reif1965fundamentals}), but we derive them here for completeness.

\begin{lemma}\label{lem:identities}
For any cold bath Hamiltonian $\hat H_\cold$, consider the thermal state $\tau_\beta = \frac{1}{Z_\beta}e^{-\beta \hat H_\cold}$ with inverse temperature $\beta$. Define $\langle \hat H_\cold\rangle_{\beta} = \tr (\hat H_\cold \tau_\beta)$, and $S(\beta) = -\tau_\beta\ln\tau_\beta$ to be the mean energy and entropy of $\tau_\beta$. Then the following identities hold:
\begin{align}
\frac{d \langle \hat H_\cold \rangle_{\beta}}{d\beta} &= -{\rm var} (\hat H_\cold)_\beta\\
\frac{d S(\beta)}{d\beta} &= - \beta\cdot {\rm var} (\hat H_\cold)_\beta,
\end{align}
where ${\rm var}(\hat H_\cold)_{\beta} = \langle \hat H_\cold^2\rangle_{\beta}-\langle \hat H_\cold\rangle_{\beta}^2$ is the variance of energy for $\tau_\beta$.
\end{lemma}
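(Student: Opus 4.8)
The plan is to reduce everything to classical sums over the energy eigenvalues, exploiting that $\tau_\beta$ and $\hat H_\cold$ are simultaneously diagonal so that no non-commuting-operator subtleties arise. Writing $E_i$ for the eigenvalues of $\hat H_\cold$ and $p_i(\beta) = e^{-\beta E_i}/Z_\beta$ with $Z_\beta = \sum_i e^{-\beta E_i}$, the mean energy becomes $\langle \hat H_\cold\rangle_\beta = \sum_i E_i\, p_i(\beta) = -\,d(\ln Z_\beta)/d\beta$, which is the familiar identification of $\ln Z_\beta$ as a cumulant generating function. This observation drives both identities.

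For the first identity I would differentiate $\langle \hat H_\cold\rangle_\beta = \big(\sum_i E_i e^{-\beta E_i}\big)/Z_\beta$ directly in $\beta$. Using $dZ_\beta/d\beta = -\sum_i E_i e^{-\beta E_i}$ together with the quotient rule, the two resulting terms assemble into $-\langle \hat H_\cold^2\rangle_\beta + \langle \hat H_\cold\rangle_\beta^2 = -\mathrm{var}(\hat H_\cold)_\beta$. Equivalently, and perhaps more transparently, this is just the statement that the second derivative of $\ln Z_\beta$ is the variance, so that $d\langle\hat H_\cold\rangle_\beta/d\beta = -\,d^2(\ln Z_\beta)/d\beta^2 = -\mathrm{var}(\hat H_\cold)_\beta$.

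For the second identity I would first obtain a closed form for the entropy. Since $\ln\tau_\beta = -\beta \hat H_\cold - (\ln Z_\beta)\,\id$, taking $-\tr(\tau_\beta \ln\tau_\beta)$ gives $S(\beta) = \beta\,\langle\hat H_\cold\rangle_\beta + \ln Z_\beta$. Differentiating this in $\beta$ produces three terms, namely $\langle\hat H_\cold\rangle_\beta + \beta\, d\langle\hat H_\cold\rangle_\beta/d\beta + d(\ln Z_\beta)/d\beta$. Substituting $d(\ln Z_\beta)/d\beta = -\langle\hat H_\cold\rangle_\beta$ cancels the first and third terms, and substituting the first identity into the middle term leaves exactly $-\beta\,\mathrm{var}(\hat H_\cold)_\beta$, as claimed.

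There is no serious obstacle here: both identities are standard and follow from elementary differentiation once the cumulant-generating-function structure is recognized. The only point requiring a little care is justifying the termwise differentiation of the sums defining $Z_\beta$ and $\langle\hat H_\cold\rangle_\beta$; this is legitimate because, by assumption (A.4), $\hat H_\cold$ has a bounded pure point spectrum on a finite-dimensional space, so the sums are finite and differentiation commutes with summation. I would note explicitly that these identities are precisely the ingredients needed downstream in Corollary~\ref{cor:specialid} and Lemma~\ref{lemma:optimaleff is for infinitesimal temp change}, where the positivity of $\mathrm{var}(\hat H_\cold)_\beta$ will control the monotonicity of the efficiency in $\beta_f$.
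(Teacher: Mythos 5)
Your proposal is correct and follows essentially the same route as the paper's proof: both reduce to the classical Boltzmann distribution, differentiate $\langle \hat H_\cold\rangle_\beta$ using $\frac{d Z_\beta}{d\beta} = -Z_\beta \langle \hat H_\cold\rangle_\beta$, and obtain the entropy identity from the closed form $S(\beta) = \beta\langle \hat H_\cold\rangle_\beta + \ln Z_\beta$ with the same cancellation of terms. Your cumulant-generating-function framing of $\ln Z_\beta$ and the explicit remark justifying termwise differentiation via assumption (A.4) are welcome presentational touches, but they do not change the argument.
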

\begin{proof}
Intuitively we know that the expectation value of energy increases as temperature increases (or as the inverse temperature decreases). More precisely, consider the probabilities of $\tau_\beta$ for each energy level of the Hamiltonian $E_i$,
\begin{align}
p_i &= \frac{e^{-\beta E_i}}{Z_\beta}, \quad {\rm where}~ Z_\beta=\sum_i e^{-\beta E_i}\nonumber\\
\frac{d p_i}{d \beta} &= \frac{1}{Z_\beta^2} \left[ -E_i e^{-\beta E_i}\cdot Z_\beta - \frac{d Z_\beta}{d \beta}\cdot e^{-\beta E_i} \right]=-p_i E_i - \frac{1}{Z_\beta} \frac{d Z_\beta}{d \beta} p_i=-p_i E_i +p_i \langle \hat H_\cold \rangle_{\beta}.\label{eq:piderivative}
\end{align}
The last equality holds because of the following identity:
\begin{equation}\label{eq:identity_derivative_Z}
\frac{-1}{Z}\frac{d Z}{d \beta} = \frac{-1}{Z} \sum_i (-E_i) e^{-\beta E_i} = \sum_i p_i E_i=\langle \hat H_\cold \rangle_{\beta}.
\end{equation} 
Therefore, we have
\begin{align}\label{eq:derivativeDeltaC}
\frac{d \langle \hat H_\cold \rangle_{\beta}}{d \beta} &= \sum_i \frac{d \langle \hat H_\cold \rangle_{\beta}}{d p_i} \frac{d p_i}{d \beta}= \sum_i E_i\cdot \left[ -p_iE_i+p_i \langle \hat H_\cold \rangle_{\beta}\right]\\
&= - \langle \hat H_\cold^2 \rangle_{\beta} + \langle \hat H_\cold \rangle_{\beta}^2 = -{\rm var} (\hat H_\cold)_\beta.
\end{align}
On the other hand, similarly, one can prove the second identity by writing down the expression of entropy for the thermal state,
\begin{align}
S(\beta)&=-\sum_i \frac{e^{-\beta E_i}}{Z_\beta}\ln \frac{e^{-\beta E_i}}{Z_\beta} = \sum_i \beta E_i \frac{e^{-\beta E_i}}{Z_\beta} + \ln Z_\beta \sum_i \frac{e^{-\beta E_i}}{Z_\beta} = \beta \langle \hat H_\cold \rangle_{\beta} + \ln Z_\beta.
\end{align}
Therefore, the derivative of $S(\beta)$ w.r.t. $\beta$ is
\begin{align}\label{eq: S div}
\frac{d S(\tau_{\beta})}{d\beta} &= \langle \hat H_\cold \rangle_{\beta} + \beta \frac{d \langle \hat H_\cold \rangle_{\beta}}{d \beta} + \frac{1}{Z_\beta} \frac{d Z_\beta}{d \beta}= \beta \cdot \frac{d \langle \hat H_\cold \rangle_{\beta}}{d \beta}=-\beta \cdot {\rm var} (\hat H_\cold)_\beta.
\end{align}
\end{proof}

By using Lemma \ref{lem:identities} in a special case, we obtain the following corollary:
\begin{corollary}\label{cor:specialid}
Given any Hamiltonian $\hat H_\cold$, consider the quantities
\begin{align}\label{eq:DCdefinition}
\Delta C(\beta_f)  = \tr (\hat H_\cold \tau_{\beta_f})-\tr(\hat H_\cold\tau_{\beta_c}) = \langle \hat H_\cold \rangle_{\beta_f}-\langle \hat H_\cold \rangle_{\beta_c}
\end{align}
and
\begin{align}\label{eq:DWdefinition}
 W_\textup{ext}(\beta_f)  = F(\tau_{\beta_c})-F(\tau_{\beta_f})=\frac{1}{\beta_h}\left[ D(\tau_{\beta_c}\|\tau_{\beta_h})-D(\tau_{\beta_f}\|\tau_{\beta_h}) \right],
\end{align}
where $\tau_{\beta}$ corresponds to the thermal state defined by $\hat H_\cold$ at inverse temperature $\beta$. Then
\begin{align}
\frac{d \Delta C(\beta_f)}{d\beta_f} &= -{\rm var} (\hat H_\cold)_{\beta_f}\\
\frac{d  W_\textup{ext}(\beta_f)}{d\beta_f} &= \frac{\beta_h-\beta_f}{\beta_h} {\rm var} (\hat H_\cold)_{\beta_f}.
\end{align}
\end{corollary}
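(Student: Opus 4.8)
The plan is to obtain both identities as immediate consequences of Lemma \ref{lem:identities}, by differentiating the defining expressions \eqref{eq:DCdefinition} and \eqref{eq:DWdefinition} with respect to $\beta_f$ and discarding the terms that do not depend on $\beta_f$.

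For the first identity this is direct. In \eqref{eq:DCdefinition} the term $\langle \hat H_\cold\rangle_{\beta_c}$ is constant, since the initial inverse temperature $\beta_c$ is fixed, so $\frac{d\Delta C(\beta_f)}{d\beta_f}=\frac{d\langle \hat H_\cold\rangle_{\beta_f}}{d\beta_f}$, and the first line of Lemma \ref{lem:identities} immediately gives $-{\rm var}(\hat H_\cold)_{\beta_f}$.

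For the second identity I would first use $W_\textup{ext}(\beta_f)=F(\tau_{\beta_c})-F(\tau_{\beta_f})$ and note that $F(\tau_{\beta_c})$ is constant, so only $-\frac{d}{d\beta_f}F(\tau_{\beta_f})$ survives. The one point that genuinely needs care is that the Helmholtz free energy appearing here is computed at the surrounding (hot) bath inverse temperature, i.e. $F(\tau_{\beta_f})=\langle \hat H_\cold\rangle_{\beta_f}-\frac{1}{\beta_h}S(\beta_f)$ with prefactor $1/\beta_h$ and \emph{not} $1/\beta_f$; this is precisely the free energy used in \eqref{eq:Wext} and is consistent with the relative-entropy form of \eqref{eq:DWdefinition} via \eqref{eq:helmholtzF}. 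Differentiating and applying both identities of Lemma \ref{lem:identities} then yields
\begin{equation}
\frac{d}{d\beta_f}F(\tau_{\beta_f})=-{\rm var}(\hat H_\cold)_{\beta_f}-\frac{1}{\beta_h}\bigl(-\beta_f\,{\rm var}(\hat H_\cold)_{\beta_f}\bigr)=\frac{\beta_f-\beta_h}{\beta_h}\,{\rm var}(\hat H_\cold)_{\beta_f},
\end{equation}
so that $\frac{d W_\textup{ext}(\beta_f)}{d\beta_f}=-\frac{d}{d\beta_f}F(\tau_{\beta_f})=\frac{\beta_h-\beta_f}{\beta_h}\,{\rm var}(\hat H_\cold)_{\beta_f}$, as claimed.

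There is no substantive obstacle here, as this is a corollary of the preceding lemma; the only care required is bookkeeping — dropping the correct constant terms, using $1/\beta_h$ rather than $1/\beta_f$ in the free energy prefactor, and tracking the overall minus sign from $W_\textup{ext}=F(\tau_{\beta_c})-F(\tau_{\beta_f})$. As a consistency check one could instead differentiate the relative-entropy expression $\frac{1}{\beta_h}D(\tau_{\beta_f}\|\tau_{\beta_h})$ in \eqref{eq:DWdefinition} directly, using $D(\tau_{\beta_f}\|\tau_{\beta_h})=-S(\beta_f)+\beta_h\langle \hat H_\cold\rangle_{\beta_f}+\ln Z_{\beta_h}$, which reproduces the same derivative and confirms the sign.
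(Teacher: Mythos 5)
Your proposal is correct and follows essentially the same route as the paper's own proof: both identities are obtained by differentiating the defining expressions, applying Lemma \ref{lem:identities}, and using the Helmholtz free energy $F(\tau_{\beta_f})=\langle \hat H_\cold\rangle_{\beta_f}-\frac{1}{\beta_h}S(\beta_f)$ with the hot-bath prefactor $1/\beta_h$, exactly as in Eq.~\eqref{eq:simplifyDeltaW}. Your sign bookkeeping and the relative-entropy consistency check are both sound, so there is nothing to add.
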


\begin{proof}
For $\Delta C(\beta_f)$, it is straightforward from Lemma \ref{lem:identities} that
\begin{align}\label{eq:dev Delta C in terms of var}
\frac{d \Delta C(\beta_f)}{d\beta_f} = \frac{d \langle \hat H_\cold \rangle_{\beta_f}}{d\beta_f} = -{\rm var} (\hat H_\cold)_{\beta_f}.
\end{align}
On the other hand, $\Delta W(\beta_f)$ can be simplified by substituting Eq.~\eqref{eq:helmholtzF} into Eq.~\eqref{eq:DWdefinition},
\begin{align}\label{eq:simplifyDeltaW}
 W_\textup{ext}(\beta_f) &= F(\tau_{\beta_c})-F(\tau_{\beta_f}) = \langle \hat H_\cold \rangle_{\beta_c}-\langle \hat H_\cold \rangle_{\beta_f} - \frac{1}{\beta_h} \left[S(\tau_{\beta_c})-S(\tau_{\beta_f})\right].
\end{align}
With this, we can evaluate the derivative
\begin{align*}
\frac{d W_\textup{ext}(\beta_f)}{d\beta_f} &= - \frac{d\langle \hat H_\cold \rangle_{\beta_f}}{d\beta_f} + \frac{1}{\beta_h} \frac{dS(\tau_{\beta_f})}{d\beta_f}\nonumber\\
&= {\rm var} (\hat H_\cold)_{\beta_f} - \frac{\beta_f}{\beta_h}{\rm var} (\hat H_\cold)_{\beta_f} \nonumber\\
&= \frac{\beta_h-\beta_f}{\beta_h}{\rm var} (\hat H_\cold)_{\beta_f}.
\end{align*}
The second equality is obtained by Lemma \ref{lem:identities} for $\frac{d\langle \hat H_\cold \rangle_{\beta_f}}{d\beta_f}$, and the third by grouping common factors together. 
\end{proof}

In the next step, by using Corollary \ref{cor:specialid}, we show that when the final state of the cold bath is thermal, the optimal efficiency is achieved only in the quasi-static limit, i.e. in the limit $\beta_f\rightarrow\beta_c$ when the efficiency is optimised over all final thermal states of the cold bath. 

\begin{lemma}\label{lemma:optimaleff is for infinitesimal temp change}
Evaluate the efficiency expressed in Eq.~\eqref{eq:eff explicit function of cold bath} for the situation where the final state of the cold bath is a thermal state at inverse temperature $\beta_f$:
\be 
\eta(\beta_f)=\frac{ W_\textup{ext}(\beta_f)}{\Delta C(\beta_f)+ W_\textup{ext}(\beta_f)}.
\ee
Then for all $\beta_f < \beta_c$, $\frac{d \eta(\beta_f)}{d\beta_f} > 0.$
\end{lemma}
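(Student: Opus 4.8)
The plan is to differentiate $\eta(\beta_f) = W_\textup{ext}(\beta_f)/(\Delta C(\beta_f)+W_\textup{ext}(\beta_f))$ directly and reduce the sign question to that of a single auxiliary function. Writing $W := W_\textup{ext}(\beta_f)$, $C := \Delta C(\beta_f)$ and denoting $d/d\beta_f$ by a prime, the quotient rule gives
\[
\frac{d\eta}{d\beta_f} = \frac{W'(C+W)-W(C'+W')}{(C+W)^2} = \frac{W'C - W C'}{(C+W)^2},
\]
since the $W'W$ and $WW'$ terms cancel. Hence, provided the denominator does not vanish, the sign of $d\eta/d\beta_f$ is exactly the sign of $W'C - WC'$, and the whole problem becomes to show this combination is positive for $\beta_f<\beta_c$.

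Next I substitute the derivatives already computed in Corollary~\ref{cor:specialid}, namely $C' = -{\rm var}(\hat H_\cold)_{\beta_f}$ and $W' = \tfrac{\beta_h-\beta_f}{\beta_h}{\rm var}(\hat H_\cold)_{\beta_f}$. Setting $v := {\rm var}(\hat H_\cold)_{\beta_f}$, which is strictly positive because $\hat H_\cold$ is nondegenerate, I factor it out:
\[
W'C - WC' = v\left[\frac{\beta_h-\beta_f}{\beta_h}\,C + W\right].
\]
Using $W = -C + \tfrac{1}{\beta_h}\big(S(\beta_f)-S(\beta_c)\big)$, which follows from the definitions of $\Delta C$, $W_\textup{ext}$ and $F$, the bracket collapses to $\tfrac{1}{\beta_h}G(\beta_f)$ with
\[
G(\beta_f) := -\beta_f\big(\langle \hat H_\cold\rangle_{\beta_f}-\langle \hat H_\cold\rangle_{\beta_c}\big) + S(\beta_f)-S(\beta_c).
\]

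The crux is then to show $G(\beta_f)>0$ for $\beta_f<\beta_c$. I would argue by boundary value plus monotonicity: clearly $G(\beta_c)=0$, and differentiating while invoking Lemma~\ref{lem:identities} (which supplies $\tfrac{d}{d\beta_f}\langle \hat H_\cold\rangle_{\beta_f} = -v$ and $\tfrac{d}{d\beta_f}S(\beta_f) = -\beta_f v$) gives $G'(\beta_f) = \langle \hat H_\cold\rangle_{\beta_c}-\langle \hat H_\cold\rangle_{\beta_f}$, where the two variance contributions cancel exactly. Since $\langle \hat H_\cold\rangle_\beta$ is strictly decreasing in $\beta$, for $\beta_f<\beta_c$ we have $\langle \hat H_\cold\rangle_{\beta_f}>\langle \hat H_\cold\rangle_{\beta_c}$, so $G'(\beta_f)<0$; thus $G$ is strictly decreasing and $G(\beta_f)>G(\beta_c)=0$.

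Finally I check the denominator is harmless: in the perfect-work setting $C+W = \Delta H = \tfrac{1}{\beta_h}\big(S(\beta_f)-S(\beta_c)\big)$, which is positive for $\beta_f<\beta_c$ because entropy strictly increases with temperature (again by Lemma~\ref{lem:identities}). Collecting signs, $d\eta/d\beta_f = vG(\beta_f)/\big[\beta_h(C+W)^2\big] > 0$, as claimed. I expect the only genuinely delicate point—rather than any hard estimate—to be the cancellation of the variance terms in $G'$; it is precisely this cancellation that makes the auxiliary function monotone and lets the trivial boundary value $G(\beta_c)=0$ close the argument, while everything else is bookkeeping with the identities from Lemma~\ref{lem:identities} and Corollary~\ref{cor:specialid}.
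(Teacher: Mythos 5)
Your proof is correct, and up to differentiating $\eta$ instead of $\eta^{-1}$ (cosmetic), it shares the paper's skeleton: quotient rule, substitution of $\Delta C' = -\mathrm{var}(\hat H_\cold)_{\beta_f}$ and $W_\textup{ext}' = \frac{\beta_h-\beta_f}{\beta_h}\mathrm{var}(\hat H_\cold)_{\beta_f}$ from Corollary~\ref{cor:specialid}, and reduction to the sign of one residual bracket. Where you genuinely diverge is the finish. The paper recognizes its bracket as the free-energy difference $F(\tau_{\beta_f})-F(\tau_{\beta_c})$ with respect to a bath at inverse temperature $\beta_f$ and quotes the variational principle that the thermal state uniquely minimizes that free energy; your $G(\beta_f)$ is the same object up to a factor $-\beta_f$ — indeed $G(\beta_f)=\beta_f\left[F_{\beta_f}(\tau_{\beta_c})-F_{\beta_f}(\tau_{\beta_f})\right]=D(\tau_{\beta_c}\|\tau_{\beta_f})$ by Eq.~\eqref{eq:helmholtzF} — but you \emph{prove} its positivity by elementary calculus, via $G(\beta_c)=0$ and $G'(\beta)=\la \hat H_\cold\ra_{\beta_c}-\la \hat H_\cold\ra_{\beta}<0$ for $\beta<\beta_c$, the variance terms cancelling exactly as you flag. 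This buys a self-contained argument (effectively an in-line proof of Klein's inequality for two Gibbs states) and avoids extracting the factor $\beta_f/\beta_h$; your explicit check that $\Delta C+W_\textup{ext}=\frac{1}{\beta_h}\left(S(\beta_f)-S(\beta_c)\right)>0$ is also slightly cleaner than the paper's route, which divides by $W_\textup{ext}^2$ and tacitly needs $W_\textup{ext}\neq 0$. One wording quibble: strict positivity of the variance requires $\hat H_\cold$ to have at least two distinct eigenvalues (not be proportional to the identity), which is not the same as nondegeneracy — though the paper makes the same tacit assumption.
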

\begin{proof}
To prove this, we show that $\frac{d \eta^{-1}}{d \beta_f}< 0$, where $\eta^{-1}=1+\frac{\Delta C}{ W_\textup{ext}}$. 
Evaluating the derivative of $\eta^{-1}$ w.r.t. $\beta_f$, we obtain
\begin{align}
\frac{d\eta^{-1}}{d\beta_f} & = \frac{1}{ W_\textup{ext}^2}\cdot \left[\frac{d\Delta C(\beta_f)}{d\beta_f} W_\textup{ext} - \frac{d W_\textup{ext}(\beta_f)}{d\beta_f} \Delta C\right]\\
&=\frac{{\rm var} (\hat H_\cold)_{\beta_f}}{W_\textup{ext}^2} \cdot \left[-W_\textup{ext} - \frac{\beta_h-\beta_f}{\beta_h}\Delta C\right]\\
&=\frac{{\rm var} (\hat H_\cold)_{\beta_f}}{W_\textup{ext}^2} \cdot \left[\Delta C + \frac{1}{\beta_h}\left[S(\tau_{\beta_c})-S(\tau_{\beta_f})\right]- \frac{\beta_h-\beta_f}{\beta_h}\Delta C\right]\\
&=\frac{{\rm var} (\hat H_\cold)_{\beta_f}}{W_\textup{ext}^2} \frac{\beta_f}{\beta_h}\cdot \left[\Delta C - \frac{1}{\beta_f}[S(\tau_{\beta_f})-S(\tau_{\beta_c})]\right].
\end{align}

The first equality is obtained by invoking the chain rule of differentiation. The second equality is obtained by  substituting $\frac{d W_\textup{ext}}{d\beta_f}, \frac{d\Delta C}{d\beta_f},$ as evaluated earlier in Corollary \ref{cor:specialid}. The third equality is obtained by expressing $W_\textup{ext}$ according to Eq.~\eqref{eq:simplifyDeltaW}, plus recognizing that $\langle \hat H_\cold \rangle_{\tau_{\beta_f}}-\langle\hat H_\cold \rangle_{\tau_{\beta_c}}=\Delta C$. The last inequality is obtained, simply by taking out a common term $\beta_f/\beta_h$. We then make the following observations: \\
1) The factor
\be
\frac{\beta_f}{\beta_h W_\textup{ext}^2}> 0,
\ee
2) The variance of energy for any positive temperature
\be\label{Delta C div}
{\rm var} (\hat H_\cold)_{\beta_f}> 0,
\ee
3) and the last term $\Delta C - \frac{1}{\beta_f}[S(\tau_{\beta_f})-S(\tau_{\beta_c})]$ can be written as $F(\tau_{\beta_f})-F(\tau_{\beta_c})$, where $F$ is the free energy of a system w.r.t. a bath with inverse temperature $\beta_f$. But then, since $\tau_{\beta_f}$ is the thermal state with the same inverse temperature, this means that $\tau_{\beta_f}$ is the \emph{unique} state that minimizes free energy. Therefore, $F(\tau_{\beta_c})-F(\tau_{\beta_f})> 0$ for any $\tau_{\beta_c}$. 

\end{proof}
From Lemma \ref{lemma: optimum eff is for themal state} and Lemma \ref{lemma:optimaleff is for infinitesimal temp change}, we conclude that the maximization of efficiency for any Hamiltonian $\hat H$ happens for a final state which is thermal, and the greater its inverse temperature $\beta_f$, the higher efficiency is. With these lemmas we can now prove the main result of this section (Theorem \ref{theorem:classical CE}). 

In the next theorem, we evaluate the efficiency at the limit $\beta_f\rightarrow\beta_c^-$, and show that it corresponds to the Carnot efficiency.

\begin{theorem}[Carnot Efficiency]\label{theorem:classical CE}
Consider all heat engines which extract perfect work (see Definition \ref{def:perfect work}). Then according to the macroscopic second law of thermodynamics, the maximum achievable efficiency (see Eq. \eqref{eq:max eff macro}) is the Carnot efficiency
\be 
\eta_\textup{max}=1-\frac{\beta_h}{\beta_c}.
\ee
It can be obtained for \textup{all} cold bath Hamiltonians $\hat H_\cold$, but when the final state of the cold bath $\rho^1_\cold$ is thermal, then \textup{only} for quasi-static heat engines (as defined in Def. \ref{def:quasi static} and Eq. \eqref{def:quasi static eff mac} for quasi-static maximum efficiency). In this quasi-static limit, an infinitesimal amount of work is extracted.
\end{theorem}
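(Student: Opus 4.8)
The plan is to assemble the theorem from the chain of lemmas already established, reducing the full variational problem in Eq.~\eqref{eq:max eff macro} to a single-variable optimization and then taking a careful limit. First, by the lemma showing that the efficiency-maximizing final cold bath state is thermal, I may restrict attention to final states $\tau_{\beta_f}$ parametrized by their inverse temperature $\beta_f$, so that $\eta_\textup{max}^{-1} = 1 + \inf_{\beta_f} \Delta C(\beta_f)/W_\textup{ext}(\beta_f)$. The admissible range of $\beta_f$ is fixed by the requirement $W_\textup{ext}(\beta_f) > 0$: using $W_\textup{ext}(\beta_f) = F(\tau_{\beta_c}) - F(\tau_{\beta_f})$ (free energy at inverse temperature $\beta_h$) together with the fact that this free energy is uniquely minimized at $\beta = \beta_h$ and increases monotonically for $\beta > \beta_h$, positivity of the extracted work holds precisely on the interval $\beta_h < \beta_f < \beta_c$.

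Next, I would invoke the monotonicity lemma, which gives $d\eta(\beta_f)/d\beta_f > 0$ throughout $\beta_f < \beta_c$. Hence $\eta(\beta_f)$ is strictly increasing on the admissible interval, so its supremum is not attained at any interior point but is only approached as $\beta_f \to \beta_c^-$. This is exactly the quasi-static limit of Def.~\ref{def:quasi static} ($\beta_f = \beta_c - g$, $g \to 0^+$), which establishes the claim that the optimum is reached \emph{only} for quasi-static engines, and---since $W_\textup{ext}(\beta_c) = 0$---that only an infinitesimal amount of work is extracted there.

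It remains to evaluate $\lim_{\beta_f \to \beta_c^-} \eta(\beta_f)$. Since both $\Delta C(\beta_c) = 0$ and $W_\textup{ext}(\beta_c) = 0$, this is a $0/0$ limit, which I would resolve by first-order Taylor expansion about $\beta_c$ using the derivative identities of the preceding corollary: $\Delta C(\beta_f) = -\,\mathrm{var}(\hat H_\cold)_{\beta_c}(\beta_f - \beta_c) + \soo(\beta_f - \beta_c)$ and $W_\textup{ext}(\beta_f) = \tfrac{\beta_h - \beta_c}{\beta_h}\,\mathrm{var}(\hat H_\cold)_{\beta_c}(\beta_f - \beta_c) + \soo(\beta_f - \beta_c)$. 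The variance factors cancel, giving $\Delta C/W_\textup{ext} \to \beta_h/(\beta_c - \beta_h)$, so that $\eta_\textup{max}^{-1} \to \beta_c/(\beta_c - \beta_h)$ and $\eta_\textup{max} = 1 - \beta_h/\beta_c$. Because the cancellation removes all dependence on $\mathrm{var}(\hat H_\cold)$, the limiting value is identical for every cold bath Hamiltonian, which proves the universality claim.

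The main obstacle is the $0/0$ limit: care is needed to confirm that the ratio $\Delta C/W_\textup{ext}$ actually converges and that the neglected higher-order terms genuinely vanish, i.e.\ that $\mathrm{var}(\hat H_\cold)_{\beta_f}$ stays bounded and bounded away from zero near $\beta_c$ (guaranteed since $\hat H_\cold$ has bounded pure point spectrum and is non-trivial). A secondary subtlety is the logical direction of the ``only'' statement: strict monotonicity shows no interior $\beta_f$ can match the Carnot value, and the thermal-optimality lemma rules out non-thermal final states as competing candidates, so the supremum is genuinely an unattained limit, fully consistent with the vanishing of $W_\textup{ext}$ in that limit.
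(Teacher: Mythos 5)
Your proposal is correct and takes essentially the same route as the paper's own proof: restrict to thermal final states via Lemma~\ref{lemma: optimum eff is for themal state}, use the strict monotonicity of Lemma~\ref{lemma:optimaleff is for infinitesimal temp change} to force the optimum into the quasi-static limit $\beta_f\to\beta_c^-$, and resolve the $0/0$ ratio with the derivative identities of Corollary~\ref{cor:specialid} --- your first-order Taylor expansion is just the paper's L'H\^ospital step written out explicitly. The only slip is the claim that $W_\textup{ext}>0$ holds \emph{precisely} on $\beta_h<\beta_f<\beta_c$: since $F(\tau_\beta)$ (referenced to $\beta_h$) is minimized at $\beta=\beta_h$ and increases on both sides, positivity in fact extends to some $\beta_f<\beta_h$ before $F$ re-crosses $F(\tau_{\beta_c})$, but this is immaterial because the monotonicity lemma covers all $\beta_f<\beta_c$ and the supremum is approached at $\beta_f\to\beta_c^-$ regardless.
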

\begin{proof}
From Eq. \eqref{eq:max eff macro}, we have an expression for the optimal efficiency in terms of a maximization over final cold bath states $\rho_\stcold^1\in\mathcal{S}$. By Lemma \ref{lemma: optimum eff is for themal state}, we know that the optimal solution is obtained only for thermal states. Subsequently, by Lemma \ref{lemma:optimaleff is for infinitesimal temp change}, it is shown that when the final cold bath is of temperature $\beta_f$, the corresponding efficiency is strictly increasing w.r.t. $\beta_f$. 
Also note that since by definition $W_{\rm ext} > 0$, this implies that $\beta_f<\beta_c$. Intuitively, this is because heat cannot flow from a cold to hot system without any work input. One can also see this mathematically, by showing that for any $\beta \geq \beta_h$, 
\begin{align}
\frac{d F(\tau_\beta)}{d\beta} & = \frac{d}{d\beta} \left[\langle \hat H_\cold \rangle_{\beta} -\frac{1}{\beta_h} S(\beta)\right] = \left(\frac{\beta}{\beta_h}-1\right) {\rm var} (\hat H_\cold)_\beta \geq 0.
\end{align}
This implies that if $\beta_f\geq\beta_c\geq\beta_h$, then $F(\beta_f)\geq F(\beta_c)$, and according to Eq.~\eqref{eq:DWdefinition} $W_\textup{ext}\leq 0$.
Therefore, when the final state of the cold bath $\rho_\stcold^1$ is thermal, the optimal efficiency must be achieved only when its inverse temperature $\beta_f$ approaches $\beta_c$ from below. Let $\beta_f = \beta_c -g$, where $g>0$. Then we have
\be\label{eta inv *}
\eta_\textup{max}^{-1}=\lim_{g\rightarrow 0^+}(\eta^\textup{mac})^{-1}(\beta_c-g),\quad (\eta^\textup{mac})^{-1}(\beta_c-g)=1+\frac{\Delta C}{W_\textup{ext}}\Big{|}_{\rho_\stcold^1=\tau_{(\beta_c-g)}}.
\ee
Since as $g\rightarrow 0^+$, both the numerator and denominator vanish, we can evaluate this limit by first applying L'H\^ospital rule, the chain rule for derivatives (for any function $F$, $\frac{dF}{dg}=-\frac{dF}{d\beta_f}$), and then Corollary \ref{cor:specialid} to obtain
\begin{align*}
\lim_{g\rightarrow 0^+} \frac{\Delta C}{W_\textup{ext}} = \lim_{g\rightarrow 0^+}\frac{\frac{d\Delta C}{dg}}{\frac{dW_\textup{ext}}{dg}}= \lim_{\beta_f\rightarrow \beta_c^-}\frac{\frac{d\Delta C}{d\beta_f}}{\frac{dW_\textup{ext}}{d\beta_f}} = \frac{\beta_h}{\beta_c-\beta_h}.
\end{align*}
This implies that
\begin{align}\label{eta inv}
	\eta_\mathrm{max}^{-1}=\lim_{g\rightarrow 0^+}(\eta^\textup{mac})^{-1}(\beta_c-g)=1+\frac{\beta_{h}}{\beta_{c}-\beta_{h}}=\frac{\beta_{c}}{\beta_{c}-\beta_{h}}
\end{align}
and hence $\eta_\mathrm{max}=1-\frac{\beta_{h}}{\beta_{c}}$.
\end{proof}

\subsection{Maximum efficiency for near perfect work is still Carnot efficiency}\label{sub:macroeffNPW}
In this section, we show that even while allowing a non-zero failure probability $\varepsilon >0$ in the near perfect work scenario, the maximum achievable efficiency is still the Carnot efficiency. It is worth noting that this result is also important later, as an upper bound to maximum efficiency in the nanoscopic regime. We first prove it in Lemma \ref{lem:cannot do better than carnot with porb failuer} for the case where the final state of the battery is fixed as in Eq.~\eqref{eq:battery final state}. Then later, we show in Lemma \ref{lem:CEmaximum_generalbatt} that Carnot efficiency  is still the maximum, even if we allow a more general final battery state. Before we present the proof, it is useful for the reader to recall the definition of near perfect work (Def.~\ref{def:near perfect work}) and quasi-static heat engines (Def.~\ref{def:quasi static}).

\begin{lemma}\label{lem:cannot do better than carnot with porb failuer}
Consider all heat engines which extract near perfect work (see Def. \ref{def:near perfect work}).
Then according to the macroscopic second law of thermodyanmics, the maximum efficiency of a heat engine, $\eta_\textup{max}$ is the Carnot efficiency
\be 
\eta_\textup{max}=\sup_{\rho_\cold^1\in\mathcal{S}}\eta^\textup{mac}(\rho_\cold^1)=1-\frac{\beta_h}{\beta_c},
\ee
and the supremum is achieved for quasi-static heat engines (see Def. \eqref{def:quasi static} and Eq. \eqref{def:quasi static eff mac}).

\end{lemma}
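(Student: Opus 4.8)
The plan is to reduce the near-perfect-work efficiency, at each fixed final cold bath state, to the perfect-work efficiency already computed in Theorem~\ref{theorem:classical CE}, so that the supremum again equals the Carnot value and is attained only quasi-statically. First I would record the maximal work available for a given failure probability $\varepsilon$: by the equality case of Eq.~\eqref{eq:Helmholts as D_1}, $W_\textup{ext}=F(\rho_\cold^0)-F(\rho_\cold^1)+\frac{1}{\beta_h}\htwo(\varepsilon)$. Since $\eta$ is monotonically increasing in $W_\textup{ext}$ (the numerator grows while only a $(1-\varepsilon)$ fraction feeds back into the denominator through $\Delta W$), the efficiency $\eta^\textup{mac}(\rho_\cold^1)$ is obtained at this maximal work, and Eq.~\eqref{eq:eff explicit function of cold bath} gives $\eta^{-1}(\rho_\cold^1,\varepsilon)=1-\varepsilon+\Delta C(\rho_\cold^1)/[W^\textup{pw}(\rho_\cold^1)+\frac{1}{\beta_h}\htwo(\varepsilon)]$, writing $W^\textup{pw}(\rho_\cold^1)=F(\rho_\cold^0)-F(\rho_\cold^1)$ for the perfect-work value. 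As in Lemma~\ref{lemma: optimum eff is for themal state}, the extra term $\frac{1}{\beta_h}\htwo(\varepsilon)$ is independent of $\rho_\cold^1$, so minimizing $D(\rho_\cold^1\|\tau_\cold^h)$ at fixed $\Delta C$ still singles out a thermal final state $\tau_{\beta_f}$, and I may restrict to such states parametrised by $\beta_f$.

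Next I would turn the near-perfect-work requirement into the statement that this entropy boost is asymptotically irrelevant. Using $\Delta S=\htwo(\varepsilon)$ together with the expression for $W_\textup{ext}$ above, I would first establish the equivalence claimed around Eq.~\eqref{eq:alternative equiv to n.p.w. def}: conditions 1) and 2) of Def.~\ref{def:near perfect work} hold precisely when $\lim_{\varepsilon\to0^+}\htwo(\varepsilon)/W_\textup{ext}=0$, and this limit vanishes if and only if $W^\textup{pw}(\rho_\cold^1)>0$ (for $W^\textup{pw}=0$ the ratio is the nonzero constant $\beta_h$). Holding the final state $\rho_\cold^1=\tau_{\beta_f}$ fixed and sending $\varepsilon\to0^+$, both $\varepsilon$ and $\htwo(\varepsilon)$ vanish while $W^\textup{pw}(\tau_{\beta_f})$ stays a fixed positive constant, so $\eta^{-1}(\tau_{\beta_f},\varepsilon)\to1+\Delta C(\beta_f)/W^\textup{pw}(\beta_f)$, which is exactly the perfect-work inverse efficiency of Theorem~\ref{theorem:classical CE}. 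Hence the near-perfect-work $\eta^\textup{mac}(\tau_{\beta_f})$ coincides with the perfect-work efficiency $\eta(\beta_f)$.

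Finally I would invoke the earlier perfect-work analysis verbatim. By Lemma~\ref{lemma:optimaleff is for infinitesimal temp change} the resulting $\eta(\beta_f)$ is strictly increasing in $\beta_f$, with $W_\textup{ext}>0$ forcing $\beta_f<\beta_c$, so it stays strictly below $\eta_C$ for every non-quasi-static engine and approaches $\eta_C=1-\beta_h/\beta_c$ only as $\beta_f\to\beta_c^-$. Taking the supremum over $\beta_f$ (and over non-thermal states, which are strictly suboptimal by Lemma~\ref{lemma: optimum eff is for themal state}) therefore reproduces Theorem~\ref{theorem:classical CE} and yields $\eta_\textup{max}=\eta_C$, attained only in the quasi-static limit of Def.~\ref{def:quasi static}.

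The main obstacle is precisely the entropy term $\frac{1}{\beta_h}\htwo(\varepsilon)$: at any finite $\varepsilon>0$ it increases $W_\textup{ext}$ and, together with the $1-\varepsilon$ prefactor, can push $\eta$ strictly above $\eta_C$. The delicate point is thus the order of limits, the efficiency of a near-perfect-work engine must be evaluated as $\varepsilon\to0^+$ at a fixed final state (as dictated by Eq.~\eqref{eq:alternative equiv to n.p.w. def}), which is what makes the boost vanish relative to the fixed $W^\textup{pw}(\rho_\cold^1)>0$ and collapses the problem back onto the perfect-work case. I would take care to state this explicitly, so that the apparent super-Carnot efficiencies available along coupled sequences in which $\varepsilon\to0$ and $\beta_f\to\beta_c$ simultaneously are correctly excluded from the defined quantity $\eta_\textup{max}$.
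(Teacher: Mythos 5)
Your proof tracks the paper's up to the quasi-static regime: the same work formula (though note the paper's sharper version, Eq.~\eqref{eq:W ext std}, carries a prefactor $(1-\varepsilon)^{-1}$ that your quotation of Eq.~\eqref{eq:Helmholts as D_1} drops --- harmless in the $\varepsilon\to 0$ limit but needed for correctness at finite $\varepsilon$), the same reduction to thermal final states via Lemma~\ref{lem:minimizingD1} since $\htwo(\varepsilon)$ is independent of $\rho_\cold^1$, the same exclusion of states with $D(\tau_{\beta_c}\|\tau_{\beta_h})\leq D(\rho_\cold^1\|\tau_{\beta_h})$, and the same appeal to Lemma~\ref{lemma:optimaleff is for infinitesimal temp change} for the non-quasi-static case. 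The genuine gap is in your last paragraph: you dispose of the coupled sequences, in which $\varepsilon\to 0$ and $\beta_f\to\beta_c$ simultaneously, by declaring them ``correctly excluded from the defined quantity $\eta_\textup{max}$.'' They are not excluded. Definition~\ref{def:near perfect work} (equivalently Eq.~\eqref{eq:alternative equiv to n.p.w. def}) only demands $\Delta S/W_\textup{ext}\to 0$ along the family of protocols; it fixes no order of limits and certainly does not freeze $\rho_\cold^1$. Indeed the quasi-static efficiency in Eq.~\eqref{def:quasi static eff mac} is \emph{defined} as a coupled limit with $\varepsilon=\varepsilon(g)$, and the downstream nanoscale results (Lemmas~\ref{lemma:existence kappabar}, \ref{lemma: bar gamma}, \ref{lemma: efficiency as a function of kappa}) rely essentially on such coupled choices of $\varepsilon(g)$. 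Since, as you yourself observe, at fixed $\varepsilon>0$ the efficiency tends to $(1-\varepsilon)^{-1}>1$ as $\beta_f\to\beta_c$, the admissible coupled paths are exactly the dangerous ones, and excluding them by fiat leaves the upper bound $\eta_\textup{max}\leq\eta_C$ unproven for the protocols that matter; your iterated limit (first $\varepsilon\to 0^+$ at fixed $\beta_f$, then $\beta_f\to\beta_c^-$) only bounds a sub-family.

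The missing step is cheap, and is precisely what the paper's proof supplies in Eqs.~\eqref{eq:limitW/S}--\eqref{eq:limitS/D-D}: near-perfect work \emph{tames} the coupled limit rather than forbidding it. Writing $D:=D(\tau_{\beta_c}\|\tau_{\beta_h})-D(\tau_{\beta_f}\|\tau_{\beta_h})$, the identity $\beta_h(1-\varepsilon)W_\textup{ext}=D+\Delta S$ gives
\be
\frac{\Delta S}{D}=\frac{\Delta S/W_\textup{ext}}{\beta_h(1-\varepsilon)-\Delta S/W_\textup{ext}}\longrightarrow 0
\ee
along \emph{any} path with $\Delta S/W_\textup{ext}\to 0$ and $\varepsilon\leq l<1$, so the entropy boost is negligible relative to $D$ even when both vanish together. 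Substituting this into $\eta^{-1}=(1-\varepsilon)\left[1+\beta_h\Delta C/(D+\Delta S)\right]$ and applying L'H\^ospital with Corollary~\ref{cor:specialid} then pins the coupled quasi-static limit at exactly $\beta_c/(\beta_c-\beta_h)$, i.e.\ Carnot and never above it. With that one-line estimate inserted in place of the exclusion claim, your argument closes and coincides in substance with the paper's proof; without it, the lemma's upper-bound direction is open precisely where the Carnot value is actually attained.
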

\begin{proof}
The ideas in this proof are very similar to that of Section \ref{sec:Maximum efficiency}, and the main complication comes from proving that even if we allow $\varepsilon >0$, as long as $\Delta S/W_\textup{ext}$ is arbitrarily small, the maximum efficiency cannot surpass the Carnot efficiency.

Let us begin by establishing the relevant quantities for near perfect work extraction. The amount of work extractable from the heat engine, when we have a probability of failure, according to the standard free energy can be obtained by solving Eq. \eqref{eq:Helmholts as D_1}. 
We thus have that the maximum $W_{\rm ext}$ is
\be\label{eq:W ext std}
W_\textup{ext}= \beta_h^{-1}(1-\varepsilon)^{-1} \left[ D(\tau_{\beta_c}\| \tau_{\beta_h})-D(\rho_\cold^1\|\tau_{\beta_h})+\Delta S\right],
\ee
where $\Delta S$ is defined in Eq. \eqref{eq:von neu}. 

Before we continue with the analysis, we will note a trivial consequence of Eq. \eqref{eq:W ext std}. Condition 1) in Def \ref{def:near perfect work} implies that $(1-\varepsilon)^{-1}$ is upper bounded. The terms in square brackets in Eq. \eqref{eq:W ext std} are also clearly upper bounded for finite $\beta_c,\beta_h$. Hence $W_\textup{ext}$ is bounded from above. $\Delta S$ is solely a function of $\varepsilon$ and only approaches zero in the limits $\varepsilon\rightarrow 0^+$, $\varepsilon\rightarrow 1^-$; and $\varepsilon\rightarrow 1^-$ is forbidden by 1) in Def \ref{def:near perfect work}. Thus if 1) and 2) in Def \ref{def:near perfect work} are satisfied,
\be \label{eq: equiv to def near perfect work}
\lim_{\varepsilon\rightarrow 0^+}\frac{\Delta S}{W_\textup{ext}}=0.
\ee
In turn, if Eq. \eqref{eq: equiv to def near perfect work} is satisfied, then we have near perfect work by Def. \ref{def:near perfect work}. Thus Eq. \eqref{eq: equiv to def near perfect work} is satisfied iff we have near perfect work. We will use this result later in the proof.

Extracting a positive amount of near perfect work implies that we can rule out all states $\rho_\cold^1$ such that $D(\tau_{\beta_c}\| \tau_{\beta_h})\leq D(\rho_\cold^1\|\tau_{\beta_h})$ from the analysis. This can be proven by contradiction: if $D(\tau_{\beta_c}\| \tau_{\beta_h})\leq D(\rho_\cold^1\|\tau_{\beta_h})$, then from Eq. \eqref{eq:W ext std} $\beta_h W_\textup{ext}\leq \Delta S/(1-\varepsilon)$ and together with 2) in Def \ref{def:near perfect work} this would imply 
\be\label{eq:not possible}
0<\beta_h (1-\varepsilon)\leq \frac{\Delta S}{W_\textup{ext}}<p.
\ee
However, since from 1) Def. \ref{def:near perfect work} we have $\varepsilon\leq l$, Eq. \eqref{eq:not possible} cannot be satisfied for all $p>0$, leading to a contradiction. 

From Eq. \eqref{eq:eff explicit function of cold bath} we have
\be\label{eq:inf rho c}
\eta_\textup{max}^{-1}=  1-\varepsilon+ \inf_{\rho_\cold^1\in\mathcal{S}}\frac{\Delta C}{W_\textup{ext}}= (1-\varepsilon)\cdot \left[1+ \frac{\beta_h \Delta C}{ D(\tau_{\beta_c}\| \tau_{\beta_h})-D(\rho_\cold^1\|\tau_{\beta_h}) +\Delta S} \right],
\ee
where $\Delta C=\Delta C(\rho_\cold^1)$ and is defined in Eq. \eqref{eq:delta C def}. 

Firstly, let us show that with a similar analysis as shown in Lemma \ref{lemma: optimum eff is for themal state}, the maximum efficiency occurs when $\rho_\cold^1$ is a thermal state. From Eq.~\eqref{eq:inf rho c}, we have
\begin{align}
\eta_{\rm max}^{-1} &= (1-\varepsilon) \left[1+\beta_h \inf_{\rho_\cold^1\in\mathcal{S}}\frac{\Delta C}{D(\tau_{\beta_c}\| \tau_{\beta_h})-D(\rho_\cold^1\|\tau_{\beta_h}) +\Delta S}\right]\label{eq:min0}\\
&= (1-\varepsilon) \left[1+\beta_h \inf_{A>0}\frac{A}{D(\tau_{\beta_c}\| \tau_{\beta_h})- B(A) +\Delta S}\right]\label{eq:min1}
\end{align}
where 
\begin{align}\label{eq:min2}
B(A) = \displaystyle\inf_{\substack{\rho_\cold^1 \in\mathcal{S} \\ \tr (\hat H_\cold\rho_\cold^1)-\tr(\hat H_\cold\tau_{\beta_c})=A}} D(\rho_\cold^1\|\tau_{\beta_h}).
\end{align}
We can split this minimization problem to Eqs.~\eqref{eq:min1} and \eqref{eq:min2} because $D(\tau_{\beta_c}\| \tau_{\beta_h})$ and $\Delta S$ do not depend on the variable $\rho_\cold^1$. Furthermore, when $\rho_\cold^1$ is a thermal state of inverse temperature $\beta_f$, we have seen in the beginning of the proof in Theorem \ref{theorem:classical CE} that for $W_\textup{ext}>0$, $\beta_f<\beta_c$. This implies that the variable $A = \Delta C = \tr (\hat H_\cold \tau_{\beta_f})-\tr (\hat H_\cold \tau_{\beta_c}) >0$.

By Lemma \ref{lem:minimizingD1}, for any fixed $A>0$ we conclude that the infimum in Eq.~\eqref{eq:min2} is achieved \emph{uniquely} when $\rho_\cold^1$ is a thermal state. Therefore, our optimization problem is simplified to optimization over final temperatures $\beta_f$ (or $g=\beta_c-\beta_f$),
\begin{align}
\eta_{\rm max}^{-1} 
&= (1-\varepsilon) \cdot \left[1+\beta_h \inf_{\substack{\beta_f\\\Delta C>0}}\frac{\Delta C}{D(\tau_{\beta_c}\| \tau_{\beta_h})- D(\tau_{\beta_f}\| \tau_{\beta_h}) +\Delta S}\right]\label{eq:min3}
\end{align}

Consider cases of $\beta_f$, where $D(\tau_{\beta_c}\| \tau_{\beta_h})- D(\tau_{\beta_f}\| \tau_{\beta_h})$ is non-vanishing (finite), i.e. which are not \emph{quasi-static}. Note that this always corresponds to extracting near perfect work, since when $\varepsilon\rightarrow 0^+$, we have $\varepsilon, \Delta S\rightarrow 0$ and these contributions disappear from Eq.~\eqref{eq:min3}. However, by Lemma \ref{lemma: optimum eff is for themal state} we also know that the infimum over $\beta_f$ occurs uniquely at the quasi-static limit, when $g\rightarrow 0^+$. 

What remains, is then to consider the quasi-static heat engine, namely the limit $g\rightarrow 0^+$. Extracting near perfect work in this case corresponds to requiring that $\lim_{g\rightarrow 0^+} \frac{\Delta S}{W_\textup{ext}} =0$, where $\varepsilon=\varepsilon(g)$ and $\lim_{g\rightarrow 0^+} \varepsilon(g)=0$. Equivalently 
\begin{equation}\label{eq:limitW/S}
\lim_{g\rightarrow 0^+} \frac{W_\textup{ext}}{\Delta S} =\infty.
\end{equation}
Substituting Eq.~\eqref{eq:W ext std} into Eq.~\eqref{eq:limitW/S},
\begin{align}
\lim_{g\rightarrow 0^+} (1-\varepsilon(g))^{-1} \left[1+\frac{D(\tau_{\beta_c}\| \tau_{\beta_h})- D(\tau_{\beta_f}\| \tau_{\beta_h})}{\Delta S}\right] =\infty
\end{align}
which implies that $\displaystyle\lim_{g\rightarrow 0^+} \frac{D(\tau_{\beta_c}\| \tau_{\beta_h})- D(\tau_{\beta_f}\| \tau_{\beta_h})}{\Delta S}=\infty,$ or equivalently, 
\begin{align}\label{eq:limitS/D-D}
\lim_{\varepsilon\rightarrow 0^+}\lim_{g\rightarrow 0^+} \frac{\Delta S}{D(\tau_{\beta_c}\| \tau_{\beta_h})- D(\tau_{\beta_f}\| \tau_{\beta_h})}=0.
\end{align}

Finally, we evaluate the inverse efficiency at the quasi-static limit,
\begin{align}
\eta^{-1} &= \lim_{g\rightarrow 0^+} (1-\varepsilon(g)) \cdot \left[1+\beta_h \frac{\Delta C}{D(\tau_{\beta_c}\| \tau_{\beta_h})- D(\tau_{\beta_f}\| \tau_{\beta_h}) +\Delta S}\right]\\
&= 1+\beta_h \lim_{g\rightarrow 0^+}\frac{\Delta C}{D(\tau_{\beta_c}\| \tau_{\beta_h})- D(\tau_{\beta_f}\| \tau_{\beta_h}) +\Delta S}\\
&= 1+\beta_h \lim_{g\rightarrow 0^+}\frac{\Delta C}{\left[D(\tau_{\beta_c}\| \tau_{\beta_h})- D(\tau_{\beta_f}\| \tau_{\beta_h})\right]}\cdot \left(1+\frac{\Delta S}{D(\tau_{\beta_c}\| \tau_{\beta_h})- D(\tau_{\beta_f}\| \tau_{\beta_h})}\right)^{-1}\label{eq:3rdlasteq}\\
&= 1+\beta_h \lim_{g\rightarrow 0^+} \frac{d\Delta C (\tau_{\beta_f})/dg}{d D(\tau_{\beta_f}\|\tau_{\beta_h})/dg}\label{eq:2ndlasteq}\\
&= 1- \frac{\beta_h}{\beta_h-\beta_c}\label{eq:lasteq},
\end{align}
where from Eq.~\eqref{eq:3rdlasteq} to \eqref{eq:2ndlasteq}, we make use of Eq.~\eqref{eq:limitS/D-D} : the second term within the limit is simply 1, and the first term depends only on $g$, which we can obtain Eq.~\eqref{eq:2ndlasteq} by invoking the L'H\^ospital rule. 
The last equality in Eq.~\eqref{eq:lasteq} follows directly from the identities we derived for $\frac{dW_\textup{ext}}{d\beta_f}$ and $\frac{d\Delta C}{d\beta_f}$ in Corollary \ref{cor:specialid},
\begin{align}
\frac{d\Delta C}{dg} &= -\frac{d\Delta C}{d\beta_f} = - {\rm var} (\hat H_\cold)_{\beta_f}\\
\frac{dD(\tau_{\beta_f}\|\tau_{\beta_h})}{dg} &= -\frac{dD(\tau_{\beta_f}\|\tau_{\beta_h})}{d\beta_f} = \beta_h \frac{dW_\textup{ext}}{d\beta_f} = (\beta_h-\beta_f) {\rm var} (\hat H_\cold)_{\beta_f},
\end{align} 
 while in the limit $g\rightarrow 0,~\beta_f = \beta_c$.

Finally, we now see that the quasi-static efficiency is
\begin{equation}
\eta = \left( \frac{\beta_h-\beta_c-\beta_h}{\beta_h-\beta_c}\right)^{-1} = \frac{\beta_c-\beta_h}{\beta_c} = 1 - \frac{\beta_h}{\beta_c}
\end{equation}
which is exactly the Carnot efficiency. 
\end{proof}

Later, in Section \ref{A more general final battery state} we will need Lemma \ref{lemma: optimum eff is for themal state} to hold in a more general scenario, i.e. instead of the final battery state being $\rho_\batt^1 = (1-\varepsilon)\ketbra{E_k}{E_k}_\batt + \varepsilon \ketbra{E_j}{E_j}_\batt$, we want to allow the final battery state to be any energy block-diagonal state with trace distance $\varepsilon$. Next we state and prove this generalized lemma.

\begin{lemma}\label{lem:CEmaximum_generalbatt}
Consider all heat engines which extract near perfect work (see Definition \ref{def:near perfect work}), but allowing for any final battery state with a trace distance $\varepsilon$ to the ideal final pure state $\ketbra{E_k}{E_k}_\batt$. 
Then according to the macroscopic second law of thermodynamics, the maximum efficiency of a heat engine, $\eta_\textup{max}$ is the Carnot efficiency
\be 
\eta_\textup{max}=\sup_{\rho_\cold^1\in\mathcal{S}}\eta^\textup{mac}(\rho_\cold^1)=1-\frac{\beta_h}{\beta_c},
\ee
and when the final state of the cold bath $\rho^1_\cold$ is thermal, the supremum is \textup{only} achieved for quasi-static heat engines (see Def. \eqref{def:quasi static} and Eq. \eqref{def:quasi static eff mac}).
\end{lemma}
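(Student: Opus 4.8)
The plan is to show that relaxing the final battery state to an arbitrary energy-block-diagonal state at trace distance $\varepsilon$ from $\ketbra{E_k}{E_k}_\batt$ changes nothing in the macroscopic analysis, because the Helmholtz free energy constraint only ever sees the average energy and the von Neumann entropy of the battery, and both behave exactly as needed. First I would write down the macroscopic second law $F(\rho_\CMW^0)\geq F(\rho_\CMW^1)$ and expand it using additivity of $\langle\hat H\rangle$ and $S$ over the tensor-product structure together with cyclicity of the machine, exactly as in the derivation of Eq.~\eqref{eq:Wext01}. The only change is that $S(\rho_\batt^1)=\Delta S$ is now the entropy of a general state (rather than $\htwo(\varepsilon)$) and the useful work is the average energy gain $\Delta W:=\tr(\hat H_\batt\rho_\batt^1)-\tr(\hat H_\batt\rho_\batt^0)$, giving the same structural bound
\begin{equation}
\Delta W\leq F(\tau_{\beta_c})-F(\rho_\cold^1)+\frac{1}{\beta_h}\Delta S,
\end{equation}
with $\Delta W$ now playing the role of the extracted work in $\eta=\Delta W/(\Delta C+\Delta W)$.

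The second step is the reduction to the perfect-work analysis. Writing $\eta^{-1}=1+\Delta C/\Delta W$ and inserting the bound (using $\Delta C>0$ in the relevant regime $W_\textup{ext}>0$) yields
\begin{equation}
\eta^{-1}\geq 1+\frac{\beta_h\,\Delta C}{D(\tau_{\beta_c}\|\tau_{\beta_h})-D(\rho_\cold^1\|\tau_{\beta_h})+\Delta S}.
\end{equation}
I would then argue that near perfect work forces $\Delta S$ to be negligible in the denominator: since $\Delta W$ is bounded and the condition $\Delta S/\Delta W\to 0$ entails $\Delta S\to 0$, the entropy term is $\soo$ of the free-energy difference, so in the relevant limit the bound collapses onto the perfect-work expression $1+\beta_h\Delta C/[D(\tau_{\beta_c}\|\tau_{\beta_h})-D(\rho_\cold^1\|\tau_{\beta_h})]$.

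At that point the remainder is identical to Section~\ref{sec:Maximum efficiency}: by Lemma~\ref{lem:minimizingD1} the final cold-bath state minimizing the relative entropy for fixed $\Delta C$ is thermal, so by Lemma~\ref{lemma: optimum eff is for themal state} the optimum over all $\rho_\cold^1$ is attained on a thermal state $\tau_{\beta_f}$; Lemma~\ref{lemma:optimaleff is for infinitesimal temp change} shows $\eta$ is strictly increasing in $\beta_f$, so the infimum of $\eta^{-1}$ is reached only as $\beta_f\to\beta_c^-$, where Corollary~\ref{cor:specialid} and L'H\^ospital give $\Delta C/W_\textup{ext}\to\beta_h/(\beta_c-\beta_h)$ and hence $\eta\to 1-\beta_h/\beta_c$. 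This proves $\eta\leq\eta_C$ with equality only quasi-statically. Achievability of the supremum is immediate, since the two-level family of Lemma~\ref{lem:cannot do better than carnot with porb failuer} is a subset of the states considered here and already saturates Carnot quasi-statically.

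The main obstacle I anticipate is the bookkeeping of the near-perfect-work limit for a general state: one must verify that $\Delta S/\Delta W\to 0$ genuinely suppresses the $\Delta S$ correction relative to the free-energy difference, even though in the quasi-static limit that difference is itself $\bo(g)$ and hence also vanishing, so that $\varepsilon$ (equivalently $\Delta S$) must be driven to zero fast enough compared to $g$, mirroring the equivalence~\eqref{eq: equiv to def near perfect work}. The only genuinely new ingredient is invoking continuity of the von Neumann entropy (a Fannes-type bound) to control $\Delta S$ by the trace distance $\varepsilon$ for an arbitrary block-diagonal final battery state, in place of the explicit identity $\Delta S=\htwo(\varepsilon)$ that was available in the two-level case.
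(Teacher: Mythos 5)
Your outline reproduces the paper's architecture correctly in its second half (reduction to thermal $\rho_\cold^1$ via Lemmas~\ref{lem:minimizingD1} and \ref{lemma: optimum eff is for themal state}, exclusion of non-quasi-static engines via Lemma~\ref{lemma:optimaleff is for infinitesimal temp change}, and the quasi-static limit via Corollary~\ref{cor:specialid} and L'H\^ospital), but there is a genuine gap in how you treat the generalized battery, and it sits exactly where the new content of this lemma lies. By declaring the average energy gain $\Delta W$ to be ``the extracted work,'' you silently change both the quantity the lemma is about and the near-perfect-work condition. The paper keeps $W_\textup{ext}=E_k^\batt-E_j^\batt$ in the numerator of $\eta$ and in Definition~\ref{def:near perfect work}, and for a general block-diagonal final state $\rho_\batt^2=(1-\varepsilon)\ketbra{E_k}{E_k}_\batt+\varepsilon\rho_\batt^\textup{junk}$ (Eq.~\eqref{eq:lem6proof_generalbatt}) one has $\Delta W=(1-\varepsilon)W_\textup{ext}+\varepsilon\tilde E$ with $\tilde E=\tr(\hat H_\batt\rho_\batt^\textup{junk})-E_j^\batt$, so the free-energy constraint acquires the extra term $-\varepsilon\tilde E$: $W_\textup{ext}=(1-\varepsilon)^{-1}\beta_h^{-1}[D(\tau_{\beta_c}\|\tau_{\beta_h})-D(\rho_\cold^1\|\tau_{\beta_h})+\Delta S-\varepsilon\tilde E]$. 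Your $\Delta W$ bookkeeping absorbs this term without ever confronting it. That matters: $\tilde E$ can have either sign (the junk weight may sit on levels below $E_j^\batt$), in which case $\Delta W<W_\textup{ext}$ and a Carnot bound on $\Delta W/\Delta H$ does \emph{not} bound the lemma's efficiency $W_\textup{ext}/\Delta H$; moreover equating $\Delta S/\Delta W\rightarrow 0$ with the paper's $\Delta S/W_\textup{ext}\rightarrow 0$ presupposes $\varepsilon\tilde E=\soo(W_\textup{ext})$, which in the quasi-static limit, where $W_\textup{ext}=\bo(g)$ also vanishes, is precisely the point in question.

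The single ``genuinely new ingredient'' you propose, a Fannes-type continuity bound, points in the wrong direction and would not close this gap. Fannes gives an \emph{upper} bound $\Delta S\lesssim\varepsilon\ln d$, which lets you infer that $\Delta S$ is small when $\varepsilon$ is small — but that is never what is needed. What the proof requires is the \emph{lower} bound: since $S(\rho_\batt^2)=\htwo(\varepsilon)+\varepsilon S(\rho_\batt^\textup{junk})\geq \htwo(\varepsilon)\geq\varepsilon$ for small $\varepsilon$ (the paper phrases this via majorization: spreading the weight $\varepsilon$ over more eigenvalues only increases entropy), the near-perfect-work condition — $\Delta S$ vanishing relative to the $\bo(g)$ free-energy difference, Eq.~\eqref{eq:lim_DeltaSoverWexteq0} — automatically forces $\varepsilon|\tilde E|$ to vanish relative to it as well (Eq.~\eqref{eq:eps/deltaD}), because $|\tilde E|$ is bounded by assumption (A.4). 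With $\varepsilon\leq\htwo(\varepsilon)\leq\Delta S$ in hand, the junk-energy term drops out of the quasi-static limit in Eq.~\eqref{eq:generalbattstate_opteff} and the rest of your argument, including the achievability claim via the two-level family of Lemma~\ref{lem:cannot do better than carnot with porb failuer}, goes through as in the paper.
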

\begin{proof}
Firstly, let us note that since the initial state $\rho_\CW^0$ which we start out with is energy block-diagonal, the final state has to also be block-diagonal. Therefore, given the product structure between the cold bath and battery, it is sufficient to consider the case when the final battery state is energy block-diagonal. Next, let us note that any final state $\rho_\batt^2$ which is energy block-diagonal, and has trace distance $\varepsilon$ with $\ketbra{E_k}{E_k}_\batt$ can be written as,
\begin{equation}\label{eq:lem6proof_generalbatt}
\rho_\batt^2 = (1-\varepsilon)\ketbra{E_k}{E_k}_\batt + \varepsilon\rho_\batt^\textup{junk}, ~\textup{where } \rho_\batt^\textup{junk} = \sum_{i} p_i \ketbra{E_i}{E_i}_\batt, ~~\sum_i p_i = 1 ~\textup{and}~ p_k = 0.
\end{equation}
Next, one can calculate $W_\textup{ext}$ given by the standard free energy	condition, i.e.
\begin{equation}
F(\tau_{\beta_c})+F(\rho_\batt^0) \geq F(\rho_\cold^1)+F(\rho_\batt^2). 
\end{equation}
Using the identity $F(\rho) = \tr(\hat H\rho) - \beta^{-1} S(\rho)$, we have that
\begin{align}
F(\tau_{\beta_c})+ E_j &\geq F(\rho_\cold^1)+ (1-\varepsilon) E_k + \varepsilon \tr(\hat H_\batt \rho_\batt^\textup{junk}) - \beta^{-1}_h S(\rho_\batt^2).
\end{align}
Substituting $W_\textup{ext} = E_k - E_j$, and rearranging terms, we have
\begin{equation}
(1-\varepsilon)W_\textup{ext}  \leq F(\tau_{\beta_c})-F(\rho_\cold^1)+ \beta^{-1}_h\Delta S - \varepsilon [\tr(\hat H_\batt \rho_\batt^\textup{junk})-E_j].
\end{equation}
Finally, by using the identity (in Eq.~\eqref{eq:helmholtzF}) that $F(\rho) = \beta^{-1}_h [D(\rho\|\tau_{\beta_h})-\ln Z_{\beta_h}]$, the maximum amount of extractable work is given by 
\begin{equation}
W_\textup{ext}  = (1-\varepsilon)^{-1} \beta^{-1}_h\cdot [ D(\tau_{\beta_c}\|\tau_{\beta_h})-D(\rho_\cold^1\|\tau_{\beta_h})+\Delta S - \varepsilon\tilde E],
\end{equation}
where $\tilde E =  \tr(\hat H_\batt \rho_\batt^\textup{junk})-E_j$.

Following the steps in Lemma \ref{lem:cannot do better than carnot with porb failuer}, in particular the derivations in Eq.~\eqref{eq:min0} and \eqref{eq:min1}, we have
\begin{align}\label{eq:etamaxn1}
\eta_{\rm max}^{-1} 
&= (1-\varepsilon) \cdot \left[1+\beta_h \inf_{\substack{\beta_f\\\Delta C>0}}\frac{\Delta C}{D(\tau_{\beta_c}\| \tau_{\beta_h})- D(\tau_{\beta_f}\| \tau_{\beta_h}) +\Delta S - \varepsilon\tilde E}\right].
\end{align}
To show Eq.~\eqref{eq:etamaxn1} gives the Carnot efficiency, we show that 1) for non quasi-static cases where $\beta_f < \beta_c$, Carnot efficiency is not attained, and 2) in the quasi-static limit, Carnot efficiency is attained.

Let us first consider the case of extracting a non-vanishing amount of near perfect work, i.e. for all cases where $\beta_f < \beta_c$. Then near perfect work, by Def. \ref{def:near perfect work}, corresponds to the limit $\varepsilon\rightarrow 0$,
\begin{align}
\eta^{-1} 
&= \lim_{\varepsilon\rightarrow 0}~(1-\varepsilon) \cdot \left[1+\beta_h \frac{\Delta C}{D(\tau_{\beta_c}\| \tau_{\beta_h})- D(\tau_{\beta_f}\| \tau_{\beta_h}) +\Delta S - \varepsilon\tilde E}\right]\\
&= 1 + \beta_h \frac{\Delta C}{D(\tau_{\beta_c}\| \tau_{\beta_h})- D(\tau_{\beta_f}\| \tau_{\beta_h})}.
\end{align}
In this limit, all terms involving $\varepsilon$ vanish, and the inverse efficiency has the same expression as the efficiency for perfect work. We already know from Lemma \ref{lemma:optimaleff is for infinitesimal temp change} that the infimum over $\beta_f$ cannot be obtained in this regime, since the inverse efficiency is strictly decreasing with $\beta_f$. 

Therefore, again we are left with analyzing the quasi-static limit for this problem. Following the derivation in Eq.~\eqref{eq:3rdlasteq} for the quasi-static limit, we obtain
\begin{align}\label{eq:generalbattstate_opteff}
\eta^{-1}_{\rm max} &= 1+\beta_h \lim_{g\rightarrow 0^+}\frac{\Delta C}{\left[D(\tau_{\beta_c}\| \tau_{\beta_h})- D(\tau_{\beta_f}\| \tau_{\beta_h})\right]}\cdot \left(1+\frac{\Delta S-\varepsilon\tilde E}{D(\tau_{\beta_c}\| \tau_{\beta_h})- D(\tau_{\beta_f}\| \tau_{\beta_h})}\right)^{-1},\end{align}
where $\varepsilon=\varepsilon(g)$ and note that requiring near perfect work implies that 
\begin{equation}\label{eq:lim_DeltaSoverWexteq0}
\lim_{g\rightarrow 0^+}\frac{\Delta S}{D(\tau_{\beta_c}\| \tau_{\beta_h})- D(\tau_{\beta_f}\| \tau_{\beta_h})} = 0.
\end{equation}
Next, we observe the relationship between $\varepsilon$ and $\Delta S$, in the regime where $\varepsilon$ is small. Given any $\varepsilon >0$ denoting the trace distance $d(\rho_\batt^2, \ketbra{E_k}{E_k}_\batt)=\varepsilon$, the smallest amount of entropy that can be produced corresponds to $\Delta S = h_2 (\varepsilon)$. This is because if we try to distribute the weight $\varepsilon$ over more energy eigenvalues, then by majorization the entropy only increases. But we also know that $\varepsilon \leq h_2 (\varepsilon)$ for small values of $\varepsilon$, in particular over the regime $\varepsilon\in [0,\frac{1}{2}]$. Therefore, we have that in this regime, $\varepsilon \leq h_2 (\varepsilon) \leq \Delta S$ holds.
Therefore, we also know that
\begin{equation}\label{eq:eps/deltaD}
\lim_{g\rightarrow 0^+} \frac{\varepsilon\tilde E}{D(\tau_{\beta_c}\| \tau_{\beta_h})- D(\tau_{\beta_f}\| \tau_{\beta_h})}= 0,
\end{equation} 
where $\varepsilon=\varepsilon(g)$. Plugging Eqns.~\eqref{eq:lim_DeltaSoverWexteq0} and \eqref{eq:eps/deltaD} into Eq.~\eqref{eq:generalbattstate_opteff}, we have that the quasi-static efficiency is $\eta = 1 - \frac{\beta_h}{\beta_c}$.
\end{proof}


\section{Efficiency of a nanoscopic quantum heat engine}\label{section:Nano/quantum scale heat engine at maximum efficiency}
In this section, we will be applying the conditions for state transitions for nanoscale systems, as detailed in Section \ref{sub:nano}. The reader will see that due to these extra constraints from the generalized free energies, the fundamental limitations on efficiency will differ greatly from those observed in Section \ref{section:standardthermo}.

Firstly, in Section \ref{subsec:impossible}, we show that the extraction of a positive amount of perfect work is \emph{impossible} using the setup. In Section \ref{subsec:highcertainty}, we show that this can be resolved by considering near perfect work instead. Then we find that:
\begin{itemize}
\item [(1)] The maximum achievable efficiency is still the Carnot efficiency. This is proven in Section \ref{upper bound carnot eff}.
\item [(2)] However, in the case of quasi-static heat engines, the Carnot efficiency \emph{cannot} be achieved for all cold bath Hamiltonians. \textbf{This is our main result, which is stated in Theorem \ref{th:EfficiencyRandomEnergyGaps}, found in Section \ref{subsub:main}.} The results in Section \ref{subsect:LBeff} and \ref{subsect:solvinginfimum} are more technical proofs, that pave the way for deriving this main result.
\end{itemize}

\subsection{Impossibility of extracting perfect work}\label{subsec:impossible}
We will first show that with the general setup as described in Section \ref{section:The setting}, no perfect work can ever be extracted. By this we mean that whenever $\varepsilon$ as defined in Eq.\eqref{eq:battery final state} equals zero, then for any value of $W_{\rm ext}>0$, and for any final state $\rho_\stcold^1$, the transition $|E_j\ra\la E_j|_{\battery\!}\otimes\tau_\cold^0\rightarrow|E_k\ra\la E_k|_{\battery\!}\otimes\rho_\stcold^1$ is not possible. Intuitively speaking, this occurs because the cold bath is initially in a state of full rank. Since thermal operations cannot decrease the rank of the system, therefore the final state of the cold bath $\rho_\stcold^1$ must also be of full rank. By directly solving Eq.~\eqref{2nd law eq}, for $W_{\rm ext}$, we can find the an equation governing the amount of extractable work. To solve Eq.~\eqref{2nd law eq} for $W_{\rm ext}$, we start by first using Eq. \eqref{eq:generalfreeenergy} to find an expression solely in term of the R\'enyi divergences and the fact that the bipartite thermal states are product states,
\begin{align}\label{D bee D}
D_\alpha(\tau_\cold^0\otimes\rho^0_\battery\|\tau_\CW^h)\geq D_\alpha(\rho_\stcold^1\otimes\rho^1_\battery\|\tau_\CW^h)\quad \forall \alpha\geq 0.
\end{align}
Now using the additivity of the R\'enyi divergences, from \eqref{D bee D} it follows 
\begin{equation}
 D_\alpha(\rho^1_\battery\|\tau_\batt^h) - D_\alpha(\rho^0_\battery\|\tau_\batt^h) \leq D_\alpha(\tau_\cold^0\|\tau_\cold^h)- D_\alpha(\rho_\cold^1\|\tau_\cold^h)\quad \forall \alpha\geq 0.
\end{equation}
By directly solving the L.H.S., we find that $W_{\rm ext}$ must satisfy
\begin{align}
W_{\rm ext}\leq kT_\hot  \left[ D_\alpha(\tau^{0}_\cold\|\tau^h_\cold)-D_\alpha(\rho^{1}_\cold\|\tau^h_\cold) \right]\quad \forall \alpha\geq 0.
\end{align}
Hence
\begin{align}\label{W general not prod}
	W_{\rm ext}\leq kT_\hot \inf_{\alpha \geq 0} \left[ D_\alpha(\tau^{0}_\cold\|\tau^h_\cold)-D_\alpha(\rho^{1}_\cold\|\tau^h_\cold) \right],
\end{align}
where $\tau^h_\cold$ is the thermal state of the cold bath (according to the cold bath Hamiltonian $\hat H_\cold$), at temperature $T_{\hot}$ (since the surrounding hot bath is of temperature $T_\hot$). However from Eq.~\eqref{eq:generalfreeenergy}, $D_0 (\tau_\cold^0\|\tau_\cold^h) = D_0 (\rho_\stcold^1\|\tau_\cold^h)$. Therefore according to Eq. ~\eqref{W general not prod}, the amount of work extractable satisfies $W_{\rm ext}\leq 0$.

We phrase this with more rigor in the following Lemmas \ref{lem: lem 0} and \ref{lem: lem 1}, which proves that for perfect work, $W_\textup{ext}>0$ is impossible. The proof holds for general initial states $\rho_\cold^0$ of full rank, in particular, they need not even be diagonal in the energy eigenbasis.
\begin{lemma}\label{lem: lem 0}
For any $W_{\rm ext}>0$, consider the Hamiltonian $\hat H_\battery$ given by Eq. \eqref{eq:W general 0}. Then for any inverse temperature $\beta_h>0$, the thermal state $\tau_\battery^h = \frac{1}{\tr[e^{-\beta_{h}\hat H_\battery}]}	e^{-\beta_{h}\hat H_\battery}$ satisfies
\begin{equation}
\tr \left[\left(|E_j\ra\la E_j|_\textup{W}-|E_k\ra\la E_k|_\textup{W}\right)\tau_\battery^h\right] > 0.
\end{equation}
\begin{proof}
Follows directly from the definitions. Since $W_{\rm ext} >0$, we know that $E_j^\battery < E_k^\battery$. Evaluating the quantity above gives $\frac{1}{\tr[e^{-\beta_{h}\hat H_\battery}]}\cdot \left(e^{-\beta_h E_j^\battery}-e^{-\beta_h E_k^\battery}\right) > 0$.
\end{proof}

\end{lemma}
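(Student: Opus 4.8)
The plan is to reduce the claim to an elementary comparison of two Boltzmann weights. First I would note that, by Eq.~\eqref{eq:W general 0}, the battery Hamiltonian $\hat H_\battery$ is already given in diagonal form with respect to the orthonormal energy eigenbasis $\{|E_i\ra_\battery\}$. Consequently the Gibbs state $\tau_\battery = e^{-\beta_h \hat H_\battery}/Z_\battery$, with $Z_\battery = \tr(e^{-\beta_h \hat H_\battery})$, is diagonal in that very basis, with populations $q_i = e^{-\beta_h E_i^\battery}/Z_\battery$.

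Next I would evaluate the trace directly. Since $|E_j\ra\la E_j|_\battery$ and $|E_k\ra\la E_k|_\battery$ are rank-one projectors onto two eigenvectors of $\tau_\battery$, the expression collapses to a difference of the corresponding populations,
\begin{equation}
\tr\left[\left(|E_j\ra\la E_j|_\battery - |E_k\ra\la E_k|_\battery\right)\tau_\battery\right] = q_j - q_k = \frac{1}{Z_\battery}\left(e^{-\beta_h E_j^\battery} - e^{-\beta_h E_k^\battery}\right).
\end{equation}

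Finally I would invoke the sign convention fixed in Eq.~\eqref{W general}: $W_{\rm ext} = E_k^\battery - E_j^\battery > 0$ forces $E_j^\battery < E_k^\battery$, and since $\beta_h > 0$ the map $x \mapsto e^{-\beta_h x}$ is strictly decreasing, so $e^{-\beta_h E_j^\battery} > e^{-\beta_h E_k^\battery}$; together with $Z_\battery > 0$ this makes the displayed quantity strictly positive. There is no genuine obstacle here. The only point requiring a moment's care is that both projectors single out actual eigenvectors of $\tau_\battery$ — which is guaranteed precisely because $\hat H_\battery$ is diagonal in the $\{|E_i\ra_\battery\}$ basis — so that no off-diagonal terms contribute to the trace.
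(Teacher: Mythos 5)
Your proof is correct and takes essentially the same route as the paper: both evaluate the trace directly using the diagonality of $\tau_\battery$ in the energy eigenbasis, reduce it to the difference $\frac{1}{Z_\battery}\left(e^{-\beta_h E_j^\battery}-e^{-\beta_h E_k^\battery}\right)$, and conclude positivity from $E_j^\battery < E_k^\battery$ and the strict monotonicity of $x\mapsto e^{-\beta_h x}$. You merely spell out the diagonality argument that the paper treats as immediate.
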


\begin{lemma}\label{lem: lem 1}
Consider any general quantum state $\rho^0_\cold$ of full rank. Then for any $\rho^1_\cold$, the transition from $\rho^0_\cold\otimes\rho^0_\battery\rightarrow\rho^1_\cold\otimes\rho^1_\battery$ is not possible via catalytic thermal operations if
\begin{equation}
\tr \left[\left(\Pi_{\rho^0_\battery}-\Pi_{\rho_\battery^1} \right)\tau_\battery^h\right]>0,
\end{equation}
where $\Pi_{\rho}$ is the projector onto the  support of state $\rho$, and $\tau_\battery^h$ is the thermal state of the battery at the initial hot bath temperature. 
\end{lemma}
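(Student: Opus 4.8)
The plan is to obtain the impossibility as the contrapositive of the single necessary condition furnished by the $\alpha=0$ member of the generalized second laws~\eqref{2nd law eq}. Recall from Section~\ref{sub:nano} that the R\'enyi divergences $D_\alpha(\cdot\|\tau_\CW^h)$ relative to the background thermal state at temperature $T_\hot$ give \emph{necessary} conditions for any transition achievable by catalytic thermal operations, and that these hold even when the initial state is not diagonal in the energy eigenbasis (where they are merely necessary rather than necessary and sufficient). The catalyst does not interfere because $D_\alpha$ is additive over tensor factors and the machine is returned unchanged, so its contribution cancels identically on the two sides, leaving a condition on the reduced cold-bath$\,\otimes\,$battery system. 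Thus, if $\rho^0_\cold\otimes\rho^0_\battery\rightarrow\rho^1_\cold\otimes\rho^1_\battery$ is possible, then in particular at $\alpha=0$,
\begin{equation}
D_0(\rho^0_\cold\otimes\rho^0_\battery\,\|\,\tau_\cold^h\otimes\tau_\battery)\geq D_0(\rho^1_\cold\otimes\rho^1_\battery\,\|\,\tau_\cold^h\otimes\tau_\battery).
\end{equation}

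I would then evaluate both sides using the operational form $D_0(\rho\|\tau)=-\ln\tr[\Pi_\rho\,\tau]$ from Eq.~\eqref{eq:reyi in limits}, together with the product structure $\Pi_{\rho_\cold\otimes\rho_\battery}=\Pi_{\rho_\cold}\otimes\Pi_{\rho_\battery}$, which renders $D_0$ additive over the two factors. This is exactly where the full-rank hypothesis on $\rho^0_\cold$ enters: it gives $\Pi_{\rho^0_\cold}=\id_\cold$, hence $D_0(\rho^0_\cold\|\tau_\cold^h)=-\ln\tr[\tau_\cold^h]=0$, so the left-hand side collapses to $-\ln\tr[\Pi_{\rho^0_\battery}\tau_\battery]$. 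On the right-hand side, $\Pi_{\rho^1_\cold}\leq\id$ forces $\tr[\Pi_{\rho^1_\cold}\tau_\cold^h]\leq 1$, i.e. $D_0(\rho^1_\cold\|\tau_\cold^h)\geq 0$, so the right-hand side is bounded below by $-\ln\tr[\Pi_{\rho^1_\battery}\tau_\battery]$.

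Chaining these two observations reduces the necessary condition to $\tr[\Pi_{\rho^0_\battery}\tau_\battery]\leq\tr[\Pi_{\rho^1_\battery}\tau_\battery]$, equivalently
\begin{equation}
\tr\!\left[(\Pi_{\rho^0_\battery}-\Pi_{\rho^1_\battery})\tau_\battery\right]\leq 0.
\end{equation}
Its contrapositive is precisely the statement of the lemma. The one point I would treat with care---rather than a genuine obstacle---is the justification that the $\alpha=0$ law may be invoked as a necessary condition for a general, possibly coherent, full-rank $\rho^0_\cold$, so that diagonality is never assumed; here I would lean on the monotonicity of $D_\alpha$ under thermal operations for arbitrary states and on the exact return of the catalyst, which via additivity removes its contribution. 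Everything else is a short, direct computation with support projectors, and the argument is manifestly independent of the structure of $\rho^1_\cold$ and of the battery states $\rho^0_\battery,\rho^1_\battery$.
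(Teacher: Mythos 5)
Your proposal is correct and follows essentially the same route as the paper's proof: invoke the $\alpha=0$ generalized second law as a necessary condition for catalytic thermal operations (valid for arbitrary, possibly coherent states), use additivity of $D_0$ and the full-rank hypothesis to make $D_0(\rho^0_\cold\|\tau_\cold^h)=0$ while lower-bounding $D_0(\rho^1_\cold\|\tau_\cold^h)$ by zero, and read off $\tr[(\Pi_{\rho^0_\battery}-\Pi_{\rho^1_\battery})\tau_\battery]\leq 0$ as the necessary condition whose contrapositive is the lemma. The only cosmetic difference is that the paper rearranges the inequality to isolate the battery terms before bounding, whereas you bound each side directly, which is logically equivalent.
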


\begin{proof}
One can show this by invoking the quantum second law for $\alpha=0$ \cite{2ndlaw}, which says that if $\rhoin\rightarrow\rhoout$ is possible via catalytic thermal operations, then
\begin{equation}
 D_0 (\rhoin\|\tau)\geq  D_0(\rhoout\|\tau),
\end{equation}
where $\tau$ is the thermal state of the system at bath temperature, and 
\begin{eqnarray}\label{eq:F_0}
	& D_0(\rho\|\sigma)=\displaystyle\lim_{\alpha \rightarrow 0^+}\frac{1}{\alpha-1}\ln \tr[\rho^\alpha \sigma^{1-\alpha}]=-\ln\tr[\Pi_\rho \sigma],
\end{eqnarray}
is defined for arbitrary quantum states $\rho,\sigma$.
Applying this law with $\rhoin=\rho^0_\battery\otimes\rho_\stcold^0$ and $\rhoout=\rho^1_\battery\otimes\rho_\stcold^1$, we arrive at
\begin{equation}\label{lem 1 1}
	 D_0(\rho^0_\battery\|\tau^h_\battery)- D_0(\rho^1_\battery\|\tau^h_\battery)\geq  D_0(\rho_\stcold^1\|\tau_\cold^h)- D_0(\rho_\stcold^0\|\tau_\cold^h),
\end{equation}
where $\tau_\cold^h$ and $\tau^h_\battery$ are thermal states of the cold bath and battery at the temperature of surrounding hot bath ($T_\hot$) respectively.
Since $\rho_\stcold^0$ have full rank, and since  $\tau_\cold^h$ is normalized, therefore according to Eq.~ \eqref{eq:F_0}, $ D_0(\rho_\stcold^0\|\tau_\cold^h)=0$. Furthermore, since the $\alpha-$R{\'e}nyi divergence $ D_0$ is non-negative, therefore the r.h.s. of Eq.~\eqref{lem 1 1} is lower bounded by 0. Thus, we have 
\begin{equation}\label{eq:condition}
\tr [(\Pi_{\rho^0_\battery}-\Pi_{\rho^1_\battery} )\tau_\battery^h]\leq 0.
\end{equation}
Since this is a necessary condition for state transformations, we arrive at the conclusion that: when Eq.~ \eqref{eq:condition} is violated, state transformations are not possible. But from Lemma \ref{lem: lem 0}, any type of perfect work extraction violates Eq.~\eqref{eq:condition}. Therefore, in this setting, perfect work extraction is always impossible.\\
\end{proof}

To summarize, Lemma \ref{lem: lem 1} implies that if the initial state of the cold bath is thermal, and therefore of full rank, then any work extraction scheme via thermal operations bringing $\rho^0_\battery=|j\ra\la j|_\textup{W}$ to $\rho^1_\battery=|k\ra\la k|_\textup{W}$ where $W_{\rm ext} = E_k^\battery - E_j^\battery >0$ is not possible. In general, we see that if $\Pi_{\rho^0_\battery}\neq\Pi_{\rho^1_\battery},$ then when transition $\rho^0_\battery$ to $\rho^1_\battery$ is possible, transition $\rho^1_\battery$ to $\rho^0_\battery$ is not.
Consequentially, we will have to consider near perfect work at the nano regime.

\subsection{Efficiency for extracting near perfect work}\label{subsec:highcertainty}
As we have just seen in the previous Section \ref{subsec:impossible}, we cannot extract perfect work. Due to the impossibility result, we consider the relaxation of extracting near perfect work in the nanoscale setting. 

\begin{itemize}
\item We begin by evaluating the expression for efficiency according to the nanoscopic laws of thermodynamics, given a final state of the cold bath, and comparing it to the expression according to macroscopic laws of thermodynamics. This is done in Sections \ref{subsub:highcertainty} and \ref{upper bound carnot eff}, and the relation between two efficiencies are summarized in Eq.~\eqref{eq:std eff def}. Since the nanoscopic efficiency is always smaller than the macroscopic efficiency, which attains Carnot efficiency only in the limit $\Delta C \rightarrow 0$, it will be a necessary condition to consider this limit if we want to achieve the Carnot efficiency, when considering nanoscopic laws of thermodynamics.
\item We analyze the quasi-static regime, focusing on the special case where the cold bath consists of $n$ qubits. Since the quasi-static limit corresponds to the case of small $g>0$, and $\varepsilon$ also has to be arbitrarily small for near perfect work extraction, we perform Taylor expansion of the analytical expressions for $W_\textup{ext}$ and $\Delta C$ w.r.t. $g$ and $\varepsilon$. This is done in Section \ref{subsect:evaluatingWextquasi}.
\item In Section \ref{subsect:LBeff}, we identify how to choose $\varepsilon (g)$ such that it corresponds to drawing near perfect work in the quasi-static limit. We first begin by observing that any continuous function $\varepsilon (g)$ that vanishes in the limit $g\rightarrow 0$ can be characterized with a real-valued parameter $\kappabar$ that determines how quickly $\varepsilon$ goes to zero. This is shown in Lemma \ref{lemma:existence kappabar}. In Lemma \ref{lemma: bar gamma}, we show that near perfect work is drawn only if $\kappabar\in [0,1]$.
\item Lemma \ref{lemma: bar gamma} gives us the analytical expression and minimization range in order to evaluate $W_\textup{ext}$, according to Eq.~\eqref{eq:work extractable to first order in g}. In Section \ref{subsect:solvinginfimum}, we show how one can evaluate this optimization problem, by comparing the stationary points and endpoints of the function $\frac{\alpha B_\alpha}{\alpha-1}$ that gives the leading term in Eq.~\eqref{eq:work extractable to first order in g}. Lemma \ref{zeros lemma} proves a technical property of the first derivative of this function. Using it, we prove in Lemma \ref{lim lim lem} that one can always choose $\varepsilon (g)$ with some $\kappabar <1$ such that the infimum of $\frac{\alpha B_\alpha}{\alpha-1}$ is obtained at either $\alpha=\kappabar$ or $\alpha\rightarrow\infty$.
\item Finally, in Section \ref{subsub:main}, we use the results in Section \ref{subsect:solvinginfimum} regarding the evaluation of $W_\textup{ext}$ to find the efficiency in the quasi-static limit.
\end{itemize}

\subsubsection{An explicit expression for $W_\textup{ext}$}\label{subsub:highcertainty}
Our first task is to work out an explicit expression for $W_\textup{ext}$ depending on the initial and final states of the cold bath, $\varepsilon$ and hot bath (inverse) temperature $\beta_h$. Such as expression is found by applying the generalized second laws as detailed in Section \ref{sub:nano}.
\begin{lemma}\label{eq:explicit sol for W ext}
Consider the transition 
\begin{equation}\label{eq:transition no broduct bath}
\tau_\cold^0\otimes\rho_\textup{W}^0\rightarrow \rho^1_\cold\otimes\rho_\textup{W}^1\quad \textup{with}\quad\varepsilon>0.
\end{equation}
where $\rho_\textup{W}^0$ and $\rho_\textup{W}^1$ are defined in Eqs. \eqref{eq:battery initial state}, \eqref{eq:battery final state} respectively. Let $W_{\rm ext}$ denote the \textup{maximum} possible value such that Eq.~\eqref{eq:transition no broduct bath} is possible via catalytic thermal operations, with a thermal bath of inverse temperature $\beta_h$. Let $\beta_c >\beta_h$. 
Then the final state $\rho_\cold^1=\sum_i p_i'|E_i\ra\la E_i|_\cold\;$ is block-diagonal in the energy eigenbasis, and
\begin{align}
W_{\rm ext} &= \inf_{\alpha\geq 0} ~W_\alpha,\label{eq:definitions1 no broduct bath}\\
W_\alpha &= \frac{1}{\beta_h (\alpha-1)} [\ln (A-\varepsilon^\alpha)-\alpha\ln (1-\varepsilon)],\label{eq:definitions2 no broduct bath}\\
A&= \frac{\sum_i p_i^\alpha q_i^{1-\alpha}}{\sum_i p_i'^\alpha q_i^{1-\alpha}},\label{eq:definitions3 no broduct bath}
\end{align}
where $p_i= \frac{e^{-\beta_c E_i}}{Z_{\beta_C}}$, $q_i= \frac{e^{-\beta_h E_i}}{Z_{\beta_h}}$, and $p_i'$ are the probability amplitudes of state $\rho_\cold^1$ when written in the energy eigenbasis of $\hat H_\cold$. 
The quantities $W_1$ and $W_\infty$ are defined by taking the limit $\alpha\rightarrow 1, +\infty$ respectively.
\end{lemma}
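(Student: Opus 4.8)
The plan is to directly evaluate the necessary-and-sufficient second-law conditions of Eq.~\eqref{2nd law eq} for the specific transition in Eq.~\eqref{eq:transition no broduct bath} and then solve the resulting family of inequalities for the largest admissible $W_{\rm ext}$. Since $\tau_\cold^0\otimes\rho_\battery^0$ is diagonal in the energy eigenbasis and catalytic thermal operations never create coherences between energy eigenstates, the final state $\rho_\cold^1\otimes\rho_\battery^1$ is also block-diagonal; this both justifies writing $\rho_\cold^1=\sum_i p_i'|E_i\rangle\langle E_i|_\cold$ and, crucially, guarantees that the conditions in Eq.~\eqref{2nd law eq} are not merely necessary but \emph{sufficient}, so that the supremal $W_{\rm ext}$ is genuinely attained rather than only upper-bounded.

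First I would exploit the tensor-product structure. Because $\hat H_\cold+\hat H_\battery$ carries no interaction term, the joint thermal state factorizes, $\tau_\CW^h=\tau_\cold^h\otimes\tau_\battery^h$, and the R\'enyi divergence is additive across the tensor factors. The common $\ln Z_h$ terms in $F_\alpha$ cancel between the two sides of the inequality, reducing Eq.~\eqref{2nd law eq} to
\begin{equation}
D_\alpha(\tau_\cold^0\|\tau_\cold^h)-D_\alpha(\rho_\cold^1\|\tau_\cold^h)\geq D_\alpha(\rho_\battery^1\|\tau_\battery^h)-D_\alpha(\rho_\battery^0\|\tau_\battery^h)\quad\forall\,\alpha\geq 0.
\end{equation}
Using the diagonal formula Eq.~\eqref{w no ep}, the cold-bath difference is exactly $\frac{1}{\alpha-1}\ln A$ with $A$ as in Eq.~\eqref{eq:definitions3 no broduct bath}. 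For the battery, denoting the thermal weights $r_i\propto e^{-\beta_h E_i^\battery}$, the initial pure state gives $D_\alpha(\rho_\battery^0\|\tau_\battery^h)=-\ln r_j$, while the final mixture gives $-\ln r_j+\frac{1}{\alpha-1}\ln[(1-\varepsilon)^\alpha (r_k/r_j)^{1-\alpha}+\varepsilon^\alpha]$. The $-\ln r_j$ terms cancel, and substituting $r_k/r_j=e^{-\beta_h W_{\rm ext}}$ turns the battery side into $\frac{1}{\alpha-1}\ln[(1-\varepsilon)^\alpha e^{\beta_h W_{\rm ext}(\alpha-1)}+\varepsilon^\alpha]$.

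The heart of the argument is then to solve
\begin{equation}
\frac{1}{\alpha-1}\ln A\geq\frac{1}{\alpha-1}\ln\left[(1-\varepsilon)^\alpha e^{\beta_h W_{\rm ext}(\alpha-1)}+\varepsilon^\alpha\right]
\end{equation}
for $W_{\rm ext}$, tracking the sign of $\alpha-1$ carefully. For $\alpha>1$ the prefactor is positive, so the inequality gives $A\geq(1-\varepsilon)^\alpha e^{\beta_h W_{\rm ext}(\alpha-1)}+\varepsilon^\alpha$; isolating the exponential and taking logarithms yields $W_{\rm ext}\leq W_\alpha$ with $W_\alpha$ as in Eq.~\eqref{eq:definitions2 no broduct bath}. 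For $\alpha<1$ the prefactor is negative, so multiplying through reverses the inequality to $A\leq(1-\varepsilon)^\alpha e^{\beta_h W_{\rm ext}(\alpha-1)}+\varepsilon^\alpha$; but the subsequent division by the negative quantity $\alpha-1$ reverses it a \emph{second} time, so one again lands on $W_{\rm ext}\leq W_\alpha$. Thus every $\alpha$ contributes the same form of upper bound, intersecting all of them gives $W_{\rm ext}=\inf_{\alpha\geq 0}W_\alpha$, and the cases $\alpha\to 1$, $\alpha\to\infty$ are handled by the continuity definitions of $D_1,D_\infty$ in Eqs.~\eqref{eq:reyi in limits}--\eqref{eq:D_inft def}.

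The main obstacle I anticipate is precisely this sign bookkeeping across the $\alpha=1$ threshold: one must verify that the double reversal for $\alpha<1$ genuinely restores the same inequality direction as for $\alpha>1$, so that the feasible region is a single intersection of upper bounds rather than a mixture of upper and lower constraints that could leave $W_{\rm ext}$ unbounded or empty. A secondary technical point is ensuring $A-\varepsilon^\alpha>0$ so the logarithm defining $W_\alpha$ is well-posed; this should hold exactly in the regime where a positive $W_{\rm ext}$ is feasible, and the sufficiency half of the second laws then guarantees the infimum is achieved, giving equality in Eq.~\eqref{eq:definitions1 no broduct bath}.
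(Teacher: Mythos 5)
Your proof is correct and follows essentially the same route as the paper: it invokes the necessary-and-sufficient second laws for block-diagonal states, splits the $F_\alpha$ constraint via additivity of the R\'enyi divergences into a cold-bath term $\frac{1}{\alpha-1}\ln A$ and a battery term, solves each $\alpha$-constraint for $W_{\rm ext}$, and intersects the resulting upper bounds to obtain $W_{\rm ext}=\inf_{\alpha\geq 0}W_\alpha$, with sufficiency of the second laws guaranteeing attainability. The sign bookkeeping across $\alpha=1$ that you carry out explicitly (the double reversal for $\alpha<1$, and the observation that $A-\varepsilon^\alpha\le 0$ there renders the constraint vacuous, i.e.\ $W_\alpha=+\infty$) is precisely the ``straightforward manipulation'' the paper's proof leaves implicit.
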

\begin{proof}
Eq. \eqref{2nd law eq} is necessary and sufficient for Eq. \eqref{eq:transition no broduct bath} to be satisfied. We can apply the additivity property of the R{\'e}nyi divergence, to Eq. \eqref{2nd law eq} to find
\begin{equation}\label{eq:dalphas  no broduct bath}
D_\alpha (\rho_\battery^0\|\tau_\textup{W}^h) +  D_\alpha (\tau_{\beta_c}\|\tau_{\beta_h}) \geq D_\alpha (\rho_\battery^1\|\tau_\textup{W}^h) +  D_\alpha (\rho_\cold^1\|\tau_{\beta_h}),
\end{equation}
where $\tau_{\beta_h}$ and $\tau_\textup{W}^h$ are the thermal states with Hamiltonians $\hat H_\cold$ and $\hat H_\battery$ respectively, at inverse temperature $\beta_h$. From Eq. \eqref{eq:dalphas  no broduct bath} it follows,
\begin{align}\label{eq:dalphas  no broduct bath 2}
\frac{-1}{\alpha-1} \ln\left[ \varepsilon^\alpha + (1-\varepsilon)^\alpha e^{-\beta (E_k-E_j)(1-\alpha)} \right] &\geq  D_\alpha (\rho_\cold^1\|\tau_{\beta_h}) - D_\alpha (\tau_{\beta_c}\|\tau_{\beta_h})\\
&= \frac{-1}{\alpha-1} \ln A.
\end{align}
By examining the three cases $\alpha<1$, $\alpha>1$ separately, have find
\begin{align}
\begin{split}
 \varepsilon^\alpha + (1-\varepsilon)^\alpha e^{-\beta (E_k-E_j)(1-\alpha)} \geq A   &\quad \text{ if }\, \alpha<1\\
 \varepsilon^\alpha + (1-\varepsilon)^\alpha e^{-\beta (E_k-E_j)(1-\alpha)} \leq A 	&\quad\text{ if }\, \alpha>1
 \end{split}\label{eq:alpha 1 case}
\end{align}
where note that $A$ is independent of $E_k$ and $E_j$. Now note that the largest value of $E_k-E_j$ in Eq. \eqref{eq:alpha 1 case}, is obtained when the inequalities hold with equality. Thus, since these equations have to hold for all $\alpha\in[0,\infty]$, the largest amount of work we can extract corresponds to the value of $E_k-E_j$ for which Eq. \eqref{eq:alpha 1 case} holds for all $\alpha\in[0,\infty]$, with an equality for at least one particular value of $\alpha\in[0,\infty]$. In other words $W_{\rm ext}$ is given by Eq. \eqref{eq:definitions1 no broduct bath}, and $W_\alpha$ satisfies,
\begin{equation}
\varepsilon^\alpha + (1-\varepsilon)^\alpha e^{-\beta W_\alpha (1-\alpha)} = A.\label{eq:W alpah def eq}
\end{equation}
Note that due to the continuity of the R\'enyi divergences in the neighbourhood of one, this case follows by continuity. Solving Eq. \eqref{eq:W alpah def eq} for $W_\alpha$ gives us Eq. \eqref{eq:definitions2 no broduct bath}.
\end{proof}
As we will see later there exist $\rho_\cold^1$ such that, $W_\textup{ext}$ given by Eq. \eqref{eq:definitions1 no broduct bath}, has a solution (i.e. $W_\textup{ext}>0$) for \textit{any} $\varepsilon>0$. We can use this to write down an explicit solution to the maximization problem Eq. \eqref{eq:max nano as function fo cold bath}. Using Eqs. \eqref{eq:max nano as function fo cold bath}, \eqref{eq:eff explicit function of cold bath} and Lemma \ref{eq:explicit sol for W ext}, we conclude
\be \label{eq: max nano eff as function of cold bath no sup}
\eta^\textup{nano}(\rho_\cold^1)=\left( 1-\varepsilon +\frac{\Delta C(\rho_\cold^1)}{\inf_{\alpha\geq 0}W_\alpha(\rho_\cold^1)} \right)^{-1}
\ee 
where $W_\alpha$ is given by Eqs. \eqref{eq:definitions2 no broduct bath}, \eqref{eq:definitions3 no broduct bath} and recall $\Delta C$ can be found in Eq. \eqref{eq:delta C def}.
From Eqs. \eqref{eq: max nano eff as function of cold bath no sup}, \eqref{eq:definitions2 no broduct bath}, \eqref{eq:definitions3 no broduct bath}, we see that the optimization problem $\sup_{\rho_\cold^1} \eta^\textup{nano}(\rho_\cold^1)$ is still a formidable task. In the next section, see will show that we can use the results from Section \ref{section:standardthermo}, to drastically simplify the problem.

\subsubsection{An upper bound for the efficiency}\label{upper bound carnot eff}
Before moving on to solving the nanoscale efficiency explicitly, we will first use the results of Section \ref{sub:macroeffNPW} to find upper bounds for the efficiency in the nanoscale regime, in the context of extracting near perfect work (Def. \ref{def:near perfect work}).


Recall how we have discussed in comparing Sections \ref{sub:macro} and \ref{sub:nano}, that the solution for the family of free entropies $F_\alpha$, in the case of $F_1$ is simply the \helmholtz~free energy. Therefore, from Lemma \ref{eq:explicit sol for W ext}, it follows that $W_1$ is simply the maximum amount of extractable work according to Eq. \eqref{eq:Hemholts def}. From Eqs. \eqref{eq:max eff macro function of rho_C^1}, \eqref{eq:eff explicit function of cold bath},
\be\label{eq:std eff def}
\eta^\textup{mac}(\rho^1_\cold)=\left( 1-\varepsilon+ \frac{\Delta C(\rho^1_\cold)}{W_1(\rho^1_\cold)} \right)^{-1}.
\ee
One can now compare Eq.~\eqref{eq:std eff def} with Eq. \eqref{eq: max nano eff as function of cold bath no sup}, and note that for any $\rho^1_\cold\in\mathcal{S}$, we have $W_1 (\rho_\cold^1) \geq \inf_{\alpha\geq 0} W_\alpha (\rho_\cold^1)$. Therefore, we conclude that for any $\rho^1_\cold\in\mathcal{S}$,
\be\label{eq:std eff up bound gen eff}
\eta^\textup{nano}(\rho^1_\cold)\leq \eta^\textup{mac}(\rho^1_\cold).
\ee
Eq. \eqref{eq:std eff up bound gen eff} in conjunction with Lemma \ref{lem:cannot do better than carnot with porb failuer} has an important consequence. Namely, 
\begin{equation}
\sup_{\rho_\cold^1\in\mathcal{S}}\eta^\textup{nano}(\rho^1_\cold)\leq 1-\beta_h/\beta_c 
\end{equation}
with equality only if the state $\rho^1_\cold$ that solves the supremum is the limiting case where it tends to the initial state of the cold bath, $\rho^1_\cold\rightarrow \rho^0_\cold$. 
Therefore, in order to see whether we can still achieve the Carnot efficiency, we will consider the quasi-static regime in the rest of Section \ref{section:Nano/quantum scale heat engine at maximum efficiency}.

\subsubsection{Evaluating near perfect work in the quasi-static heat engine}\label{subsect:evaluatingWextquasi}
In light of the results from the previous section, we will now calculate the near perfect work $W_\textup{ext}$ for quasi-static heat engines, i.e. the case where $\varepsilon, g\ll 1$. Specifically, we make the following assumption about the cold bath Hamiltonian:\\
\textbf{(A.5)} The Hamiltonian is taken to be of $n$ qubits:
\begin{equation}\label{eq:tensorprod_Hamiltonian}
\hat H_\cold=\sum_{k=1}^n \id^{\otimes (k-1)}\otimes \hat H_{\onecold,k}\otimes\id^{\otimes (n-k)},\quad \textup{where}\quad\hat H_{\onecold,k}=\bar E_k|\bar E_k\ra\la \bar E_k|,
 \end{equation}
 and $\bar E_k>0$ is the energy gap of the $k$-th qubit. Here for simplify, we have chosen w.l.o.g. the ground state of each qubit to have an eigenvalue equal to zero.\\
The tensor product structure in Assumption (A.5) allows us to simplify $\rho_\cold^0$, to
\be \label{eq:rho0 for tensor H c ham}
\rho_\cold^0=\bigotimes_{i=1}^n\tau_{i,\beta_c},
\ee
where $\tau_{i,\beta_c}$ is the thermal state of $i$th qubit Hamiltonian $\hat H_{i,\onecold}$ at inverse temperature $\beta_c$.
For the simplicity of following proofs, we present them in the special case of identical qubits, i.e. that $\bar E_i=E$ for all $1\leq i\leq n$. This means Eq.~\eqref{eq:rho0 for tensor H c ham} can be reduced to 
\be \label{eq:rho0 for tensor H c ham 2}
\rho_\cold^0 = \tau_{\beta_c}^{\otimes n}.
\ee
Furthermore, since we consider quasi-static heat engines, the output state is
\be \label{eq:rho1 for tensor H c ham}
\rho_\cold^1 = \tau_{\beta_f}^{\otimes n},
\ee
with $\beta_f=\beta_c-g$ ,where $0<g\ll 1$. Eq.~\eqref{eq:tensorprod_Hamiltonian} together with Eq.~\eqref{eq:rho1 for tensor H c ham} allows us to further simplify Eq. \eqref{eq:definitions3 no broduct bath} to
\begin{align}
A&= \left(\frac{\sum_i p_i^\alpha q_i^{1-\alpha}}{\sum_i p_i'^\alpha q_i^{1-\alpha}}\right)^n,\label{eq:definitions3}
\end{align}
where $p_i= \frac{e^{-\beta_cE \delta_{i,0}}}{Z_{\beta_c}}$, $p_i'= \frac{e^{-\beta_fE \delta_{i,0}}}{Z_{\beta_f}}$, $q_i= \frac{e^{-\beta_hE \delta_{i,0}}}{Z_{\beta_h}}$, with $\delta_{1,0}=0,$  $\delta_{1,1}=1,$ are the probabilities of thermal states (different temperatures) for the qubit Hamiltonian $\hat H_\onecold$.
The proof follows along the same lines as the proof to Lemma \ref{eq:explicit sol for W ext}, but now noting that in Eq. \eqref{eq:dalphas  no broduct bath} we can replace $D_\alpha (\tau_\cold\|\tau_{\beta_h})$ and $D_\alpha (\rho_\cold^1\|\tau_{\beta_h})$ with $n D_\alpha (\tau_{\beta_c}\|\tau_{\beta_h})$ and $n D_\alpha (\tau_{\beta_f}\|\tau_{\beta_h})$ respectively. This follows from the additivity property of the R{\'e}nyi divergences. After proving the special case of identical qubits, we show in Theorem \ref{th:EfficiencyRandomEnergyGaps} that it can be extended to non-identical qubits as generally described by Assumption (A.5).

Since we are dealing with near perfect work and quasi-static heat engines, both $g>0$ and $\varepsilon>0$ are infinitesimally small. Thus with the goal in find of finding a solution for $W_\textup{ext}$ from Eqs. \eqref{eq:definitions1 no broduct bath}, \eqref{eq:definitions2 no broduct bath}, and \eqref{eq:definitions3}; we will proceed to find an expansion of $W_\alpha$ for small $\varepsilon$ and $g$.\\ 
\vspace{0.1cm}
\begin{center}
\underline{\textit{i) The expansion of $A$ in a quasi-static heat engine}}
\end{center}
To simplify our calculations of $W_{\rm ext}$, especially that of efficiency, it is important to express $A$ in Eq.~\eqref{eq:definitions3} in terms of its first order expansion w.r.t. the parameter $g$. Recall that this parameter $g = \beta_c-\beta_f$ is the difference of inverse temperature between the initial and final state of the cold bath.

Firstly, note that for any integer $n$, the expression in Eq.~\eqref{eq:definitions3} evaluates to $A|_{g=0} = 1$. This is because at $g=0$, $\beta_f=\beta_c$ and therefore the probabilities $p_i, p_i'$ are identical. To obtain an approximation in the regime $0<g\ll 1$, we derive
\begin{align}
\frac{dA}{dg} &= -n \left( \sum_i p_i^\alpha q_i^{1-\alpha} \right)^n \left( \sum_i p_i'^\alpha q_i^{1-\alpha} \right)^{-n-1} \left[ \sum_i \alpha p_i'^{\alpha-1} q_i^{1-\alpha}\frac{dp_i'}{dg} \right]\\
&=-\alpha nA \left( \sum_i p_i'^\alpha q_i^{1-\alpha} \right)^{-1} \left[\sum_i  p_i'^{\alpha} q_i^{1-\alpha} (\bar E_i-\langle \hat H_\onecold \rangle_{\beta_f}) \right]\label{eq:120secondeq}.
\end{align}
The first inequality holds by noticing that only the probabilities $p_i'$ depend on $g$, which means only the denominator in Eq.~\eqref{eq:definitions3} is differentiated, using the chain rule
\begin{equation}
\frac{dA(\lbrace p_i'\rbrace)}{dg} = \sum_i \frac{dA(\lbrace p_i' \rbrace)}{dp_i'} \frac{dp_i'}{dg}.
\end{equation}
The equality in Eq.~\eqref{eq:120secondeq} makes use of the fact that $\frac{dp_i'}{dg}=-\frac{dp_i'}{d\beta_f} = p_i' (\bar E_i-\langle \hat H_\onecold \rangle_{\beta_f})$ as derived in Eq.~\eqref{eq:piderivative}. Evaluated at $g=0$, implies that $p_i'=p_i$, and therefore this gives
\begin{align}
\frac{dA}{dg}\bigg|_{g=0} &= \alpha n B_\alpha, ~~\textup{where} \label{eq:Falpha}\\
~~B_\alpha &=\frac{1}{\displaystyle\sum_i p_i^\alpha q_i^{1-\alpha}} \displaystyle\sum_i p_i^\alpha q_i^{1-\alpha} \left(\langle \hat H_\onecold\rangle_{\beta_c} - \bar E_i \right). \label{eq:Balpha}
\end{align}
Recall that $p_i, q_i$ are probabilities of the thermal states of $\hat H_\onecold$, at inverse temperatures $\beta_c,\beta_h$ respectively. With this, we can write the expansion of $A$ with respect to $g$ as
\begin{equation}\label{eq:taylorexpandA}
A = 1+ \alpha n g B_\alpha + \bo (g^2).
\end{equation}

Later on, we will also need to evaluate the derivative of $B_\alpha$ w.r.t. $\alpha$. This quantity, when evaluated at $\alpha=1$, has a close relation to the change in average energy of the cold bath (per copy), $\frac{\Delta C}{n}$.
\begin{lemma}\label{lem:Falphap}
Let 
\be 
\Delta C' (\beta_c):= \frac{d}{dg}\Delta C(\beta_f)\bigg|_{g=0},
\ee
where recall $\beta_f = \beta_c -g$. Then 
\be 
B_1' = \frac{d B_\alpha}{d\alpha}\bigg|_{\alpha=1} = \frac{\beta_c-\beta_h}{n} \Delta C'(\beta_f)=(\beta_c-\beta_h)\cdot \textup{var}(\hat H_\onecold)_{\beta_c}.
\ee
\end{lemma}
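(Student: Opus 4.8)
The plan is to reduce the whole statement to a single derivative evaluated at $\alpha=1$, exploiting that $B_1=0$. Writing $a_i:=\langle \hat H_\onecold\rangle_{\beta_c}-E_i$, observe first that $\sum_i p_i a_i=\langle \hat H_\onecold\rangle_{\beta_c}-\langle \hat H_\onecold\rangle_{\beta_c}=0$, so $B_1=0$. I would then set $N(\alpha):=\sum_i p_i^\alpha q_i^{1-\alpha}a_i$ and $Z(\alpha):=\sum_i p_i^\alpha q_i^{1-\alpha}$, so that $B_\alpha=N(\alpha)/Z(\alpha)$ with $N(1)=0$ and $Z(1)=\sum_i p_i=1$. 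Applying the quotient rule, the term proportional to $Z'(1)$ carries a factor $N(1)=0$ and drops out, leaving simply $B_1'=N'(1)$. This collapse is what makes the computation clean.

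Next I would differentiate the summands. Writing $p_i^\alpha q_i^{1-\alpha}=e^{\alpha\ln p_i+(1-\alpha)\ln q_i}$, its $\alpha$-derivative is $p_i^\alpha q_i^{1-\alpha}(\ln p_i-\ln q_i)$, which at $\alpha=1$ equals $p_i(\ln p_i-\ln q_i)$. Using the explicit thermal forms $p_i=e^{-\beta_c E_i}/Z_{\beta_c}$ and $q_i=e^{-\beta_h E_i}/Z_{\beta_h}$, one has $\ln p_i-\ln q_i=-(\beta_c-\beta_h)E_i+\ln(Z_{\beta_h}/Z_{\beta_c})$; the additive constant is independent of $i$ and is annihilated by $\sum_i p_i a_i=0$. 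Hence
\be
B_1'=\sum_i p_i(\ln p_i-\ln q_i)\,a_i=-(\beta_c-\beta_h)\sum_i p_i E_i\,a_i=(\beta_c-\beta_h)\,{\rm var}(\hat H_\onecold)_{\beta_c},
\ee
where the last step expands $\sum_i p_i E_i(\langle \hat H_\onecold\rangle_{\beta_c}-E_i)=\langle \hat H_\onecold\rangle_{\beta_c}^2-\langle \hat H_\onecold^2\rangle_{\beta_c}=-{\rm var}(\hat H_\onecold)_{\beta_c}$. This already establishes the rightmost equality of the claim.

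Finally I would identify $\tfrac{\beta_c-\beta_h}{n}\Delta C'(\beta_c)$ with the same quantity via Corollary \ref{cor:specialid}. Since $\beta_f=\beta_c-g$ gives $\tfrac{d}{dg}=-\tfrac{d}{d\beta_f}$, the corollary's identity $\tfrac{d\Delta C}{d\beta_f}=-{\rm var}(\hat H_\cold)_{\beta_f}$ yields $\Delta C'(\beta_c)={\rm var}(\hat H_\cold)_{\beta_c}$. Because the cold bath is $n$ independent copies in the product thermal state $\tau_{\beta_c}^{\otimes n}$ and $\hat H_\cold=\sum_k \hat H_{\onecold,k}$ contains no cross terms, the energy variance is additive across the factors, ${\rm var}(\hat H_\cold)_{\beta_c}=n\,{\rm var}(\hat H_\onecold)_{\beta_c}$; dividing by $n$ produces the middle equality and closes the chain.

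The only real subtlety is the collapse of the quotient rule at $\alpha=1$, which hinges on verifying $B_1=0$ (equivalently $N(1)=0$) \emph{before} differentiating—otherwise the $Z'(1)$ contribution must be tracked. Everything else is bookkeeping: confirming that the $i$-independent constant in $\ln p_i-\ln q_i$ is killed by the vanishing first moment $\sum_i p_i a_i=0$, and invoking additivity of the energy variance over the product state to pass between the single-qubit and $n$-qubit descriptions.
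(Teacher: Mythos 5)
Your proof is correct and follows essentially the same route as the paper's: the quotient-rule collapse at $\alpha=1$ (via $B_1=0$) is exactly the paper's observation that the second bracket in its derivative expression vanishes, and the subsequent computation of $\ln(p_i/q_i)$ with the constant killed by the vanishing first moment, plus the link to $\Delta C'$ through Corollary~\ref{cor:specialid}, matches the paper's argument step for step. The only cosmetic difference is that you pass to the single-qubit quantity via additivity of the variance over the product state, whereas the paper works with the per-copy expression $\Delta C/n = \mathrm{tr}[(\tau_{\beta_f}-\tau_{\beta_c})\hat H_\onecold]$ directly.
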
 
\begin{proof}
From the definition of $\Delta C$ (Eq. \eqref{eq:delta C def}) and using Eqs. \eqref{eq:tensorprod_Hamiltonian}, \eqref{eq:rho0 for tensor H c ham}, \eqref{eq:rho1 for tensor H c ham}, we have
\begin{align}\label{eq: Delta C per copy}
\frac{\Delta C}{n} = {\rm tr} [(\tau_{\beta_f}-\tau_{\beta_c})\hat H_\onecold].
\end{align}
Recalling that $\beta_f=\beta_c-g$ and using Eq. \eqref{eq:dev Delta C in terms of var}, from Eq. \eqref{eq: Delta C per copy} it follows
\begin{align}
\frac{1}{n}\Delta C' (\beta_c) = \frac{1}{n}\frac{d\Delta C}{d g}\bigg|_{g=0} = -\frac{1}{n}\frac{d\Delta C}{d\beta_f}\bigg|_{\beta_f=\beta_c} = {\rm var} (\hat H_\onecold)_{\beta_c}.
\end{align}
Now, let us evaluate the partial derivative of $B_\alpha$ w.r.t. $\alpha$. Denoting $r_i = \frac{p_i}{q_i}$, and invoking the chain rule of derivatives for Eq.~\eqref{eq:Balpha}
\begin{align}\label{eq:humongous_derivative}
\frac{d B_\alpha}{d\alpha} =& \left(\sum_i p_i^\alpha q_i^{1-\alpha}\right)^{-2} \Bigg{\lbrace} \left[\sum_i q_i r_i^\alpha \ln r_i  \left(\langle \hat H_\onecold\rangle_{\beta_c} - \bar E_i \right) \right]\left[\sum_i p_i^\alpha q_i^{1-\alpha}\right]\\
&-\left[\sum_i q_i r_i^\alpha \ln r_i\right]\left[ \sum_i p_i^\alpha q_i^{1-\alpha}  \left(\langle \hat H_\onecold\rangle_{\beta_c} -\bar  E_i \right)\right] \Bigg\rbrace.
\end{align}
Substituting $\alpha=1$ into Eq.~\eqref{eq:humongous_derivative}, we obtain that $\sum_i p_i^\alpha q_i^{1-\alpha}=1.$ Also, $\sum_i p_i^\alpha q_i^{1-\alpha}  \big(\langle \hat H_\onecold\rangle_{\beta_c} - \bar E_i \big)$ $=0$ while the factor multiplied in front is finite. Therefore, we are left with the terms
\begin{align}
B_1' &= \sum_i p_i \ln r_i  \left(\langle \hat H_\onecold\rangle_{\beta_c} - \bar E_i \right) \\
&= \sum_i p_i \left[\ln\frac{Z_h}{Z_c}+ (\beta_h-\beta_c) \bar E_i\right](\langle \hat H_\onecold\rangle_{\beta_c} - \bar E_i )\\
&= (\beta_c-\beta_h) {\rm var} (\hat H_\onecold)_{\beta_c}\\
&= \frac{\beta_c-\beta_h}{n} \Delta C'(\beta_c).\label{eq:F_1 dev in terms of standard}
\end{align}
The second equality comes from substituting $r_i = \frac{p_i}{q_i} = e^{(\beta_h-\beta_c)\bar E_i}\cdot Z_h/Z_c$. In the third equality, $\ln\frac{Z_h}{Z_c}$ is brought out of the summation, while the summation yields 0. Subsequently, we invoke $\sum_i p_i \bar E_i (\langle \hat H_\onecold\rangle_{\beta_c}-\bar E_i) = \langle \hat H_\onecold\rangle_{\beta_c}^2-\langle {\hat H_\onecold}^2\rangle_{\beta_c} = -{\rm var} (\hat H_\onecold)_{\beta_c}$.
\end{proof}
\vspace{0.1cm}
\begin{center}
\underline{\textit{ii) The expansion of $W_\alpha$ in the quasi-static heat engine}}
\end{center}
In the following we proceed to derive an expansion of $W_\alpha$ valid for small $g$, and $\varepsilon$. Note that $W_1$ is defined through continuity to be the limit of the R{\'e}nyi divergences at $\alpha\rightarrow 1$, and the small $\varepsilon$ and $g$ expansion does not hold for $\alpha= 0$, we shall have to examine $W_1$ and $W_0$ separately.

In the following and throughout the manuscript, we will use the notation $x\in(y,\infty]$ to indicate that the expression whose input $x$ in being referred to holds for $x\in(y,\infty)$ and for the limit case $\lim_{x\rightarrow +\infty}$. Similarly, we use the notation $x\in[y,\infty]$ when referring to an expression which holds for $x\in[y,\infty)$ and for $\lim_{x\rightarrow +\infty}$.\vspace{0.3cm}\\
(A) For $\varepsilon > 0$, $\alpha\in(0,1)\cup(1,\infty]$.\\
We start with the case $\varepsilon > 0$, $\alpha\in(0,1)\cup(1,\infty)$:
\begin{align}
W_\alpha &=  \frac{1}{\beta_h (\alpha-1)} [\ln (A-\varepsilon^\alpha)-\alpha\ln (1-\varepsilon)]\label{eq:approxWalpha line 1}\\
&=  \frac{1}{\beta_h (\alpha-1)} \left[\ln \left(1+\alpha ngB_\alpha+\bo(g^2)-\varepsilon^\alpha\right)-\alpha\ln (1-\varepsilon)\right]\\
&= \frac{1}{\beta_h (\alpha-1)} \left[\alpha n g B_\alpha + \bo(g^2)- \varepsilon^\alpha  +\bo(\varepsilon^{2\alpha})+\bo(g\varepsilon^\alpha) -\alpha\left(-\varepsilon+\bo\left(\varepsilon^2\right)\right)\right],\\
&= \frac{1}{\beta_h (\alpha-1)} \left[\alpha n g B_\alpha - \varepsilon^\alpha +\alpha\varepsilon\right] + \bo(g^2)+\bo(\varepsilon^{2\alpha})+\bo(g\varepsilon^\alpha)+\bo(\varepsilon^2).\label{eq:last eq in W alpha expansion A}
\end{align}
In the second equality, we have used the expansion of $A$ derived in Eq.~\eqref{eq:taylorexpandA}. In the third equality, we use the Mercator series
\begin{align}
\ln(1+x)=\sum_{k=1}^\infty \frac{(-1)^{k+1}}{k}x^k,\quad |x|<1,
\end{align}
to expand both of the natural logarithms in line Eq. \eqref{eq:approxWalpha line 1}. The order terms of $\bo(g^3),$ $\bo(g^4),$ $\bo(g^2 \varepsilon^\alpha)$ vanish because they are of higher order compared with $\bo(g^2)$ and $\bo(g\varepsilon^\alpha)$. The last equality occurs because $c\bo(g(x))=\bo(g(x))$ for any $c\in\mathbb{R}\backslash 0$. \\
Finally, we consider the limit case $\alpha\rightarrow\infty$. By direct calculation using the expression in line Eq. \eqref{eq:approxWalpha line 1}, we find
\be 
\lim_{\alpha\rightarrow +\infty} W_\alpha= \frac{1}{\beta_h} [n g \lim_{\alpha\rightarrow+\infty}B_\alpha +\varepsilon] + \bo(g^2)+\bo(\varepsilon^2),
\ee 
which is identical to the expression one obtains by taking the limit $\alpha\rightarrow+\infty$ in Eq. \eqref{eq:last eq in W alpha expansion A}. We thus conclude that for $\varepsilon > 0$, $\alpha\in(0,1)\cup(1,\infty]$,
\begin{align}
W_\alpha &= \frac{1}{\beta_h (\alpha-1)} \left[\alpha n g B_\alpha - \varepsilon^\alpha +\alpha\varepsilon\right] + \bo(g^2)+\bo(\varepsilon^{2\alpha})+\bo(g\varepsilon^\alpha)+\bo(\varepsilon^2).\label{eq:approxWalpha}
\end{align}
\vspace{0.3cm}\\
(B) For $\varepsilon > 0$, $\alpha =1$\\
We are now interested in finding a small $\varepsilon>0$, $g>0$ expansion for $W_1$, 
which is defined through continuity of the R{\'e}nyi divergences. 
Going back to Eq.~\eqref{eq:dalphas  no broduct bath}, note that $W_1$ is the maximum value such that Eq.~\eqref{eq:dalphas  no broduct bath} holds with equality, when all $D_\alpha$ terms in Eq.~\eqref{eq:dalphas  no broduct bath} are evaluated at $\alpha\rightarrow 1$. Recall that $\lim_{\alpha\rightarrow1} D_\alpha (\rho\|\tau) = D(\rho\|\tau)$ (see Eq. \eqref{eq:reyi in limits 2}), the relative entropy we have derived in Section \ref{section:standardthermo}. 
Therefore, one can write an equation for $W_1$ in a more compact form: $W_1$ is the value such that
\begin{equation}\label{eq:secondlawalp1}
n\cdot \left[\langle \hat H_\onecold\rangle_{\beta_c} -\frac{1}{\beta_h} S(\beta_c) \right]= n\cdot\left[\langle \hat H_\onecold\rangle_{\beta_f} -\frac{1}{\beta_h} S(\beta_f) \right]+ (1-\varepsilon) W_1 - \frac{1}{\beta_h} \htwo (\varepsilon), 
\end{equation}
where $\la \hat H_\onecold\rangle_{\beta_c}$ is the mean energy evaluated at temperature $T_\cold$, $S(\beta_c)$ is the von Neumann entropy of the state $\tau_{\beta_c}$, and $\htwo(\varepsilon)$ is the binary entropy function. Rearranging Eq.~\eqref{eq:secondlawalp1}, we get
\begin{equation}\label{eq:w1 for expa}
W_1 = \frac{1}{1-\varepsilon} \left[n\langle \hat H_\onecold\rangle_{\beta_c} -n\langle \hat H_\onecold\rangle_{\beta_f} -n\frac{1}{\beta_h} \left(S(\beta_c)-S(\beta_f)\right) + \frac{1}{\beta_h} \htwo(\varepsilon)\right].
\end{equation}
We can expand \eqref{eq:w1 for expa} using a power law expansion in $g$ and $\varepsilon$ for the terms in Eq.\eqref{eq:w1 for expa}, obtaining
\begin{align}\label{eq:w1expand}
W_1 &= \left[1+\varepsilon+\bo (\varepsilon^2)\right]\cdot \left[n\frac{d (-\langle \hat H_\onecold \rangle_{\beta_f} +\beta_{h}^{-1} S(\beta_f))}{dg}\bigg|_{g=0}\;g + \bo (g^2) + \frac{1}{\beta_h} \htwo(\varepsilon) \right].
\end{align}
 To proceed, we recall that $\beta_f=\beta_c-g$ and evaluate the term
\begin{align}
\frac{d (-\langle \hat H_\onecold \rangle_{\beta_f} +\beta_h^{-1} S(\beta_f))}{dg}\bigg|_{g=0} &= \frac{d (\langle \hat H_\onecold \rangle_{\beta_f} -\beta_h^{-1} S(\beta_f))}{d\beta_f}\bigg|_{\beta_f=\beta_c} = -{\rm var} (\hat H_\onecold)_{\beta_c} + \frac{\beta_c}{\beta_h}{\rm var} (\hat H_\onecold)_{\beta_c}\\
&= \frac{\beta_c-\beta_h}{\beta_h}{\rm var} (\hat H_\onecold)_{\beta_c}.
\end{align}
This implies that when fully expanded, Eq.~\eqref{eq:w1expand} reads as
\begin{align}
W_1  =& ng \frac{\beta_c-\beta_h}{\beta_h}{\rm var} (\hat H_\onecold)_{\beta_c} + \beta_h^{-1} \htwo (\varepsilon) + \bo (\varepsilon g) + \bo (\varepsilon) \htwo (\varepsilon) + \bo (g\varepsilon^2) + \bo (\varepsilon^2)\htwo (\varepsilon)\\
 &+ \bo (g^2)+\bo (\varepsilon g^2) + \bo (\varepsilon^2 g^2) \\
 =& ng \frac{\beta_c-\beta_h}{\beta_h}{\rm var} (\hat H_\onecold)_{\beta_c} +\beta_h^{-1} (-\varepsilon\ln\varepsilon+\varepsilon) + \bo (\varepsilon g)+ \bo (\varepsilon^2 \ln\varepsilon) + \bo (\varepsilon^2)+\bo (g^2),\label{eq:W 1 small ep and g expansion}
\end{align}
where we have used $\htwo(\varepsilon)=-\varepsilon\ln \varepsilon +\bo(\varepsilon)$, which follows from finding the power-law expansion of the second term in Eq. \eqref{def:binaryentropy}.

Although Eq. \eqref{eq:approxWalpha} is not defined for $\alpha=1$, we can evaluate it in the limit $\alpha\rightarrow 1$ to see if it coincides with the correct expression of $W_1$ (in Eq. \eqref{eq:W 1 small ep and g expansion}) at least for the leading order term (found in square brackets of Eq. \eqref{eq:approxWalpha}). For the leading order term of Eq. \eqref{eq:approxWalpha}, we find
\begin{align}
\lim_{\alpha\rightarrow 1}\frac{1}{\beta_h (\alpha-1)} \left[\alpha n g B_\alpha - \varepsilon^\alpha +\alpha\varepsilon\right]
&= \beta_{h}^{-1} \left[n g \lim_{\alpha\rightarrow 1}\frac{\alpha B_\alpha}{\alpha-1} - \lim_{\alpha\rightarrow 1}\frac{\varepsilon^\alpha-\alpha\varepsilon}{\alpha-1}\right]\\
&=\beta_{h}^{-1} \left[ ng \lim_{\alpha\rightarrow 1}\frac{\alpha B_\alpha}{\alpha-1} + (-\varepsilon\ln\varepsilon + \varepsilon)\right],\\
&=ng\frac{\beta_{c}-\beta_{h}}{\beta_{h}}  {\rm var} (\hat H_\onecold)_{\beta_c} + \beta_{h}^{-1} (-\varepsilon\ln\varepsilon + \varepsilon) .\label{eq:limWalpha}
\end{align}
The last equality holds because
\begin{align}
\lim_{\alpha\rightarrow 1}\frac{\alpha B_\alpha}{\alpha-1} & = \lim_{\alpha\rightarrow 1} \frac{d B_\alpha}{d \alpha}\label{eq:145}\\
&=(\beta_{c}-\beta_{h}) \cdot {\rm var} (\hat H_\onecold)_{\beta_c},\label{eq:last line quant}
\end{align}
where Eq.~\eqref{eq:145} is derived from L'H\^ospital rule ($B_1=0$ follows from the definition, see Eq. \eqref{eq:Falpha}), and Eq.~\eqref{eq:last line quant} comes by invoking Lemma \ref{lem:Falphap}. Thus noting that Eq. \eqref{eq:limWalpha} is simply the first two terms in Eq. \eqref{eq:last line quant}, we conclude that the small $g>0$ and $\varepsilon>0$ expansion of $W_\alpha$ for 
$\alpha\in(0,\infty]$ can be summarized as
\begin{flalign}\label{lim interchabge}
& W_\alpha=\\
&\begin{cases}
\frac{1}{\beta_h (\alpha-1)} \left[\alpha n g B_\alpha - \varepsilon^\alpha +\alpha\varepsilon\right] + \bo(g^2)+\bo(\varepsilon^{2\alpha})+\bo(g\varepsilon^\alpha)+\bo(\varepsilon^2) &\mbox{if } \alpha\in(0,1)\cup(1,\infty],\\
\displaystyle\lim_{\alpha\rightarrow 1^+}\frac{1}{\beta_h (\alpha-1)} \left[\alpha n g B_\alpha - \varepsilon^\alpha +\alpha\varepsilon\right] + \bo (\varepsilon g)+ \bo (\varepsilon^2 \ln\varepsilon) + \bo (\varepsilon^2)+\bo (g^2) &\mbox{if } \alpha=1.
\end{cases}
\end{flalign}
(C) For $\alpha =0$\\
We will now investigate the $\alpha=0$ case. This is also particularly important to understand the difference between perfect and near perfect work, since in Section \ref{subsec:impossible}, the impossibility of extracting perfect work arises from evaluating the allowed values of $W_\textup{ext}$ under the $\alpha=0$ constraint. We show that by allowing $\varepsilon >0$, $W_\textup{ext}>0$ is allowed once again. Recall $D_0(p\|q)=\lim_{\alpha\rightarrow 0}D_\alpha(p\|q)=\sum_{i:{p_i\neq 0}}q_i.$ Thus from Eq. \eqref{eq:dalphas  no broduct bath}
\begin{align}\label{eq:W constraint from D_0}
D_0 (\rho_\battery^0\|\tau_\textup{W})-D_0 (\rho_\battery^1\|\tau_\textup{W}) \geq   n D_0 (\tau_{\beta_f}\|\tau_{\beta_h})- n D_0 (\tau_{\beta_c}\|\tau_{\beta_h})=0.
\end{align}
where the last equality follows from the fact that thermal states have full rank. This inequality is satisfied for any value of $W_\textup{ext}$, since whenever $\varepsilon >0$, $\rho_\battery^1$ is a full rank state, and $D_0 (\rho_\battery^1\|\tau_\textup{W})=0$. Furthermore, $D_0 (\rho_\battery^0\|\tau_\textup{W}) \geq 0$ because all R{\'e}nyi divergences are non-negative. 
Therefore, taking into account Eqs. \eqref{lim interchabge} and \eqref{eq:W constraint from D_0}, for quasi-static heat engines which extract near perfect work, we only need to solve 
\be\label{eq:new inf W alpha}
W_\textup{ext}=\inf_{\alpha > 0} W_\alpha,
\ee
where $W_\alpha$ is given by Eq. \eqref{lim interchabge}.

\subsubsection{The choice of $\varepsilon$ determines the infimum to evaluating $W_\textup{ext}$}\label{subsect:LBeff}
In this section, we will show that the infimum over $\alpha > 0$ in Eq.~\eqref{eq:new inf W alpha} can be simplified to taking the infimum over $\alpha > \bar\kappa$ instead, where the parameter $\bar\kappa$ determines how quickly $\varepsilon$ goes to 0 w.r.t. the parameter $g$. We define $\kappa$ introduce $\bar\kappa$ in Lemma \ref{lemma:existence kappabar}, for functions of $\varepsilon (g)$.
\begin{lemma}\label{lemma:existence kappabar}
For every continuous function $\varepsilon(g)>0$ satisfying $\lim_{g\rightarrow 0^+}\varepsilon(g)=0,$ for which the limit  $\lim_{g\rightarrow 0^+} \varepsilon^\kappa(g)/ g$, $\kappa\in\rr$ exists, we have that $\exists$ $\bar \kappa\in\mathbb{R}_{\geq 0}$ s.t. 
\be \label{eq:existence lemma}
\delta (\kappa) = \lim_{g\rightarrow 0^+}\frac{\varepsilon^\kappa(g)}{g}=
\begin{cases}
0 &\mbox{if } \kappa>\bar \kappa\\
\sigma\geq 0 &\mbox{if } \kappa=\bar\kappa\\
\infty &\mbox{if } \kappa<\bar\kappa\\
\end{cases}
\ee
where $\bar \kappa=+\infty$ is allowed (that is to say, $\lim_{g\rightarrow 0^+}\frac{\varepsilon^\kappa(g)}{g}$ diverges for every $\kappa$) and $\sigma=+\infty$ is also allowed.
\end{lemma}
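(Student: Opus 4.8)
The plan is to reduce everything to the \emph{monotonicity} of the map $\kappa\mapsto \varepsilon^{\kappa}(g)/g$ in the exponent, for each fixed small $g$. Since $\varepsilon(g)\to 0^+$ there is a $g_0>0$ with $0<\varepsilon(g)<1$ on $(0,g_0)$, and on this range $\kappa\mapsto\varepsilon^{\kappa}(g)$ is strictly decreasing, hence so is $\kappa\mapsto\varepsilon^{\kappa}(g)/g$. I would take as the candidate threshold
\begin{equation}
\bar\kappa:=\inf\{\kappa\geq 0:\ \lim_{g\to 0^+}\varepsilon^{\kappa}(g)/g=0\},
\end{equation}
with the convention $\inf\emptyset=+\infty$ (which is where the admissible value $\bar\kappa=+\infty$ enters), and show it realises the three cases of \eqref{eq:existence lemma}.

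The two technical ingredients both come from the multiplicative splitting $\varepsilon^{\kappa}(g)/g=\varepsilon^{\kappa-\kappa_0}(g)\cdot(\varepsilon^{\kappa_0}(g)/g)$. \emph{Propagation of vanishing}: if $\limsup_{g\to0^+}\varepsilon^{\kappa_0}(g)/g<\infty$, then for every $\kappa>\kappa_0$ the extra factor $\varepsilon^{\kappa-\kappa_0}(g)\to 0$ drives the product to $0$, so $\delta(\kappa)=0$. \emph{Propagation of divergence}: writing instead $\varepsilon^{\kappa}(g)/g=\varepsilon^{-(\kappa_0-\kappa)}(g)\cdot(\varepsilon^{\kappa_0}(g)/g)$, if $\liminf_{g\to0^+}\varepsilon^{\kappa_0}(g)/g>0$ then for every $\kappa<\kappa_0$ the factor $\varepsilon^{-(\kappa_0-\kappa)}(g)\to\infty$ forces $\delta(\kappa)=\infty$.

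Granting (as the notation $\delta(\kappa)=\lim$ in \eqref{eq:existence lemma} presumes) that each limit exists in $[0,\infty]$, the three cases then assemble directly. For $\kappa>\bar\kappa$, the infimum yields some $\kappa_0\in[\bar\kappa,\kappa)$ with $\delta(\kappa_0)=0$, so $\limsup\varepsilon^{\kappa_0}/g=0<\infty$ and propagation of vanishing gives $\delta(\kappa)=0$. For $\kappa<\bar\kappa$, pick $\kappa_1\in(\kappa,\bar\kappa)$; being below the infimum it is not in the zero set, so $\delta(\kappa_1)\neq0$, and a \emph{finite positive} $\delta(\kappa_1)$ is impossible because propagation of vanishing would then place zero-set points strictly below $\bar\kappa$; hence $\delta(\kappa_1)=\infty$, so $\liminf\varepsilon^{\kappa_1}/g>0$ and propagation of divergence gives $\delta(\kappa)=\infty$. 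At the endpoint one simply sets $\sigma:=\delta(\bar\kappa)$, an element of $[0,\infty]$ with no further constraint (this is where $\sigma=+\infty$ is permitted), matching the middle line.

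The main obstacle is exactly the existence-and-dichotomy statement I granted above, namely that for $\kappa\neq\bar\kappa$ the ratio $\varepsilon^{\kappa}(g)/g$ genuinely tends to $0$ or to $\infty$ rather than oscillating. Monotonicity is what collapses the ``$\delta=\infty$'' and ``$\delta=0$'' regimes onto a single common endpoint: if $\delta(\kappa_0)$ were ever finite and positive, the two propagation statements immediately pin $\kappa_0=\bar\kappa$ and identify $\sigma=\delta(\kappa_0)$. The truly delicate point arises for badly behaved $\varepsilon$, where $\ln g/\ln\varepsilon(g)$ can oscillate and the plain limit fails to exist on a whole band of exponents; the clean remedy is to track $\limsup$ and $\liminf$ separately through the same splitting and read $\delta(\kappa)$ as a $\limsup$, under which the threshold is the explicit quantity $\bar\kappa=\limsup_{g\to0^+}\ln g/\ln\varepsilon(g)$ and the dichotomy holds verbatim. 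This is the only step where the hypotheses $\varepsilon(g)>0$, $\varepsilon(g)\to0$ and continuity are used beyond elementary bookkeeping.
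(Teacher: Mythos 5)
Your proposal is correct (to the same standard of rigor as the paper's own proof) but takes a genuinely different route. The paper argues constructively: it runs a bisection, maintaining the largest evaluated point $\kappa_n^{(\infty)}$ with $\delta=\infty$ and the smallest evaluated point $\kappa_n^{(0)}$ with $\delta=0$, halving the bracketing interval at each round, and either hitting a point where $\delta$ is finite and nonzero (which is then declared to be $\bar\kappa$) or extracting $\bar\kappa$ as the common limit of the two monotone sequences of interval endpoints. The two propagation facts you isolate via the splitting $\varepsilon^{\kappa}/g=\varepsilon^{\kappa-\kappa_0}\cdot(\varepsilon^{\kappa_0}/g)$ are exactly the displayed observations (with underbraces) inside the paper's proof; the difference is that you use them to define $\bar\kappa=\inf\{\kappa\geq 0:\delta(\kappa)=0\}$ outright and read off the trichotomy directly, with your contradiction argument (a finite nonzero value of $\delta$ strictly below the infimum would push zero-set points below $\bar\kappa$) replacing the paper's iterative bookkeeping. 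Your version is shorter and non-constructive; the paper's version yields an explicit approximation scheme for $\bar\kappa$ but nothing logically stronger, and it also covers the $\bar\kappa=\infty$ case only through the branch where every evaluation returns $\infty$, which your $\inf\emptyset=+\infty$ convention handles more cleanly. Both arguments rest on the same tacit dichotomy that $\delta(\kappa)$ exists in $[0,\infty]$ for each $\kappa$ — the paper's bisection has no branch for an evaluated point at which the limit fails to exist — and your closing remark is genuine added value: for oscillating $\varepsilon$, e.g. $\varepsilon(g)=g^{a(g)}$ with $a(g)$ swinging between two constants, the limit really does fail on a whole band of exponents, so the lemma as literally stated needs the $\limsup$/$\liminf$ reading you supply, under which your explicit threshold $\bar\kappa=\limsup_{g\rightarrow 0^+}\ln g/\ln\varepsilon(g)$ is correct; the paper is silent on this. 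One small remark: continuity of $\varepsilon$ is in fact not needed in either argument, since positivity and $\lim_{g\rightarrow 0^+}\varepsilon(g)=0$ already give $0<\varepsilon(g)<1$ near the origin, which is all the monotonicity in $\kappa$ requires.
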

\begin{proof}
The main idea in this proof is to divide the non-negative real line into an infinite sequence of intervals in an iterative process. We specify the ends of these intervals by constructing a sequence $\lbrace \kappa_i \rbrace_{i=1}^\infty$, and evaluating $\delta$ at these points. We then prove that according to our construction, there are only two possibilities: \\
1) $\kappa_i$ forms a convergent sequence, where the limit $\lim_{n\rightarrow\infty}\kappa_n=\kappabar$, or \\
2) the ends of these intervals extend to infinity. In this case, $\kappabar=\infty$.
The way to construct this interval is as follows: in the first round, pick some $\kappa_1 >0$. The corresponding interval is $[0,\kappa_1]$. Evaluate $\delta (\kappa_1)$. If $\delta(\kappa_1)=\infty$, then proceed to look at the interval $[\kappa_1,\frac{3}{2}\kappa_1]$. Otherwise if $\delta(\kappa_1)<\infty$, choose $\kappa_2 = \frac{\kappa_1}{2}$ and evaluate $\delta(\kappa_2)$. Depending on whether $\delta (\kappa_2)$ goes to infinity, we pick one of the intervals $[0,\kappa_2]$ or $[\kappa_2,\kappa_1]$.

A general expression of choosing $\kappa_n$ can be written: during the $n$th round, define the sets $\mathcal{S}_n^{(0)}, \mathcal{S}_n^{(\infty)}$ such that
\begin{align*}
\mathcal{S}_n^{(0)} &= \lbrace \kappa_i| 1\leq i\leq n~ {\rm and} ~\delta (\kappa_i) = 0 \rbrace\\
\mathcal{S}_n^{(\infty)} &= \lbrace \kappa_i| 1\leq i\leq n ~{\rm and} ~\delta (\kappa_i) = \infty \rbrace.
\end{align*}
Note that if we find $\delta (\kappa_i)=c\neq 0$ for some finite constant $c$, then our job is finished, i.e. $\bar\kappa = \kappa_i$ (We prove this later). Subsequently, define for $n\geq 1$,
\begin{align*}
\kappa_n^{(0)} &= \min_{\kappa\in \mathcal{S}_n^{(0)} } \kappa \qquad {\rm and}\qquad\kappa_n^{(\infty)} = \max_{\kappa\in \mathcal{S}_n^{(\infty)} } \kappa.
\end{align*}
If either sets are empty, we use the convention that the corresponding minimization/maximization equals 0. Once these quantities are defined, we can choose the next interval by evaluating 
\vspace{-0.07cm}
\begin{equation}\label{eq:iterative}
\kappa_{n+1} = \kappa_n^{(\infty)} + \frac{|\kappa_n^{(\infty)} -\kappa_n^{(0)} |}{2}.
\end{equation}
In the $n$-th round, the corresponding interval is $[\kappa_n^{(\infty)},\kappa_{n+1}]$.

Let us now analyze why we can use this scheme to find $\bar\kappa$. Firstly, consider the case where $\delta (\kappa_i)$ whenever evaluated, produces infinity. This means that in each round, $\kappa_n^{(\infty)} = \kappa_n$ increases with $n$ (by the iterative scheme), and $\kappa_n^{(0)} = 0$ always stays at zero. Note that this scheme has been constructed in a way such that $\lim_{n\rightarrow\infty}\kappa_n=\infty$. Indeed, for all $n$, by using Eq.~\eqref{eq:iterative}, 
\begin{equation}
\kappa_{n+1} = \frac{3}{2}\kappa_n = \left(\frac{3}{2}\right)^2\kappa_{n-1} = \cdots = \left(\frac{3}{2}\right)^n\kappa_{1},
\end{equation}
which tends to infinity as $n$ goes to infinity, whenever $\kappa_1 >0$. Later we will prove a property of the function $\delta$, which combined with this scenario means that $\delta(\kappa)=\infty$ for every $\kappa\geq 0$. Therefore, $\kappabar=\infty$. 
\begin{figure}
\begin{center}
\includegraphics[scale=0.45]{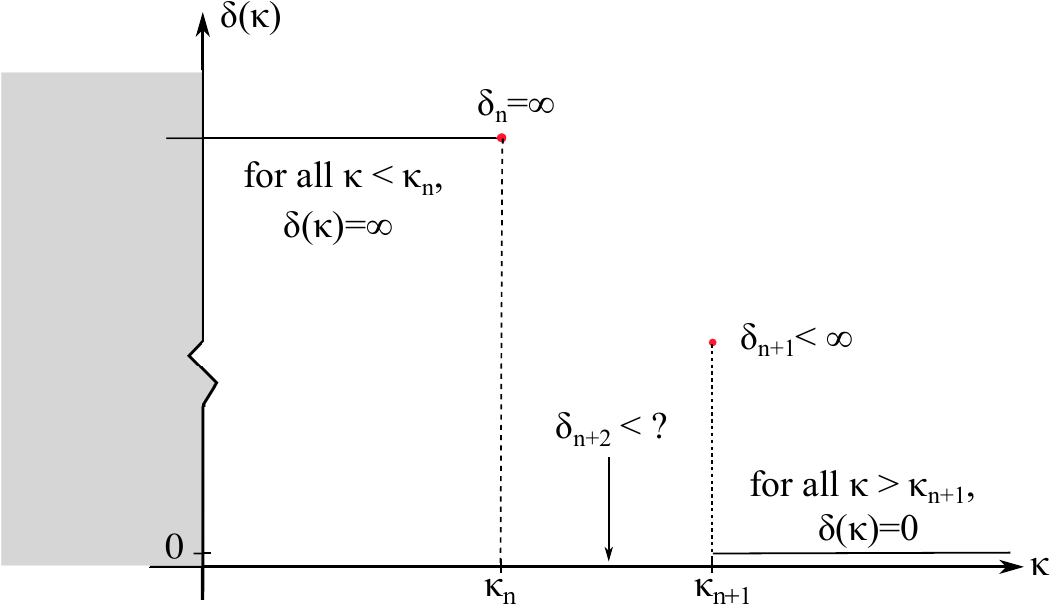}
\caption{Illustration of the scenario where $\delta (\kappa_n)=\infty$ and $\delta(\kappa_{n+1}) < \infty$.}
\label{fig:4}
\end{center}
\end{figure}

Next, suppose that there exist an $n$-th round, such that $\delta (\kappa_n)=\infty$ and $\delta(\kappa_{n+1}) < \infty$, as illustrated in Fig~\ref{fig:4}. Note that the function $\delta(\kappa)$ has a peculiar property, i.e. we know that if $\delta(\kappa_n)=\infty$, then for any $\kappa < \kappa_n$,
\be
\delta (\kappa)=\lim_{g\rightarrow 0^+}\underbrace{\varepsilon^{\kappa-\kappa_n}(g)}_{\rightarrow +\infty}\underbrace{\frac{\varepsilon^{\kappa_n}(g)}{g}}_{\rightarrow \infty}=+\infty.
\ee 
On the other hand, if $\delta(\kappa_{n+1})=0$, then we know that for any $\kappa > \kappa_{n+1}$,
\begin{align}
\delta (\kappa)=\lim_{g\rightarrow 0^+}\underbrace{\varepsilon^{\kappa-\kappa_{n+1}}(g)}_{\rightarrow 0}\underbrace{\frac{\varepsilon^{\kappa_{n+1}}(g)}{g}}_{\rightarrow 0}=0.
\end{align}
Moreover, if $\delta(\kappa_j)=c\neq 0$ for some positive, finite $c$, then following the same arguments, one can easily see that for all $\kappa<\kappa_j$, $\delta (\kappa) = \infty$ and for $\kappa > \kappa_j$, $\delta (\kappa) = 0$. In this case we find that $\bar\kappa=\kappa_j$. These observations are illustrated in Figure \ref{fig:4} for clarity.

One can now evaluate $\kappa_{n+2}$ (which is the midpoint of $\kappa_n$ and $\kappa_{n+1}$) and its corresponding value of $\delta(\kappa_{n+2})$. From this point on, in each iteration we either find $\bar\kappa$ exactly (whenever the function $\delta$ when evaluated produces a finite, non-zero number), or the length of the next interval gets halved, and goes to zero in the limit of $n\rightarrow\infty$. This, by Eq.~\eqref{eq:iterative}, also implies that $\lim_{n\rightarrow\infty}\kappa_n^{(\infty)} = \lim_{n\rightarrow\infty}\kappa_n^{(0)}$. We also know the following:\\
1) for all $\kappa < \kappa_n^{(\infty)}, \delta(\kappa)=\infty,$\\
2) for all $\kappa > \kappa_n^{(0)}, \delta(\kappa)=0.$\\
Therefore, we see that $\kappabar$ exists and $\kappabar = \lim_{n\rightarrow\infty}\kappa_n^{(\infty)} = \lim_{n\rightarrow\infty}\kappa_n^{(0)}.$
By this we conclude the proof. 
\end{proof}
To provide some intuition about how $\kappabar$ compares the rate of convergence $\varepsilon, g\rightarrow 0$, let us look at the following examples:\\
1) Consider $\varepsilon_1 (g) = \exp(-1/g)$. Then $\kappabar = 0$  with $\sigma=\infty$.\\
2) Consider $\varepsilon_2 (g) = g\ln g$. Then $\kappabar = 1$ with $\sigma=\infty$.\\
3) Consider $\varepsilon_3 (g) = c\cdot g^{1/k}$ for $k>0$. Then $\kappabar = k$ with $\sigma=c$.\\

In the next lemma, we consider the scenario of near perfect work, given in Def. \ref{def:near perfect work}, and show that this imposes a finite range of values $\kappabar$ should take. Given a particular $\kappabar$, we also show that the minimization of Eq.~\eqref{eq:new inf W alpha} changes with $\kappabar$.
\begin{lemma}\label{lemma: bar gamma}
Given any $\varepsilon (g) \in (0,1]$ as a continuous function of $g$, where $g>0$. If $\lim_{g\rightarrow 0^+} \varepsilon (g) =0$
and $\lim_{g\rightarrow 0^+} \frac{\Delta S}{W_\textup{ext}}=0$, then the following holds:
\begin{enumerate}
\item The quantity $\kappabar$ (defined in Lemma \ref{lemma:existence kappabar}) can only have any value in $\kappabar\in [0,1]$, where $\lim_{g\rightarrow 0^+} \frac{\varepsilon\ln\varepsilon}{g}=0$ has to hold if $\bar\kappa=1$. 
\item 
The extractable work can be written as
\be\label{eq:work extractable to first order in g}
\beta_h W_\textup{ext}=g\cdot \left[ \inf_{\alpha\geq \bar{\kappa}}\frac{n\alpha B_\alpha}{\alpha-1}+f(g) \right],
\ee
where $\lim_{g\rightarrow 0^+}f(g)=0$ and $\inf_{\alpha\geq \bar \kappa}$ can be exchanged for $\inf_{\alpha > \bar \kappa}$ if $\bar\kappa=0$.
\end{enumerate}

\end{lemma}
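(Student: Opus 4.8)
The plan is to prove the two parts in order, since the range of $\kappabar$ established in Part~1 is exactly what licenses the simplification of the infimum in Part~2. The only inputs I need are the small-$g$, small-$\varepsilon$ expansion of $W_\alpha$ recorded in Eq.~\eqref{lim interchabge}, the separate expansion of $W_1$ in Eq.~\eqref{eq:W 1 small ep and g expansion}, the representation $W_\textup{ext}=\inf_{\alpha>0}W_\alpha$ from Eq.~\eqref{eq:new inf W alpha}, the trichotomy of Lemma~\ref{lemma:existence kappabar}, and the asymptotics $\Delta S=\htwo(\varepsilon)=-\varepsilon\ln\varepsilon+\bo(\varepsilon)$.

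For Part~1, the bound $\kappabar\geq 0$ is immediate from Lemma~\ref{lemma:existence kappabar}. To show $\kappabar\leq 1$ I argue by contradiction using the upper bound $W_\textup{ext}=\inf_{\alpha>0}W_\alpha\leq W_1$. Expanding $W_1$ via Eq.~\eqref{eq:W 1 small ep and g expansion} exhibits two competing leading contributions, $\tfrac{g}{\beta_h}\,n(\beta_c-\beta_h)\,\mathrm{var}(\hat H_\onecold)_{\beta_c}$ and $\beta_h^{-1}\htwo(\varepsilon)$. If $\kappabar>1$ then $\delta(1)=\lim_{g\to0^+}\varepsilon/g=\infty$, so $\htwo(\varepsilon)/g\sim(\varepsilon/g)(-\ln\varepsilon)\to\infty$; the entropy term then dominates $W_1$, giving $\Delta S/W_\textup{ext}\geq\Delta S/W_1\to\beta_h\neq 0$ and contradicting the hypothesis $\Delta S/W_\textup{ext}\to 0$. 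The identical computation at $\kappabar=1$ shows that unless $\lim_{g\to0^+}\varepsilon\ln\varepsilon/g=0$ (equivalently $\htwo(\varepsilon)/g\to0$) the entropy term again fails to be negligible against the work, which forces precisely the stated boundary condition. A byproduct I will carry into Part~2 is that in every admissible case $\varepsilon/g\to0$: for $\kappabar<1$ this is $\delta(1)=0$, and for $\kappabar=1$ it follows from $\varepsilon/g=(\varepsilon\ln\varepsilon/g)/\ln\varepsilon$ together with $1/\ln\varepsilon\to0$.

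For Part~2, I factor $g$ out of the leading bracket in Eq.~\eqref{lim interchabge}, writing $W_\alpha=\tfrac{g}{\beta_h}\big[\tfrac{n\alpha B_\alpha}{\alpha-1}-\tfrac{\varepsilon^\alpha-\alpha\varepsilon}{g(\alpha-1)}\big]+(\text{higher order})$, and control the correction term by the trichotomy of Lemma~\ref{lemma:existence kappabar}. For $\alpha>\kappabar$ one has $\varepsilon^\alpha/g\to0$, and since $\varepsilon/g\to0$ the whole correction vanishes, so $W_\alpha=\tfrac{g}{\beta_h}\big[\tfrac{n\alpha B_\alpha}{\alpha-1}+o(1)\big]$. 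At $\alpha=\kappabar>0$ one has $\varepsilon^{\kappabar}/g\to\sigma\geq 0$ (possibly $+\infty$), and because $\kappabar<1$ this contributes the \emph{non-negative} term $\tfrac{\sigma}{\beta_h(1-\kappabar)}g$, which can only raise $W_{\kappabar}$ and so never lowers the infimum. For $\alpha<\kappabar$ we have $\varepsilon^\alpha/g\to\infty$, so with $\alpha-1<0$ the value $W_\alpha$ is dominated by the strictly positive term $\varepsilon^\alpha/\big(\beta_h(1-\alpha)\big)$, which is of larger order than $g$; for $g$ small this exceeds $\inf_{\alpha\geq\kappabar}W_\alpha=\bo(g)$, and moreover $\varepsilon^\alpha\to\varepsilon^{\kappabar}\sim\sigma g$ as $\alpha\to\kappabar^-$ shows the values connect to the boundary without dipping below it. Hence $\inf_{\alpha>0}W_\alpha=\inf_{\alpha\geq\kappabar}W_\alpha$, reproducing Eq.~\eqref{eq:work extractable to first order in g} with $f(g)\to0$. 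Finally, when $\kappabar=0$ the point $\alpha=0$ is governed by the separate $D_0$ analysis of Eq.~\eqref{eq:W constraint from D_0} rather than the expansion, so it imposes no constraint and the infimum runs over $\alpha>0$, justifying replacing $\inf_{\alpha\geq\kappabar}$ by $\inf_{\alpha>\kappabar}$ in that case.

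The hard part will be the analytic rigor of the final interchange. The estimates above are pointwise in $\alpha$, whereas the claim concerns $\inf_{\alpha>0}$ over a continuum that includes the endpoint $\alpha\to\infty$ and the removable singularity of the prefactor $\tfrac{\alpha B_\alpha}{\alpha-1}$ at $\alpha=1$. To make $f(g)\to0$ genuine I must show the error terms in Eq.~\eqref{lim interchabge} are controlled \emph{uniformly} in $\alpha$ over the relevant range, and that discarding the region $\alpha<\kappabar$ is compatible with where the leading function $\tfrac{n\alpha B_\alpha}{\alpha-1}$ actually attains (or approaches) its infimum; establishing this uniformity, rather than the individual limits, is where the real effort lies.
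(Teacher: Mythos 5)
Your proposal is correct and follows the paper's own route in all essentials: both arguments rest on the expansion \eqref{lim interchabge} (together with the separate $\alpha=1$ expansion \eqref{eq:W 1 small ep and g expansion} and the $D_0$ triviality \eqref{eq:W constraint from D_0}), on the trichotomy of Lemma \ref{lemma:existence kappabar}, and on a case split over $\kappabar$, with the $\sigma$-contribution at $\alpha=\kappabar$ dismissed by non-negativity plus continuity of $\alpha B_\alpha/(\alpha-1)$ and the $\kappabar=0$ endpoint handled by excluding $\alpha=0$, exactly as in Eqs.~\eqref{eq:tildeWalpha_kappa<1}--\eqref{work bar k in 0 1}. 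The one genuine difference is local, in Part~1: the paper evaluates $W_\textup{ext}=\inf_\alpha W_\alpha$ \emph{inside} the excluded regime $\kappabar\in(1,\infty)$, finding $\beta_h W_\textup{ext}=\varepsilon+g\,\bo(f(g))$ and hence $\Delta S/W_\textup{ext}\to\infty$ (Eq.~\eqref{Delta S/W lim 21}), whereas you bypass that infimum computation via the monotone bound $W_\textup{ext}\le W_1$, which gives the weaker but fully sufficient $\liminf_{g\to 0^+}\Delta S/W_\textup{ext}\ge\beta_h>0$; the same shortcut yields the boundary condition $\lim_{g\to0^+}\varepsilon\ln\varepsilon/g=0$ at $\kappabar=1$, where the paper instead reads it off from Eq.~\eqref{Delta S/W lim 31}. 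Your version is slightly cleaner --- it never needs to locate where the infimum sits in a regime that is then discarded --- at the cost of losing the explicit fact $W_\textup{ext}\approx\varepsilon$ there, which the lemma does not use. Finally, the uniformity-in-$\alpha$ issue you flag (controlling the error terms of \eqref{lim interchabge} uniformly over the range of the infimum, including near the removable singularity at $\alpha=1$, as $\alpha\to\kappabar^-$ when $\sigma=0$, and as $\alpha\to\infty$) is real, but it is equally left implicit in the paper, which merely asserts that the higher-order terms ``can be checked'' to be subleading; on this point your proposal is at parity with, and more candid than, the published argument.
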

\begin{proof}
Firstly, let us use Eq. \eqref{lim interchabge} to simplify our expression for $W_\textup{ext}$: $W_\textup{ext}=\inf_{\alpha\geq 0} W_\alpha$, where
\be\label{eq: beta W=}
\beta_h W_\alpha=
\begin{cases}
 g\tilde{W}_\alpha + \bo(g^2)+\bo(\varepsilon^{2\alpha})+\bo(g\varepsilon^\alpha)+\bo(\varepsilon^2) &\mbox{if } \alpha\in(0,1)\cup (1,\infty]\\ 
 g\tilde{W}_1 + \bo (\varepsilon g)+ \bo (\varepsilon^2 \ln\varepsilon) + \bo (\varepsilon^2)+\bo (g^2) &\mbox{if } \alpha=1,
\end{cases}
\ee
and
\be\label{eq:alpha not 1}
\tilde{W}_\alpha:=\frac{1}{\alpha-1}\left( \alpha n B_\alpha+\alpha \frac{\varepsilon}{g}-\frac{\varepsilon^\alpha}{g} \right),
\ee
and for $\alpha=1$
\be\label{eq:alpha=1}
\tilde W_1=\lim_{\alpha\rightarrow 1} \tilde W_\alpha =\left(\lim_{\alpha\rightarrow 1} \frac{\alpha n B_\alpha}{\alpha-1} \right)+\frac{\varepsilon}{g}-\frac{\varepsilon}{g}\ln(\varepsilon).
\ee
From now on, the order terms in Eq.~\eqref{eq: beta W=} can be neglected, since it can be checked that all of them are of higher order compared to the terms we grouped in $\tilde W_\alpha$, in the limit of vanishing $g$. Even then, we note that due to the complicated form of $W_\textup{ext}$, it is not straightforward to begin our proof with the assumption $\lim_{g\rightarrow 0^+} \frac{\Delta S}{W_\textup{ext}}=0$. 

Instead, we begin by noting that given a function $\varepsilon (g)$ that satisfies the conditions of the above lemma, then one can invoke Lemma \ref{lemma:existence kappabar}, and therefore there exists a $\kappabar\in\mathbb{R}_{\geq 0}$ such that Eq.~\eqref{eq:existence lemma} holds. We then, for all possible $\kappa\in\mathbb{R}_{\geq 0}$, evaluate all $\tilde W_\alpha$ to take the infimum and obtain $W_\textup{ext}$. Given $W_\textup{ext}$, we then evaluate the quantity $\lim_{g\rightarrow 0^+} \frac{\Delta S}{W_\textup{ext}}=0$.

The value of $\kappabar$ determines how the limits of quantities like $\frac{\varepsilon}{g}, \frac{\varepsilon^\alpha}{g}$ behave. Therefore, we need to split the analysis into three different regimes: $\bar\kappa\in[0,1)$, $\bar\kappa=1$, $\bar\kappa\in(1,\infty)$. 
\vspace{0.2cm}\\
\underline{1) For $\bar\kappa\in[0,1)$}\vspace{0.1cm}\\
For this case, we know the following limits: \\
A. $\lim_{g\rightarrow 0^+} \frac{\varepsilon}{g} =0$.\\
B. For $\alpha < \kappabar$, $\lim_{g\rightarrow 0^+} \frac{\varepsilon^\alpha}{g} =\infty$.\\
C. For $\alpha = \kappabar$, $\lim_{g\rightarrow 0^+} \frac{\varepsilon^\alpha}{g} =\sigma\geq 0$.\\
D. For $\alpha > \kappabar$, $\lim_{g\rightarrow 0^+} \frac{\varepsilon^\alpha}{g} =0$.\\
E. Note that $\exists$ $k_1>\bar\kappa$ such that $1-k_1>0$. Thus $\lim_{g\rightarrow 0^+}\frac{\varepsilon}{g}\ln \varepsilon=\lim_{g\rightarrow 0^+}\frac{\varepsilon^{k_1}}{g}\;\varepsilon^{1-k_1}\ln \varepsilon=0$
 \\
Therefore, by using Eq.~\eqref{eq:alpha not 1} and \eqref{eq:alpha=1} (for $\alpha=1$ separately) we have
\be\label{eq:tildeWalpha_kappa<1}
\tilde W_\alpha=
\begin{cases}
+\infty & {\rm if}~ \alpha\in [0,\bar\kappa)\vspace{0.2cm}\\
\displaystyle\frac{\alpha n B_\alpha}{\alpha-1}+\frac{1}{\bar \kappa-1}\left(\bar \kappa \frac{\varepsilon}{g}-\frac{\varepsilon^{\bar\kappa}}{g}\right)= \displaystyle\frac{\alpha n B_\alpha}{\alpha-1} +\frac{\sigma}{|\bar \kappa-1|}
 +\bo\left(\frac{\varepsilon}{g}\right)\bar\kappa&{\rm if}~\alpha=\bar\kappa\vspace{0.2cm}\\
\displaystyle\frac{\alpha n B_\alpha}{\alpha-1} + \bo\left(\frac{\varepsilon^\alpha}{g}\right)&{\rm if}~\alpha\in(\bar\kappa,1)\vspace{0.2cm}\\
\displaystyle\frac{\alpha n B_\alpha}{\alpha-1} + \bo\left(\frac{\varepsilon}{g}\right)&{\rm if}~\alpha\in(1,\infty]
\vspace{0.2cm}\\
\displaystyle\lim_{\alpha\rightarrow 1} \frac{\alpha n B_\alpha}{\alpha-1} +\bo\left(\frac{\varepsilon\ln\varepsilon}{g}\right)&{\rm if}~\alpha=1,
\end{cases}
\ee
where the expression in ~Eq.\eqref{eq:tildeWalpha_kappa<1} has been written as a leading order term, plus higher order terms that vanish in the limit $g\rightarrow 0$  \footnote{In Eq. \eqref{eq:tildeWalpha_kappa<1}, the interval $[0,0)$ is taken to be the empty set (this is relevant for the case $\bar{\kappa}=0$).}. In the second line we have used $\left(\bar \kappa {\varepsilon}-{\varepsilon^{\bar\kappa}}\right)/(\bar \kappa-1)=|\left(\bar \kappa {\varepsilon}-{\varepsilon^{\bar\kappa}}\right)/(\bar \kappa-1)|$ as $\varepsilon\rightarrow0^+$ for $\bar\kappa\in[0,1)$.

Therefore, we conclude that for $\bar \kappa\in[0,1)$ and any $\sigma\geq 0$, due to continuity in $\alpha$ of $\frac{\alpha n B_\alpha}{\alpha-1}$, 
\be\label{work bar k in 0 1}
\beta_h W_\textup{ext}= \beta_h \inf_{\alpha > 0}W_\alpha=g\cdot\left[\inf_{\alpha\geq \bar\kappa}\frac{\alpha n B_\alpha}{\alpha-1}+\bo\left(f(g)\right)\right],
\ee
where $f$ satisfies $\lim_{g\rightarrow 0^+}f(g)=0$
~in the expression of Eq.~\eqref{eq:tildeWalpha_kappa<1}.  
Note that if $\kappabar=0,$ then $\inf_{\alpha\geq \bar \kappa}$ can be exchanged for $\inf_{\alpha> \bar \kappa}$ since in Eq. \eqref{eq:new inf W alpha} the point $\alpha=0$ was already excluded.

We can now calculate 
$\lim_{g\rightarrow 0^+} \frac{\Delta S}{W_\textup{ext}}$ for $\bar \kappa\in[0,1)$ and any $\sigma\geq 0$:
\be\label{Delta S/W lim 11}
\lim_{g\rightarrow 0^+} \frac{\Delta S}{W_\textup{ext}}
=\lim_{g\rightarrow 0^+}\frac{-\varepsilon\ln\varepsilon-(1-\varepsilon)\ln(1-\varepsilon)}{\left(\inf_{\alpha\geq \bar\kappa}\frac{\alpha n B_\alpha}{\alpha-1}\right) g}
=\lim_{g\rightarrow 0^+}\frac{1}{\inf_{\alpha\geq \bar\kappa}\frac{\alpha n B_\alpha}{\alpha-1}}\left( \underbrace{-\frac{\varepsilon\ln\varepsilon}{ g}}_{\rightarrow 0\;\textup{(Item E)}}-\underbrace{\frac{\varepsilon+\bo(\varepsilon^2)}{ g}}_{\rightarrow 0\; \textup{(Item A)}}\right)=0,
\ee
where we have assumed that 
\be \label{eq:inf barkappa gamma}
\inf_{\alpha\geq \bar\kappa}\frac{\alpha n B_\alpha}{\alpha-1}>0.
\ee
As we will see later (see Eq. \eqref{eq:alpha B_alpha/(1-alpha) positive}), Eq. \eqref{eq:inf barkappa gamma} holds if $\bar\kappa>0$. However, if $\alpha=0$
\be \label{eq:if alpah=0 have zero}
\frac{\alpha n B_\alpha}{\alpha-1}=0,
\ee
and we need to use Eq. \eqref{eq:tildeWalpha_kappa<1} for the case $\alpha=\bar\kappa=0$. The relevant line in Eq. \eqref{eq:tildeWalpha_kappa<1} is the 2nd line, from which we have $\tilde W_0=\sigma = \varepsilon^{\bar\kappa}/g=1/g$. which tends to $+\infty$ as $g\rightarrow 0^+$. Alternatively, as already mentioned, this result is also clear since $W_\alpha$ for $\alpha=0$ is infinite since it expresses the rank condition for state transitions which is always satisfied regardless of how much work is being extracted. 
Hence \be \label{eq:for kppabar =0 case 2}
\lim_{g\rightarrow 0^+} \frac{\Delta S}{W_\textup{ext}} =0 
\ee
in this case also. Thus from Eqs. \eqref{eq:if alpah=0 have zero} and 
\eqref{eq:for kppabar =0 case 2}, we conclude that 
Eqs. \eqref{work bar k in 0 1} and \eqref{Delta S/W lim 11} are still valid when $\bar\kappa=0$.
To summarize, so far we have proven that whenever $\kappabar\in [0,1)$, Eq.~\eqref{eq:work extractable to first order in g} holds for some $f(g)$ which vanishes as $g$ tends to zero, and furthermore $\lim_{g\rightarrow 0^+}\frac{\Delta S}{W_\textup{ext}}=0$. \vspace{0.2cm}\\
\underline{2) For $\bar\kappa\in(1,\infty)$}\vspace{0.15cm}\\
In this regime, like the previous analysis, we can list out the following limits:\vspace{0.1cm}\\
A. $\lim_{g\rightarrow 0^+} \frac{\varepsilon}{g} =0$.\\
B. By definition of $\kappabar$, for $\alpha < 1$, $\lim_{g\rightarrow 0^+} \frac{\varepsilon^\alpha}{g} =\infty$.\\
C. $\lim_{g\rightarrow 0^+}\frac{\varepsilon\ln\varepsilon}{g}=\infty$ since both $\frac{\varepsilon}{g}$ and $\ln\varepsilon$ goes to infinity as $g\rightarrow 0$.\\
Therefore, by using Eq.~\eqref{eq:alpha not 1} and \eqref{eq:alpha=1} (for $\alpha=1$ separately) we have
\be\label{eq:tildeWalpha_kappa>1}
\tilde W_\alpha=
\begin{cases}
\frac{1}{g}\cdot \frac{1}{1-\alpha}\left[ \varepsilon^\alpha + \bo(\varepsilon)+ \bo(g) \right] & {\rm if}~ \alpha\in [0,1)\vspace{0.15cm}\\
\frac{1}{g}\cdot \left[ - \varepsilon\ln\varepsilon +\bo(\varepsilon)  +\bo(g) \right]&{\rm if}~\alpha=1\vspace{0.15cm}\\
\frac{1}{g}\cdot \frac{1}{\alpha-1} \left[ \alpha\varepsilon + \bo(\varepsilon^\alpha) + \bo(g) \right] &{\rm if}~\alpha\in (1,\infty].
\end{cases}
\ee
Note that for all of these expressions in Eq.~\eqref{eq:tildeWalpha_kappa>1}, $\tilde W_\alpha\rightarrow\infty$ as $g\rightarrow0^+$.
Next we want to calculate $W_\textup{ext}$, which is the infimum of $W_\alpha$, taken over all $\alpha\geq 0$. Note that in the limit of vanishing $g$, $\varepsilon$ also goes to zero. Therefore from Eq.~\eqref{eq:tildeWalpha_kappa>1}, we see that the equation for $g\tilde W_\alpha$ which vanishes most quickly in the limit $g\rightarrow 0$ happens when $\alpha\in (1,\infty]$. Therefore, we conclude that for $\bar\kappa\in(1,\infty)$ and any $\sigma\geq 0$,
\be
\beta_h W_\textup{ext}=\beta_h \inf_{\alpha\geq 1}W_\alpha=g\cdot \left[\inf_{\alpha\geq 1}\frac{\alpha}{\alpha-1}\frac{\varepsilon}{g}+\bo\left(f(g)\right)\right]=\varepsilon+g\cdot \bo\left(f(g)\right)
\ee
We can now calculate 
$\lim_{g\rightarrow 0^+} \frac{\Delta S}{W}$ for $\bar \kappa\in(1,\infty)$ and any $\sigma\geq 0$:
\be\label{Delta S/W lim 21}
\lim_{g\rightarrow 0^+} \frac{\Delta S}{W}=\lim_{g\rightarrow 0^+}\frac{-\varepsilon\ln\varepsilon-(1-\varepsilon)\ln(1-\varepsilon)}{\varepsilon}=\lim_{g\rightarrow 0^+} \underbrace{-\frac{\varepsilon\ln\varepsilon}{\varepsilon}}_{\rightarrow \infty}-\underbrace{\frac{\varepsilon+\bo(\varepsilon^2)}{\varepsilon}}_{\rightarrow 1}=+\infty.
\ee
From this, we note that the whole regime of $\kappabar\in (1,\infty)$ does not contain any cases corresponding to our condition of interest: $\lim_{g\rightarrow 0^+}\frac{\Delta S}{W_\textup{ext}}=0$ never holds.\vspace{0.2cm}\\
\underline{3) For $\bar\kappa=1$}\vspace{0.15cm}\\
Similar to the first two cases, we again list out the relevant limits:\\
A. $\lim_{g\rightarrow 0^+} \frac{\varepsilon}{g} =\sigma$ for some $\sigma\geq 0$.\\
B. For $\alpha <1$, $\lim_{g\rightarrow 0^+} \frac{\varepsilon^\alpha}{g} =\infty$.\\
C. For $\alpha >1$, $\lim_{g\rightarrow 0^+} \frac{\varepsilon^\alpha}{g} =0$.\\
Therefore, by using Eq.~\eqref{eq:alpha not 1} and \eqref{eq:alpha=1} (for $\alpha=1$ separately) we have
\be\label{eq:tildeWalpha_kappa>1 2}
\tilde W_\alpha=
\begin{cases}
\frac{1}{g}\cdot \frac{1}{1-\alpha}\left[ \varepsilon^\alpha + \bo(\varepsilon)+ \bo(g) \right] & {\rm if}~ \alpha\in [0,1)\vspace{0.15cm}\\
\frac{1}{g}\cdot \left[ -\varepsilon\ln\varepsilon +\bo(\varepsilon)+\bo(g) \right]&{\rm if}~\alpha=1~~\wedge~~\sigma >0 \vspace{0.15cm}\\
n\displaystyle\lim_{\alpha\rightarrow 1}\frac{\alpha B_\alpha}{\alpha-1}- \frac{\varepsilon\ln\varepsilon}{g}\geq n\displaystyle\lim_{\alpha\rightarrow 1}\frac{\alpha B_\alpha}{\alpha-1} &{\rm if}~\alpha=1~~\wedge~~\sigma =0 \vspace{0.15cm}\\
\frac{1}{\alpha-1} \left[\alpha n B_\alpha +\alpha\sigma-\bo\left(\frac{\varepsilon^\alpha}{g}\right) \right] &{\rm if}~\alpha\in (1,\infty].
\end{cases}
\ee
Note that for $\alpha\in [0,1)$ and the  case $\alpha=1 \; \wedge \;\sigma>0$, $\tilde{W}_\alpha$ tends to infinity, while for the other cases $\tilde{W}_\alpha$ is finite.

Therefore, we can conclude that for $\bar\kappa=1$,
\be\label{work bar k is 1}
\beta_h W_\textup{ext}=g\cdot \left[\left(\inf_{\alpha\geq 1} \frac{\alpha}{\alpha-1}\left( n B_\alpha+\sigma \right)\right)   +\bo\left(f(g)\right)\right],
\ee
where $f(g)=\frac{\varepsilon^\alpha}{g}$ vanishes as $g$ tends to zero.

Now, we evaluate the limit $\lim_{g\rightarrow 0^+} \frac{\Delta S}{W}$ for $\bar \kappa=1$ and any $\sigma\geq 0$:
\begin{align}\label{Delta S/W lim 31}
\lim_{g\rightarrow 0^+} \frac{\Delta S}{W}&=\lim_{g\rightarrow 0^+}\frac{-\varepsilon\ln\varepsilon-(1-\varepsilon)\ln(1-\varepsilon)}{\left(\inf_{\alpha\geq 1} \frac{1}{\alpha-1}\left( \alpha n F_\alpha+\alpha\sigma \right)\right) g}=\lim_{g\rightarrow 0^+}\frac{-\varepsilon\ln\varepsilon}{c\cdot g}-\underbrace{\frac{\varepsilon+\bo(\varepsilon^2)}{c\cdot g}}_{\rightarrow 0}.
\end{align}
This limit of interest can be zero if and only if $\lim_{g\rightarrow 0^+} \frac{\varepsilon\ln\varepsilon}{g}=0$.\vspace{0.2cm}

We have calculated the limits $\lim_{g\rightarrow 0^+} \Delta S/W_\textup{ext}$ to leading order in $g$ for all functions $\varepsilon(g)>0$ satisfying $\lim_{g\rightarrow 0^+}\varepsilon=0$. 
These are found in Eqs. \eqref{Delta S/W lim 11}, \eqref{Delta S/W lim 21}, and \eqref{Delta S/W lim 31}. We have found that $\lim_{g\rightarrow 0^+} \Delta S/W_\textup{ext}=0$ occurs only in two cases:\\ 
~i) $\bar\kappa\in [0,1)$, and  \\
ii) $\kappabar=1~{\rm and}~\lim_{g\rightarrow 0^+} \frac{\varepsilon\ln\varepsilon}{g}=0$.\\
The amount of work, $W_\textup{ext}$ is found in Eq.~\eqref{work bar k in 0 1} and \eqref{work bar k is 1} respectively. Indeed, they take the form of Eq.~\eqref{eq:work extractable to first order in g}, for different functions $f(g)$. With this, we conclude the proof of the lemma.
\end{proof}

\subsubsection{Solving the infimum for $W_\textup{ext}$}\label{subsect:solvinginfimum}
We have seen in Lemma \ref{lemma: bar gamma} that the function $\frac{\alpha B_\alpha}{\alpha-1}$ corresponds to the largest order term in $W_\textup{ext}$ w.r.t. small g (quasi-static heat engine). Our next objective is to find the infimum of $\frac{\alpha B_\alpha}{\alpha-1}$ over $\alpha\in[\kappabar,\infty]$ appearing in Eq. \eqref{eq:work extractable to first order in g}.  Such an infimum is is not easy to evaluate, but whenever the cold bath consists of multiple identical qubits, we show that the derivative $\frac{d}{d\alpha}\frac{\alpha B_\alpha}{\alpha-1}$ has some nice properties. Roughly speaking, we show that this derivative does not have many roots, which in turn means that $\frac{\alpha B_\alpha}{\alpha-1}$ does not have many turning points. We have used this to prove in Lemma \ref{lim lim lem} that the infimum is either obtained at $\alpha=\kappabar$ or $\alpha\rightarrow\infty$.
\vspace{0.1cm}

The derivative of $\frac{\alpha B_\alpha}{\alpha-1}$ w.r.t. $\alpha$ is given by 
\begin{align}
\frac{d}{d\alpha}\frac{\alpha B_\alpha}{\alpha-1} &~=\frac{B_\alpha}{\alpha-1} + \alpha\frac{B_\alpha'}{\alpha-1} - \frac{\alpha B_\alpha}{(\alpha-1)^2} = \frac{B_\alpha'}{(\alpha-1)^2} \Bigg[ \alpha (\alpha-1)-\frac{B_\alpha}{B_\alpha'} \Bigg] = \frac{B_\alpha'}{(\alpha-1)^2}G(\alpha),\label{eq:Walp_to_der}
\end{align}
where
\begin{align}
G(\alpha)&:= \alpha (\alpha-1)-\frac{B_\alpha}{B_\alpha'} .
\end{align}

Now, we shall evaluate the quantities $B_\alpha, B_\alpha',$ and $\frac{d}{d\alpha}\frac{B_\alpha}{B_\alpha'}$ for the case of qubits (see Assumption (A.5)), where the energy levels are $\lbrace 0,E \rbrace$. By using Eq. \eqref{eq:tensorprod_Hamiltonian}, we evaluate the quantity $B_\alpha$ defined by Eq. \eqref{eq:Balpha} to obtain a simple expression:
\begin{align}
B_\alpha &= E\cdot \frac{e^{-\beta_c E}}{1+e^{-\beta_c E}} -E\cdot \frac{e^{-\alpha\beta_c E} e^{-(1-\alpha)\beta_h E}}{1+e^{-\alpha\beta_c E} e^{-(1-\alpha)\beta_h E}}\\
&= E\cdot \frac{1}{1+e^{\beta_c E}} - E \cdot \frac{e^{\alpha\beta_h E}}{e^{\alpha\beta_h E}+e^{(\beta_h+\alpha\beta_c)E}}\\
&= \frac{E}{1+e^{\beta_c E}}\cdot \left[1- \frac{e^{\alpha\beta_h E}(1+e^{\beta_c E})}{e^{\alpha\beta_h E}+e^{(\beta_h+\alpha\beta_c)E}} \right]\\
&= \frac{E}{1+e^{\beta_c E}}\cdot \frac{e^{(\beta_h+\alpha\beta_c)E}-e^{(\beta_c+\alpha\beta_h)E}}{e^{\alpha\beta_h E}+e^{(\beta_h+\alpha\beta_c)E}}.\label{eq:Falp}
\end{align}
We note that Eq. \eqref{eq:Falp} is  zero only if $\alpha=1$, and thus for $\alpha\neq 1$, ${\alpha B_\alpha}/(\alpha-1)\neq 0$. From Eq. \eqref{eq:last line quant}, we know that $\lim_{\alpha\rightarrow 1}\alpha B_\alpha/{(\alpha-1)}>0$, thus due to continuity,
\be\label{eq:alpha B_alpha/(1-alpha) positive}
\frac{\alpha B_\alpha}{\alpha-1}>0\quad \forall\; \alpha>0.
\ee
We also derive the first derivative of $B_\alpha$ w.r.t. $\alpha$ for the special case of qubits:
\begin{align}\label{eq:Falp1}
B_\alpha'=\frac{d B_\alpha}{d\alpha}= \frac{E^2 (\beta_c-\beta_h)}{\left[ e^{\alpha\beta_h E}+e^{(\beta_h+\alpha\beta_c)E} \right]^2}\cdot e^{(\beta_h+\alpha\beta_c+\alpha\beta_h)E}.
\end{align}
Note that since $\beta_c > \beta_h$ by definition,
therefore whenever $E>0$, then $B_\alpha' >0$ always holds. By further algebraic manipulation, we compute the first derivative of the function
\begin{equation}
\frac{d}{d\alpha}\frac{B_\alpha}{B_\alpha'}= \frac{\cosh [w(\beta_c,\beta_h,\alpha)E]}{\cosh (\beta_cE/2)},
\end{equation}
where $w(\beta_c,\beta_h,\alpha)=(\beta_c-\beta_h)\alpha+\beta_h-\frac{\beta_c}{2}$.

We have written Eq. \eqref{eq:Walp_to_der} in this form, since for the special case of qubits, namely Eq. \eqref{eq:Falp1}, $B_\alpha' >0$ is always true. Therefore, looking at the function $G(\alpha)$ whether it is positive or negative) will tell us whether $\frac{\alpha B_\alpha}{\alpha-1}$ (and therefore $W_\alpha$) is increasing or decreasing in a particular interval. 

In Lemma \ref{zeros lemma}, we identify the conditions on the energy spacing $E$ such that several different properties of $G(\alpha)$ hold.


\begin{lemma}\label{zeros lemma}
Consider $G(\alpha)=\alpha (\alpha-1) - \frac{B_\alpha}{B_\alpha'}$, where $B_\alpha, B_\alpha'$ is defined in Eq. \eqref{eq:Falp} and \eqref{eq:Falp1}. Then the following holds:\\
1) If $E(\beta_c-\beta_h)\tanh(\beta_cE/2)> 2$, 
\begin{align}\label{T 1 u 1 inf}
\exists\, 0<\tau<1~\mathrm{s.t.}\quad G(\alpha)<0\quad \forall \alpha\in(\tau,1)\cup(1,\infty)
\end{align}
2) If $E(\beta_c-\beta_h)\tanh(\beta_cE/2)< 2$,
\begin{align}\label{eq:G''1>0}
\exists~\underline{\alpha}>1~\mathrm {s.t.}\qquad &G(\alpha)>0\quad \forall \alpha\in(0,1)\cup(1,\underline\alpha)\nonumber\\
&G(\alpha)<0\quad \forall \alpha\in(\underline\alpha,\infty).
\end{align}
3) If $E(\beta_c-\beta_h)\tanh(\beta_cE/2)= 2$,
\begin{align}\label{eq:G''=0}
&G(\alpha)>0\quad \forall \alpha\in(0,1)\nonumber\\
&G(\alpha)<0\quad \forall \alpha\in(1,\infty).
\end{align}
\end{lemma}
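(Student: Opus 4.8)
The plan is to study the scalar function $R(\alpha):=B_\alpha/B_\alpha'$ and reduce everything to the elementary analysis of $G(\alpha)=\alpha(\alpha-1)-R(\alpha)$ through its first two derivatives, using the closed form for $R'$ supplied just before the statement, $R'(\alpha)=\cosh[w(\beta_c,\beta_h,\alpha)E]/\cosh(\beta_c E/2)$ with $w(\beta_c,\beta_h,\alpha)=(\beta_c-\beta_h)\alpha+\beta_h-\beta_c/2$. The first observation I would record is a set of anchor values at $\alpha=1$. Since $B_\alpha$ in \eqref{eq:Falp} vanishes at $\alpha=1$ while $B_\alpha'>0$ there by \eqref{eq:Falp1}, we get $R(1)=0$, hence $G(1)=0$. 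Differentiating, $G'(\alpha)=2\alpha-1-R'(\alpha)$, and because $w(\beta_c,\beta_h,1)=\beta_c/2$ gives $R'(1)=1$, we also get $G'(1)=0$. Finally $G''(\alpha)=2-E(\beta_c-\beta_h)\sinh[w E]/\cosh(\beta_c E/2)$, so that $G''(1)=2-E(\beta_c-\beta_h)\tanh(\beta_c E/2)$. Thus the three regimes in the lemma are \emph{exactly} the three sign cases of $G''(1)$, which is the conceptual heart of the argument.

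The engine of the proof is then a convexity/root-counting fact about $G'$. Differentiating once more, $G'''(\alpha)=-E^2(\beta_c-\beta_h)^2\cosh[w E]/\cosh(\beta_c E/2)<0$ for all $\alpha$, so $G'$ is strictly concave; moreover $G'(\alpha)\to-\infty$ as $\alpha\to\pm\infty$ because $R'(\alpha)$ grows like a hyperbolic cosine of a linear argument while $2\alpha-1$ grows only linearly. A strictly concave function tending to $-\infty$ at both ends has a unique maximizer $\alpha^\ast$ (the single root of $G''$, which is strictly decreasing) and at most two roots. Since $G'(1)=0$, the point $\alpha=1$ is always one root. The sign of $G''(1)$ tells us on which side of $\alpha^\ast$ the point $\alpha=1$ lies: $G''(1)>0$ forces $\alpha^\ast>1$ (Case 2), $G''(1)<0$ forces $\alpha^\ast<1$ (Case 1), and $G''(1)=0$ gives $\alpha^\ast=1$ (Case 3). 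In Cases 1 and 2 the strict concavity and $G'(\alpha^\ast)>G'(1)=0$ produce a second root of $G'$ on the far side of $\alpha^\ast$, yielding the full sign pattern of $G'$; in Case 3 the double root at $\alpha=1$ makes $G'\le0$ everywhere with equality only at $\alpha=1$.

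From the sign of $G'$ I would read off the monotonicity of $G$ and combine it with the anchor $G(1)=0$ and the growth $G(\alpha)\to-\infty$ as $\alpha\to\infty$ (again because $R$ grows exponentially and dominates the quadratic $\alpha(\alpha-1)$). In Case 3, $G$ is strictly decreasing through $G(1)=0$, giving $G>0$ on $(0,1)$ and $G<0$ on $(1,\infty)$. In Case 2, $G$ decreases on $(0,1)$, attains a local minimum $G(1)=0$, increases up to its second critical point, then decreases to $-\infty$; hence $G>0$ on $(0,1)$, $G>0$ just past $\alpha=1$, and a single sign change at some $\underline\alpha>1$, matching \eqref{eq:G''1>0}. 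In Case 1, $\alpha=1$ is a local maximum with $G(1)=0$ and $G$ strictly decreasing on $(1,\infty)$, so $G<0$ there; on the left, $G$ increases up to $G(1)=0$ on an interval ending at $\alpha=1$, which furnishes the required $\tau\in(0,1)$ with $G<0$ on $(\tau,1)$ (taking $\tau$ to be the second root of $G'$ when it is positive, or any point of $(0,1)$ otherwise).

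The main obstacle, and the only place where a genuine estimate rather than local Taylor data is needed, is establishing the asymptotics $G'(\alpha)\to-\infty$ and $G(\alpha)\to-\infty$ as $\alpha\to\infty$; these are what guarantee the existence of the second root of $G'$ and the finite crossing $\underline\alpha$ in Case 2, and what close the sign of $G$ on $(1,\infty)$. I would handle them by bounding $R'(\alpha)=\cosh[wE]/\cosh(\beta_c E/2)$ below by an increasing exponential (valid once $w>0$, i.e. for $\alpha$ large, using $\beta_c>\beta_h$) and integrating, so that $R$ outgrows $\alpha(\alpha-1)$. Everything else is routine sign-chasing, with the small bookkeeping point of checking that the threshold $\tau$ in Case 1 can be chosen strictly inside $(0,1)$.
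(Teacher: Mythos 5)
Your proposal is correct and follows essentially the same route as the paper's own proof: anchor values $G(1)=0$, $G'(1)=0$, $G''(1)=2-E(\beta_c-\beta_h)\tanh(\beta_c E/2)$, strict concavity of $G'$ (the paper gets it from concavity of $-\cosh$ under an affine map, you from computing $G'''<0$, which is the same fact), root-counting for $G'$ with the three cases distinguished by the sign of $G''(1)$, and then sign-chasing of $G$ using the divergence $G,G'\to-\infty$ as $\alpha\to\infty$. If anything, you are slightly more careful than the paper on two points it glosses over --- the explicit lower-bound-and-integrate argument for the asymptotics of $R$ and $G$, and the sign of $G$ on $(0,1)$ in Case 2 together with the choice of $\tau\in(0,1)$ in Case 1 --- so no gaps remain.
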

\begin{proof}
First we note that since $B_1=0$, therefore $G(1)=0$. Let us also compute the derivative of $G(\alpha)$ w.r.t. $\alpha$:
\begin{equation}\label{eq:G1}
G'(\alpha)=2\alpha-1-\frac{\cosh\big((-\beta_c/2+\beta_h+(\beta_c-\beta_h)\alpha)E\big)}{\cosh(\beta_cE/2)}.
\end{equation}
Before we continue, there are several properties of the function $G'(\alpha)$ which we shall make use of. Firstly, note that $G'(1)=0$, in other words, $G'$ has a root at $\alpha=1$. Also, $G'(\infty)=-\infty$ for any value of $E>0,~ \beta_h>0,~\beta_c > \beta_h$ \footnote{This is due to the fact that $2\alpha$ increases linearly w.r.t. $\alpha$, while the $\cosh$ term increases exponentially.}. Also, since $2\alpha-1$ is linear (and hence both convex and concave), while the $-\cosh$ function is strictly concave~\footnote{To be more precise; due to the concavity of $f(x)=-a\cosh(b+xc)$ for $a>0$. This follows from the strict concavity of the $\cosh$ function, the invariancy of strict concavity under an affine transformation and the invariancy of strict concavity under multiplication by a positive constant.} , therefore the function $G'(\alpha)$ is  \emph{strictly concave}. This implies that the second derivative  $G''(\alpha)=\frac{d^2 G(\alpha)}{d\alpha^2}$ is strictly decreasing w.r.t. $\alpha$. \vspace{0.25cm}

The properties of $G'(\alpha)$ indicate that we can fully analyze the function by considering 3 different cases: 
\begin{enumerate}
\item $G'$ has two roots at $\alpha=\lbrace \underline{a}, 1 \rbrace$, wherewhere $\underline{a}\in (-\infty, 1)$. This corresponds to the case $G''(1)<0.$
\item $G'$ has two roots at $\alpha=\lbrace 1, \overline{a}\rbrace$, where $\overline{a}\in (1,\infty)$. This corresponds to the case $G''(1)>0.$
\item $G'$ has a single root at $\alpha=1$. This corresponds to the case $G''(1)=0.$
\end{enumerate}

\begin{figure}[h!]
      
	\centering 
        \begin{minipage}{0.32\textwidth}
                \includegraphics[width=\textwidth]{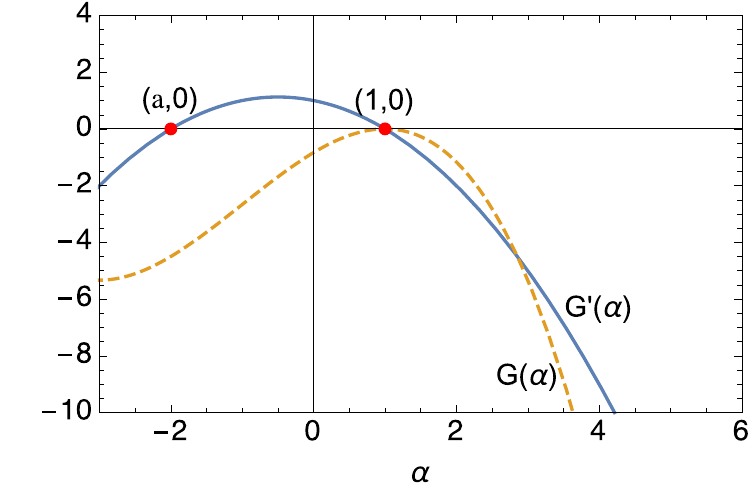}
                \caption{$ G''(1)<0$.}
                \label{fig:a}
        \end{minipage}%
        ~ 
        \begin{minipage}{0.32\textwidth}
                \includegraphics[width=\textwidth]{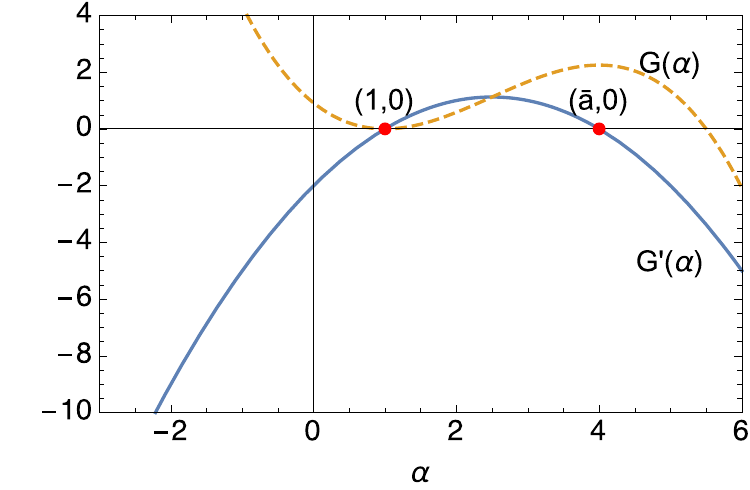}
                \caption{$ G''(1)>0$.}
                \label{fig:b}
        \end{minipage}
        ~ 
        \begin{minipage}{0.32\textwidth}
                \includegraphics[width=\textwidth]{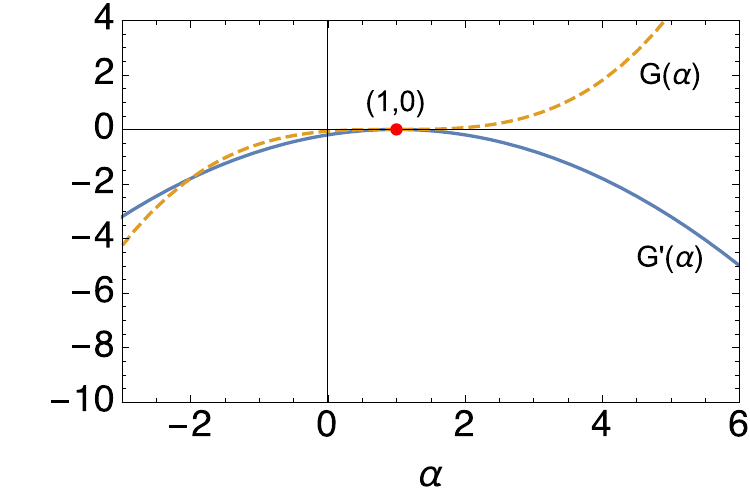}
                \caption{$ G''(1)=0$.}
                \label{fig:c}
        \end{minipage}
        \caption{A convex function $G'(\alpha)$ and its corresponding $G(\alpha)$, for different values of $G''(\alpha)$.}\label{fig:1}
\end{figure}

We shall now consider these cases one by one. Suppose that
\begin{equation}\label{eq:G''<0}
G''(1)=G''(\alpha)\Big|_{\alpha=1} = 2-  (\beta_c-\beta_h)E\tanh \left(\frac{\beta_c E}{2}\right) <0,
\end{equation}
then $G''(\alpha) <0$ for all $\alpha\in (1,\infty)$. Note that Eq.~\eqref{eq:G''<0} corresponds to the first condition in the lemma stated above.

This information about the second derivative $G''(\alpha)$ now allows us to conclude the following about $G(\alpha)$:
\begin{enumerate}
\item If for all $\alpha\in (1,\infty)$, $G''(\alpha)<0$, then we know that $G'(\alpha) <0$ holds for all $\alpha\in (1,\infty)$ too. Furthermore, this implies that $G(\alpha)$ is monotonically decreasing in the interval $(1,\infty)$ and therefore, $G(\alpha)<0$ for all $\alpha\in (1,\infty)$.
\item $G''(1)<0$ also implies that there exists an interval $(\tau,1)$ such that $G'(1)>0$ (See Fig. \ref{fig:a}). And since $G(1)=0$, this implies that within the interval $(\tau,1)$, $G(\alpha)<0$.
\end{enumerate}  
With this, we prove the first statement of the lemma.\vspace{0.2cm}

Let us now analyze the second case, where $G''(1)>0$. This implies that $G'(\alpha)>0$ at least for some interval $\alpha\in (1,\overline{a})$, then $G'(\alpha)$ changes sign exactly once at $\alpha=\overline{a}$, and goes to $-\infty$. (Refer to Fig. \ref{fig:b}). Also, recall that in the limit of $\alpha\rightarrow\infty$, $G$ also goes to $-\infty$. Therefore, we conclude that there exists some $\overline{\alpha}$ such that

\begin{equation}
G(\alpha) \left\{
\begin{array}{ll}
>0 & \alpha\in (1,\overline{\alpha})\\
<0 &  \alpha\in (\overline{\alpha},\infty)\\
\end{array}\right.
\end{equation}
With this, we prove the second statement of the lemma.

Finally, we look at the case where $G''(1)=0$, and make the following observations:
\begin{enumerate}
\item Since the function $G'(\alpha)$ is concave, and since $G''(1)=0$ implies that $\alpha =1$ is an extremum point for the function $G'(\alpha)$, we know that it must also be the global maximum. Therefore, we know that for any $\alpha\neq 1$,$G'(\alpha)<0.$
\item Since for the interval $\alpha\in(-\infty, 1), G'(\alpha)<0 $ and we know that $G(1)=0$, therefore we can deduce that for any $\alpha\in(-\infty, 1)$, $G(\alpha)>0$.
\item Since for the interval $\alpha\in (1, \infty), G'(\alpha)<0 $ and we know that $G(1)=0$, therefore we can deduce that for any $\alpha\in(1, \infty)$, $G(\alpha)<0$.
\end{enumerate}
With this, we prove the final statement of the lemma, and complete our proof.
\end{proof}

To summarize, in Lemma \ref{zeros lemma} we have identified conditions involving the energy gap of $\hat H_\onecold$, and the temperatures $\beta_h, \beta_c$. Depending on whether these conditions are satisfied, we can describe the positivity/negativity of $G(\alpha)$ for different regimes of $\alpha$. Comparing these different scenarios, we prove in Lemma \ref{lim lim lem} that for a quasi-static heat engine, the minimum of $\inf_{\alpha\geq\kappabar} \frac{\alpha B_\alpha}{\alpha-1}$ is obtained only either at $\alpha=\kappabar$ or $\alpha=\infty$.
\begin{lemma}\label{lim lim lem}
There exists some $0\leq\nu<1$ such that $\forall$ $\kappa$ satisfying $\nu<\kappa<1$, the following infimum is obtained at one of two points
\begin{align}\label{2 inf}
\inf_{\alpha\geq \kappa} \frac{\alpha B_\alpha}{\alpha-1}=\inf\left\{\lim_{\alpha \rightarrow \kappa} \frac{\alpha B_\alpha}{\alpha-1} ,\lim_{\alpha \rightarrow\infty} \frac{\alpha B_\alpha}{\alpha-1} \right\}
<\lim_{\alpha \rightarrow\beta} \frac{\alpha B_\alpha}{\alpha-1}\quad \forall\; \beta\in(\kappa,\infty),
\end{align}
where $B_\alpha$ is defined in Eq.~\eqref{eq:Falp}. Furthermore, if $E(\beta_c-\beta_h)\tanh(\beta_c E/2)\leq 2$, then we can set $\nu=0.$
\end{lemma}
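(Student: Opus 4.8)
The plan is to reduce the whole statement to the sign analysis of $G(\alpha)$ already carried out in Lemma~\ref{zeros lemma}. The starting observation is that, for the qubit case, Eq.~\eqref{eq:Falp1} gives $B_\alpha'>0$ for all $\alpha$, and $(\alpha-1)^2>0$ for $\alpha\neq 1$, so by Eq.~\eqref{eq:Walp_to_der} the derivative $\frac{d}{d\alpha}\frac{\alpha B_\alpha}{\alpha-1}$ carries exactly the same sign as $G(\alpha)$. Hence $\frac{\alpha B_\alpha}{\alpha-1}$ is strictly increasing on any interval where $G>0$ and strictly decreasing on any interval where $G<0$. I would also record that $\frac{\alpha B_\alpha}{\alpha-1}$ extends continuously through $\alpha=1$ to the finite positive value $(\beta_c-\beta_h)\,{\rm var}(\hat H_\onecold)_{\beta_c}$ computed in Eq.~\eqref{eq:last line quant}, so that monotonicity conclusions glue across $\alpha=1$.

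The bulk of the argument then splits along the three regimes of Lemma~\ref{zeros lemma}. First I would treat the two cases $E(\beta_c-\beta_h)\tanh(\beta_cE/2)\le 2$ together, setting $\nu=0$. In the strict subcase the sign pattern \eqref{eq:G''1>0} says $G>0$ on $(0,1)\cup(1,\underline\alpha)$ and $G<0$ on $(\underline\alpha,\infty)$; combined with continuity at $\alpha=1$ this makes $\frac{\alpha B_\alpha}{\alpha-1}$ strictly increasing on $(0,\underline\alpha)$ and strictly decreasing on $(\underline\alpha,\infty)$, i.e. strictly unimodal with a single interior maximum at $\underline\alpha$. The equality subcase \eqref{eq:G''=0} is identical with $\underline\alpha=1$. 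For a strictly unimodal function on a half-line $[\kappa,\infty)$ the infimum is never attained at an interior point: any $\beta\in(\kappa,\underline\alpha]$ exceeds the value at $\kappa$, and any $\beta\in[\underline\alpha,\infty)$ exceeds the limit at infinity. This yields exactly Eq.~\eqref{2 inf}, with the infimum equal to $\inf\{\lim_{\alpha\to\kappa},\lim_{\alpha\to\infty}\}$ and strictly below every interior value, for every $\kappa\in(0,1)$.

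The remaining regime $E(\beta_c-\beta_h)\tanh(\beta_cE/2)>2$ is where a strictly positive $\nu$ is forced, and it is the main obstacle. Here Eq.~\eqref{T 1 u 1 inf} only guarantees $G<0$ on $(\tau,1)\cup(1,\infty)$ and says nothing about the sign of $G$ on $(0,\tau)$, so I cannot push $\kappa$ down to $0$: below $\tau$ the function could have additional turning points and the clean endpoint conclusion would fail. The fix is to set $\nu=\tau$. Then for any $\kappa\in(\tau,1)$ the entire half-line $[\kappa,\infty)$ lies in the region where $G<0$, so $\frac{\alpha B_\alpha}{\alpha-1}$ is strictly decreasing on $[\kappa,\infty)$ (again gluing across $\alpha=1$ by continuity). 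A strictly decreasing function attains its infimum as $\alpha\to\infty$, which is strictly below the value at any finite $\beta>\kappa$; this is again of the form \eqref{2 inf}, with $\inf\{\lim_{\alpha\to\kappa},\lim_{\alpha\to\infty}\}=\lim_{\alpha\to\infty}$.

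Collecting the three regimes establishes the existence of some $\nu\in[0,1)$ with the stated property, and shows that $\nu=0$ is admissible precisely when $E(\beta_c-\beta_h)\tanh(\beta_cE/2)\le 2$. The only genuinely delicate point is this forced positivity of $\nu$ in the regime $>2$: one must resist over-reading Lemma~\ref{zeros lemma}, since it leaves the sign of $G$ on $(0,\tau)$ undetermined, and the half-line monotonicity argument only becomes valid once $\kappa$ is restricted above $\tau$.
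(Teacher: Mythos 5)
Your proposal is correct and follows essentially the same route as the paper's proof: both reduce the problem via the factorization $\frac{d}{d\alpha}\frac{\alpha B_\alpha}{\alpha-1}=\frac{B_\alpha'}{(\alpha-1)^2}G(\alpha)$ with $B_\alpha'>0$, apply the three sign regimes of Lemma~\ref{zeros lemma}, glue monotonicity across $\alpha=1$ by continuity, and set $\nu=0$ in the regimes $E(\beta_c-\beta_h)\tanh(\beta_cE/2)\le 2$ while taking $\nu=\tau$ in the regime $>2$ precisely because the sign of $G$ on $(0,\tau)$ is left undetermined. Your explicit handling of the continuous extension through $\alpha=1$ via Eq.~\eqref{eq:last line quant} and of the strict inequality at interior points makes the argument, if anything, slightly more careful than the paper's.
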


\begin{proof}

\begin{enumerate}
\item If\vspace{-0.5cm}
\begin{align}\label{it 1 in inf proof}
\frac{d}{d\alpha}\frac{\alpha B_\alpha}{\alpha-1}
\begin{cases}
>0\; \forall\; \alpha\in(0,1)\cup(1,\overline\alpha) \text{ for some } \overline\alpha\geq 1\\
<0\; \forall\; \alpha\in(\overline\alpha,\infty).
\end{cases}
\end{align}
then $\forall~\kappa\in(0,1), $
\begin{align}\label{fulfill 1}
\hspace{-1.2cm}\inf_{\alpha\geq \kappa} \frac{\alpha B_\alpha}{\alpha-1}&=\inf\left\{\lim_{\alpha \rightarrow \kappa} \frac{\alpha B_\alpha}{\alpha-1}, \lim_{\alpha \rightarrow \infty} \frac{\alpha B_\alpha}{\alpha-1}\right\} <\lim_{\alpha \rightarrow\beta} \frac{\alpha B_\alpha}{\alpha-1}\quad \forall\; \beta\in(\kappa,\infty).
\end{align}
\end{enumerate}\vspace{0.2cm}

Recall from Eq. \eqref{eq:Walp_to_der} that
\begin{equation}
 \frac{d}{d\alpha}\frac{\alpha B_\alpha}{\alpha-1} = \frac{B_\alpha'}{(\alpha-1)^2} G(\alpha),
 \end{equation}
where $B_\alpha'>0$, and we have derived some properties of $G(\alpha)$ in Lemma \ref{zeros lemma}. In this proof, we apply Lemma \ref{zeros lemma} directly to consider the three scenarios detailed in Lemma \ref{zeros lemma}. 

First, consider the first statement of Lemma \ref{zeros lemma}. If $E(\beta_c-\beta_h)\tanh(\beta_cE/2)> 2$, then $\exists$ $0<t<1$ s.t.
\begin{align}\label{it 3 in inf proof}
\frac{d}{d\alpha}\frac{\alpha B_\alpha}{\alpha-1}
<0\; \forall\; \alpha\in(t,1)\cup(1,\infty)
\end{align}
then by continuity of $\frac{\alpha B_\alpha}{\alpha-1}$ in $\alpha$, we conclude that $\forall$ $\kappa$ satisfying $t<\kappa<1$
\begin{align}\label{inf eq inf 2}
\hspace{-1cm}\inf_{\alpha\geq \kappa} \frac{\alpha B_\alpha}{\alpha-1}=\lim_{\alpha \rightarrow \infty} \frac{\alpha B_\alpha}{\alpha-1}=\inf\left\{\lim_{\alpha \rightarrow \kappa} \frac{\alpha B_\alpha}{\alpha-1}, \lim_{\alpha \rightarrow \infty} \frac{\alpha B_\alpha}{\alpha-1}\right\} <\lim_{\alpha \rightarrow\beta} \frac{\alpha B_\alpha}{\alpha-1}\quad \forall\; \beta\in(\kappa,\infty).
\end{align}
Next, consider the second and third statements of Lemma \ref{zeros lemma} jointly, where $E(\beta_c-\beta_h)\tanh(\beta_cE/2)$ $\leq 2$. Note that both statements proved in Lemma \ref{zeros lemma} (namely, Eq.~\eqref{eq:G''1>0} and \eqref{eq:G''=0}) can be rewritten as the fact that there exists $\overline\alpha\geq 1$ s.t.
\begin{align}\label{con 1}
\frac{d}{d\alpha} \frac{\alpha B_\alpha}{\alpha-1}
\begin{cases}
>0\quad  \text{for } \alpha\in(0,1)\cup(1,\overline\alpha)\\
<0\quad \text{for } \alpha\in(\overline\alpha, \infty).
\end{cases}
\end{align}
In fact, the third statement is simply a special case of the second, where $\overline\alpha=1$. If Eq.~\eqref{con 1} holds, then $\forall~\kappa\in(0,1), $
\begin{align}
\hspace{-1.2cm}\inf_{\alpha\geq \kappa} \frac{\alpha B_\alpha}{\alpha-1}&=\inf\left\{\lim_{\alpha \rightarrow \kappa} \frac{\alpha B_\alpha}{\alpha-1}, \lim_{\alpha \rightarrow \infty} \frac{\alpha B_\alpha}{\alpha-1}\right\} <\lim_{\alpha \rightarrow\beta} \frac{\alpha B_\alpha}{\alpha-1}\quad \forall\; \beta\in(\kappa,\infty).
\end{align}
By setting $\tau=0$, we see that the statement of Lemma \ref{lim lim lem} is achieved.

Therefore, since we have analyzed all three cases stated in Lemma \ref{zeros lemma}, we conclude that there always exists $\nu\in [0,1)$ such that Eq.~\eqref{2 inf} will always be satisfied $\forall$ $\kappa\in (\nu,1)$.

\end{proof}

\subsubsection{Main results: evaluating the efficiency}\label{subsub:main}
In this section, we derive the efficiency of quasi-static heat engines in the nano /quantum regime. We first need to define the quantity
\be\label{def:sigma}
\Omega:=\min_{i\in \{1,\ldots,n\}} \frac{\bar E_i (\beta_c-\beta_h)}{1+e^{-\beta_c \bar E_i}},
\ee
where recall that $\bar E_i$ is the energy gap of the cold bath qubits, as described in Eq \eqref{eq:tensorprod_Hamiltonian} and the sentence right after it. Recall that $n$ denotes the number of qubits in the cold bath, where $n\in\mathbb{Z}^+$ is any positive integer.
Before stating the maximum efficiency, we will derive the efficiency as a function of $\bar\kappa$ defined in Lemma \ref{lemma:existence kappabar} (recall that this parameter is determined by the choice of $\varepsilon$). For simplicity, we will still consider the special case where $\bar E_i=E$ for all $i$ in Lemma \ref{lemma: efficiency as a function of kappa}, (i.e. all qubits of the cold bath are identical). Lemma \ref{lemma: efficiency as a function of kappa} shows us that under the condition of extracting near perfect work, one can choose $\varepsilon$ (and therefore $\kappabar$) such that a certain maximum efficiency value is achieved. The closer $\kappabar$ is to unity, the slower $\lim_{g\rightarrow 0^+}\Delta S/W$ converges to zero, and also the closer the efficiency is to the Carnot efficiency. 

Using this lemma, we prove the achievability of the Carnot efficiency which depends on $\Omega$. \textbf{This is the main result of our work, which is stated in Theorem \ref{th:EfficiencyRandomEnergyGaps}.} 

\begin{lemma}[Quasi-static efficiencies as a function of $\bar\kappa$]\label{lemma: efficiency as a function of kappa}
For any $n\in\mathbb{Z}^+$ number of qubits, consider quasi-static heat engines (Def. \ref{def:quasi static}) as a function of $\bar\kappa$ (defined in Lemma \ref{lemma:existence kappabar}) which extract near perfect work (Def. \ref{def:near perfect work}). For any $\kappa\in (0,\infty)\backslash \lbrace 1 \rbrace$, define
\begin{equation}\label{eq:def_gamma}
\gamma(\kappa):=\frac{\kappa B_{\kappa}}{\kappa-1}
\end{equation}
where $B_{\kappa}$ is defined in Eq.~\eqref{eq:Falp}, while $\gamma(1)$ and $\gamma(\infty)$ are defined by taking the limits $\kappa\rightarrow 1, \infty$ respectively.\\ 
If $\Omega\leq 1$ (see Eq. \eqref{def:sigma}):
\begin{itemize}
\item[1)] There exists $\nu\in [0,1)$ such that for any $\bar\kappa\in (\nu,1]$  (and $\lim_{g\rightarrow 0^+}(\varepsilon \ln \varepsilon)/g=0$ if $\kappabar=1$), the maximum efficiency is
\begin{align}\label{echi eff}
\eta^{-1}(\bar\kappa)=1+\frac{\beta_h}{\beta_c-\beta_h}\frac{\gamma(1)}{\gamma(\bar\kappa)}+\bo(f(g))+\bo(g)+\bo(\varepsilon),
\end{align}
where $\gamma(1)\geq \gamma({\bar\kappa})$ with equality iff ${\bar\kappa}=1$ and $\lim_{g\rightarrow 0+}f(g)=0$.
\item[2)]
The corresponding amount of work extracted is
\begin{equation}\label{eq:W ext kappa bar}
W_\textup{ext}(\bar\kappa)=g\frac{n}{\beta_h}\left[\gamma({\bar\kappa}) +\bo\left(f(g)\right)\right].
\end{equation}

\end{itemize}
If $\Omega>1$:
\begin{itemize}
\item [1)] There exists $\nu'\in [0,1)$ such that for any $\bar\kappa\in (\nu',1]$ (and $\lim_{g\rightarrow 0^+}(\varepsilon \ln \varepsilon)/g=0$ if $\kappabar=1$), the maximum efficiency is
\begin{align}\label{echi eff 2}
\eta^{-1}(\bar\kappa)=1+\frac{\beta_h}{\beta_c-\beta_h}\frac{\gamma(1)}{\gamma(\infty)}+\bo(f(g))+\bo(g)+\bo(\varepsilon),
\end{align}
where $\gamma(1)< \gamma(\infty)$.
\item[2)]
The corresponding amount of work extracted is
\begin{equation}\label{eq:W ext for Omega>1}
W_\textup{ext}(\bar\kappa)=g\frac{n}{\beta_h}\left[\gamma(\infty) +\bo\left(f(g)\right)\right]
\end{equation}
\end{itemize}
\end{lemma}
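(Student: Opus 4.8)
The plan is to feed the two leading-order expansions — of $W_\textup{ext}$ from Lemma~\ref{lemma: bar gamma} and of $\Delta C$ — into the efficiency formula $\eta^{-1} = 1-\varepsilon + \Delta C/W_\textup{ext}$ of Eq.~\eqref{eq:eff explicit function of cold bath}, and then to identify which endpoint of $[\bar\kappa,\infty]$ realizes the infimum $\inf_{\alpha\geq\bar\kappa}\tfrac{\alpha B_\alpha}{\alpha-1}$ according to the sign of $\Omega-1$. First I would record the two expansions. Corollary~\ref{cor:specialid} together with Lemma~\ref{lem:Falphap} gives $\Delta C = g\,n\,{\rm var}(\hat H_\onecold)_{\beta_c} + \bo(g^2)$, and the same lemma identifies ${\rm var}(\hat H_\onecold)_{\beta_c} = \gamma(1)/(\beta_c-\beta_h)$ with $\gamma(1)=\lim_{\alpha\to1}\tfrac{\alpha B_\alpha}{\alpha-1}$. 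Lemma~\ref{lemma: bar gamma} gives $\beta_h W_\textup{ext} = g\,[\,n\inf_{\alpha\geq\bar\kappa}\gamma(\alpha) + \bo(f(g))\,]$. Dividing, the factor $g$ cancels and, since $\inf_{\alpha\geq\bar\kappa}\gamma(\alpha)$ is a positive constant independent of $g$ (by Eq.~\eqref{eq:alpha B_alpha/(1-alpha) positive}), I obtain
\[
\eta^{-1} = 1 + \frac{\beta_h}{\beta_c-\beta_h}\,\frac{\gamma(1)}{\inf_{\alpha\geq\bar\kappa}\gamma(\alpha)} + \bo(\varepsilon) + \bo(g) + \bo(f(g)).
\]
Everything then reduces to evaluating the infimum, and in particular to comparing the two candidate endpoints supplied by Lemma~\ref{lim lim lem}.

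The crucial computation is the explicit evaluation of those endpoints for qubits. Using the closed form of $B_\alpha$ in Eq.~\eqref{eq:Falp}, the limit $\alpha\to\infty$ yields $\gamma(\infty)=\tfrac{E e^{-\beta_c E}}{1+e^{-\beta_c E}}$ (the prefactor $\tfrac{\alpha}{\alpha-1}\to1$ and only the dominant exponentials survive), while $\gamma(1)=(\beta_c-\beta_h)\,{\rm var}(\hat H_\onecold)_{\beta_c}=(\beta_c-\beta_h)E^2\tfrac{e^{-\beta_c E}}{(1+e^{-\beta_c E})^2}$. Taking the ratio collapses almost everything and gives the key identity
\[
\frac{\gamma(1)}{\gamma(\infty)} = \frac{(\beta_c-\beta_h)E}{1+e^{-\beta_c E}} = \Omega,
\]
which is exactly Eq.~\eqref{def:sigma} in the identical-qubit case. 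This is the bridge between the spectral quantity $\Omega$ and the geometry of $\gamma$: $\Omega\leq1\iff\gamma(1)\leq\gamma(\infty)$ and $\Omega>1\iff\gamma(1)>\gamma(\infty)$.

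Finally I would apply Lemma~\ref{lim lim lem}, which says that for $\bar\kappa$ in a suitable interval the infimum equals $\inf\{\gamma(\bar\kappa),\gamma(\infty)\}$. For $\Omega\leq1$, I note $E(\beta_c-\beta_h)\tanh(\beta_c E/2)=\Omega\,(1-e^{-\beta_c E})<2$, so Lemma~\ref{lim lim lem} applies with $\nu=0$ and $\gamma$ is increasing on $(0,1]$; hence $\gamma(\bar\kappa)\leq\gamma(1)\leq\gamma(\infty)$, the infimum is $\gamma(\bar\kappa)$, and $\gamma(1)\geq\gamma(\bar\kappa)$ with equality iff $\bar\kappa=1$. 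Substituting into the displayed formulas yields Eqs.~\eqref{echi eff} and~\eqref{eq:W ext kappa bar}. For $\Omega>1$, we have $\gamma(1)>\gamma(\infty)$, so by continuity of $\gamma$ there is $\nu'<1$ with $\gamma(\bar\kappa)>\gamma(\infty)$ for all $\bar\kappa\in(\nu',1]$; taking $\nu'$ at least as large as the $\nu$ of Lemma~\ref{lim lim lem} forces the infimum to be $\gamma(\infty)$, giving Eqs.~\eqref{echi eff 2} and~\eqref{eq:W ext for Omega>1}.

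The main obstacle I expect is the boundary bookkeeping rather than any single calculation: one must check that $\Omega\leq1$ really lands inside the regime where Lemma~\ref{lim lim lem} delivers $\nu=0$ and monotone-increasing $\gamma$ up to $\alpha=1$ (so that $\gamma(\bar\kappa)$ genuinely beats $\gamma(\infty)$), and that for $\Omega>1$ the continuity threshold $\nu'$ can be chosen compatibly with the $\nu$ coming from that lemma, so that $\bar\kappa$ and $\infty$ are the \emph{only} two competitors for the infimum. The degenerate case $\Omega=1$ (where $\gamma(1)=\gamma(\infty)$) is absorbed into the $\Omega\leq1$ branch because $\gamma(\bar\kappa)<\gamma(1)=\gamma(\infty)$ strictly for $\bar\kappa<1$.
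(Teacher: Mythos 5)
Your proposal is correct and follows essentially the same route as the paper's proof: expand $\Delta C$ and $W_\textup{ext}$ (via Lemma~\ref{lemma: bar gamma}) to leading order in $g$, compute $\gamma(1)$ and $\gamma(\infty)$ explicitly from Eq.~\eqref{eq:Falp}, identify $\gamma(1)/\gamma(\infty)=\Omega$, and invoke Lemma~\ref{lim lim lem} to decide which endpoint of $[\bar\kappa,\infty]$ attains the infimum, with the $\Omega>1$ case handled by a continuity choice of $\nu'$ exactly as in the paper. Your explicit check that $\Omega\leq 1$ forces $E(\beta_c-\beta_h)\tanh(\beta_c E/2)=\Omega\left(1-e^{-\beta_c E}\right)<2$, so that Lemma~\ref{lim lim lem} applies with $\nu=0$ and $\gamma$ is increasing on $(0,1]$, is a welcome tightening of a step the paper leaves implicit when it asserts $\gamma(\infty)\geq\gamma(1)>\gamma(\bar\kappa)$.
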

\begin{proof}
Firstly, let us begin by deriving the explicit form for $\gamma (1)$ and $\gamma (\infty)$:
\begin{align}\label{eq:gamma1}
\gamma (1) &= \lim_{\alpha\rightarrow 1} \frac{\alpha}{\alpha-1} B_\alpha = \lim_{\alpha\rightarrow 1} B_\alpha + \alpha B_\alpha' =  \frac{E^2 (\beta_c-\beta_h)}{(1+e^{\beta_c E})^2} e^{\beta_c E},
\end{align}
where we have made use of the L'H\^ospital rule. For $\alpha\rightarrow\infty$, since
\begin{align*}
\lim_{\alpha\rightarrow\infty} B_\alpha &= \lim_{\alpha\rightarrow\infty} \frac{E}{1+e^{\beta_c E}} \frac{e^{\beta_h E}-e^{\beta_c E}e^{-\alpha (\beta_c-\beta_h)E}}{e^{\beta_h E}+e^{-\alpha (\beta_c-\beta_h) E}}= \frac{E}{1+e^{\beta_c E}},
\end{align*}
therefore we have
\begin{align}\label{eq:gammainf}
\gamma (\infty) &= \lim_{\alpha\rightarrow 1} \left(1+\frac{1}{\alpha-1}\right)\cdot B_\alpha = \frac{E}{1+e^{\beta_c E}}.
\end{align}
By Lemma \ref{lim lim lem}, we know that the infimum of $\gamma (\alpha)$ for $\alpha\in[\bar\kappa,\infty)$ and $\bar\kappa\in(\nu,1]$ is either at $\alpha=\bar\kappa$ or $\alpha\rightarrow\infty$. Therefore, if we take the ratio of Eqs. \eqref{eq:gamma1} and \eqref{eq:gammainf} to be
\begin{equation}\label{gam 1 to gam inf <1}
\frac{\gamma(1)}{\gamma(\infty)} = \frac{E (\beta_c-\beta_h)}{1+e^{-\beta_c E}}=\Omega \leq 1,
\end{equation}
then $\gamma (\infty) \geq \gamma (1)> \gamma(\bar\kappa)$,  therefore the infimum of $\gamma (\alpha)$ for $\alpha\in[\bar\kappa,\infty)$ and $\bar\kappa\in(\nu,1]$ has to be obtained at $\alpha=\bar\kappa$. Taking this into account and using the condition which is equivalent to that of near perfect work in Eq. \eqref{eq:alternative equiv to n.p.w. def}, we can use Lemma \ref{lim lim lem}, to calculate the amount of work extracted:
\begin{equation}
W_\textup{ext}=\inf_{\alpha\geq 0} W_\alpha=g\cdot \left[\inf_{\alpha> {\bar\kappa}}\frac{n }{\beta_h} \gamma ({\bar\kappa}) + \bo\left(f(g)\right)\right]=g\frac{n}{\beta_h}\left[\gamma({\bar\kappa}) +\bo\left(f(g)\right)\right],
\end{equation}
where $\lim_{g\rightarrow 0+}f(g)=0$. On the other hand, we can calculate $\Delta C$, which is the change of average energy in the cold bath system, (recall this is done by Taylor expansion around $g=0$)
\begin{equation}\label{delta C small g}
\Delta C=n\left(\la E^2\ra_{\beta_c}-\la E\ra^2_{\beta_c}\right) g+\bo\left(g^2\right)=\frac{n\gamma(1)}{\beta_c-\beta_h} g+\bo\left(g^2\right).
\end{equation}
Using Eq. \eqref{eq:Delta W def as average}, we have $\Delta W=(1-\varepsilon)W_\textup{ext}$. The (inverse) efficiency, according to the definition \eqref{eq:efficiency}, is thus
\begin{align}
\eta^{-1}(\bar\kappa)&=1+\frac{\Delta C}{W_\textup{ext}}-\varepsilon=1+\frac{n\gamma(1)/(\beta_c-\beta_h)g+\bo\left(g^2\right)}{n\gamma(\bar\kappa)g/\beta_h +\bo\left(g f(g)\right)}-\varepsilon\\
&=1+\frac{\beta_h}{(\beta_c-\beta_h)}\frac{\gamma(1)}{\gamma(\bar\kappa)}+\bo(f(g))+\bo(g)+\bo(\varepsilon),
\end{align}
where we have used $\lim_{g\rightarrow 0^+} f(g)=0$ which is proven in Lemma \ref{lemma: bar gamma}. We will now investigate the efficiency when $\Omega>1$ is satisfied. Using $\Omega>1$ and Eq. \eqref{gam 1 to gam inf <1}, we have that $\gamma(\infty)< \gamma(1).$ Thus from Lemma \ref{lim lim lem}, due to continuity in $\bar\kappa$ of $\gamma(\bar\kappa)$ 
we conclude that there exists a $\nu'\in[0,1)$ such that for any $\bar\kappa\in(\nu',1]$, 
\be
\inf_{\alpha\geq \bar\kappa} \gamma(\alpha)=\gamma(\infty).
\ee
Therefore, since we are considering near perfect work, Eq. \eqref{eq:alternative equiv to n.p.w. def} holds and we can use Lemma \ref{lemma: bar gamma} to calculate the amount of work extracted
\begin{equation}
W_\textup{ext}=\inf_{\alpha\geq 0} W_\alpha=g\cdot \left[\inf_{\alpha> {\bar\kappa}}\frac{n }{\beta_h} \gamma ({\bar\kappa}) + f(g)\right]=g\frac{n}{\beta_h}\left[\gamma(\infty) +\frac{\beta_h}{n}f(g)\right],
\end{equation}
where $\lim_{g\rightarrow 0+}f(g)=0$. Thus using the definition of inverse efficiency (Eq. \eqref{eq:efficiency}), together with Eq. \eqref{delta C small g}, we have
\begin{align}\label{eq:reducedCarnotLem14}
\eta^{-1}(\bar\kappa)&=1+\frac{\Delta C}{W_\textup{ext}}-\varepsilon=1+\frac{n\gamma(1)/(\beta_c-\beta_h)g+\bo\left(g^2\right)}{n\gamma(\infty)g/\beta_h +\bo\left(g f(g)\right)}-\varepsilon\\
&=1+\frac{\beta_h}{(\beta_c-\beta_h)}\frac{\gamma(1)}{\gamma(\infty)}+\bo(f(g))+\bo(g)+\bo(\varepsilon),
\end{align}
where we have used $\lim_{g\rightarrow 0^+} f(g)=0$ which is proven in Lemma \ref{lemma: bar gamma}.
\end{proof}
We will now use Lemma \ref{lemma: efficiency as a function of kappa} to conclude our main result of this letter.
\begin{lemma}\label{Quantum/Nano heat engine efficiency} Consider the case of near perfect work (Def. \eqref{def:near perfect work}) and all cold bath qubits are identical (i.e. $\bar E_i=E$ for $i=1,\ldots,n$), then:
\begin{itemize}
\item [1)]
If $\Omega\leq 1$ (see Eq. \eqref{def:sigma}) the optimal achievable efficiency $\eta_\textup{max}$ (see Eq. \eqref{eq:max eff general}) is the Carnot efficiency:
\be
\eta_\textup{max}=\left(1+\frac{\beta_h}{\beta_c-\beta_h}\right)^{-1}
\ee
What is more, this efficiency is achieved for quasi-static heat engines, i.e. $\eta_\textup{max}=\eta_\textup{max}^\textup{stat}$ (see Eq. \eqref{def:quasi static eff nano}).
\item [2)]
If $\Omega>1$ and the heat engine is quasi-static, then the optimal achievable efficiency is (see Eq. \eqref{def:quasi static eff nano})
\be\label{eq:reduced eff}
\eta_\textup{max}^\textup{stat}=\left(1+\frac{\beta_h}{\beta_c-\beta_h}\Omega\right)^{-1}.
\ee
\item [3)] If $\Omega>1$ the maximum achievable efficiency $\eta_\textup{max}$ (see Eq. \eqref{eq:max eff general}), is strictly less that the Carnot efficiency for quasi-static heat engines.
\end{itemize}
\end{lemma}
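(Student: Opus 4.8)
The plan is to read all three claims off Lemma \ref{lemma: efficiency as a function of kappa}, which already expresses the quasi-static inverse efficiency $\eta^{-1}(\bar\kappa)$ in terms of the ratio $\gamma(1)/\gamma(\bar\kappa)$ (when $\Omega\le 1$) or $\gamma(1)/\gamma(\infty)$ (when $\Omega>1$), and to combine this with the upper bound $\eta^\textup{nano}(\rho^1_\cold)\le\eta^\textup{mac}(\rho^1_\cold)$ of Eq.~\eqref{eq:std eff up bound gen eff} together with the macroscopic result (Lemma \ref{lem:cannot do better than carnot with porb failuer}) that $\eta^\textup{mac}$ reaches Carnot \emph{only} quasi-statically. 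The one computation doing real work is the explicit evaluation $\gamma(1)/\gamma(\infty)=E(\beta_c-\beta_h)/(1+e^{-\beta_c E})=\Omega$ already recorded in Eq.~\eqref{gam 1 to gam inf <1}, valid here because for identical qubits $\Omega$ collapses to this single-gap expression.

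For Part 1 ($\Omega\le 1$), Lemma \ref{lemma: efficiency as a function of kappa} gives $\eta^{-1}(\bar\kappa)=1+\frac{\beta_h}{\beta_c-\beta_h}\frac{\gamma(1)}{\gamma(\bar\kappa)}+\bo(f(g))+\bo(g)+\bo(\varepsilon)$ with $\gamma(1)\ge\gamma(\bar\kappa)$ and equality iff $\bar\kappa=1$. Hence the inverse efficiency is minimized by driving $\bar\kappa\to 1$, collapsing the ratio to unity and yielding the Carnot value $1+\frac{\beta_h}{\beta_c-\beta_h}$ as $g\to 0^+$. To show achievability I would exhibit a failure probability with $\bar\kappa=1$ that still delivers near perfect work, i.e. satisfies $\lim_{g\to 0^+}(\varepsilon\ln\varepsilon)/g=0$; the choice $\varepsilon(g)=g/(\ln(1/g))^2$ works, since Lemma \ref{lemma:existence kappabar} gives $\bar\kappa=1$ with $\sigma=0$ and one checks directly that $(\varepsilon\ln\varepsilon)/g\sim -1/\ln(1/g)\to 0$. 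Finally, $\eta^\textup{nano}\le\eta^\textup{mac}$ caps the efficiency at Carnot, and since $\eta^\textup{mac}$ attains Carnot only as $\beta_f\to\beta_c$, the optimum is forced to be quasi-static, establishing $\eta_\textup{max}=\eta_\textup{max}^\textup{stat}$ (see Eq.~\eqref{def:quasi static eff nano}).

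For Part 2 ($\Omega>1$, quasi-static), Lemma \ref{lim lim lem} places the infimum defining $W_\textup{ext}$ at $\alpha\to\infty$ for every admissible $\bar\kappa\in(\nu',1]$, so $\eta^{-1}(\bar\kappa)=1+\frac{\beta_h}{\beta_c-\beta_h}\frac{\gamma(1)}{\gamma(\infty)}+\bo(f(g))+\bo(g)+\bo(\varepsilon)$, independent of $\bar\kappa$ to leading order. Substituting $\gamma(1)/\gamma(\infty)=\Omega$ and letting $g\to 0^+$ yields $\eta_\textup{max}^\textup{stat}=(1+\frac{\beta_h}{\beta_c-\beta_h}\Omega)^{-1}$, the reduced efficiency \eqref{eq:reduced eff}. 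For Part 3 I would combine the two regimes: by the upper bound any state approaching Carnot must be quasi-static, yet Part 2 shows the quasi-static efficiency equals $(1+\frac{\beta_h}{\beta_c-\beta_h}\Omega)^{-1}$, which is \emph{strictly} below Carnot precisely because $\Omega>1$; meanwhile for final temperatures bounded away from $\beta_c$ one has $\eta^\textup{nano}\le\eta^\textup{mac}(\beta_f)<$ Carnot. A short argument splitting the cold-bath final temperatures into a neighborhood of $\beta_c$ (where $\eta^\textup{nano}$ sits near the sub-Carnot quasi-static value) and its complement (where $\eta^\textup{mac}$ is bounded away from Carnot) then forces the global supremum strictly below Carnot.

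I expect the delicate point to be Part 3: upgrading the quasi-static computation and the bound $\eta^\textup{nano}\le\eta^\textup{mac}$ into a \emph{strict} global inequality $\eta_\textup{max}<$ Carnot, because the macroscopic bound alone only gives ``$\le$'' and becomes tight exactly in the quasi-static limit. The resolution is that the nanoscopic efficiency detaches from the macroscopic one precisely in that limit — dropping discontinuously from Carnot to the $\Omega$-reduced value — so the two estimates must be patched together over complementary ranges of $\beta_f$ rather than applied separately.
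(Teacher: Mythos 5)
Your proposal is correct and takes essentially the same route as the paper's own proof: it combines the macroscopic Carnot upper bound of Lemma~\ref{lem:cannot do better than carnot with porb failuer} (through Eq.~\eqref{eq:std eff up bound gen eff}) with the quasi-static $\bar\kappa$-dependent efficiencies of Lemma~\ref{lemma: efficiency as a function of kappa} and the identity $\gamma(1)/\gamma(\infty)=\Omega$ from Eq.~\eqref{gam 1 to gam inf <1}. The only deviations are cosmetic and harmless: for Part~1 you exhibit an explicit $\bar\kappa=1$ witness $\varepsilon(g)=g/(\ln(1/g))^{2}$ (which indeed satisfies $\lim_{g\rightarrow 0^+}(\varepsilon\ln\varepsilon)/g=0$) where the paper instead takes $\bar\kappa<1$ arbitrarily close to one and invokes continuity of $\gamma(\bar\kappa)$, and for Part~3 you spell out the patching of the two bounds over quasi-static versus non-quasi-static final temperatures, which the paper leaves implicit in its terse concluding argument.
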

\begin{proof}
In Lemma \ref{lem:cannot do better than carnot with porb failuer}, we found that the Carnot Efficiency is an upper bound for the efficiency when we are extracting near perfect work. 
We also found that Eq. \eqref{eq:alternative equiv to n.p.w. def} is satisfied iff we are extracting near perfect work. In Lemma \ref{lemma: efficiency as a function of kappa}, we derived the optimal achievable efficiency for quasi-static heat engines as a function of $\bar\kappa$ when  Eq. \eqref{eq:alternative equiv to n.p.w. def} is satisfied. By choosing $\kappabar<1$ arbitrarily close to one, if $\Omega\leq 1$ is satisfied, we will thus achieve an efficiency arbitrarily close to the Carnot efficiency. Thus since the upper bound is equal to the lower bound, we prove part 1) of the Theorem. Part 2) of the Theorem follows from setting $\bar\kappa=1$ in Lemma \ref{lem:cannot do better than carnot with porb failuer} when $\Omega>1$ is satisfied. 
\end{proof}

By making use of Lemma \ref{lemma: efficiency as a function of kappa}, one can generalize Lemma \ref{Quantum/Nano heat engine efficiency} to consider the more general case stated in A5 (at the beginning of Section \ref{subsect:evaluatingWextquasi}) where the cold bath still consists of qubits, however the energy gaps of the qubits can be arbitrary. For convenience, we re-write the general cold bath Hamiltonian here: for a set of variables $ \bar E_1>0,\cdots, \bar E_n>0 $,
\begin{equation}\label{eq:general cold bath again}
\hat H_\cold=\sum_{k=1}^n \id^{\otimes (k-1)}\otimes \hat H_\onecold^k\otimes\id^{\otimes (n-k)},\quad \textup{where}\quad\hat H_\onecold^k=\bar E_k|E\ra\la E|,
\end{equation}
Under the more general form of the cold bath Eq. \eqref{eq:general cold bath again}, we have the following theorem.
\begin{theorem}\label{th:EfficiencyRandomEnergyGaps}[Quantum/Nano heat engine efficiency]
 Consider a quasi-static heat engine (Def. \ref{def:quasi static}) which is extracting near perfect work (Def. \eqref{def:near perfect work}), when the cold bath consists of multiple qubits with energy gaps $\lbrace \bar E_i\rbrace_{i=1}^n$. 
\begin{itemize}
\item [1)]
If $\Omega\leq 1$ (see Eq. \eqref{def:sigma}) the optimal achievable efficiency $\eta_\textup{max}^\textup{stat}$ (see Eq. \eqref{def:quasi static eff nano}) is the Carnot efficiency:
\be
\eta_\textup{max}^\textup{stat}=\eta_C=\left(1+\frac{\beta_h}{\beta_c-\beta_h}\right)^{-1}
\ee
\item [2)]
If $\Omega>1$, then the maximum achievable efficiency is
\be
\eta_\textup{max}^\textup{stat}=\left(1+\frac{\beta_h}{\beta_c-\beta_h}\Omega\right)^{-1},
\ee

which is strictly less than the Carnot efficiency $\eta_C$.
\item[3)] Allowing for correlations between the final state of the battery and cold bath cannot improve the efficiencies achieved in 1) and 2) above.
\end{itemize}

\end{theorem}
\begin{proof}
1) is relatively simple to prove: as long as there exists a qubit with energy $\bar E_i$ such that $\frac{\bar E_i(\beta_c-\beta_h)}{1+e^{-\beta_h \bar E_i}}\leq 1$, one way to achieve Carnot efficiency is to simply disregard the rest of the cold bath, and act only on such qubits. The result is a simple application of 1) in Lemma \ref{Quantum/Nano heat engine efficiency}. This strategy might not be optimal in terms of work extracted, but it is sufficient for our proof.
\begin{figure}[h!]
	\centering 
        \begin{minipage}{0.32\textwidth}
                \includegraphics[width=\textwidth]{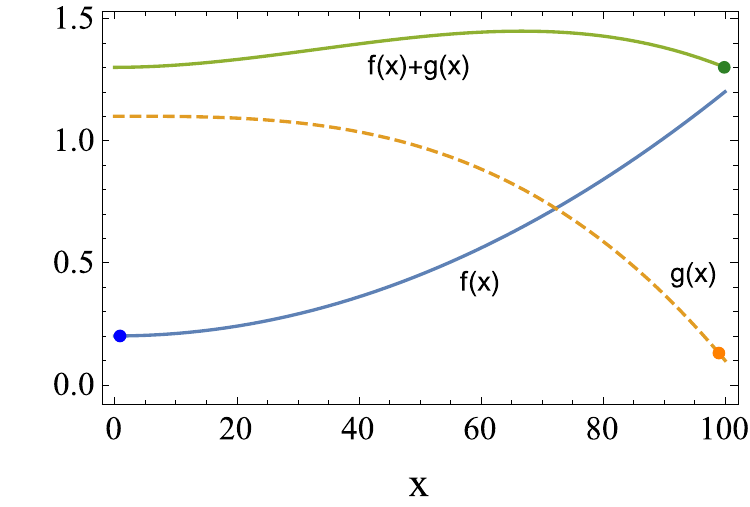}
        \end{minipage}%
        ~ 
        \begin{minipage}{0.32\textwidth}
                \includegraphics[width=\textwidth]{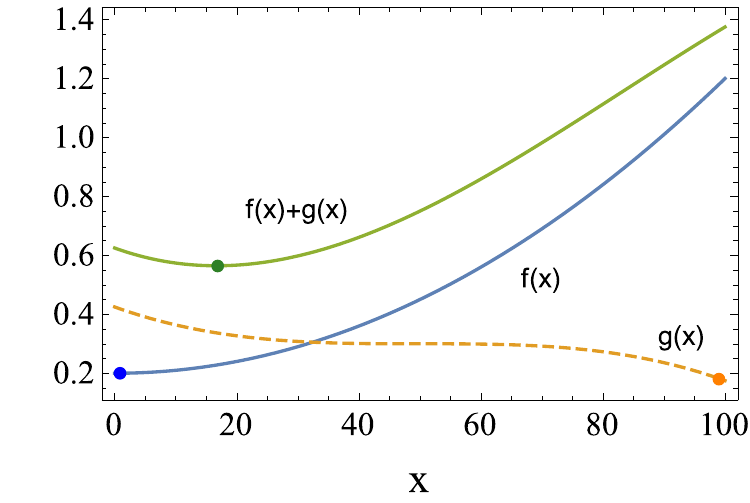}
        \end{minipage}
        ~ 
        \begin{minipage}{0.32\textwidth}
                \includegraphics[width=\textwidth]{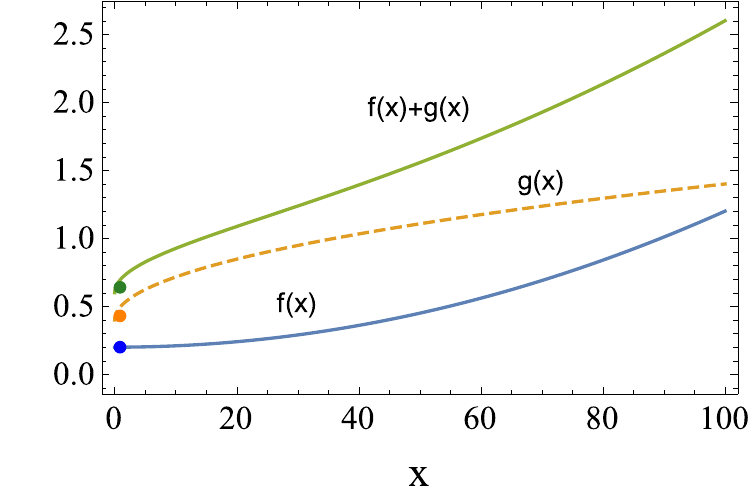}
        \end{minipage}
        \caption{Illustration of the minima of two individual functions $f(x),g(x)$ and minima of $f(x)+g(x)$.}\label{fig:5}
\end{figure}

For 2) suppose that $\Omega>1$. From the Eq. \eqref{def:sigma}, we conclude that for all $\bar E_i$ where $1\leq i\leq n$, $\Omega_i:=\frac{\bar E_i(\beta_c-\beta_h)}{1+e^{-\beta_h \bar E_i}}> 1$. By Lemma \ref{lemma: efficiency as a function of kappa}, we see that this implies that the work extractable for all the individual qubits (which is an optimization problem over all $\alpha\geq 0$) is obtained at $\alpha\rightarrow\infty$. In general, considering the qubits collectively does not mean that the collective $W_\textup{ext}$ is additive. This is because the minima of two functions is not necessarily the minima of these individual functions added together, as illustrated in the l.h.s. and middle diagrams of Figure.~\ref{fig:5}. However, (as illustrated on r.h.s. diagram of Figure.~\ref{fig:5}), when all the functions have their minima at the same value, then the collective minima is also obtained at that value. 

Next, we show that no matter which subset of qubits $\mathcal{S}$ one picks, Carnot efficiency cannot be achieved. 
We begin by introducing the notation $\gamma_i(\alpha)$,
where $\gamma_i(\alpha)$ is defined similarly with $\gamma(\alpha)$ in Eq.~\eqref{eq:def_gamma} and \eqref{eq:Falp}, and the index $i$ indicates that $E$ is substituted by $\bar E_i$ in Eq.~\eqref{eq:Falp}. Furthermore, recall that from Eq.~\eqref{gam 1 to gam inf <1}, $\Omega_i>1$ is equivalent to $\gamma_i(1)>\gamma_i(\infty)$. Now, consider any subset of qubit indices $\mathcal{S}$, the amount of extractable work (as a function of $g$) is
\begin{equation}
W_\textup{ext}^\mathcal{S}= \frac{g}{\beta_h} \left[\sum_{i\in\mathcal{S}}\gamma_i(\infty) +f(g)\right],
\end{equation}
where $\lim_{g\rightarrow 0+}f(g)=0$. 

On the other hand, we have that $\Delta C$ depends on the individual reduced qubit states, since there are no interaction terms in $\hat H_\cold$. Therefore, similar to Eq.~\eqref{delta C small g},
\begin{equation}
\Delta C^\mathcal{S}=\frac{g}{\beta_c-\beta_h} \sum_{i\in\mathcal{S}}\gamma_i(1)+\bo\left(g^2\right).
\end{equation}
Following the same proof in Eq.~\eqref{eq:reducedCarnotLem14} Lemma \ref{lemma: efficiency as a function of kappa}, 
\begin{align}
\eta^{-1}(\bar\kappa)=1+\frac{\Delta C}{W_\textup{ext}}-\varepsilon=1+\frac{\beta_h}{\beta_c-\beta_h}\frac{\sum_{i\in\mathcal{S}}\gamma_i(1)}{\sum_{i\in\mathcal{S}}\gamma_i(\infty)}+\bo(g)+\bo(\varepsilon).
\end{align}
As we have observed before, the inverse of the Carnot efficiency $\eta_C^{-1} = 1+\frac{\beta_h}{\beta_c-\beta_h}$. Furthermore, notice that by Eq.~\eqref{gam 1 to gam inf <1}, the condition $\Omega_i >1$ implies that $\gamma_i(1)>\gamma_i(\infty)$. Since $\Omega_i>1$ is true  for all $1\leq i\leq n$, therefore $\frac{\sum_{i\in\mathcal{S}}\gamma_i(1)}{\sum_{i\in\mathcal{S}}\gamma_i(\infty)}>1$. 

Lastly, part 3) is proven in Section \ref{Extensions to the setup}.
\end{proof}

Suppose $n$ is large. Then since we have a spectrum which looks like a quasi-continuum: the full range of the spectrum is very large, compared to the individual energy gaps. If one expects that in such a case, baths are of high temperature (small values of $\beta$), then the effects of quantization should give us the classical observations of being able to achieve the Carnot efficiency always. It can be seen, that for $E_\textup{min} = \displaystyle\min_{i\in\lbrace 1,\cdots, n\rbrace} \bar E_i$, if the quantities $\beta_c E_\textup{min}, \beta_h E_\textup{min} \ll 1$, then
\begin{equation}
\Omega = \frac{E_\textup{min} (\beta_c-\beta_h)}{1+e^{-\beta_c E_\textup{min}}} \leq E_\textup{min} (\beta_c-\beta_h) \ll 1.
\end{equation}
Whenever $\Omega \leq 1$, we know that Carnot efficiency is achievable.
\subsection{Running the heat engine for many cycles quasi-statically}\label{Running the heat engine for many cycles quasi-statically}
We have so far proven that a heat engine can achieve the Carnot efficiency when $\Omega\leq 1$. However, as like with macroscopic heat engines, this can only be achieved when the heat engine runs quasi-statically. Macroscopic heat engines can then extract a finite amount of work by running the heat engine over many cycles (in fact, over any infinite number of cycles if they want to obtain the Carnot efficiency in order to run quasi-statically). The following lemma, shows that when $\Omega\leq 1$, a nano-scale heat engine with a machine that runs over infinitely many cycles can also achieve the Carnot efficiency, while extracting any finite amount of work $W$ with vanishing entropy increase in the battery.

\begin{figure}[h!]
	\centering 
        \begin{minipage}{0.5\textwidth}
                \includegraphics[width=\textwidth]{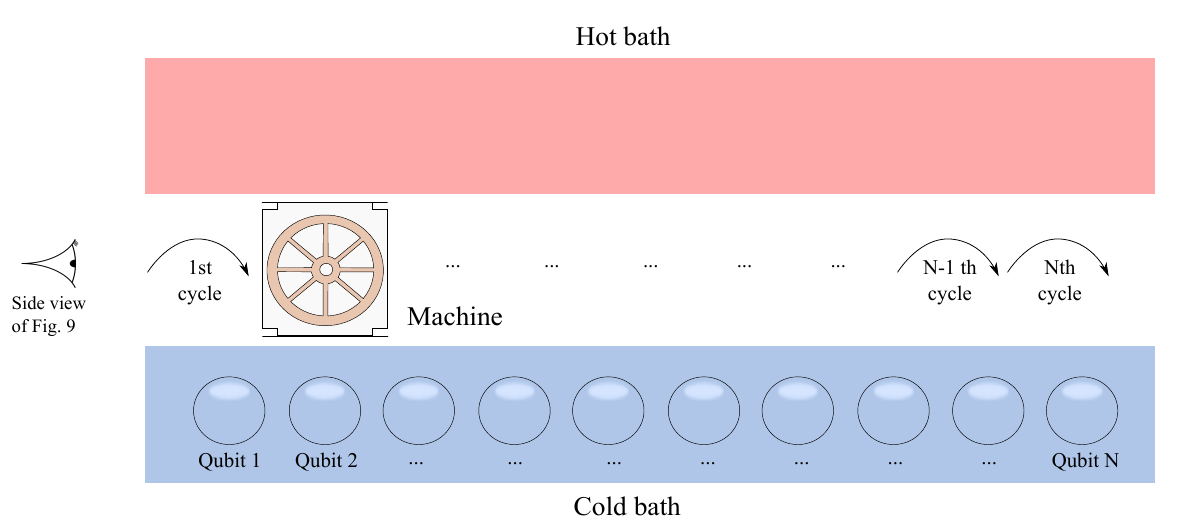}\caption{Depiction (top view) of a heat engine comprising of a hot bath, a cold bath consisting of $n$ identical qubits, a machine and a battery. In each cycle, the machine interacts specifically with one qubit from the cold bath, together with the hot bath and battery. After the end of one cycle, the machine is returned to its original state, and acts on a different qubit in the cold bath.}\label{fig:many cycles}
        \end{minipage}
        \hspace{0.2cm}
        \begin{minipage}{0.4\textwidth}
                \includegraphics[width=\textwidth]{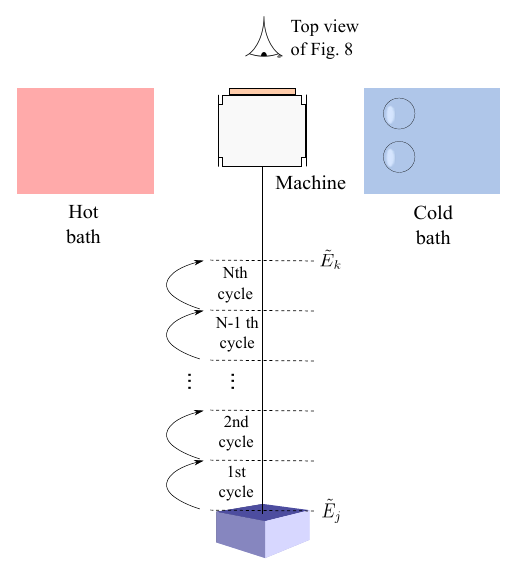}\caption{Side view of the heat engine. After each cycle of the machine, the battery, depicted here as a weight moves upward by a small amount. After $N$ machine cycles, it has been lifted from its original position $\ket{\tilde E_j}$ to a final state that has most of its weight on $\ket{\tilde E_k}$.}\label{fig:weight}
        \end{minipage}
       \end{figure}

For simplicity, we will work with the case in which the quasi-continuum battery has a part of its spectrum equal to that of at least $N$ qubits, each with an energy gap $W_\textup{ext}$. We work within this subspace. We will run a heat engine between a hot bath, cold bath using a machine which performs $N$ cyclic cycles. Let $\tilde E_j$ and $\tilde E_k$ be the smallest and largest energy eigenvalues within this subspace respectively. We let the initial state of the battery be 
\be 
\rho_\battery^0 =|\tilde E_j\rangle\!\langle \tilde E_j|,
\ee
$\hat H_\battery|\tilde E_j\rangle=\tilde E_j|\tilde E_j\rangle$ while we wish the final state of the battery to be of the form
\be 
\rho_\battery^1 =  r |\tilde E_k\rangle\!\langle\tilde E_k|+(1-r)\,\rho_\psi,
\ee
where $\hat H_\battery|\tilde E_k\rangle=\tilde E_k|\tilde E_k\rangle$, $\rho_\psi$ is some orthogonal state to $|\tilde E_k\rangle$ and the value of the probability $r$ is to be specified in the following lemma.
 We will define the amount of work extracted from the  machine for $N$ cycles
\be\label{def:work multi cyc}
W_\textup{cyc}:=\tilde E_k-\tilde E_j.
\ee
For simplicity, we will consider the case that the cold bath consists of $n$ identical qubits with $\Omega \leq 1$, and during each cycle the machine interacts with one qubit from the cold bath. The running of the heat engine is depicted in Fig. \ref{fig:many cycles} and \ref{fig:weight}.

\begin{corollary}\label{multi cycle lemma}[Many quasi-static heat engine cycles]
Let $W$ be the finite amount of work we wish to extract. Then for all $W>0$ and $\delta>0$ there exists an $n$ identical qubit cold bath (with $\Omega \leq 1$) and an $N\in \nn^+$ number of machine cycles with $n\geq N$ such that:
\begin{itemize}
\item[1)] $\eta_c\geq \eta\geq \eta_c-\delta,\quad$ where the efficiency $\eta$ is the efficiency per cycle and is defined by Eq. \eqref{eq:efficiency}, and $\eta_c=1-\beta_h/ \beta_c$ is the Carnot efficiency,\\
\item[2)] $W_\textup{cyc}\geq W-\delta$,\\
\item[3)] $S(\rho^0_\battery )=0$, $S(\rho^1_\battery )\leq \delta$, and\\
\item[4)] $r\geq 1-\delta$.
\end{itemize}
whats more, $\delta\rightarrow 0$ as $N\rightarrow +\infty$.
\end{corollary}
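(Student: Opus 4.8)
The plan is to reduce the many-cycle statement to the single-cycle result of Lemma~\ref{lemma: efficiency as a function of kappa} by exploiting the translational invariance of the quasi-continuum battery. Because the relevant battery subspace consists of $N$ equally spaced rungs of gap $W_\textup{ext}$, the single catalytic thermal operation that lifts the battery one rung with success probability $1-\varepsilon$ (the transition of Eqs.~\eqref{eq:battery initial state}--\eqref{eq:battery final state}, applied at an arbitrary rung) can be repeated identically in every cycle. In cycle $m$ the machine interacts with the $m$-th fresh cold-bath qubit---this is why we require $n\geq N$---extracting near perfect work $W_\textup{ext}$ while driving that qubit quasi-statically from $\beta_c$ to $\beta_f=\beta_c-g$ and leaving the machine invariant. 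Composing these identical stochastic maps, the rung index performs a memoryless walk (up one w.p.\ $1-\varepsilon$, stay w.p.\ $\varepsilon$), so the final battery occupies rung $k\in\{0,\dots,N\}$ with binomial weight $\binom{N}{k}(1-\varepsilon)^k\varepsilon^{N-k}$. Hence the probability of reaching the top rung $\ket{\tilde E_k}$ is $r=(1-\varepsilon)^N$ and $W_\textup{cyc}=\tilde E_k-\tilde E_j=N W_\textup{ext}$.

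The second step is to choose the free parameters so that all four requirements hold in a single limit. I would fix an identical-qubit cold bath with $\Omega\leq 1$ and select $\bar\kappa=1$ together with a realizing function such as $\varepsilon(g)=g/(\ln(1/g))^2$, which by the examples following Lemma~\ref{lemma:existence kappabar} gives $\bar\kappa=1$ and satisfies the near perfect work condition $\lim_{g\to 0^+}(\varepsilon\ln\varepsilon)/g=0$. With this choice Lemma~\ref{lemma: efficiency as a function of kappa}(1), applied per cycle (one qubit), yields $W_\textup{ext}=\frac{g}{\beta_h}[\gamma(1)+\soo(1)]$ and a per-cycle inverse efficiency tending to $1+\beta_h/(\beta_c-\beta_h)=\eta_c^{-1}$ as $g\to 0^+$. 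Since $W_\textup{ext}=\bo(g)\to 0$, extracting a target $W$ forces $N\approx W\beta_h/(g\gamma(1))\to\infty$; thus sending $g\to 0^+$ is equivalent to sending $N\to\infty$, and this one limit drives every $\delta$ to zero.

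The third step verifies the four claims in this limit. Claim~1 follows from Lemma~\ref{lemma: efficiency as a function of kappa} for the bound $\eta\geq\eta_c-\delta$, and from the Carnot upper bound of Lemma~\ref{lem:cannot do better than carnot with porb failuer} via Eq.~\eqref{eq:std eff up bound gen eff} for $\eta\leq\eta_c$. Claim~2 is immediate once $N\geq W\beta_h/(g\gamma(1))$, giving $W_\textup{cyc}=N W_\textup{ext}\geq W-\delta$. For Claims~3 and~4 the key estimate is that the expected number of failures stays negligible: here $N\varepsilon\approx W\beta_h/[\gamma(1)(\ln(1/g))^2]\to 0$, so Bernoulli's inequality gives $r=(1-\varepsilon)^N\geq 1-N\varepsilon\to 1$, which is Claim~4. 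Grouping the top rung against the remaining $N$ rungs and using concavity of entropy yields $S(\rho^1_\battery)\leq \htwo(r)+(1-r)\ln N$; since $\htwo(r)\to 0$ and $(1-r)\ln N\approx N\varepsilon\ln N\to 0$ (the polynomial smallness of $N\varepsilon$ beats the logarithm), Claim~3 follows, with $S(\rho^0_\battery)=0$ holding trivially.

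The main obstacle I anticipate is not the limiting arithmetic but the rigorous justification that $N$ successive cycles compose into genuinely independent binomial statistics. One must check that applying the single-cycle transition repeatedly---on an intermediate mixed, multi-rung battery state together with a fresh cold qubit, and with the same catalytic machine returned each time---remains a valid catalytic thermal operation that neither builds up spurious correlations nor degrades the catalyst across cycles. The translational invariance of the uniform ladder is what makes the per-rung operation reusable, but care is needed to confirm that the occupied-rung superpositions are correctly handled by the energy-block-diagonal formalism of Section~\ref{sub:nano}, and that the total error over $N$ cycles is controlled by the single quantity $N\varepsilon$. Resolving the tension between large $N$ (needed for Claim~2) and small accumulated failure $N\varepsilon$ (needed for Claims~3 and~4) is precisely what the near perfect work scaling $\varepsilon/g\to 0$ at $\bar\kappa=1$ secures.
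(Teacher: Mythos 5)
Your construction is essentially the paper's: apply the single-cycle result of Lemma~\ref{lemma: efficiency as a function of kappa} once per fresh cold qubit, take $N\approx\beta_h W/(\gamma(\bar\kappa)\,g)$ cycles so that $W_\textup{cyc}=N W_\textup{ext}\to W$, identify $r=(1-\varepsilon)^N$, and drive all four claims by the single limit $g\to 0^+$ (equivalently $N\to\infty$) using the near-perfect-work scaling $\varepsilon/g\to 0$. The one step you flag as unresolved---that $N$ sequential applications of a rung-translation-invariant lifting map compose into independent binomial statistics without building correlations or degrading the catalyst---is precisely where the paper's bookkeeping differs, and it dissolves the difficulty rather than confronting it: the battery subspace is chosen to have the spectrum of $N$ qubits each of gap $W_\textup{ext}$, with initial state $\ketbra{E_j}{E_j}^{\otimes N}$, and the single-cycle lemma is applied to each battery qubit separately (sequential in time, one cold qubit per cycle, but always acting on a fresh \emph{pure} battery qubit). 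The final battery state is then exactly the product state of Eq.~\eqref{eq:rho1 initial multi}, namely $\left[(1-\varepsilon)\ketbra{E_k}{E_k}+\varepsilon\ketbra{E_j}{E_j}\right]^{\otimes N}$, and your binomial rung distribution is just its coarse-graining onto total-energy sectors; no argument about re-applying the lemma to a mixed, multi-rung battery is ever needed. To make your proposal rigorous you should replace the ladder random-walk picture by this tensor-product decomposition. Your remaining deviations are benign: you fix $\bar\kappa=1$ via the explicit choice $\varepsilon(g)=g/(\ln(1/g))^2$, which is valid since $\varepsilon\ln\varepsilon/g\to 0$ (the proviso of Lemma~\ref{lemma: efficiency as a function of kappa} at $\bar\kappa=1$), whereas the paper takes $\bar\kappa\in(0,1)$ arbitrarily close to $1$, for which Eq.~\eqref{eq:eplnep/g} holds automatically and the efficiency deficit $\gamma(1)/\gamma(\bar\kappa)-1$ is made small by continuity; and for claim 3 the paper simply computes the product-state entropy exactly, $S(\rho_\battery^1)=N\htwo(\varepsilon)=\bo(\varepsilon\ln\varepsilon/g)\to 0$, which upper-bounds the entropy of your coarse-grained rung distribution and so supersedes your grouping estimate $\htwo(r)+(1-r)\ln N$. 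One small point you omit that the paper addresses: $N$ must be an integer, so $g$ is restricted to the discrete set $g=\beta_h W/(\gamma(\bar\kappa)N)$, and one must note that $g$ still decreases monotonically to zero as $N\to\infty$ so the quasi-static limit can be taken along this sequence.
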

\begin{proof}
Since in the qubit subspace, the spectrum is that of at least $N$ qubits, we can write the initial state in the form
\be 
\rho_\battery^0 =  \ketbra{E_j}{E_j}^{\otimes N},
\ee 
with $\hat H_\battery \ket{E_j}^{\otimes N}=\tilde E_j \ket{E_j}^{\otimes N}$. We can now apply the heat engine results of Lemma \ref{lemma: efficiency as a function of kappa} to the setup. Namely, we can apply the results of one cycle to each of the qubit subspaces of the battery in parallel. From Lemma \ref{lemma: efficiency as a function of kappa} we conclude that this can be achieved with an efficiency given by Eq. \eqref{echi eff} and extract an amount of work per qubit/cycle given by Eq. \eqref{eq:W ext kappa bar}. For simplicity, we will run the heat engine using one qubit of the cold bath at a time.
The final state of the battery is thus 
\be\label{eq:rho1 initial multi} 
\rho^1_\battery=\left[ (1-\varepsilon)\ketbra{E_k}{E_k}+\varepsilon \ketbra{E_j}{E_j}\right]^{\otimes N}. 
\ee
Noting that $|\tilde E_k\rangle\!\langle\tilde E_k|=\ketbra{E_k}{E_k}^{\otimes N}$ by definition, Eq. \eqref{eq:rho1 initial multi} can be written as
\be\label{eq:rho in top plus rest form multi}
\rho^1_\battery=(1-\varepsilon)^N|\tilde E_k\rangle\!\langle\tilde E_k|+\left[1-(1-\varepsilon)^N\right]\,\rho_\psi. 
\ee
with $\rho_\psi$ orthogonal to $|\tilde E_k\rangle$.
From Eq. \eqref{def:work multi cyc} it follows,
\be 
W_\textup{cyc}=N W_\textup{ext}=\frac{N g}{\beta_h}\left[\gamma({\bar\kappa}) +\bo\left(f(g)\right)\right],
\ee 
where in the last line we have used Eq. \eqref{eq:W ext kappa bar}. We now set
\be 
N=N(g)=\frac{\beta_h}{\gamma(\bar\kappa)}\frac{W}{g}
\ee
for all $g>0$ satisfying the constraint $N(g)\in \nn^+$. 
For any positive constant $\frac{\beta_h W}{\gamma (\bar\kappa)} > 0$, one can always consider the values of $\frac{\beta_h W}{\gamma (\bar\kappa)}>g> 0$ so that $N(g)$ is large.
This constraint imposes $g=\beta_h W/(\gamma(\bar\kappa)\, N)$, where $N$ has to be an integer. Therefore, $g$ now belongs to a subset of the positive real line, rather than the positive real line itself as previously. However, since $g$ monotonically decreases to zero as $N$ increases to infinity, 
we can still take the limit $g\rightarrow 0^+$ as before.
Thus achieving
\be 
W_\textup{cyc}=W+\bo\left(f(g)\right).
\ee
Since $\lim_{g\rightarrow 0^+} f(g)=0$, we conclude part 2) of Corollary \ref{multi cycle lemma}. For the entropy of the final state of the battery we have
\begin{align}\label{eq:S or rho final bat multi}
S(\rho_\battery^1) &= N S\left( (1-\varepsilon)\ketbra{E_k}{E_k}+\varepsilon \ketbra{E_j}{E_j}\right)=\frac{\beta_h W}{\gamma(\bar\kappa)}\frac{(1-\varepsilon)\ln (1-\varepsilon)+\varepsilon\ln \varepsilon}{g}=\bo\left(\frac{\varepsilon\ln \varepsilon}{g}\right).
\end{align}
As stated above the efficiency is given by Eq. \eqref{echi eff}, and thus we can always choose $\bar \kappa \in (0,1)$, and $g$ (recall $\epsilon\rightarrow 0^+$ as $g\rightarrow 0^+$) such that 1) in Corollary \ref{multi cycle lemma} is satisfied. Furthermore, recall from the proof of Lemma \ref{lemma: bar gamma} that
\be\label{eq:eplnep/g}
\lim_{g\rightarrow 0^+}\frac{\varepsilon\ln\varepsilon}{g}=0,
\ee
for all $\bar\kappa \in (0,1)$. Thus, from Eq. \eqref{eq:S or rho final bat multi} we conclude that 3) in Corollary \ref{multi cycle lemma}. We will now prove part 4) of the Corollary. From Eq. \eqref{eq:rho in top plus rest form multi} and part 4) of the Corollary, we can identify $r=(1-\varepsilon)^N$. We thus study the limit 
\be 
\lim_{g\rightarrow 0^+} (1-\varepsilon)^N=\left(\lim_{g\rightarrow 0^+} (1-\varepsilon)^{1/g}\right)^{\beta_h W/\gamma(\bar\kappa)}=\left(\lim_{g\rightarrow 0^+} \left(\underbrace{(1-\varepsilon)^{1/\varepsilon}}_{\rightarrow\, \me}\right)^{\varepsilon/g}\right)^{\beta_h W/\gamma(\bar\kappa)}=1,
\ee 
where going to the last line, we have used that fact that Eq. \eqref{eq:eplnep/g} implies that $\varepsilon/g\rightarrow 0$ as $g\rightarrow 0^+$. We thus conclude part 4) of the corollary.
\end{proof}
Thus by choosing $\delta>0$ sufficiently small in Corollary \ref{multi cycle lemma}, we can extract any finite amount of work with an arbitrarily small entropy contribution with an efficiency arbitrarily close to the Carnot efficiency as long as $\Omega\leq 1.$

\section{Extensions to the setup}\label{Extensions to the setup}
Arguably, one may think that the inability to always achieve the Carnot efficiency in the nano regime is due to some subtlety of our setup (even though we have shown that according to the standard free energy one can always achieve the Carnot efficiency with our setup). 
For such reasons, in the next few sections we show that even under more general conditions than those laid out in Section \ref{section:The setting}, one still cannot achieve the Carnot efficiency when $\Omega>1$. 

In Section \ref{Correlated final state of battery}, we show that allowing for correlations between the final state of the battery and cold bath (and/or the finite dimensional machine) does not allow us to achieve the Carnot efficiency. The main result is Theorem \ref{corrs dont help}.

In Section \ref{A more general final battery state}, we show that allowing for the battery to be \textit{any} state with trace distance $\varepsilon$ from $\ketbra{E_k}{E_k}_\battery$ cannot allow us to achieve the Carnot efficiency when $\Omega>1$. This shows that whenever we are unable to achieve the Carnot efficiency, it is not a artificial defect from an overly specified battery model. The main result is Theorem \ref{thm:general_batt}.

\subsection{Final correlations between battery, cold bath, and machine}\label{Correlated final state of battery}
In Section \ref{subsub:highcertainty}, we stated that the final state of the heat engine after tracing out the hot bath was of the form
\be\label{eq: old not correlate battery c bath}
\tr_\hot(\rho^1_\total)=\rho^1_\cold\otimes\rho^1_\textup{M}\otimes\rho^1_\textup{W}
\ee
where $\rho^1_\textup{W}=\varepsilon |E_j\ra\la E_j|_\textup{W}+(1-\varepsilon)|E_k\ra\la E_k |_\textup{W},$ i.e. the final state of the charged battery was a tensor product with the cold bath. We also demanded that the heat engine is cyclic i.e. that $\rho^1_\textup{M}=\rho^0_\textup{M}$. In this section, we show that if one allows for the final state of the battery, cold bath and machine to become correlated\footnote{Recall that the final state of the cold bath, machine and battery are already allowed to become correlated with the hot bath}, one still cannot achieve the Carnot efficiency when $\Omega>1$. That is to say, in this section we allow the final state to be
\be\label{eq:new correlate battery c bath}
\tr_\hot(\rho^1_\total)=\rho^1_{\cold\textup{M}\textup{W}}
\ee
with only two natural constrains, namely that our heat engine actually extracts work, i.e. that
\be\label{eq:only constrint}
\rho^1_\textup{W}=\varepsilon |E_j\ra\la E_j|_\textup{W}+(1-\varepsilon)|E_k\ra\la E_k |_\textup{W},
\ee
as before, and also that the heat engine is still cyclic, i.e.
\be\label{eq:only constrint 2}
\rho^1_\textup{M}=\rho^0_\textup{M}.
\ee
Throughout this section, (unless stated otherwise) we will write $\rho^1_{\cold\textup{M}\textup{W}}$ to refer to any generic tripartite quantum state on the cold bath, machine and battery satisfying Eqs. \eqref{eq:only constrint} and \eqref{eq:only constrint 2}. 
\begin{itemize}
\item In Section \ref{subsub:def_gen_eff}, we first define the generalized efficiency where one is allowed to consider correlated final states. We see that although this may potentially affect the amount of extractable work $W_\textup{ext}$, the amount of heat change in the bath remains the same, by making use of energy conservation and the fact that the global Hamiltonian $H_\total$ does not contain interaction terms between subsystems.
\item In Section \ref{subsub:finalcorr_cannotsurpassCE}, we make use of the generalized second law when $\alpha = 1$ (which is also the macroscopic second law), in order to show that final correlations still do not allow the surpassing of Carnot efficiency. This can be proven by noting that the von Neumann entropy is subadditive, and the result is summarized in Lemma \ref{lem:coherences cannot help fixed state}. A proof sketch can be found in the beginning of Section \ref{subsub:finalcorr_cannotsurpassCE}.
\item In Section \ref{subsub:achievability_fundamentally_reduced}, we turn to the case where $\Omega >1$, where without final correlations it is shown in Theorem \ref{th:EfficiencyRandomEnergyGaps} that Carnot efficiency cannot be achieved. 
\end{itemize}

\subsubsection{Defining the generalized efficiency}\label{subsub:def_gen_eff}
Recall that before (see Section \ref{sub:finding_simplified_expression}), we have shown in Eq.~\eqref{eq:eff explicit function of cold bath} that if the following assumptions hold: 
\begin{enumerate}[label=(\roman*)]
\item the final reduced state of the battery $\rho_\batt^1$ is fixed by Eq.~\eqref{eq:battery final state},
\item the state of the machine is preserved,
\item the final state is of tensor product form, i.e. $\rho_\CMW^1 = \rho_\cold^1\otimes\rho_\mach^1\otimes\rho_\batt^1$,
\end{enumerate}
then the efficiency for a particular transformation $\rho^0_{\cold\hot \textup{MW}} \rightarrow \rho^1_{\cold\hot \textup{MW}}$ simplifies to being only an explicit function of $\rho^1_\cold$ instead of the global final state. This simplified expression of the efficiency in Eq.~\eqref{eq:eff explicit function of cold bath} is then used to evaluate, for example, $\eta^\textup{mac} (\rho_\cold^1)$ in Eq.~\eqref{eq:max eff macro function of rho_C^1}.
Since we now drop Assumption (iii) for the final state being uncorrelated, the efficiency and the work extracted $W_\textup{ext}$ will 
now depend on the tripartite final state $\rho^1_{\cold\textup{M}\textup{W}}$ instead.

Therefore, let us first write a generalized expression for the maximum efficiency corresponding to a transition $\rho^0_{\cold\hot \textup{MW}} \rightarrow \rho^1_{\cold\hot \textup{MW}}$ via the unitary operator $U(t)$ in this generalised setting:
\begin{align}\label{eq:quant eff def}
\eta^\textup{qm}(\rho^1_{\cold\textup{M}\textup{W}}):=\sup_{W_\textup{ext}} \eta (\rho^1_{\cold}, W_\textup{ext})
\quad\textup{s.t.}~~&\tr_\hot [U(t)\rho_\total U(t)^\dagger] = \rho_\CMW^1,\\
& [U(t), \hat H]=0,\\
& \rho^1_\textup{W}=\varepsilon |E_j\ra\la E_j|_\textup{W}+(1-\varepsilon)|E_k\ra\la E_k |_\textup{W},\\
& \rho^1_\textup{M}=\rho^0_\textup{M}.
\end{align}
See Fig (1) in main text 
for a definition of the other quantities appearing in Eq. \eqref{eq:quant eff def}.
Recall that the definition of $\eta$ is given by $\eta=W_{\rm ext}/\Delta H$ as in Eq. \eqref{eq:efficiency}. In Section \ref{sub:finding_simplified_expression} we showed that this can be simplified to
\be\label{eq:eta gain for corr}
\eta=(1-\varepsilon+\Delta C/W_\textup{ext})^{-1},
\ee
where $\Delta C=\Delta C(\rho^1_\cold)$. This equation holds under Assumption (i) and (ii), together with the fact that the global Hamiltonian does not contain interaction terms between both baths, battery, and machine. Since the derivation of Eq.~\eqref{eq:eta gain for corr} does not require Assumption (iii), it still holds for a general tripartite final state $\rho_\CMW^1$. However, dropping Assumption (iii) may potentially allow for larger values of $W_\textup{ext}$, and therefore subsequently might affect $\eta^\textup{qm}$. For this reason we write $\eta^\textup{qm}=\eta^\textup{qm}(\rho^1_{\cold\textup{M}\textup{W}})$ to remind ourselves that it is a function of the entire final state $\rho_\CMW^1$. 

We have written $\eta=\eta(\rho^1_\cold,W_\textup{ext})$ to explicitly show the $W_\textup{ext}$ dependency of $\eta$. Although not written explicitly in Eq. \eqref{eq:quant eff def}, we should remember that $U(t),\rho^0_\textup{M},\hat H_\hot$ and $\hat H_\textup{M}$ are arbitrary, other than satisfying condition \textbf{(A.4)} in Section \ref{section:The setting}. As such, by maximizing $\eta$ over $W_\textup{ext}$, these quantities will accommodate their optimal values to maximize $\eta^\textup{qm}(\rho^1_{\cold\textup{M}\textup{W}})$\footnote{This is an advantage, since it rules out cases such as when the Hamiltonian does not support a thermal state (e.g. when the corresponding thermal state's partition function diverges). In this section we consider any cold bath Hamiltonian $\hat H_\cold$ that satisfies \textbf{(A.6)} in Section \ref{section:The setting} (i.e. finite dimensional). As such it will always have a well defined thermal state. }.
Throughout this section, we analyze Eq. \eqref{eq:quant eff def} only in the case of \emph{near perfect work} (Def. \eqref{def:near perfect work}) since the proof that perfect work is not possible (see Lemma \ref{lem: lem 1}) also applies to Eq. \eqref{eq:quant eff def}\footnote{For the sake of full generality, some of the lemmas in this section are proven irrespective to whether we are considering perfect or near perfect work}.


For the purpose of our proofs, we need to define a new family of intermediate efficiencies. They provide the maximum possible efficiency, when considering only a particular instance $\alpha\geq 0$ of the generalized second laws. For any $\alpha\in [0,\infty)$, let us denote
\begin{align}\label{eq:intermediate efficiencies}
\eta^\textup{qm}_\alpha(\rho_{\cold\textup{M} \textup{W}}^1)=\sup_{W_\textup{ext}}\eta
(\rho_{\cold}^1, W_\textup{ext}) \quad \text{s.t.}~~
&F_\alpha(\tau_\cold^0\otimes \rho^0_\textup{M}\otimes\rho^0_\textup{W},\tau^h_{\cold\textup{M}\textup{W}})\geq F_\alpha(\rho^1_{\cold\textup{M}\textup{W}},\tau^h_{\cold\textup{M}\textup{W}}),& 
\\& \tr(\hat H_t \rho_\sttotal^0)=\tr(\hat H_t \rho_\sttotal^1),\label{eq:mean enrg contrain intermediate}&\\
 &\rho^1_\textup{W}=\varepsilon |E_j\ra\la E_j|_\textup{W}+(1-\varepsilon)|E_k\ra\la E_k |_\textup{W},&\\
 & \rho^1_\textup{M}=\rho^0_\textup{M}.
\end{align}
See Eq. \eqref{eq:generalfreeenergy} for definition of $F_\alpha$. We denote $\eta_\infty^\textup{qm}=\displaystyle\lim_{\alpha\rightarrow\infty}\eta_\alpha^\textup{qm}$. The condition Eq. \eqref{eq:mean enrg contrain intermediate}, is always satisfied when all the second laws are satisfied. We add the condition as a constraint here, since we will need it in order to write the efficiency $\eta$ in the form of Eq. \eqref{eq:eta gain for corr}.  

\subsubsection{Final correlations do not allow the surpassing of Carnot efficiency}\label{subsub:finalcorr_cannotsurpassCE}
Here, we show that Carnot efficiency cannot be surpassed in a quasi-static heat engine even when we allow arbitrary final correlations in the final state $\rho_\CMW^1$. This can be done in the following steps:
\begin{enumerate}
\item Using the definitions of generalized efficiency (allowing correlations) in Eq.~\eqref{eq:quant eff def} and generalized intermediate efficiencies in Eq.~\eqref{eq:intermediate efficiencies}, we prove an inequality between $\eta^\textup{qm}(\rho^1_{\cold\textup{M}\textup{W}})$ and $\eta^\textup{qm}_\alpha(\rho^1_{\cold\textup{M}\textup{W}})$, for all $\alpha\geq 0$. This is done in Lemma \ref{lem:alpha up q eff bound}. From this, we also conclude that $\eta^\textup{qm}(\rho^1_{\cold\textup{M}\textup{W}})\leq \eta^\textup{qm}_1(\rho^1_{\cold\textup{M}\textup{W}})$. 
\item On the other hand, we show that for any final state of the cold bath, machine and battery $\rho_\CMW^1$, the generalized intermediate efficiency for $\alpha =1$ only increases, if we consider the tensor product of the marginals $\rho_\CMW^1$. In other words, $\eta_1^\qm (\rho_\CMW^1) \leq \eta_1^\qm (\rho_\cold^1\otimes\rho_\batt^1\otimes\rho_\mach^1)$. One can intuitively see why this is true: it comes from the fact that the von Neumann entropy is subadditive, therefore the final state $\rho_\cold^1\otimes\rho_\batt^1\otimes\rho_\mach^1$ contains more entropy than $\rho_\CMW^1$. Therefore according to the $\alpha=1$ second law, one can potentially draw more work by going to the state $\rho_\cold^1\otimes\rho_\batt^1\otimes\rho_\mach^1$ instead of a correlated state $\rho_\CMW^1$. 
\item Since the argument for $\eta_1^\qm (\rho_\cold^1\otimes\rho_\batt^1\otimes\rho_\mach^1)$ is of tensor product form, Assumption (iii) holds as before, and therefore the efficiency only depends on the final state of the cold bath $\rho_\cold^1$. This means that Eq.~\eqref{eq:intermediate efficiencies} for $\alpha =1 $ reduces to Eq.~\eqref{eq:max eff macro function of rho_C^1}. Lastly, by using Lemma \ref{lem:general_batt}, this allows us to further show in Lemma \ref{lem:coherences cannot help fixed state} that even by allowing correlations in $\rho_\CMW^1$, the efficiency $\eta^\textup{qm}(\rho^1_{\cold\textup{M}\textup{W}})$ can never surpass the Carnot value.
\end{enumerate}

Firstly, let us fix the following notation:
for an $R$-partite state $\rho_{\textup{A}_1 \textup{A}_2 \ldots \textup{A}_R}$, define the uncorrelated counterpart
\be 
\underline{\rho_{\textup{A}_1 \textup{A}_2 \ldots \textup{A}_R}}:=\bigotimes_{i=1}^R \rho_{\textup{A}_i}.
\ee
Comparing $\rho_{\textup{A}_1 \textup{A}_2 \ldots \textup{A}_R}$ and $\underline{\rho_{\textup{A}_1 \textup{A}_2 \ldots \textup{A}_R}}$, one will see that each subsystem has the same reduced state, but the global state is different. Another useful thing is to note that if one is given a Hamiltonian which does not contain any interaction terms between each subsystem, i.e. 
\begin{equation}
\hat H_{\textup{A}_1 \textup{A}_2 \ldots \textup{A}_R} = \displaystyle\sum_{i=1}^R \id_{\textup{A}_1}\otimes\cdots\hat H_{\textup{A}_i}\cdots\id_{\textup{A}_R},
\end{equation}
then we may conclude that 
\begin{align}\label{eq:nonintH_TPstate}
\tr (\hat H_{\textup{A}_1 \textup{A}_2 \ldots \textup{A}_R} \rho_{\textup{A}_1 \textup{A}_2 \ldots \textup{A}_R}) &= \sum_{i=1}^R \tr (\hat H_{\textup{A}_i} \rho_{\textup{A}_i}) =  \sum_{i=1}^R \tr (\hat H_{\textup{A}_i} \underline{\rho_{\textup{A}_i}}) = 
\tr (\hat H_{\textup{A}_1 \textup{A}_2 \ldots \textup{A}_R} \underline{\rho_{\textup{A}_1 \textup{A}_2 \ldots \textup{A}_R}}).
\end{align}

\begin{lemma}\label{lem:alpha up q eff bound}
For all $\alpha\geq 0$ and all states $\rho^1_{\cold\hot \textup{MW}}$, 
\be\label{eq:quant upper bound}
\eta^\textup{qm}(\rho^1_{\cold\textup{M}\textup{W}})\leq \eta_\alpha^\textup{qm}(\rho^1_{\cold\textup{M}\textup{W}}),
\ee
where $\eta^\textup{qm}$ and $\eta_\alpha^\textup{qm}$ are defined in Eqs. \eqref{eq:quant eff def} and \eqref
{eq:intermediate efficiencies} respectively.
\end{lemma}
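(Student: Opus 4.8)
The plan is to recognize that the two quantities $\eta^\textup{qm}$ and $\eta^\textup{qm}_\alpha$ are defined by maximizing \emph{the same} objective function over \emph{different} constraint sets, and to show that the constraint set of the former is contained in that of the latter. The desired inequality is then immediate, since a supremum of a fixed function over a smaller set cannot exceed its supremum over a larger set; no monotonicity or explicit evaluation of $\eta$ is required.

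First I would fix the given final reduced state $\rho^1_{\cold\textup{M}\textup{W}}$ and note that its marginal $\rho^1_\cold=\tr_{\textup{M}\textup{W}}(\rho^1_{\cold\textup{M}\textup{W}})$, and hence $\Delta C(\rho^1_\cold)$, is thereby fixed. By Eq.~\eqref{eq:eta gain for corr} the objective $\eta(\rho^1_\cold,W_\textup{ext})=(1-\varepsilon+\Delta C(\rho^1_\cold)/W_\textup{ext})^{-1}$ depends only on this fixed $\Delta C$ and on $W_\textup{ext}$, and it is exactly this same function of $W_\textup{ext}$ that appears in both Eq.~\eqref{eq:quant eff def} and Eq.~\eqref{eq:intermediate efficiencies}. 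Thus the only distinction between the two optimizations is the set of values of $W_\textup{ext}$ admitted by their respective constraints; denote these feasible sets $\mathcal{F}$ (for $\eta^\textup{qm}$) and $\mathcal{F}_\alpha$ (for $\eta^\textup{qm}_\alpha$).

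The key step is the inclusion $\mathcal{F}\subseteq\mathcal{F}_\alpha$. I would take an arbitrary admissible $W_\textup{ext}\in\mathcal{F}$, realized by some energy-preserving $U(t)$, hot bath $(\tau^0_\hot,\hat H_\hot)$, and machine $(\rho^0_\mach,\hat H_\mach)$ with $\tr_\hot[U(t)\rho_\total U(t)^\dagger]=\rho^1_{\cold\textup{M}\textup{W}}$. Rather than regarding the machine as a catalyst, I would absorb it into the worked-on system, so that the process is an ordinary thermal operation taking the joint system $\cold\textup{M}\textup{W}$ from its initial product state $\tau^0_\cold\otimes\rho^0_\mach\otimes\rho^0_\batt$ to $\rho^1_{\cold\textup{M}\textup{W}}$, using the hot bath (thermal at $T_\hot$ with arbitrary $\hat H_\hot$) as the sole reservoir. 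Since the generalized second laws of~\cite{2ndlaw}, in the form of Eq.~\eqref{2nd law eq}, are necessary conditions for any such transition (they are necessary for arbitrary states, as noted in Section~\ref{sub:nano}), the constraint $F_\alpha(\tau^0_\cold\otimes\rho^0_\mach\otimes\rho^0_\batt,\tau^h_{\cold\textup{M}\textup{W}})\geq F_\alpha(\rho^1_{\cold\textup{M}\textup{W}},\tau^h_{\cold\textup{M}\textup{W}})$ must hold for every $\alpha\geq 0$, in particular for the chosen one. Furthermore $[U(t),\hat H]=0$ forces $\tr(\hat H_t\rho^0_\sttotal)=\tr(\hat H_t\rho^1_\sttotal)$, which is the remaining constraint in Eq.~\eqref{eq:intermediate efficiencies}. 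Hence $W_\textup{ext}\in\mathcal{F}_\alpha$, establishing $\mathcal{F}\subseteq\mathcal{F}_\alpha$ and therefore $\eta^\textup{qm}(\rho^1_{\cold\textup{M}\textup{W}})=\sup_{W_\textup{ext}\in\mathcal{F}}\eta\leq\sup_{W_\textup{ext}\in\mathcal{F}_\alpha}\eta=\eta^\textup{qm}_\alpha(\rho^1_{\cold\textup{M}\textup{W}})$.

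I expect the main obstacle to be conceptual rather than computational: one must justify invoking the second laws even though the final machine state is now correlated with the cold bath and battery, so that the machine is \emph{not} returned as a clean, uncorrelated catalyst in the sense required for the necessary-and-sufficient statement of~\cite{2ndlaw}. The resolution is precisely the reframing above---treating the machine as part of the system converts the process into a plain thermal operation on $\cold\textup{M}\textup{W}$ with the hot bath as reservoir, for which Eq.~\eqref{2nd law eq} remains a valid \emph{necessary} condition regardless of final correlations, and the catalytic return of the machine plays no role in this direction of the argument.
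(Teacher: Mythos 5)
Your proof is correct and follows essentially the same route as the paper's: both arguments observe that the constraints in Eq.~\eqref{eq:intermediate efficiencies} (the single-$\alpha$ free-energy condition and mean-energy conservation) are necessary consequences of any energy-preserving unitary admissible in Eq.~\eqref{eq:quant eff def}, so the feasible set for $\eta^\textup{qm}$ is contained in that for $\eta^\textup{qm}_\alpha$ and the supremum inequality is immediate. Your closing remark---absorbing the machine into the worked-on system so that the $F_\alpha$ monotonicity applies as a necessary condition for a plain thermal operation, independent of whether the machine returns as an uncorrelated catalyst---is a slightly more careful justification of the step the paper dispatches with a citation to~\cite{2ndlaw}, but it is the same argument in substance.
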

\begin{proof}
For every $\alpha\geq 0$, Eq. $F_\alpha(\tau_\cold^0\otimes \rho^0_\textup{M} \otimes\rho^0_\textup{W},\tau^h_{\cold\textup{M}\textup{W}})\geq  F_\alpha(\rho^1_{\cold\textup{M}\textup{W}},\tau^h_{\cold\textup{M}\textup{W}})$ in Eq. \eqref{eq:intermediate efficiencies} is a necessary condition for the transformation $\rho^0_{\cold\textup{M}\textup{W}} \rightarrow \rho^1_{\cold\textup{M}\textup{W}}$ to occur under an energy preserving unitary with the aid of a catalyst \cite{2ndlaw}. Energy preserving unitaries also preserve the average energy and thus the Eq. $\tr(\hat H_t \rho_\sttotal^0)=\tr(\hat H_t \rho_\sttotal^1)$  
in Eq. \eqref{eq:intermediate efficiencies} is also a necessary condition. 
If a unitary $U(t)$ satisfies the conditions in Eq.~\eqref{eq:quant eff def}, then by the second laws it satisfies Eq.~\eqref{eq:intermediate efficiencies} for any particular $\alpha\geq 0$.
As a consequence of these observations, the set of allowed unitaries $U(t)$ in Eq. \eqref{eq:quant eff def} is a subset of allowed unitaries facilitating the catalytic thermal operation which transforms $\rho^0_{\cold\textup{M}\textup{W}}$ to $\rho^1_{\cold\textup{M}\textup{W}}$ in Eq. \eqref{eq:intermediate efficiencies}.
\end{proof}

\begin{lemma}\label{lem:eta1_achieved_with_equality}
For any final state $\rho_\CMW^1$, consider the quantity $\eta_1^\textup{qm} (\rho_\CMW^1)$ defined in Eq.~\eqref{eq:intermediate efficiencies}. Consider the optimization problem
\begin{align}
a(\rho_{\cold\textup{M} \textup{W}}^1):=\sup_{W_\textup{ext}}\eta
(\rho_{\cold}^1, W_\textup{ext}) \quad \text{s.t.}
~~ &F_1(\tau_\cold^0\otimes\rho^0_\textup{M}\otimes\rho^0_\textup{W},\tau^h_{\cold\textup{M}\textup{W}})= F_1(\rho^1_{\cold\textup{MW}},\tau^h_{\cold\textup{MW}}),\\
&\tr(\hat H_t \rho_\sttotal^0)=\tr(\hat H_t \rho_\sttotal^1),\\
 &\rho^1_\textup{W}=\varepsilon |E_j\ra\la E_j|_\textup{W}+(1-\varepsilon)|E_k\ra\la E_k |_\textup{W},&\\
 & \rho^1_\textup{M}=\rho^0_\textup{M}.
\end{align}
Then, $\eta_1^\textup{qm} (\rho_\CMW^1) = a (\rho_\CMW^1)$. 
\end{lemma}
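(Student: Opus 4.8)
The plan is to reduce both optimization problems to a one-dimensional optimization over the scalar $W_\textup{ext}$ and then to argue that the optimum of the inequality-constrained problem $\eta_1^\qm$ necessarily saturates the free-energy bound, so that it coincides with the equality-constrained value $a$. First I would observe that the argument $\rho^1_\CMW$ is held fixed throughout, so that its cold-bath marginal $\rho^1_\cold$, the failure probability $\varepsilon$, and the total entropy $S(\rho^1_\CMW)$ are all constants; in particular $\Delta C = \Delta C(\rho^1_\cold)$ is a fixed positive number in the heat-engine regime. By Eq.~\eqref{eq:eta gain for corr} — whose derivation uses only Assumptions (i),(ii) and the absence of interaction terms in $\hat H_t$, and hence survives the dropping of Assumption (iii) — the efficiency equals $\eta = (1-\varepsilon + \Delta C/W_\textup{ext})^{-1}$, a strictly increasing function of $W_\textup{ext}$ when $\Delta C > 0$. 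Thus in both $\eta_1^\qm(\rho^1_\CMW)$ and $a(\rho^1_\CMW)$ the supremum over $W_\textup{ext}$ amounts to maximizing $W_\textup{ext}$ subject to the respective free-energy constraint.

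Next I would analyze how the $\alpha=1$ free energy of the final state depends on $W_\textup{ext}$. Using $F_1(\rho,\tau)=\langle \hat H\rangle_\rho - \beta_h^{-1}S(\rho)$ (Eq.~\eqref{eq:helmholtzF}) together with the interaction-free form of $\hat H_t$, the only $W_\textup{ext}$-dependent piece of $F_1(\rho^1_\CMW,\tau^h_\CMW)$ is the battery mean energy $\langle \hat H_\battery\rangle_{\rho^1_\battery}=E_j^\battery+(1-\varepsilon)W_\textup{ext}$; the entropy $S(\rho^1_\CMW)$ and all remaining mean energies are left unchanged by varying the battery gap. Hence $F_1(\rho^1_\CMW,\tau^h_\CMW)$ is strictly increasing in $W_\textup{ext}$ with slope $(1-\varepsilon)>0$, while $F_1$ of the fixed initial state $\tau_\cold^0\otimes\rho^0_\mach\otimes\rho^0_\battery$ is constant. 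Consequently the inequality constraint $F_1(\text{initial})\geq F_1(\text{final})$ is equivalent to an upper bound $W_\textup{ext}\leq W^\ast$, where $W^\ast$ is the unique value at which the two free energies coincide, and the equality constraint defining $a$ selects precisely $W_\textup{ext}=W^\ast$.

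With these two monotonicities in hand the conclusion is immediate. In one direction, the feasible set of $a$ (equality) is contained in that of $\eta_1^\qm$ (inequality), so taking the supremum over the larger set gives $a(\rho^1_\CMW)\leq \eta_1^\qm(\rho^1_\CMW)$. For the reverse, every $W_\textup{ext}$ feasible for $\eta_1^\qm$ obeys $W_\textup{ext}\leq W^\ast$, and since $\eta$ is increasing we obtain $\eta(W_\textup{ext})\leq \eta(W^\ast)=a(\rho^1_\CMW)$; passing to the supremum yields $\eta_1^\qm(\rho^1_\CMW)\leq a(\rho^1_\CMW)$. Combining the two inequalities gives $\eta_1^\qm(\rho^1_\CMW)=a(\rho^1_\CMW)$.

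The main obstacle I anticipate is not the monotonicity bookkeeping but pinning down exactly which quantities are frozen when $\rho^1_\CMW$ is declared ``fixed'' while $W_\textup{ext}$ still varies: one must state explicitly that varying $W_\textup{ext}$ corresponds to rescaling the battery energy gap (changing $\hat H_\battery$ and its mean energy) while leaving the density-matrix eigenvalues, and therefore $S(\rho^1_\CMW)$ and $\Delta C$, untouched. I would also flag the sign assumption $\Delta C>0$: it holds for any genuine heat engine that heats the cold bath and draws positive energy from the hot bath, and it is the only regime producing a sub-Carnot efficiency, since outside it the bound is already subsumed by the Carnot upper bound of Lemma~\ref{lem:cannot do better than carnot with porb failuer}.
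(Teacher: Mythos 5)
Your proposal is correct and follows essentially the same route as the paper's proof: both arguments rest on the two monotonicity facts that $F_1(\rho^1_\CMW,\tau^h_\CMW)$ is strictly increasing in $W_\textup{ext}$ (only the battery mean energy varies, while $S(\rho^1_\CMW)$ and the remaining mean energies are fixed) and that $\eta=(1-\varepsilon+\Delta C/W_\textup{ext})^{-1}$ is increasing in $W_\textup{ext}$, so the inequality constraint must be saturated at the optimum. The only cosmetic difference is that you phrase the saturation step as a direct two-sided inequality via the threshold $W^\ast$, whereas the paper argues by contradiction, and your explicit flagging of $\Delta C>0$ and of what is held fixed when the battery gap varies makes the argument, if anything, slightly more careful than the original.
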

\begin{proof}
We begin by noting that the free energy $F_1$ can be written as
\be\label{eq:generalised free energy}
F_1(\rho,\tau^h)
=\tr( \hat H \rho)
 -\frac{1}{\beta_h} S(\rho),
\ee
where $\langle \hat H\rangle_\rho:=\tr[\hat H\rho]$, and $S(\rho)=-\tr(\rho\ln\rho)$ is the von Neumann entropy, while $\tau^h$ is the thermal state at inverse temperature $\beta_h$ for the Hamiltonian $\hat H$. Also, let us recall that $W_\textup{ext}=E_k^\textup{W}-E_j^\textup{W}>0$ where $E_j^\textup{W}$ is a constant.

Next, we consider the free energies $F_1(\tau_\cold^0\otimes\rho^0_\textup{M}\otimes\rho^0_\textup{W},\tau^h_{\cold\textup{M}\textup{W}})$ and $F_1(\rho^1_{\cold\textup{MW}},\tau^h_{\cold\textup{M}\textup{W}})$ respectively, and how they relate to $W_\textup{ext}$.
First of all, note that the quantity $F_1(\tau_\cold^0\otimes\rho^0_\textup{M}\otimes\rho^0_\textup{W},\tau^h_{\cold\textup{M}\textup{W}})$ is simply a constant that does not depend on $W_\textup{ext}$. This is because
\begin{align}
F_1(\tau_\cold^0\otimes\rho^0_\textup{M}\otimes\rho^0_\textup{W},\tau^h_{\cold\textup{M}\textup{W}}) 
&= F_1(\tau_\cold^0,\tau^h_{\cold}) + F_1(\tau_\mach^0,\tau^h_{\mach}) + F_1(\tau_\batt^0,\tau^h_{\batt}) \\
&= F_1(\tau_\cold^0,\tau^h_{\cold}) + F_1(\tau_\mach^0,\tau^h_{\mach}) + \tr (\hat H_\batt \rho_\batt^0) - \beta_h^{-1} S(\rho_\batt^0)\\
&= F_1(\tau_\cold^0,\tau^h_{\cold}) + F_1(\tau_\mach^0,\tau^h_{\mach}) + E_j^\batt,
\end{align}
where the first two terms do not depend on the battery Hamiltonian at all, while in the last equality we have made use of the fact that $\rho_\batt^0 = \ketbra{E_j}{E_j}_\batt$.
On the other hand, 
\begin{align}
F_1(\rho^1_\CMW,\tau^h_{\cold\textup{M}\textup{W}}) &= \tr(\hat H_\CMW \rho_\CMW^1) - \beta_h^{-1} S(\rho_\CMW^1)\\
&=\tr(\hat H_\cold \rho_\cold^1)+\tr(\hat H_\mach \rho_\mach^1)+\tr(\hat H_\batt \rho_\batt^1)- \beta_h^{-1} S(\rho_\CMW^1)\\
&= \tr(\hat H_\cold \rho_\cold^1)+\tr(\hat H_\mach \rho_\mach^1)- \beta_h^{-1} S(\rho_\CMW^1)+\varepsilon E_j^\batt + (1-\varepsilon) E_k^\batt.
\end{align}
Note that again, $\tr(\hat H_\cold \rho_\cold^1)$ and $\tr(\hat H_\mach \rho_\mach^1)$ do not depend on the battery Hamiltonian and therefore do not depend on $E_k^\batt$. Similarly, $S(\rho_\CMW^1)$ depends only on the eigenvalues of the state, and is independent of $E_k^\batt$. Since $\varepsilon\in [0,1)$, we may conclude the following: \textit{$F(\rho^1_\CMW,\tau^h_{\cold\textup{M}\textup{W}})$ is a continuous function that strictly increases w.r.t. $E_k^\batt$, and therefore it also strictly increases w.r.t. $W_\textup{ext}$.}

To prove this lemma, it suffices to show that the supremum over $W_\textup{ext}$ in Eq.~\eqref{eq:intermediate efficiencies} for $\alpha=1$ has to be achieved when $F_1(\tau_\cold^0\otimes\rho^0_\textup{M}\otimes\rho^0_\textup{W},\tau^h_{\cold\textup{M}\textup{W}})= F_1(\rho^1_{\cold\textup{MW}},\tau^h_{\cold\textup{MW}})$. We prove this by contradiction. Suppose that $\hat W_\textup{ext}$ achieves the supremum for $\eta_1^\qm$, and for this value of $\hat W_\textup{ext}$, $F_1(\tau_\cold^0\otimes\rho^0_\textup{M}\otimes\rho^0_\textup{W},\tau^h_{\cold\textup{M}\textup{W}}) > F_1(\rho^1_{\cold\textup{MW}},\tau^h_{\cold\textup{MW}})$. Since we know that $F(\rho^1_\CMW,\tau^h_{\cold\textup{M}\textup{W}})$ strictly increases w.r.t. $W_\textup{ext}$, there must exist an $W_\textup{ext}' > \hat W_\textup{ext}$ such that $F_1(\tau_\cold^0\otimes\rho^0_\textup{M}\otimes\rho^0_\textup{W},\tau^h_{\cold\textup{M}\textup{W}})\geq F_1(\rho^1_{\cold\textup{MW}},\tau^h_{\cold\textup{MW}})$. Furthermore, since by Eq.~\eqref{eq:eta gain for corr} we know that the efficiency is monotonically increasing w.r.t. $W_\textup{ext}$ as well, it follows that $W_\textup{ext}'$ achieves a higher value of efficiency compared to $\hat W_\textup{ext}$ while satisfying the required constraints at the same time. This is a contradiction, and therefore we conclude that the optimization for $\eta_1^\qm$ can be simplified to $a(\rho_\CMW^1)$, where the constraint on $F_1$ holds with equality.
\end{proof}
\begin{lemma}\label{lem:coherences cannot help fixed state}
For any final state $\rho^1_{\cold\hot \textup{MW}}$, and any Hamiltonian of the form in Eq.~\eqref{eq:totalH}, 
then for perfect or near perfect work extraction (see Defs. \ref{def:perfect work} and \ref{def:near perfect work}), we have
\be \label{eq:equation main lemma in extension}
\eta^\textup{qm}\left(\rho^1_{\cold\textup{M} \textup{W}}\right)
	~~\overset{\textup{(1)}}{\leq}~~ \eta_1^\textup{qm}\left(\rho^1_{\cold\textup{M} \textup{W}}\right)
	~~\overset{\textup{(2)}}{\leq}~~ \eta_1^\textup{qm}\left(\underline{\rho^1_{\cold\textup{M} \textup{W}}}\right)		
	~~\overset{\textup{(3)}}{=}~~  \eta^\textup{mac}\left(\underline{\rho_\cold^1}\right)
	~~\overset{\textup{(4)}}{\leq}~~ 1-\frac{\beta_h}{\beta_c},
\ee
with equality in \textup{(2)} iff $\rho^1_{\cold\textup{M} \textup{W}}=\underline{\rho^1_{\cold\textup{M} \textup{W}}}$.
The quantities $\eta_1^\textup{qm}$ and $\eta^\textup{mac}$ are defined in Eq. \eqref{eq:intermediate efficiencies} and Eq. \eqref{eq:max eff macro function of rho_C^1} respectively.
\end{lemma}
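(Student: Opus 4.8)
The plan is to establish the four-link chain in Eq.~\eqref{eq:equation main lemma in extension} one inequality at a time, treating link (2) as the crux and reading off (1), (3), (4) almost directly from results already proven. Inequality (1) is nothing but Lemma \ref{lem:alpha up q eff bound} specialized to $\alpha=1$: since the $F_1$ free energy is one member of the family of second laws, every unitary admissible in the definition of $\eta^\textup{qm}$ is also admissible in the (single-constraint) definition of $\eta_1^\qm$, so the constrained supremum can only grow.

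The heart of the argument is link (2). Here I would first invoke Lemma \ref{lem:eta1_achieved_with_equality}, which lets me replace the inequality defining $\eta_1^\qm$ by the equality constraint $F_1(\tau_\cold^0\otimes\rho^0_\mach\otimes\rho^0_\batt,\tau^h_\CMW)=F_1(\rho^1_\CMW,\tau^h_\CMW)$. The two states $\rho^1_\CMW$ and $\underline{\rho^1_\CMW}$ share the same marginals, hence the same $\rho^1_\cold$, the same $\Delta C$, and — because $\hat H_\CMW$ carries no interaction terms — the same mean energy $\tr(\hat H_\CMW\,\cdot\,)$ at any fixed battery gap $W_\textup{ext}$ (this is exactly Eq.~\eqref{eq:nonintH_TPstate}). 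The sole difference is entropic: subadditivity of the von Neumann entropy gives $S(\rho^1_\CMW)\leq S(\underline{\rho^1_\CMW})$, with equality iff $\rho^1_\CMW=\underline{\rho^1_\CMW}$. Using $F_1(\rho,\tau^h)=\tr(\hat H\rho)-\beta_h^{-1}S(\rho)$, this yields $F_1(\underline{\rho^1_\CMW},\tau^h_\CMW)\leq F_1(\rho^1_\CMW,\tau^h_\CMW)$ at every fixed $W_\textup{ext}$. Since Lemma \ref{lem:eta1_achieved_with_equality} also shows $F_1(\rho^1_\CMW,\tau^h_\CMW)$ is strictly increasing in $W_\textup{ext}$ while the initial free energy is a constant, the equality constraint is satisfied at a \emph{larger} $W_\textup{ext}$ for the uncorrelated state than for the correlated one. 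Finally, by Eq.~\eqref{eq:eta gain for corr} the efficiency $\eta=(1-\varepsilon+\Delta C/W_\textup{ext})^{-1}$ is monotonically increasing in $W_\textup{ext}$ at fixed $\Delta C>0$, so the larger attainable $W_\textup{ext}$ delivers the larger efficiency, establishing $\eta_1^\qm(\rho^1_\CMW)\leq\eta_1^\qm(\underline{\rho^1_\CMW})$, with equality precisely when subadditivity is saturated, i.e. when the final state is already a product state.

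For link (3) the state $\underline{\rho^1_\CMW}=\rho^1_\cold\otimes\rho^1_\mach\otimes\rho^1_\batt$ is by construction of product form, so Assumption (iii) is restored; the $F_1$ free energy becomes additive across the three subsystems and the machine contributions cancel because $\rho^1_\mach=\rho^0_\mach$. The optimization defining $\eta_1^\qm(\underline{\rho^1_\CMW})$ then collapses to precisely the macroscopic problem of Eq.~\eqref{eq:max eff macro function of rho_C^1}, giving $\eta_1^\qm(\underline{\rho^1_\CMW})=\eta^\textup{mac}(\underline{\rho_\cold^1})$. Link (4) is then immediate from Lemma \ref{lem:cannot do better than carnot with porb failuer} (or Lemma \ref{lem:CEmaximum_generalbatt} for the generalized-battery variant), which identifies $\sup_{\rho_\cold^1}\eta^\textup{mac}(\rho_\cold^1)=1-\beta_h/\beta_c$ as the Carnot bound, so in particular $\eta^\textup{mac}(\underline{\rho_\cold^1})\leq 1-\beta_h/\beta_c$. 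Note the whole chain is insensitive to whether $\varepsilon=0$ (perfect work) or $\varepsilon>0$ (near perfect work), so the statement covers both cases.

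The main obstacle I anticipate is making step (2) fully rigorous rather than heuristic: one must argue carefully that varying $W_\textup{ext}$ moves only the top battery level $E_k^\batt$ in $\hat H_\batt$, leaving the marginals $\rho^1_\cold,\rho^1_\mach$ and the entropy $S(\rho^1_\CMW)$ untouched, so that the comparison of the correlated and uncorrelated states is genuinely performed ``at fixed everything except the correlations.'' Once the two monotonicities — of $F_1$ in $W_\textup{ext}$ and of $\eta$ in $W_\textup{ext}$ — are pinned down and combined with subadditivity, the chain closes and the equality condition follows from the equality case of subadditivity.
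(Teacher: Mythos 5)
Your proposal is correct and takes essentially the same route as the paper's own proof: link (1) via Lemma \ref{lem:alpha up q eff bound}, link (2) by combining Lemma \ref{lem:eta1_achieved_with_equality} with subadditivity of the von Neumann entropy (equality iff product), the interaction-free mean energy identity, and the strict monotonicity of both $F_1(\rho^1_{\cold\textup{M}\textup{W}},\tau^h_{\cold\textup{M}\textup{W}})$ and $\eta$ in $W_\textup{ext}$, link (3) by the collapse of the $\alpha=1$ optimization to the macroscopic problem of Eq.~\eqref{eq:max eff macro function of rho_C^1} once the product structure is restored, and link (4) from Lemma \ref{lem:CEmaximum_generalbatt}. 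The technical point you flag about fixing everything except $E_k^\batt$ when varying $W_\textup{ext}$ is exactly how the paper handles it inside Lemma \ref{lem:eta1_achieved_with_equality}, so nothing is missing.
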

\begin{proof}
Note that inequality (1) is a direct consequence of Lemma \ref{lem:alpha up q eff bound}, while inequality (4) holds because of Lemma \ref{lem:CEmaximum_generalbatt}. It remains to prove inequalities (2) and (3).\\
\textit{Proof of inequality (2):}
Using the definition in Eq.~\eqref{eq:intermediate efficiencies} together with Lemma \ref{lem:eta1_achieved_with_equality}, let us compare the quantities
\begin{align}\label{eq:eta1qmrho}
\eta^\textup{qm}_1(\rho_{\cold\textup{M} \textup{W}}^1)=\sup_{W_\textup{ext}}\eta
(\rho_{\cold}^1, W_\textup{ext}) \quad \text{s.t.}~~
&F_1(\tau_\cold^0\otimes \rho^0_\textup{M}\otimes\rho^0_\textup{W},\tau^h_{\cold\textup{M}\textup{W}})= F_1(\rho^1_{\cold\textup{M}\textup{W}},\tau^h_{\cold\textup{M}\textup{W}}),& 
\\& \tr(\hat H_t \rho_\sttotal^0)=\tr(\hat H_t \rho_\sttotal^1),&\\
 &\rho^1_\textup{W}=\varepsilon |E_j\ra\la E_j|_\textup{W}+(1-\varepsilon)|E_k\ra\la E_k |_\textup{W},\label{eq:batt_constraint}\\
 & \rho^1_\textup{M}=\rho^0_\textup{M},\label{eq:mach_constraint}
\end{align}
and 
\begin{align}\label{eq:eta1qmrhoTP}
\eta^\textup{qm}_1(\underline{\rho_{\cold\textup{M} \textup{W}}^1})=\sup_{W_\textup{ext}}\eta
(\underline{\rho_{\cold}^1}, W_\textup{ext}) \quad \text{s.t.}~~
&F_1(\tau_\cold^0\otimes \rho^0_\textup{M}\otimes\rho^0_\textup{W},\tau^h_{\cold\textup{M}\textup{W}})= F_1(\underline{\rho^1_{\cold\textup{M}\textup{W}}},\tau^h_{\cold\textup{M}\textup{W}}),& 
\\& \tr(\hat H_t \rho_\sttotal^0)=\tr(\hat H_t \rho_\sttotal^1),&\\
 &\rho^1_\textup{W}=\varepsilon |E_j\ra\la E_j|_\textup{W}+(1-\varepsilon)|E_k\ra\la E_k |_\textup{W},&\\
 & \rho^1_\textup{M}=\rho^0_\textup{M}.
\end{align}
We first make the following observations:
\begin{itemize}
\item By our definition of $\underline{\rho_{\cold\textup{M} \textup{W}}^1}$, we have that $\rho_\cold^1 = \underline{\rho_\cold^1}$. Therefore, the term $\Delta C$ in Eq.~\eqref{eq:eta gain for corr} which is only a function of the reduced state on the cold bath is the same for both efficiencies in Eq.~\eqref{eq:eta1qmrho} and Eq.~\eqref{eq:eta1qmrhoTP}. Therefore, to compare the efficiencies, we need only to compare the value of $W_\textup{ext}$ that satisfies the free energy constraint in both optimization problems.
\item In \cite{shannon} (pg. 395) it has been proven that the von Neumann entropy is subadditive
\be 
S(\rho_{AB})\leq S\left(\underline{\rho_{AB}}\right),
\ee
with equality iff $\rho_{AB}=\underline{\rho_{AB}}$. Furthermore, since $\hat H_\CMW$ does not contain interaction terms, as we have demonstrated earlier in Eq.~\eqref{eq:nonintH_TPstate},
\begin{align}
\tr (\hat H_\CMW \rho_\CMW^1) 
&= \tr (\hat H_\CMW \underline{\rho_\CMW^1}).
 \end{align} 
 Thus, by Eq. \eqref{eq:generalised free energy} we conclude that
\be \label{eq:subadd free energy}
F_1(\underline{\rho^1_{\cold\textup{M}\textup{W}}})\leq  F_1(\rho^1_{\cold\textup{M}\textup{W}}),
\ee
with equality iff $\rho^1_{\cold\textup{M}\textup{W}}=\underline{\rho^1_{\cold\textup{M}\textup{W}}}$.
\item For any final state $\rho_\CMW^1$ where $\rho_\batt^1 = \varepsilon |E_j\ra\la E_j|_\textup{W}+(1-\varepsilon)|E_k\ra\la E_k |_\textup{W}$, we have seen in the proof of Lemma \ref{lem:eta1_achieved_with_equality} that $F_1 (\rho_\CMW^1,\tau^h_{\cold\textup{M}\textup{W}})$ is a continuous function that strictly increases with $W_\textup{ext}$.
\end{itemize}

With these three observations we can now prove inequality (2). Note that when $\rho_\CMW^1=\underline{\rho_\CMW^1}$, equality holds trivially. Therefore, let us consider the case where $\rho_\CMW^1\neq\underline{\rho_\CMW^1}$.
Suppose $\hat W_\textup{ext}$ achieves the supremum in $\eta_1^\qm (\rho_\CMW^1)$, and for such a value of $\hat W_\textup{ext}$, \begin{equation}
F_1(\tau_\cold^0\otimes \rho^0_\textup{M}\otimes\rho^0_\textup{W},\tau^h_{\cold\textup{M}\textup{W}})= F_1(\rho^1_{\cold\textup{M}\textup{W}},\tau^h_{\cold\textup{M}\textup{W}}) > F_1(\underline{\rho^1_{\cold\textup{M}\textup{W}}},\tau^h_{\cold\textup{M}\textup{W}}).
\end{equation}
We note also that since $F_1 (\underline{\rho_\CMW^1},\tau^h_{\cold\textup{M}\textup{W}})$ strictly increases with $W_\textup{ext}$, and therefore there exists some $W_\textup{ext}' > \hat W_\textup{ext}$ such that $F_1(\tau_\cold^0\otimes \rho^0_\textup{M}\otimes\rho^0_\textup{W},\tau^h_{\cold\textup{M}\textup{W}}) = F_1(\underline{\rho^1_{\cold\textup{M}\textup{W}}},\tau^h_{\cold\textup{M}\textup{W}})$. Therefore, $W_\textup{ext}'$ is a feasible solution for Eq.~\eqref{eq:eta1qmrhoTP}, i.e. it satisfies the constraints in the optimization problem. In conclusion, we have

\begin{align}
\eta^\qm_1(\rho_\CMW^1) &= \left[1-\varepsilon + \frac{\Delta C}{\hat W_\textup{ext}}\right]^{-1}
\leq \left[1-\varepsilon + \frac{\Delta C}{W_\textup{ext}'}\right]^{-1}
\leq \eta^\qm_1(\underline{\rho_\CMW^1}).
\end{align}
\textit{Proof of equality (3):} 
Consider the quantity $\eta^\qm_1(\underline{\rho^1_{\cold\textup{M}\textup{W}}})$. Since the state $\underline{\rho^1_{\cold\textup{M}\textup{W}}}$ takes on a product structure form between all the subsystems now, Assumption (iii) in the beginning of Section \ref{subsub:def_gen_eff} holds again. By Eqns.~\eqref{eq:batt_constraint} and \eqref{eq:mach_constraint}, we know that Assumptions (i) and (ii) also hold. Therefore, we know that under these assumptions the efficiency does not depend anymore on the global state $\underline{\rho^1_{\cold\textup{M}\textup{W}}}$, but only $\rho_\cold^1$. Again comparing the conditions of $\eta^\textup{mac} (\underline{\rho_\cold^1})$ and $\eta_1^\qm (\underline{\rho^1_{\cold\textup{M}\textup{W}}})$, we see that they are exactly the same quantity.
\end{proof}

Therefore, Lemma \ref{lem:coherences cannot help fixed state} tells us that correlations between the final states of the cold bath, machine and battery cannot allow you to achieve an efficiency greater than the Carnot efficiency.

\subsubsection{Achievability of Carnot efficiency still depends on more than temperature}\label{subsub:achievability_fundamentally_reduced}
Earlier in Section \ref{subsub:finalcorr_cannotsurpassCE}, we proved in Lemma \ref{lem:coherences cannot help fixed state} that Carnot efficiency gives an upper bound to the efficiency of any arbitrary final state $\rho_\CMW^1$. In this section, we want to prove that when $\Omega>1$ holds, quasi-static heat engines cannot achieve the Carnot efficiency even when allowing correlations between the final states of the battery and the cold bath. This can be done in the following steps:
\begin{itemize}
\item According to Lemma \ref{lem:coherences cannot help fixed state}, Carnot efficiency can be attained only when all the inequalities in Eq.~\eqref{eq:equation main lemma in extension} are satisfied with equalities. We use this to prove in Lemma \ref{Lemm:vanishing coors necesary condition} that in order to achieve the Carnot efficiency, we may only consider the limit where correlations in the final state vanish. Not only so, the magnitude of these correlations also have to vanish quickly enough in order for Carnot efficiency to be achieved. In particular, we define a parameter $k$ which quantifies the amount of correlations, and show that $k$ has to vanish faster than the quasi-static parameter $g$, in order to achieve the Carnot efficiency $\eta_C$.
\item Next, in Lemma \ref{lem:corres last lemma}, we show that if the parameter $k$ vanishes faster than the quasi-static parameter $g$, then whenever $\Omega >1$, one can derive an upper bound for the intermediate efficiency $\eta_\infty^\qm (\rho_\CMW^1)$ which considers the amount of work extractable by invoking only the generalized second law of $\alpha\rightarrow\infty$. Combining Lemma \ref{Lemm:vanishing coors necesary condition} and Lemma \ref{lem:corres last lemma}, we conclude in Corollary \ref{corrs dont help} that when $\Omega >1$, $\eta^\qm\leq \eta_\infty^\qm \leq \eta_C$ is strictly upper bounded away from the Carnot efficiency. 
\end{itemize}

Before we begin, let us note that by definition, the initial state $\rho^0_{\cold\textup{W}}$ is diagonal in its energy eigenbasis. Furthermore, the state $\rho^0_{\cold\textup{M}\textup{W}}$ is of the form $\rho^0_\cold\otimes\rho^0_\textup{M}\otimes\rho^0_\textup{W}$. Since w.l.o.g. we can assume that $\hat H_\textup{M}$ is proportional to the identity (or called the \textit{trivial Hamiltonian}, see \cite{2ndlaw}), $\rho^0_\textup{M}$ can always be written as a diagonal state in an energy eigenbasis of its Hamiltonian. Therefore the state $\rho^0_{\cold\textup{M}\textup{W}}$ is always diagonal in the energy eigenbasis of the Hamiltonian $\hat H_{\cold\textup{M}\textup{W}}:=\hat H_\cold+\hat H_\textup{M}+\hat H_\textup{W}$. Since catalytic thermal operations cannot create coherences \cite{2ndlaw}, $\rho^1_{\cold\textup{M}\textup{W}}$ has to be also diagonal in the energy eigenbasis of $\hat H_{\cold\textup{M}\textup{W}}$. 

We observe that any $\rho_\CMW^1$ can always be written as 
\begin{equation}
\rho_\CMW^1 = (1-k^*) \underline{\rho_\CMW^1} + k^*\rho_\CMW^\corr,
\end{equation}
where $k^* = \min \lbrace k\in[0,1] | \rho_\CMW^1 = (1-k) \underline{\rho_\CMW^1} + kQ, Q\geq \mathbf{0}\rbrace$. This means that $\rho_\CMW^1$ can be written as a convex combination of two states: one being $\underline{\rho_\CMW^1}$, and the other $\rho_\CMW^\corr$ containing all other correlations. Note that such a $k^*$ always exists, in particular, $k=1$ is always a feasible solution.

We now define a particular parametrization of the final states, 
\be\label{eq:rho 1 corr def}
\rho^1_{\cold\textup{M}\textup{W}}(k,\rho^\textup{\ncorr}_{\cold\textup{M}\textup{W}},\rho^\textup{\corr}_{\cold\textup{M}\textup{W}}):= (1-k)\rho^\textup{\ncorr}_{\cold\textup{M}\textup{W}}+k\rho^\textup{corr}_{\cold\textup{M}\textup{W}},\quad k\in[0,k^*]
\ee
where the following holds:
\begin{align}
\textup{(i)}&~~\rho^\ncorr_\CMW= \underline{\rho_\CMW^1},
\label{eq:def of b_i 2}\\
\textup{(ii)}&~~\rho^\corr_\CMW \neq \rho^\ncorr_{\cold\textup{M}\textup{W}} ,\label{eq: no corr corr neq}\\
\textup{(iii)}&~~\rho_\mach^1 = (1-k)\rho_\mach^\ncorr+ k \rho_\mach^\corr = \rho_\mach^0.\label{eq:preservemachine}
\end{align}
Since in our heat engine, the initial state has no coherences, it suffices to consider $\rho_\CMW^1$ which is diagonal in the energy eigenbasis. This implies that $\rho_\CMW^\ncorr = \underline{\rho_\CMW^1}$ is also diagonal in the energy eigenbasis, and therefore the same holds for $\rho_\CMW^\corr$ due to Eq.~\eqref{eq:rho 1 corr def}.
All correlations between the individual systems of cold bath, machine and battery are contained only in $\rho^\corr_{\cold\textup{M}\textup{W}}$.
Therefore, $\rho^1_{\cold\textup{M}\textup{W}}(\cdot,\cdot,\cdot)$ parametrizes every possible quantum state on $\mathcal{H}_{\cold\textup{M}\textup{W}}$ which is diagonal in the global energy eigenbasis and that returns the machine locally to its initial state after one cycle of the heat engine. 
In Eq. \eqref{eq:preservemachine}, $\rho_\mach^1$ is the final state of the machine, since the heat engine is cyclic, recall from Section \ref{section:The setting} that we require $\rho_\textup{M}^1=\rho_\textup{M}^0$. 
\begin{lemma}\label{Lemm:vanishing coors necesary condition}
For every family of states $\rho^1_{\cold\textup{M}\textup{W}}(k,\rho^\textup{\ncorr}_{\cold\textup{M}\textup{W}},\rho^\textup{\corr}_{\cold\textup{M}\textup{W}})$ parametrized by $k$, (see Eqs. \eqref{eq:rho 1 corr def}-\eqref{eq:preservemachine}), if the quantum efficiency $\eta^\qm_1$ defined in Eq. \eqref{eq:intermediate efficiencies} of a quasi-static heat engine achieves the Carnot efficiency
\be
\eta_1^\qm(\rho^1_\CMW)=1-\frac{\beta_h}{\beta_c},
\ee
then the following conditions are satisfied:
\begin{itemize}
\item [1)]  The state $\underline{\rho^1_\CMW}$ is the final state of a quasi-static heat engine (see Def. \ref{def:quasi static})
\be \label{eq:quasistatic no corrs part}
\underline{\rho^1_\CMW}= \tau(g)\otimes\rho^0_\textup{M}(g)\otimes \rho^1_\textup{W}\quad  \text{with } g\rightarrow 0^+.
\ee 
\item [2)]  The correlations must vanish sufficiently quickly. That is to say, the parameter $k$ in Eq.~\eqref{eq:rho 1 corr def}
vanishes more quickly compared to $g$, i.e.  
\be\label{eq:necessary k_i/g lim}
\lim_{g\rightarrow 0^+}\frac{k}{g}=0.
\ee
\end{itemize}
\end{lemma}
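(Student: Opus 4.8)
The plan is to run everything through the chain of inequalities already established in Lemma~\ref{lem:coherences cannot help fixed state}. Achieving the Carnot value $\eta_1^\qm(\rho^1_\CMW)=1-\beta_h/\beta_c$ forces \emph{every} inequality in Eq.~\eqref{eq:equation main lemma in extension} to be saturated. Saturation of inequality (4) means $\eta^\textup{mac}(\underline{\rho_\cold^1})=1-\beta_h/\beta_c$; by Lemma~\ref{lem:CEmaximum_generalbatt} (equivalently Theorem~\ref{theorem:classical CE}) the macroscopic Carnot value is attained \emph{only} by a quasi-static engine, so $\underline{\rho_\cold^1}$ must be thermal at inverse temperature $\beta_f=\beta_c-g$ with $g\to 0^+$ (Def.~\ref{def:quasi static}). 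Since the machine is returned to $\rho_\mach^0$ and the battery marginal is fixed to $\rho_\batt^1$ by the constraints of Eq.~\eqref{eq:intermediate efficiencies}, the uncorrelated state factorizes exactly as in Eq.~\eqref{eq:quasistatic no corrs part}, which gives part 1).

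For part 2) I would make the correlation penalty quantitative. By Lemma~\ref{lem:eta1_achieved_with_equality} the optimal $W_\textup{ext}$ entering $\eta_1^\qm$ is the value saturating the $\alpha=1$ constraint $F_1(\tau_\cold^0\otimes\rho^0_\mach\otimes\rho^0_\batt)=F_1(\rho^1_\CMW)$. Writing $F_1$ as in Eq.~\eqref{eq:generalised free energy} and using that $\hat H_\CMW$ has no interaction terms (so the mean energy, and hence $\Delta C$, depends only on the fixed marginals $\rho_\cold^1$, $\rho_\mach^1=\rho_\mach^0$, $\rho_\batt^1$), the correlated and product balance equations differ only through the von Neumann entropy. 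Subtracting them yields $W_\textup{ext}(\rho^1_\CMW)=W_\textup{ext}(\underline{\rho^1_\CMW})-\frac{1}{(1-\varepsilon)\beta_h}\,\Delta S_\corr$, where $\Delta S_\corr:=S(\underline{\rho^1_\CMW})-S(\rho^1_\CMW)\ge 0$ is the total correlation, nonnegative by subadditivity (Eq.~\eqref{eq:subadd free energy}).

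Inserting this into $\eta_1^\qm(\rho^1_\CMW)^{-1}=1-\varepsilon+\Delta C/W_\textup{ext}(\rho^1_\CMW)$ (Eq.~\eqref{eq:eta gain for corr}), I would then use that in the quasi-static regime both $\Delta C$ and $W_\textup{ext}(\underline{\rho^1_\CMW})$ are $\bo(g)$ with $\Delta C/W_\textup{ext}(\underline{\rho^1_\CMW})\to\beta_h/(\beta_c-\beta_h)$ by part 1) and Eq.~\eqref{delta C small g}. Hence $\eta_1^\qm(\rho^1_\CMW)^{-1}\to\eta_C^{-1}$ can hold only if the relative work deficit $\Delta S_\corr/\bigl[\beta_h(1-\varepsilon)\,W_\textup{ext}(\underline{\rho^1_\CMW})\bigr]\to 0$, i.e. $\Delta S_\corr=\soo(g)$. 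The last step converts this into the stated rate on $k$: in the parametrization Eq.~\eqref{eq:rho 1 corr def} all marginals of $\rho^\corr_\CMW$ coincide with those of $\underline{\rho^1_\CMW}$, so I expand $\Delta S_\corr$ in $k$ about the product point and track how its coefficient depends on $g$ through the vanishing battery weight $\varepsilon(g)$ (the small eigenvalue $\varepsilon$ of $\rho_\batt^1$ amplifies the entropy cost of any correlation touching the battery sector). Matching $\Delta S_\corr=\soo(g)$ against this expansion produces Eq.~\eqref{eq:necessary k_i/g lim}, $\lim_{g\to 0^+}k/g=0$.

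The hard part will be exactly this final bookkeeping: showing that the correlation-induced entropy deficit $\Delta S_\corr$ cannot be rendered negligible relative to the $\bo(g)$ quasi-static work unless $k$ vanishes faster than $g$. This requires controlling the \emph{joint} dependence of $\Delta S_\corr$ on $k$ and on $g$ (entering via $\varepsilon(g)$ and the shrinking spectrum of $\underline{\rho^1_\CMW}$), rather than on $k$ alone, since a naive expansion in $k$ at fixed $g$ is insufficient. I would also be careful about the order of limits throughout: Carnot is reached only as $g\to 0^+$, so one treats $\varepsilon=\varepsilon(g)$ and $k=k(g)$ as functions of $g$ constrained by the near-perfect-work conditions of Def.~\ref{def:near perfect work}, and argues along the relevant sequence $g\to 0^+$.
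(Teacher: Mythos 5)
Your proof follows the paper's route almost step for step: part 1) via saturation of the chain in Lemma~\ref{lem:coherences cannot help fixed state} together with Lemma~\ref{lem:CEmaximum_generalbatt}, and part 2) via the same decomposition the paper uses, since your quantity $\frac{1}{\beta_h(1-\varepsilon)}\Delta S_\corr$ with $\Delta S_\corr = S(\underline{\rho^1_\CMW})-S(\rho^1_\CMW)\geq 0$ is exactly the paper's $\chi$ of Eq.~\eqref{eq:chi def}, and the paper likewise reduces Carnot attainment to the requirement $\lim_{g\rightarrow 0^+}\chi/\hat W_\textup{ext}=0$ with $\hat W_\textup{ext}=\bo(g)$ in the quasi-static limit.

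However, there is a genuine gap, and it sits precisely where you yourself flag ``the hard part'': you never establish that $\Delta S_\corr$ is of order $\bo(k)$ with a \emph{nonvanishing linear coefficient}, rather than something smaller such as $\bo(k^2)$. This matters quantitatively: if the leading dependence were quadratic, the condition $\chi/g\rightarrow 0$ would only yield $k=\soo(\sqrt{g})$, not the claimed $k/g\rightarrow 0$, so the lemma's conclusion does not follow from ``$\Delta S_\corr=\soo(g)$'' alone. The paper closes this with two observations you are missing. First, the joint $(k,g)$ bookkeeping you worry about dissolves immediately: since $\chi\equiv 0$ at $k=0$ for \emph{every} $g$, the term $\frac{d}{dg}\chi(0,\cdot,\cdot)\big|_{g=0}$ vanishes identically, so the joint first-order expansion is simply $\chi=\frac{d\chi}{dk}\big|_{k=0}\,k+\soo(gk)+\soo(k^2)+\soo(g^2)$ --- no subtle interplay between $k$ and $g$ survives at leading order. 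Second, the coefficient $\frac{d\chi}{dk}\big|_{k=0}$ is strictly positive: the von Neumann entropy is strictly concave along the mixing segment $(1-k)\rho^\ncorr_\CMW+k\rho^\corr_\CMW$, so $\chi$ is strictly convex in $k$ and its $k$-derivative can vanish only at the maximally mixed state (Eqs.~\eqref{eq:second chi property}--\eqref{eq:second chi property iff}); and $\rho^1_\CMW$ is never maximally mixed here because near perfect work forces the battery marginal $\rho^1_\batt$ to be $\varepsilon$-close to a pure energy eigenstate. Note this is essentially the opposite of your heuristic that the small eigenvalue $\varepsilon(g)$ ``amplifies the entropy cost of any correlation touching the battery sector'': no amplification argument is needed or used, and indeed none would be robust, since correlations need not touch the battery sector at all (they could sit between cold bath and machine). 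The near-purity of the battery enters only once, to certify that $\rho^1_\CMW\neq\id_{\cold\textup{M}\textup{W}}/N$ and hence that the linear coefficient does not vanish. Without supplying this (or an equivalent) argument, your final matching step does not produce Eq.~\eqref{eq:necessary k_i/g lim}.
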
 
\begin{proof}
Firstly, suppose that Carnot efficiency is achieved, i.e. $\eta^\qm (\rho_\CMW^1) = 1 - \frac{\beta_h}{\beta_c}$. Then according to Lemma \ref{lem:coherences cannot help fixed state}, all inequalities in Eq.~\eqref{eq:equation main lemma in extension} should be satisfied with equality, in particular inequality (4). We have established in Lemma \ref{lem:CEmaximum_generalbatt} that this equality is achieved in the quasi-static limit, i.e. $\rho_\cold^1 = \tau_\cold (g)$ where $g\rightarrow 0^+$. This implies Condition 1) in the statement of the lemma.

The proof for Condition 2) consists of calculating $W_\textup{ext}$ for $\alpha=1$ in Eq. \eqref{eq:intermediate efficiencies} to leading order in $g$ and $k$. This $W_\textup{ext}$ quantity can be later used to evaluate $\eta_1^\qm$. We will show that we can write the expression for $\eta_1^\qm$ into two terms: one term describes the efficiency when there are no final correlations, and the other term is a strictly negative contribution which must vanish in order to achieve the Carnot efficiency. This latter constraint will give us Eq. \eqref{eq:necessary k_i/g lim}.

Let us denote $W_\textup{ext}'$ as the value of battery energy gap $W_\textup{ext} = E_k^\batt - E_j^\batt$ that solves the equation
\be\label{eq:F_1 for k_i/g necessity lemma 1}
F_1(\tau_\cold^0\otimes \rho^0_\textup{M}(g)\otimes\rho^0_\textup{W},\tau^h_{\cold\textup{M}\textup{W}})=  F_1(\rho^1_\CMW(k,\rho^\textup{\ncorr}_{\cold\textup{M}\textup{W}},\rho^\corr_\CMW),\tau^h_{\cold\textup{M}\textup{W}}) \footnote{We denote $\rho_\mach^0 (g)$ because for different values of $g$, we are allowed to choose different initial machine states, as long as $\rho_\mach^1 (g) = \rho_\mach^0 (g) $.},
\ee
while $\hat W_\textup{ext}$ as the value that solves the case where $k=0$, i.e.
\be\label{eq:F_1 for k_i/g necessity lemma 2}
F_1(\tau_\cold^0\otimes \rho^0_\textup{M}(g)\otimes\rho^0_\textup{W},\tau^h_{\cold\textup{M}\textup{W}})=  F_1(\rho^\ncorr_\CMW,\tau^h_{\cold\textup{M}\textup{W}}).
\ee
Since $\rho^\ncorr_\CMW = \underline{\rho_\CMW^1} = \rho_\cold^1\otimes\rho_\mach^0 (g)\otimes\rho_\batt^1$ contains no correlations, $\hat W_\textup{ext}$ was given by Eq. \eqref{eq:W ext std}.
According to Lemma \ref{lem:eta1_achieved_with_equality}, we know that $W_\textup{ext}'$ and $\hat W_\textup{ext}$ are the values of $W_\textup{ext}$ which solve  
$\displaystyle\sup_{W_\textup{ext}}\eta^\qm_1(\rho_\CMW^1 ,W_\textup{ext})$ and $\displaystyle\sup_{W_\textup{ext}}\eta^\qm_1(\underline{\rho_\CMW^1} ,W_\textup{ext})$ respectively.
 Solving Eq. \eqref{eq:F_1 for k_i/g necessity lemma 1} for $W_\textup{ext}'$ with the aid of Eq. \eqref{eq:generalised free energy}, we find
\be
W_\textup{ext}'=\hat W_\textup{ext}-\chi,
\ee
where $W_\textup{ext}$ is the solution to Eq.~\eqref{eq:F_1 for k_i/g necessity lemma 2} when $k=0$, given by Eq. \eqref{eq:W ext std}, while
\be\label{eq:chi def}
\chi
:=\frac{1}{\beta_h}\frac{1}{1-\varepsilon} \left[ S(\rho^\textup{\ncorr}_{\cold\textup{M}\textup{W}})-S\left(\rho^1_{\cold\textup{M}\textup{W}}(k,\rho^\textup{\ncorr}_{\cold\textup{M}\textup{W}},\rho^\textup{\corr}_{\cold\textup{M}\textup{W}})\right)\right].
\ee
Let us first note some properties of $\chi$, which we will later use:
\begin{itemize}
\item Since $S(\cdot)$ is subadditive, due to the parametrization of $\rho^1_{\cold\textup{M}\textup{W}}(\cdot,\cdot,\cdot)$ in Eq. \eqref{eq:rho 1 corr def}, we have 
\be\label{eq:first chi property}
\chi\geq 0
\ee
with equality iff $\underline{\rho^1_{\cold\textup{M}\textup{W}}}=\rho^1_{\cold\textup{M}\textup{W}}$ i.e. iff $k=0$. Therefore, we may conclude that $\frac{\hat W_\textup{ext}}{W_\textup{ext}'}\geq 1$.
\item We have that
\be\label{eq:second chi property}
\frac{d}{dk} \chi(k,\rho^\textup{\ncorr}_{\cold\textup{M}\textup{W}},\rho^\textup{\corr}_{\cold\textup{M}\textup{W}})\bigg{|}_{k=k_0}=0
\ee
if and only if
\be\label{eq:second chi property iff}
\rho^1_{\cold\textup{M}\textup{W}}(k_0,\rho^\textup{\ncorr}_{\cold\textup{M}\textup{W}},\rho^\textup{\corr}_{\cold\textup{M}\textup{W}})=\id_{\cold\textup{M}\textup{W}}/N.
\ee
Eqs. \eqref{eq:second chi property} and \eqref{eq:second chi property iff} are direct consequences of the observations: \\
1) Entropy is strictly concave, i.e. $S\left(\rho^1_{\cold\textup{M}\textup{W}}(k,\rho^\textup{\ncorr}_{\cold\textup{M}\textup{W}},\rho^\textup{\corr}_{\cold\textup{M}\textup{W}})\right)$ is strictly concave in $k\in[0,1]$. Therefore, by Eq.~\eqref{eq:second chi property iff} $\chi$ is strictly convex in $k\in[0,1]$. When the first derivative of the convex function $\frac{d\chi}{d k}=0$, this must be the global minimum \cite{convex_opt}. \\
2) However, we know that the entropy is uniquely maximized (and therefore $\chi$ is uniquely minimized) for the maximally mixed state.
\end{itemize}
Returning to evaluate the efficiency, we may use Eq. \eqref{eq:eta gain for corr} to calculate the inverse efficiency,
\begin{align}
[\eta^\qm_1(\rho^1_\CMW)]^{-1}&=1-\varepsilon+ \frac{\Delta C(\rho^1_{\cold})}{W'_\textup{ext}}\\
&=1-\varepsilon+\frac{\Delta C(\rho^1_{\cold})}{\hat W_\textup{ext}}\frac{\hat W_\textup{ext}}{W_\textup{ext}'}\label{one over Q eff 1 tesimal}\\
&\geq 1-\varepsilon+\frac{\Delta C(\rho^1_{\cold})}{\hat W_\textup{ext}}.
\end{align}
The last term in Eq.~\eqref{one over Q eff 1 tesimal} is non-negative because we know the terms $\Delta C (\rho_\cold^1), \hat W_\textup{ext}$ and $W_\textup{ext}'$ are all non-negative. 

With Condition 1), we now know that
\be\label{eq: c eff contraint}
1-\varepsilon+ \frac{\Delta C(\rho^1_{\cold})}{W_\textup{ext}}= 1-\frac{\beta_h}{\beta_c},
\ee
in the quasi-static limit, and therefore 
a necessary condition to achieve the Carnot efficiency is that $\lim_{g\rightarrow 0} \frac{\hat W_\textup{ext}}{W_\textup{ext}'} =1$ also in the quasi-static limit. Using the relation $ W_\textup{ext}'= \hat W_\textup{ext} + \chi$, we have the requirement that
\be\label{eq:chi frak 0 lim}
\lim_{g\rightarrow 0^+}\frac{\chi(k,\rho^\textup{\ncorr}_{\cold\textup{M}\textup{W}}(g),\rho^\textup{\corr}_{\cold\textup{M}\textup{W}})}{\hat W_\textup{ext}(\rho^\ncorr_{\cold\textup{M}\textup{W}}(g))} = 0.
\ee

First, let us observe that $\hat W_\textup{ext}(\rho^\ncorr_{\cold\textup{M}\textup{W}}(g)) = W_\textup{ext}(\beta_c-g)$ given by Eq. \eqref{eq:simplifyDeltaW}. The leading order term of $W_\textup{ext}(\beta_c-g)=\bo(g)$ as $g\rightarrow 0^+$. Therefore, in order to satisfy Eq. \eqref{eq:chi frak 0 lim}, we must firstly have $\lim_{g\rightarrow 0}\chi = 0$. From Eqs. \eqref{eq:rho 1 corr def}, \eqref{eq:chi def}, this implies that we need $k\rightarrow 0$ for all $\rho^\ncorr_{\cold\textup{M}\textup{W}}$. 

Since the numerator and denominator of Eq. \eqref{eq:chi frak 0 lim} both go to zero, by L'Hospital rule, to evaluate the limit we need to take the derivative of both terms w.r.t. $g$. Therefore, we expand $\chi$ to first order in $k$ and $g$. From Eq. \eqref{eq:chi def} it follows
\begin{align}
\chi(k,\rho^\textup{\ncorr}_{\cold\textup{M}\textup{W}}(g),\rho^\textup{\corr}_{\cold\textup{M}\textup{W}})=&\frac{d}{dk}\chi(k,\rho^\textup{\ncorr}_{\cold\textup{M}\textup{W}}(0),\rho^\textup{\corr}_{\cold\textup{M}\textup{W}})\bigg{|}_{k=0} k \nonumber\\
&+\frac{d}{dg} \chi(0,\rho^\textup{\ncorr}_{\cold\textup{M}\textup{W}}(g),\rho^\textup{\corr}_{\cold\textup{M}\textup{W}})\bigg{|}_{g=0} g+\soo (gk)+\soo(k^2)+\soo (g^2)\\
=&\frac{d}{dk} \chi(k,\rho^\textup{\ncorr}_{\cold\textup{M}\textup{W}}(0),\rho^\textup{\corr}_{\cold\textup{M}\textup{W}})\bigg{|}_{k=0} k +\soo (gk)+\soo (k^2)+\soo (g^2).
\end{align} 
The term $\frac{d}{dg} \chi(0,\rho^\textup{\ncorr}_{\cold\textup{M}\textup{W}}(g),\rho^\textup{\corr}_{\cold\textup{M}\textup{W}})\bigg{|}_{g=0} =0$ since when $k=0$, $\chi$ will be constant for all $g$.
Next, we note that since Eqs. \eqref{eq:first chi property} holds, it must be that $\frac{d}{dk} \chi(k,\rho^\textup{\ncorr}_{\cold\textup{M}\textup{W}}(0),\rho^\textup{\corr}_{\cold\textup{M}\textup{W}})\bigg{|}_{k=0}\geq 0$. Furthermore, from Eq.~\eqref{eq:second chi property}, we have that
\be 
\frac{d}{dk}\chi(k,\rho^\textup{\ncorr}_{\cold\textup{M}\textup{W}}(0),\rho^\textup{\corr}_{\cold\textup{M}\textup{W}})\bigg{|}_{k=0}\neq 0,
\ee
for all $\rho^\corr_\CMW$ since by definition $\rho^1_\CMW(0,\rho^\textup{\ncorr}_{\cold\textup{M}\textup{W}}(0),\rho^\textup{\corr}_{\cold\textup{M}\textup{W}})\neq \id_{\cold\textup{M}\textup{W}}/N$. We can infer that $\rho^1_\CMW$ is not maximally mixed from a few observations, for example: this is true because we have required that the reduced state on the battery is not maximally mixed since we consider near perfect work extraction. 

Thus, taking into account $W_\textup{ext}(\beta_c-g)=\bo(g)$, Eq. \eqref{eq:chi frak 0 lim} implies Eq. \eqref{eq:necessary k_i/g lim}.
\end{proof}

By now, we have established a constraint on how quickly the correlations have to vanish w.r.t. $g$, for the possibility of achieving Carnot efficiency. In the next Lemma \ref{lem:corres last lemma}, we will show that the constraints given by Eq.~\eqref{eq:necessary k_i/g lim} can be used to derive an upper bound for $\eta_\infty^\qm$.

\begin{lemma}\label{lem:corres last lemma}
If Eqs. \eqref{eq:quasistatic no corrs part} and \eqref{eq:necessary k_i/g lim} are satisfied, then the quantity $\eta_\infty^\qm$ can be upper bounded by
\begin{align}
\eta_\infty^\qm (\rho^1_{\cold\textup{M}\textup{W}}(k,&\rho^\textup{\ncorr}_{\cold\textup{M}\textup{W}}(g),\rho^\textup{\corr}_{\cold\textup{M}\textup{W}}))\\
&\leq \left[1+\frac{\beta_h}{\beta_c-\beta_h}\frac{\gamma(1)}{\gamma(\infty)}\right]^{-1}+\bo(f(g))+\bo(k/g)+\bo(g)+\bo(\varepsilon),\label{eq:F infty lemma corr}
\end{align}
with $\lim_{g\rightarrow 0^+}f(g)= 0$.
\end{lemma}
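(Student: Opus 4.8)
The plan is to read off $\eta_\infty^\qm$ directly from its definition \eqref{eq:intermediate efficiencies} in the limit $\alpha\to\infty$. Since by \eqref{eq:eta gain for corr} the efficiency is monotonically increasing in $W_\textup{ext}$ at fixed $\rho_\cold^1$, the supremum is attained at the largest $W_\textup{ext}$ compatible with the single $F_\infty$ constraint; call this value $W_\infty^\corr$. Condition~1) (Eq.~\eqref{eq:quasistatic no corrs part}) fixes the reduced cold-bath state to $\tau(g)$, so $\Delta C(\rho_\cold^1)$ is determined and, to leading order, is given by \eqref{delta C small g}, namely $\Delta C=\tfrac{ng\gamma(1)}{\beta_c-\beta_h}+\bo(g^2)$. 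The entire task therefore reduces to bounding $W_\infty^\corr$ from above, after which \eqref{eq:eta gain for corr} delivers the claimed upper bound on $\eta_\infty^\qm$.

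The key step is to convert the \emph{correlated} $F_\infty$ constraint into a relaxed \emph{uncorrelated} one. Writing $\rho^1_{\cold\textup{M}\textup{W}}=(1-k)\,\underline{\rho^1_{\cold\textup{M}\textup{W}}}+k\,\rho^\corr_{\cold\textup{M}\textup{W}}$ with all states diagonal in the global energy eigenbasis, and using $D_\infty(\rho\|\tau)=\ln\max_i p_i/q_i$ from \eqref{eq:D_inft def} together with $\rho^\corr_{\cold\textup{M}\textup{W}}\geq\mathbf 0$, I obtain the pointwise estimate $\max_i (\rho^1_{\cold\textup{M}\textup{W}})_i/(\tau^h_{\cold\textup{M}\textup{W}})_i\geq (1-k)\max_i (\underline{\rho^1_{\cold\textup{M}\textup{W}}})_i/(\tau^h_{\cold\textup{M}\textup{W}})_i$, hence
\begin{equation}
D_\infty(\rho^1_{\cold\textup{M}\textup{W}}\|\tau^h_{\cold\textup{M}\textup{W}})\geq \ln(1-k)+D_\infty(\underline{\rho^1_{\cold\textup{M}\textup{W}}}\|\tau^h_{\cold\textup{M}\textup{W}}).
\end{equation}
Because the true constraint requires $D_\infty(\tau_\cold^0\otimes\rho^0_\textup{M}\otimes\rho^0_\textup{W}\|\tau^h_{\cold\textup{M}\textup{W}})\geq D_\infty(\rho^1_{\cold\textup{M}\textup{W}}\|\tau^h_{\cold\textup{M}\textup{W}})$, chaining the two inequalities yields the necessary condition $D_\infty(\underline{\rho^1_{\cold\textup{M}\textup{W}}}\|\tau^h_{\cold\textup{M}\textup{W}})\leq D_\infty(\tau_\cold^0\otimes\rho^0_\textup{M}\otimes\rho^0_\textup{W}\|\tau^h_{\cold\textup{M}\textup{W}})-\ln(1-k)$. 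Since $\underline{\rho^1_{\cold\textup{M}\textup{W}}}$ and $\tau^h_{\cold\textup{M}\textup{W}}$ are both products, $D_\infty$ is additive, the machine contributions (equal on both sides by cyclicity) cancel, and this is exactly the uncorrelated $\alpha\to\infty$ constraint with its left side enlarged by $-\ln(1-k)>0$.

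Next I would exploit that for the product state the only $W_\textup{ext}$-dependence sits in the battery factor $D_\infty(\rho^1_\batt\|\tau^h_\batt)$, whose maximizing index is the excited level $\ket{E_k^\batt}$ once $W_\textup{ext}$ is large enough, giving $\tfrac{d}{dW_\textup{ext}}D_\infty(\underline{\rho^1_{\cold\textup{M}\textup{W}}}\|\tau^h_{\cold\textup{M}\textup{W}})=\beta_h$. Thus the enlarged budget $-\ln(1-k)=k+\bo(k^2)$ admits at most an extra $-\ln(1-k)/\beta_h$ of work, so $\beta_h W_\infty^\corr\leq \beta_h\hat W_\infty-\ln(1-k)$, where $\hat W_\infty$ is the uncorrelated $\alpha\to\infty$ work. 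Evaluating the quasi-static expansion \eqref{lim interchabge} at $\alpha\to\infty$ gives $\beta_h\hat W_\infty=g[n\gamma(\infty)+\bo(f(g))]$, and Condition~2) (Eq.~\eqref{eq:necessary k_i/g lim}) forces $k/g\to 0$, so $-\ln(1-k)$ contributes at order $g\cdot\bo(k/g)$. Hence $\beta_h W_\infty^\corr\leq g[n\gamma(\infty)+\bo(f(g))+\bo(k/g)]$. Substituting this together with $\Delta C$ into $\eta_\infty^\qm=(1-\varepsilon+\Delta C/W_\infty^\corr)^{-1}$, the upper bound on $W_\infty^\corr$ lower-bounds $\Delta C/W_\infty^\corr$ by $\tfrac{\beta_h}{\beta_c-\beta_h}\tfrac{\gamma(1)}{\gamma(\infty)}+\bo(g)+\bo(f(g))+\bo(k/g)$, and inverting reproduces \eqref{eq:F infty lemma corr}.

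The genuinely non-routine part is the reduction in the second paragraph: justifying the mixture inequality for $D_\infty$ and verifying that the additive/cancellation structure of the product reference survives, so that the whole cost of the correlations collapses into the single scalar penalty $-\ln(1-k)$, with the product-state $D_\infty$ growing at the exact rate $\beta_h$ in $W_\textup{ext}$. Everything downstream is Taylor expansion that has already been performed in the uncorrelated analysis of Lemma~\ref{lemma: efficiency as a function of kappa}, so I expect no further difficulty once this penalty is isolated.
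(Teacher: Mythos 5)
Your proof is correct, but it takes a genuinely different route through the key technical step than the paper does. The paper controls the correlated state by a trace-distance argument: it bounds $d(\rho^1_{\cold\textup{M}\textup{W}},\underline{\rho^1_{\cold\textup{M}\textup{W}}})\leq 4k$ via the triangle inequality and monotonicity under partial trace, deduces that every eigenvalue satisfies $[\rho^1_{\cold\textup{M}\textup{W}}]_i=[\underline{\rho^1_{\cold\textup{M}\textup{W}}}]_i+\soo(k)$, and from this concludes the \emph{two-sided} estimate $F_\infty(\rho^1_{\cold\textup{M}\textup{W}},\tau^h_{\cold\textup{M}\textup{W}})=F_\infty(\underline{\rho^1_{\cold\textup{M}\textup{W}}},\tau^h_{\cold\textup{M}\textup{W}})+\soo(k)$ before expanding as in the uncorrelated case. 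You instead exploit the convex-mixture structure $\rho^1=(1-k)\underline{\rho^1}+k\rho^{\corr}$ directly to get the \emph{exact, one-sided} inequality $D_\infty(\rho^1\|\tau^h)\geq\ln(1-k)+D_\infty(\underline{\rho^1}\|\tau^h)$, so the entire cost of the correlations collapses into the scalar penalty $-\ln(1-k)=k+\bo(k^2)$ added to the uncorrelated $F_\infty$ budget; additivity of $D_\infty$ on products, cancellation of the machine terms by cyclicity, and the rate $\beta_h$ of the battery term then reduce everything to the already-computed uncorrelated expansion. Your route is cleaner for the stated one-sided bound: it avoids the somewhat delicate step in the paper of converting an additive $\soo(k)$ perturbation of eigenvalues into an additive perturbation of $\ln\max_i p_i/\tau_i$ (which implicitly requires stability of the maximizing ratio), and it needs no approximation at all until the final Taylor expansion. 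What the paper's two-sided estimate buys in exchange is that the work bound in its Eq.~\eqref{eq:W ext with corr} is also achievable up to the stated orders, not merely an upper bound. Two small points you should make explicit: first, the optimization defining $\eta^\qm_\infty$ in Eq.~\eqref{eq:intermediate efficiencies} also carries the mean-energy-conservation constraint, which you silently drop — this is legitimate because dropping a constraint only enlarges the feasible set and hence preserves the upper bound, and the paper states exactly this at the end of its proof; second, the identification of the battery maximizer as the excited level (giving slope $\beta_h$ in $W_\textup{ext}$) is a smallness condition on $\varepsilon$ of the form $\varepsilon\leq\left[1+e^{\beta_h(E_k^\battery-E_j^\battery)}\right]^{-1}$ as in Lemma~\ref{lem:general_batt}, not a largeness condition on $W_\textup{ext}$; it holds automatically in the near-perfect-work limit you consider, but your phrasing ``once $W_\textup{ext}$ is large enough'' has the dependence backwards.
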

\begin{proof}
The main idea of our proof is as follows: we show that if Eqns. \eqref{eq:quasistatic no corrs part} and \eqref{eq:necessary k_i/g lim} hold, then we can upper bound $W_\textup{ext}$ while considering only the $F_\infty $ condition. This bound differs from the value given when no correlations are present by only a small amount.
Substituting this into the expression for $\eta_\infty^\qm$, we obtain Eq.~\eqref{eq:F infty lemma corr}.


Let us begin by analyzing the difference in eigenvalues of the states $\rho^1_\CMW$ and $\underline{\rho^1_\CMW}$. 
Recall that 
\begin{align}
\rho^1_\CMW (k,\rho^\ncorr_\CMW,\rho^\corr_\CMW) & = (1-k)\rho^\ncorr_\CMW + k \rho^\corr_\CMW
\end{align}
where $\rho^\ncorr_\CMW,\rho^\corr_\CMW$ are both diagonal in the energy eigenbasis. 
Since $\rho^1_\CMW$ is a mixture of two energy-diagonal states, it is also diagonal. Let us denote its eigenvalues as $[\rho^1_{\cold\textup{M}\textup{W}}]_i$. 

As for $\underline{\rho^1_\CMW }$, Eqn.~\eqref{eq:quasistatic no corrs part} gives the explicit form of the state, 
\begin{equation}
\underline{\rho_\CMW^1} =\rho_\cold^1\otimes\rho_\mach^1\otimes\rho_\batt^1 = \tau(g)\otimes\rho^0_\textup{M}(g)\otimes \rho^1_\textup{W}.
\end{equation}
Let us denote its eigenvalues as $[\underline{\rho^1_\CMW}]_i$. 

We first observe two properties involving trace distance $d(\cdot,\cdot)$:
\begin{enumerate}[label=(P.\roman*)]
\item Consider two states $\sigma_1, \sigma_2$ diagonal in the same eigenbasis. Then if $\rho = (1-k)\sigma_1+k\sigma_2$ for some $k\in [0,1]$, then one can conclude that the distance 
\begin{align}
d(\rho,\sigma_1) \leq k.
\end{align}
\item For any two states $\rho,\sigma$ diagonal in the same basis, with eigenvalues $p_i,q_i$, if their trace distance
\begin{equation}
d(\rho,\sigma)=\frac{1}{2} \| \rho-\sigma \|_1 \leq \varepsilon,
\end{equation}
then this implies that their eigenvalues cannot differ by more than $\varepsilon$, i.e. $\forall i, |p_i-q_i|\leq \varepsilon$. 
By using this fact, we may first calculate the trace distance between $\rho_\CMW^1$ and $\underline{\rho_\CMW^1}$, then bound the difference of their eigenvalues.
\end{enumerate}
We find that
\begin{align}
d(\rho_\CMW^1,\underline{\rho_\CMW^1}) &\leq d(\rho_\CMW^1,\rho_\CMW^\ncorr) +  d(\rho_\CMW^\ncorr,\underline{\rho_\CMW^1})\\
& \leq k + d(\rho_\cold^\ncorr,\rho_\cold^1)+ d(\rho_\mach^\ncorr,\rho_\mach^1)+ d(\rho_\batt^\ncorr,\rho_\batt^1)\\
& \leq 4k.\label{eq:distance_between_correlated_underlined}
\end{align}
The first inequality is a triangle inequality that holds for all states. The second inequality holds because of (P.i), and because trace distance is subadditive under tensor product (note that both $\rho_\CMW^\ncorr$ and $\underline{\rho_\CMW^1}$ are tensor product states). The third inequality holds because we know $d(\rho_\CMW^1,\rho_\CMW^\ncorr) \leq k$ and that trace distance decreases under partial trace. By (P.ii), Eq.~\eqref{eq:distance_between_correlated_underlined} tells us that $\forall i$, 
\begin{equation}\label{eq:eigenvaluediff}
[\rho^1_\CMW]_i=[\underline{\rho^1_\CMW}]_i+\soo(k).
\end{equation}

With Eq.\eqref{eq:eigenvaluediff}, we may relate the $F_\infty$ quantities for the states $\rho_\CMW^1$ and $\underline{\rho_\CMW^1}$. From Eq. \eqref{eq:generalfreeenergy}, we have
\begin{align} 
F_\infty \big(\rho^1_\CMW(k,\rho^\textup{\ncorr}_{\cold\textup{M}\textup{W}}(g)&,\rho^\textup{\corr}_{\cold\textup{M}\textup{W}}),\tau^h_{\cold\textup{M}\textup{W}}\big) \\
&=\ln\max_i\left\{\frac{[\rho^1_{\cold\textup{M}\textup{W}}]_i}{\tau_i}\right\},\\
&=\ln\max_i\left\{\frac{[\underline{\rho^1_\CMW}]_i}{\tau_i}\right\}+\soo (k),\\
&=F_\infty\left(\tau(g)\otimes\rho^0_\textup{M}(g)\otimes \rho^1_\textup{W},\tau^h_{\cold\textup{M}\textup{W}}\right)+\soo (k),
\end{align}

where we used Eq. \eqref{eq:generalfreeenergy} in the last line.

The next step is to evaluate the restriction on $W_\textup{ext}$ that satisfies
\begin{align}\label{eq:F leq F in corr}
F_\infty(\tau_\cold^0\otimes \rho^0_\textup{M}\otimes\rho^0_\textup{W},\tau^h_{\cold\textup{M}\textup{W}})
&\geq  F_\infty\left(\rho^1_{\cold\textup{M}\textup{W}}(k,\rho^\textup{\ncorr}_{\cold\textup{M}\textup{W}}(g),\rho^\textup{\corr}_{\cold\textup{M}\textup{W}}),\tau^h_{\cold\textup{M}\textup{W}}\right)\\
&=F_\infty\left(\tau(g)\otimes\rho^0_\textup{M}(g)\otimes \rho^1_\textup{W},\tau^h_{\cold\textup{M}\textup{W}}\right)+\soo (k),\label{eq:337}
\end{align}
for $W_\textup{ext}$ up to order $\soo  (k)$. Taking into account the additivity of $F_\infty$ under tensor product, we can rearrange Eq.~\eqref{eq:337} to provide an upper bound on $W_\textup{ext}$,
\be\label{eq:W ext with corr}
W_\textup{ext}\leq \frac{ng}{\beta_h}\left[\gamma(\infty)+\bo(f(g))+\soo (k/g)\right], 
\ee
where $\displaystyle\lim_{g\rightarrow 0^+}f(g)=0$, $\gamma(\infty)$ is given by Eq. \eqref{eq:gammainf}. The bound in Eq. \eqref{eq:W ext with corr} is achievable since the $F_\infty$ conditions imposed by Eq. \eqref{eq:F leq F in corr} are achievable with equality. 

Lastly, by using the expression for efficiency in Eqs. \eqref{eq:eta gain for corr}, and substituting $W_\textup{ext}$ from Eq.~\eqref{eq:W ext with corr} (with equality for the maximum possible $W_\textup{ext}$) followed by $\Delta C$ from Eq.~\eqref{delta C small g}, we have
\begin{align} 
\sup_{W_\textup{ext}>0}\eta (\rho^1_\cold,W_\textup{ext})&=\sup_{W_\textup{ext}>0} \left(1-\varepsilon+\frac{\Delta C}{W_\textup{ext}}\right)^{-1}\\
&=\left[1+\frac{\beta_h}{(\beta_c-\beta_h)}\frac{\gamma(1)}{\gamma(\infty)}\right]^{-1}+\bo(f(g))+\soo (k/g)+\bo(g)+\bo(\varepsilon).\label{eq:eff inf corr}
\end{align}
Hence using Eqs. \eqref{eq:intermediate efficiencies}, \eqref{eq:eff inf corr} we find Eq. \eqref{eq:F infty lemma corr}. Note that in Eq \eqref{eq:F infty lemma corr} we have an inequality, this is due to the fact that in the optimisation problem Eq. \eqref{eq:intermediate efficiencies}, there is an additional constraint (namely mean energy conservation) which is not taken into account in the derivation of Eq. \eqref{eq:eff inf corr}.
\end{proof}
Finally, the above lemmas allow us to conclude that allowing further correlations in the final state cannot allow quasi-static heat engines to achieve the Carnot efficiency when $\Omega >1$.
\begin{theorem}\label{corrs dont help}[Correlations do not improve efficiency] Suppose that $\Omega>1$. Parametrizing the final state of the heat engine by Eq.~\eqref{eq:rho 1 corr def}-\eqref{eq:preservemachine}, the quantum efficiency $\eta^\qm$ defined in Eq.~\eqref{eq:quant eff def} in a quasi-static heat engine is strictly upper bounded by the Carnot efficiency,
\be
\sup_{k\in[0,1],~\rho^\ncorr_\CMW}\eta^\qm\left(\rho^1_\CMW(k,\rho^\ncorr_\CMW,\,\rho^\corr_\CMW)\right) < 1-\frac{\beta_h}{\beta_c}.
\ee
\end{theorem}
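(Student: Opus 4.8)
The plan is to argue by contradiction, chaining together the three preparatory lemmas so that the non-strict Carnot bound of Lemma \ref{lem:coherences cannot help fixed state} is upgraded to a strict one whenever $\Omega>1$. First I would recall that for every admissible final state, Lemma \ref{lem:coherences cannot help fixed state} already gives $\eta^\qm(\rho^1_\CMW)\leq \eta_1^\qm(\rho^1_\CMW)\leq 1-\beta_h/\beta_c$, so the supremum appearing in the statement is at most the Carnot value and only its \emph{strictness} remains to be established. Suppose, towards a contradiction, that the supremum equals $1-\beta_h/\beta_c$; then there is a family of final states, parametrized as in Eqs.~\eqref{eq:rho 1 corr def}--\eqref{eq:preservemachine}, along which $\eta^\qm$ approaches the Carnot value.

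Next I would extract the necessary conditions for this to happen. Since $\eta^\qm\leq \eta_1^\qm\leq 1-\beta_h/\beta_c$, driving $\eta^\qm$ to the Carnot value forces $\eta_1^\qm$ up to it as well. By Lemma \ref{Lemm:vanishing coors necesary condition} this is possible only in the quasi-static limit, i.e. the uncorrelated marginal must take the form \eqref{eq:quasistatic no corrs part} with $g\to 0^+$, and moreover the correlations must decay faster than $g$, namely $\lim_{g\to 0^+}k/g=0$ as in Eq.~\eqref{eq:necessary k_i/g lim}. Any family that fails the quasi-static condition is already bounded strictly away from Carnot through inequalities (2)--(4) of Lemma \ref{lem:coherences cannot help fixed state} together with the monotonicity of Lemma \ref{lemma:optimaleff is for infinitesimal temp change}, so the supremum can only be approached along a quasi-static, rapidly-decorrelating family — precisely the regime covered by Lemma \ref{lem:corres last lemma}.

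With Eqs.~\eqref{eq:quasistatic no corrs part} and \eqref{eq:necessary k_i/g lim} in force I would then combine Lemma \ref{lem:alpha up q eff bound}, which yields $\eta^\qm\leq\eta_\infty^\qm$, with the explicit bound of Lemma \ref{lem:corres last lemma}:
\be
\eta^\qm \leq \eta_\infty^\qm \leq \left[1+\frac{\beta_h}{\beta_c-\beta_h}\frac{\gamma(1)}{\gamma(\infty)}\right]^{-1}+\bo(f(g))+\bo(k/g)+\bo(g)+\bo(\varepsilon).
\ee
Letting $g\to 0^+$, so that $\varepsilon,\ f(g)$ and $k/g$ all vanish (the last by Eq.~\eqref{eq:necessary k_i/g lim}), and substituting $\gamma(1)/\gamma(\infty)=\Omega$ from Eq.~\eqref{gam 1 to gam inf <1}, every order term drops out and the limiting efficiency is at most $\big(1+\tfrac{\beta_h}{\beta_c-\beta_h}\Omega\big)^{-1}$. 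Since $\beta_c>\beta_h$ gives $\beta_h/(\beta_c-\beta_h)>0$ and $\Omega>1$ by hypothesis, we obtain $\big(1+\tfrac{\beta_h}{\beta_c-\beta_h}\Omega\big)^{-1}<\big(1+\tfrac{\beta_h}{\beta_c-\beta_h}\big)^{-1}=1-\beta_h/\beta_c$, so $\eta^\qm$ is bounded above by a constant strictly below the Carnot value, contradicting the assumption.

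I expect the main obstacle to be the bookkeeping that reconciles the two regimes of the generalized second laws: Lemma \ref{Lemm:vanishing coors necesary condition} is phrased for the $\alpha=1$ efficiency $\eta_1^\qm$, whereas the decisive sub-Carnot bound arises from the $\alpha\to\infty$ efficiency $\eta_\infty^\qm$. One must argue carefully that the necessary conditions squeezed out of the former (quasi-static limit plus $k/g\to 0$) are exactly the hypotheses that activate the latter, and simultaneously verify that the supremum cannot sneak up to the Carnot value along any non-quasi-static or slowly-decorrelating family — a point I would dispatch by noting that such families are controlled directly by Lemma \ref{lem:coherences cannot help fixed state} and the macroscopic monotonicity result.
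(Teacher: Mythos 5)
Your proposal is correct and follows essentially the same route as the paper: both arguments use Lemma \ref{lem:alpha up q eff bound} to get $\eta^\qm\leq\eta_1^\qm$ and $\eta^\qm\leq\eta_\infty^\qm$, extract the quasi-static condition \eqref{eq:quasistatic no corrs part} and the fast-decorrelation condition \eqref{eq:necessary k_i/g lim} from Lemma \ref{Lemm:vanishing coors necesary condition}, and then invoke the $\alpha\to\infty$ bound of Lemma \ref{lem:corres last lemma} together with $\gamma(1)/\gamma(\infty)=\Omega>1$ from Eq.~\eqref{gam 1 to gam inf <1} to conclude the efficiency is strictly below $1-\beta_h/\beta_c$. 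Your framing as a proof by contradiction, rather than the paper's direct chaining of necessary conditions, is only a cosmetic difference.
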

\begin{proof}
From Lemma \ref{lem:alpha up q eff bound}, we have that both $\eta^\qm\leq \eta^\qm_1$ and $\eta^\qm\leq \eta^\qm_\infty$ hold. Thus a necessary condition to achieve the Carnot efficiency for a particular $\rho^1_\CMW$, is that both $\eta^\qm_1$ and $\eta^\qm_\infty$ are equal to or greater than the Carnot efficiency. 

Lemma \ref{Lemm:vanishing coors necesary condition} proves that Eqs. \eqref{eq:quasistatic no corrs part} and \eqref{eq:necessary k_i/g lim} are necessary conditions for $\eta^\qm_1$ to achieve the Carnot efficiency. However, when  Eqs. \eqref{eq:quasistatic no corrs part}, \eqref{eq:necessary k_i/g lim} are satisfied, then Lemma \ref{lem:corres last lemma} provides an upper bound on the efficiency $\eta^\qm_\infty$ in Eq. \eqref{eq:F infty lemma corr}. 

Now, suppose $\Omega>1$. Since it is shown in Eq. \eqref{gam 1 to gam inf <1} that $\gamma(1)/\gamma(\infty)=\Omega$, plugging this into the leading term appearing in Eq. \eqref{eq:F infty lemma corr}
\be 
\left[1+\frac{\beta_h}{(\beta_c-\beta_h)}\frac{\gamma(1)}{\gamma(\infty)}\right]^{-1},
\ee
we have that the quantity $\eta_\infty^\qm$ (and therefore also $\eta^\qm$) is strictly less than the Carnot efficiency $1-\beta_h/\beta_c$.
\end{proof}
\subsection{A more general final battery state}\label{A more general final battery state}
For the simplicity of our analysis, we have assumed that the battery is left in the specific final state described in Eq.~\eqref{eq:battery initial state}, i.e. an amount of work $W_\textup{ext} = E_k - E_j$ is extracted, except with failure probability $\varepsilon$ that the battery remains in the initial state $\ketbra{E_j}{E_j}_\batt$. In this section, we show that this is a simplification which can be removed in general, i.e. the final battery state is allowed to be  any state within the $\varepsilon$-ball of $\ketbra{E_k}{E_k}_\batt$. In particular, our result that the Carnot efficiency cannot be achieved when $\Omega>1$ still holds. 

In Lemma \ref{lem:general_batt}, we show that for any final state of the cold bath $\rho_\cold^1$, allowing a more general final battery state does not affect the amount of work bounded by the $F_\infty$ condition. We then use this to prove in Theorem \ref{thm:general_batt} that when $\Omega >1$, the Carnot efficiency cannot be achieved even if we allow a more general battery final state.

\begin{lemma}\label{lem:general_batt} 
For any given $\rho_\cold^0,\rho_\cold^1$, with $\rho_\batt^0 = \ketbra{E_j}{E_j}_\batt$, consider the maximum $W_{\infty}^1 := E_{k_1}-E_j$ such that $\rho_\cold^0\otimes\rho_\batt^0\rightarrow\rho_\cold^1\otimes\rho_\batt^1$ is allowed by the non-increasing $F_\infty$ condition (Eq. \eqref{2nd law eq}) i.e.
\begin{equation}\label{eq:dinf_batt1}
D_\infty (\rho_\cold^0\|\tau_\cold^{\beta_h}) + D_\infty(\rho_\batt^0\|\tau_\batt^{\beta_h}) \geq D_\infty (\rho_\cold^1\|\tau_\cold^{\beta_h}) + D_\infty(\rho_\batt^1\|\tau_\batt^{\beta_h}),
\end{equation}
with 
\begin{equation}\label{eq:batt1}
\rho_\batt^1 =(1-\varepsilon)\ketbra{E_{k_1}}{E_{k_1}}_\battery+\varepsilon \ketbra{E_j}{E_j}_\battery.
\end{equation}

On the other hand, consider any battery final state
\begin{equation}\label{eq:batt2}
\rho_\batt^2 =(1-\varepsilon)\ketbra{E_{k_2}}{E_{k_2}}_\battery+\varepsilon\rho_\battery^\textup{junk},
\end{equation}
where $\rho_\battery^\textup{junk}$ is an energy-diagonal state orthogonal to $\ketbra{E_{k_2}}{E_{k_2}}_\batt$ which may depend on $\varepsilon $, i.e. $\rho_\battery^\textup{junk} = \sum_{i} p_i \ketbra{E_i}{E_i}_\batt$ with $p_{k_2}=0$ and $\sum_i p_i = 1$. 
Define $W_{\infty}^2 := E_{k_2}-E_j$ such that $\rho_\cold^0\otimes\rho_\batt^0\rightarrow\rho_\cold^1\otimes\rho_\batt^2$ is allowed by the non-increasing $F_\infty$ condition, i.e.
\begin{equation}\label{eq:dinf_batt2}
D_\infty (\rho_\cold^0\|\tau_\cold^{\beta_h}) + D_\infty(\rho_\batt^0\|\tau_\batt^{\beta_h}) \geq D_\infty (\rho_\cold^1\|\tau_\cold^{\beta_h}) + D_\infty(\rho_\batt^2\|\tau_\batt^{\beta_h}).
\end{equation}
Then for all $0<\varepsilon\leq\hat\varepsilon =\left[ 1+e^{\beta_h (E_{\rm max}-E_{j})} \right]^{-1}$, we have  $W_{\infty}^1=W_{\infty}^2$. 

\end{lemma}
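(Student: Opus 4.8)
The plan is to reduce the two optimisation problems to a single scalar inequality involving the battery alone, and then compare them. First I would exploit the additivity of $D_\infty$ together with the fact that both transitions share the same initial states $\rho_\cold^0$, $\rho_\batt^0=\ketbra{E_j}{E_j}_\batt$ and the same final cold-bath state $\rho_\cold^1$. Rearranging the $F_\infty$ conditions \eqref{eq:dinf_batt1} and \eqref{eq:dinf_batt2}, each reduces to
\[
D_\infty(\rho_\batt^m\|\tau_\batt^{\beta_h})\leq C,\qquad m=1,2,
\]
where $C:=D_\infty(\rho_\cold^0\|\tau_\cold^{\beta_h})+D_\infty(\rho_\batt^0\|\tau_\batt^{\beta_h})-D_\infty(\rho_\cold^1\|\tau_\cold^{\beta_h})$ is one and the same constant in both cases. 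Using \eqref{eq:D_inft def} for states diagonal in the energy eigenbasis, I would write $D_\infty(\rho\|\tau_\batt^{\beta_h})=\ln Z_\batt^{\beta_h}+\max_i[\ln r_i+\beta_h E_i^\batt]$, where $r_i$ are the eigenvalues of $\rho$. Since $D_\infty$ is strictly increasing in $E_{k_m}$ once the level-$k_m$ term dominates the maximum, the maximal admissible work is obtained by saturating the constraint, so the whole lemma amounts to showing that the two saturating values of $E_{k_1}$ and $E_{k_2}$ coincide.

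Next I would compute the two divergences explicitly. For $\rho_\batt^1$ only levels $k_1$ and $j$ carry weight, giving $D_\infty(\rho_\batt^1\|\tau_\batt^{\beta_h})=\ln Z_\batt^{\beta_h}+\max\{\ln(1-\varepsilon)+\beta_h E_{k_1},\,\ln\varepsilon+\beta_h E_j\}$, while for $\rho_\batt^2$ the junk contributes extra terms, giving $D_\infty(\rho_\batt^2\|\tau_\batt^{\beta_h})=\ln Z_\batt^{\beta_h}+\max\{\ln(1-\varepsilon)+\beta_h E_{k_2},\,\ln\varepsilon+\max_{i\neq k_2}[\ln p_i+\beta_h E_i^\batt]\}$. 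Because $W_\infty^1=E_{k_1}-E_j>0$ and $\varepsilon<1/2$, the first term dominates in the expression for $\rho_\batt^1$, so the saturating value satisfies $C=\ln Z_\batt^{\beta_h}+\ln(1-\varepsilon)+\beta_h E_{k_1}^{\max}$. For the inequality $W_\infty^2\leq W_\infty^1$ I would need only the trivial lower bound that a maximum is at least its level-$k_2$ term: if $E_{k_2}>E_{k_1}^{\max}$ then $D_\infty(\rho_\batt^2\|\tau_\batt^{\beta_h})\geq\ln Z_\batt^{\beta_h}+\ln(1-\varepsilon)+\beta_h E_{k_2}>C$, violating the constraint.

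For the reverse inequality $W_\infty^2\geq W_\infty^1$ I would insert the trial value $E_{k_2}=E_{k_1}^{\max}$ and verify that the level-$k_2$ term still dominates the junk term, which is exactly where the threshold $\hat\varepsilon$ enters. Bounding $\max_{i\neq k_2}[\ln p_i+\beta_h E_i^\batt]\leq\beta_h E_{\rm max}$ (using $p_i\leq 1$ and $E_i^\batt\leq E_{\rm max}$), dominance of the first term is guaranteed by $\ln\frac{1-\varepsilon}{\varepsilon}\geq\beta_h(E_{\rm max}-E_j-W_\infty^1)$. The elementary identity $\ln\frac{1-\hat\varepsilon}{\hat\varepsilon}=\beta_h(E_{\rm max}-E_j)$ together with $W_\infty^1\geq 0$ shows this holds for every $\varepsilon\leq\hat\varepsilon$; hence at $E_{k_2}=E_{k_1}^{\max}$ the constraint is met with equality, so $W_\infty^2\geq W_\infty^1$, and combining the two directions yields $W_\infty^2=W_\infty^1$.

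The main obstacle is this reverse direction: one must rule out the possibility that the junk population, sitting at high battery energies, produces a larger $D_\infty$ than the genuine excited level $k_2$, which would either spuriously force $E_{k_2}$ below $E_{k_1}^{\max}$ or, in the flat junk-dominated regime, let $E_{k_2}$ drift all the way up to $E_{\rm max}$. Controlling this competition is precisely the role of the smallness condition $\varepsilon\leq\hat\varepsilon$, and pinning down the correct threshold — the crossover value at which a weight-$(1-\varepsilon)$ level at any admissible energy beats a weight-$\varepsilon$ junk population placed at the top energy $E_{\rm max}$ — is the one genuinely quantitative step in the argument.
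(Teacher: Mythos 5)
Your proposal is correct and follows essentially the same route as the paper: additivity of $D_\infty$ reduces both transitions to the single battery constraint $D_\infty(\rho_\batt^m\|\tau_\batt^{\beta_h})\leq C$ with a common constant, the divergences are evaluated explicitly as maxima over weighted levels, and the threshold $\hat\varepsilon=\left[1+e^{\beta_h(E_{\rm max}-E_j)}\right]^{-1}$ is exactly the paper's worst-case (infimum over admissible $E_{k_2}$, with $p_i\leq 1$, $E_i^\batt\leq E_{\rm max}$) condition ensuring the weight-$(1-\varepsilon)$ term dominates, so both constraints saturate at the same energy. Your two-sided argument ($W_\infty^2\leq W_\infty^1$ via the trivial lower bound on the max, and $W_\infty^2\geq W_\infty^1$ via the trial value $E_{k_2}=E_{k_1}^{\max}$) is a slightly more careful organization of the paper's observation that the two saturated equations coincide, but it is the same proof in substance.
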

\begin{proof}
Firstly, note that any energy-diagonal state $\rho_\batt^2$ with trace distance $d(\rho_\batt^2,\ketbra{E_{k_2}}{E_{k_2}}_\batt) = \varepsilon$ can be written in the form of Eq.~\eqref{eq:batt2}. Rearranging the terms in Eq.~\eqref{eq:dinf_batt1},
\begin{align}\label{eq:XXX}
D_\infty(\rho_\batt^1\|\tau_\batt^{\beta_h}) &\leq D_\infty(\rho_\batt^0\|\tau_\batt^{\beta_h}) + D_\infty (\rho_\cold^0\|\tau_\cold^{\beta_h}) -D_\infty (\rho_\cold^1\|\tau_\cold^{\beta_h}) =: A.
\end{align}
One can use the definition of $D_\infty$ in Eq.~\eqref{eq:D_inft def} to expand the L.H.S. of Eq.~\eqref{eq:XXX}, obtaining
\begin{align}
&\log \max \lbrace(1-\varepsilon)e^{\beta_h E_{k_1}},\varepsilon e^{\beta_h E_j}\rbrace \leq A- \log Z_\batt^{\beta_h}.\label{eq:dinf_condition_rho1}
\end{align}
We know that since near perfect work is extracted, $\varepsilon$ is arbitrarily small. This implies that for $\varepsilon$ small enough, $\displaystyle\max \lbrace(1-\varepsilon)e^{\beta_h E_{k_1}},\varepsilon e^{\beta_h E_j}\rbrace = (1-\varepsilon)e^{\beta_h E_{k_1}}$.

Similarly, one can evaluate Eq.~\eqref{eq:dinf_batt1} to obtain
\begin{align}\label{eq:dinf_rho2batt}
\log \max \lbrace(1-\varepsilon)e^{\beta_h E_{k_2}},
\lbrace\varepsilon p_i e^{\beta_h E_i}\rbrace_{i\neq k_2}\rbrace &\leq A- \log Z_\batt^{\beta_h}.
\end{align}
Note that the maximization in Eq.~\eqref{eq:dinf_rho2batt} only picks out the maximum value. In particular, denoting $E_{\rm max}$ to be the largest energy eigenvalue of the battery, then whenever 
\begin{equation}\label{eq:1-ep general battery inequality}
(1-\varepsilon) e^{\beta_h E_{k_2}} \geq \varepsilon e^{\beta_h E_{\rm max}},
\end{equation}
or equivalently
\begin{equation}\label{eq:340}
\varepsilon\leq \left[ 1+e^{\beta_h (E_{\rm max}-E_{k_2})} \right]^{-1},
\end{equation}
then $\displaystyle\max \lbrace(1-\varepsilon)e^{\beta_h E_{k_2}},\lbrace\varepsilon p_i e^{\beta_h E_i}\rbrace_{i\neq k_2}\rbrace = (1-\varepsilon)e^{\beta_h E_{k_2}}$. In other words, as long as $\varepsilon$ is upper bounded by Eq.~\eqref{eq:340}, we know which terms attains the maximization in Eq.~\eqref{eq:dinf_condition_rho1}.
However, we also want an upper bound that is independent of any limit involving the final state $\rho^1_{\cold\textup{M}\battery}$ we wish to take, or any amount of work extracted (and therefore, we want the bound to be independent of $E_{k_2}$). As such, let us construct the following upper bound $\varepsilon \leq \hat\varepsilon$ where,
\be 
\hat \varepsilon := \inf_{\substack{E_{k_2}\\ W^2_\infty>0}} \left[ 1+e^{\beta_h (E_{\rm max}-E_{k_2})} \right]^{-1} = \left[ 1+e^{\beta_h (E_{\rm max}-E_{j})} \right]^{-1}
\ee 
Now, we see that $E_{k_1}$ and $E_{k_2}$ correspond to the solutions for Eq.~\eqref{eq:dinf_condition_rho1} and Eq.~\eqref{eq:dinf_rho2batt}, which for $\varepsilon\leq \hat\varepsilon$ reduce to exactly the same equation. Therefore, $E_{k_1}=E_{k_2}$ and hence $W_{\infty}^1 = W_{\infty}^2$.
\end{proof}

We will use Lemma \ref{lem:general_batt} to prove Theorem \ref{thm:general_batt}. But before we proceed, let us fix some notation: we define the efficiency as a function of $\alpha\geq 0$  :
\begin{align}
\eta_\alpha^J(\rho_\cold^1)
=\sup_{E_{k_J}-E_j>0} \eta(\rho_\cold^1) \quad \textup{ subject to }&\quad F_\alpha(\rho^0_\battery\otimes\tau_\cold^0,\tau_\CW^h)\geq F_\alpha(\rho^J_\battery\otimes\rho_\stcold^1,\tau_\CW^h),\\
\textup{ and }&\quad
 \tr(\hat H_t \rho_\sttotal^0)=\tr(\hat H_t \rho_\sttotal^{1,J}).
\end{align}
with $J=1,2$ denoting the final battery state $\rho^J_\textup{W}$. We also define an $\alpha$ independent efficiency:
\begin{align}
\eta^J(\rho_\cold^1)
=\sup_{E_{k_J}-E_j>0} \eta(\rho_\cold^1) \quad \textup{ subject to }&\\
 F_\alpha(\rho^0_\battery\otimes&\tau_\cold^0,\tau_\CW^h)\geq F_\alpha(\rho^J_\battery\otimes\rho_\stcold^1,\tau_\CW^h)\,\forall \alpha\geq 0.
\end{align}
For any $\alpha\geq 0$, and any state $\rho_\cold^1$, $\eta_\alpha^J (\rho_\cold^1) \geq \eta^J (\rho_\cold^1)$ holds.

We already know that when $\Omega >1$, for any final cold bath state $\rho_\cold^1$, the efficiency $\eta^1(\rho_\cold^1)$ is strictly less than the Carnot efficiency. Theorem \ref{thm:general_batt} shows that this is also true for $\eta^2(\rho_\cold^1)$, i.e. when allowing a more general battery final state.
\begin{theorem}\label{thm:general_batt}
[General battery states do not improve efficiency] Consider a quasi-static heat engine with a cold bath consisting of $n$ qubits, extracting near perfect work. Let $\Omega >1$ (definition in Eq.~\eqref{def:sigma}). Then 
the efficiency 
$\lim_{g\rightarrow 0^+}\eta^2 (\tau_{\beta_f})$, where $\tau_{\beta_f}$ is the final state of a quasi-static heat engine (Def. \ref{def:quasi static}), is strictly less than the Carnot efficiency.
\end{theorem}

\begin{proof}
Firstly, suppose that $\Omega >1$. By Lemma \ref{lemma: efficiency as a function of kappa} we know that the infimum is obtained at $\alpha=\infty$, and by Lemma \ref{Quantum/Nano heat engine efficiency} we know that the efficiency for quasi-static heat engine is strictly less than the  Carnot value:
\begin{equation}\label{eq:eta1=etamax1<etac}
\lim_{g\rightarrow 0^+}\eta^1 (\tau_{\beta_f}) =\lim_{g\rightarrow 0^+}\eta^1_\infty (\tau_{\beta_f}) < \eta_C.
\end{equation}
In other words, for all other final states $\rho_\cold^1$ we know that Carnot efficiency cannot be achieved. Therefore, it suffices to see that in the quasi-static limit,
\begin{equation}\label{eq:finally!}
\lim_{g\rightarrow 0^+}\eta^2 (\tau_{\beta_f}) \leq \lim_{g\rightarrow 0^+}\eta^2_\infty (\tau_{\beta_f}) = \lim_{g\rightarrow 0^+}\eta^1_\infty (\tau_{\beta_f}) = \lim_{g\rightarrow 0^+}\eta^1 (\tau_{\beta_f})< \eta_C.
\end{equation}
The second equality is obtained by noting that for any state $\tilde\rho_\cold^1$ (and therefore for $\tau_{\beta_f}$): 
\begin{enumerate}
\item $\Delta C$ is the same for both expressions of efficiency $\eta^1_\infty(\rho_\cold^1)$ and $\eta^2_\infty(\rho_\cold^1)$.
\item By Lemma \ref{lem:general_batt}, for all $0<\varepsilon<\left[ 1+e^{\beta_h (E_{\rm max}-E_{j})} \right]^{-1}$,  $W_\infty^1 (\tilde\rho_\cold^1)= W_\infty^2 (\tilde\rho_\cold^1)$.
\end{enumerate} 
Hence, from Items 1 and 2, one concludes that $\eta^1_{\infty} (\tilde\rho_\cold^1) = \eta^2_{\infty} (\tilde\rho_\cold^1)$. 
The third equality in Eq.~\eqref{eq:finally!} comes directly from Eq.~\eqref{eq:eta1=etamax1<etac}.
\end{proof}

\end{document}